\documentclass[11pt]{article}
\usepackage{booktabs}
\usepackage{tabularx}
\usepackage{array}

\usepackage{amsthm}
\usepackage{graphicx} 
\usepackage{array} 

\usepackage{booktabs}
\usepackage{tabularx}

\usepackage{amsmath, amssymb, amsfonts, verbatim}
\usepackage{hyphenat,epsfig,subcaption,multirow}
\usepackage{nicefrac}
\usepackage{paralist}
\usepackage{mleftright}

\usepackage{mathtools}

\usepackage{pdfpages}

\usepackage[font=small, labelfont=bf]{caption}

\usepackage[dvipsnames]{xcolor}
\usepackage[ruled]{algorithm2e}

\DeclareFontFamily{U}{mathx}{\hyphenchar\font45}
\DeclareFontShape{U}{mathx}{m}{n}{
      <5> <6> <7> <8> <9> <10>
      <10.95> <12> <14.4> <17.28> <20.74> <24.88>
      mathx10
      }{}
\DeclareSymbolFont{mathx}{U}{mathx}{m}{n}
\DeclareMathSymbol{\bigtimes}{1}{mathx}{"91}

\usepackage{tcolorbox}
\tcbuselibrary{skins,breakable}
\tcbset{enhanced jigsaw}

\usepackage[normalem]{ulem}
\usepackage[compact]{titlesec}

\definecolor{DarkRed}{rgb}{0.5,0.1,0.1}
\definecolor{DarkBlue}{rgb}{0.1,0.1,0.5}

\definecolor{colorOne}{HTML}{E69F00}   
\definecolor{colorTwo}{HTML}{56B4E9}   
\definecolor{colorThree}{HTML}{009E73} 
\definecolor{colorFour}{HTML}{F0E442}  
\definecolor{colorFive}{HTML}{0072B2}  
\definecolor{colorSix}{HTML}{D55E00}   
\definecolor{colorSeven}{HTML}{CC79A7} 
\definecolor{colorEight}{HTML}{000000} 
\definecolor{colorNine}{HTML}{999999}  

\usepackage{pifont}

\usepackage{bm}
\usepackage{url}
\usepackage{xspace}
\usepackage[mathscr]{euscript}

\usepackage{environ}  
\usepackage{chngcntr} 

\usepackage{booktabs}

\usepackage[T1]{fontenc}
\usepackage{lmodern}

\usepackage{tikz}
\usetikzlibrary{arrows}
\usetikzlibrary{arrows.meta}
\usetikzlibrary{shapes}
\usetikzlibrary{backgrounds}
\usetikzlibrary{positioning}
\usetikzlibrary{decorations.markings}
\usetikzlibrary{decorations.pathmorphing}
\usetikzlibrary{patterns}
\usetikzlibrary{calc}
\usetikzlibrary{fit}
\usetikzlibrary{decorations.pathreplacing}

\tikzset{vertexA/.style={circle,fill=black,minimum size=5pt,inner sep=0pt, font=\small}}
\tikzstyle{vertexB}=[circle,fill=black,minimum size=3pt,inner sep=0pt, font=\small]
\tikzstyle{selected vertex} = [vertex, fill=red!24]
\tikzstyle{edge} = [draw, black!50, -]
\tikzstyle{vertexRed}=[circle,fill=red,draw,minimum size=10pt,inner sep=0pt, font=\small, line width=0.5pt]
\tikzstyle{vertexBlack}=[circle,fill=black,draw,minimum size=10pt,inner sep=0pt, font=\small, line width=0.5pt]

\usepackage{mdframed}

\usepackage[noend]{algpseudocode}
\makeatletter
\def\BState{\State\hskip-\ALG@thistlm}
\makeatother

\usepackage{cite}
\usepackage{enumitem}

\usepackage[margin=1in]{geometry}

\usepackage{float} 

\newtheorem{theorem}{Theorem}
\newtheorem{lemma}{Lemma}[section]

\newtheorem{proposition}[lemma]{Proposition}
\newtheorem{corollary}[lemma]{Corollary}

\newtheorem*{claim*}{Claim}
\newtheorem*{proposition*}{Proposition}
\newtheorem*{lemma*}{Lemma}
\newtheorem*{problem*}{Problem}

\usepackage{nameref}
\definecolor{ForestGreen}{rgb}{0.1333,0.5451,0.1333}
\definecolor{Red}{rgb}{0.9,0,0}
\usepackage[linktocpage=true,
	pagebackref=true,colorlinks,
		urlcolor=black,
	linkcolor=DarkRed,citecolor=ForestGreen,
	bookmarks,bookmarksopen,bookmarksnumbered]
	{hyperref}
\usepackage[noabbrev,nameinlink]{cleveref}
\crefname{property}{property}{Property}
\creflabelformat{property}{(#1)#2#3}
\crefname{equation}{eq}{Eq}
\creflabelformat{equation}{(#1)#2#3}

\crefname{lemma}{Lemma}{Lemmas}
\crefname{claim}{Claim}{Claims}

\newtheorem{mdresult}{Result}

\newtheorem*{mdresult*}{Main Result}

\newtheorem{definition}[lemma]{Definition}
\theoremstyle{definition}
\newtheorem{remark}[lemma]{Remark}
\newtheorem{observation}[lemma]{Observation}

\crefname{assumption}{assumption}{assumptions}
\Crefname{assumption}{Assumption}{Assumptions}

\newtheorem{mdinvariant}[lemma]{Lemma}

\theoremstyle{definition}
\newtheorem{mdalg}{Algorithm}
\newenvironment{Algorithm}{\begin{tbox}\begin{mdalg}}{\end{mdalg}\end{tbox}}

\allowdisplaybreaks



\renewcommand{\qed}{\nobreak \ifvmode \relax \else
      \ifdim\lastskip<1.5em \hskip-\lastskip
      \hskip1.5em plus0em minus0.5em \fi \nobreak
      \vrule height0.75em width0.5em depth0.25em\fi}

\renewcommand{\leq}{\leqslant}
\renewcommand{\geq}{\geqslant}

\newcommand{\Ot}{\ensuremath{\widetilde{O}}}

\newcommand{\eps}{\ensuremath{\varepsilon}}

\newcommand{\bracket}[1]{\left[#1\right]}
\newcommand{\paren}[1]{\ensuremath{\left(#1\right)}\xspace}

\newcommand{\expect}[1]{\Exp\bracket{#1}}

\newcommand{\ALG}{\ensuremath{\mbox{\sc alg}}\xspace}

\DeclareMathOperator*{\Exp}{\ensuremath{{\mathbb{E}}}}
\DeclareMathOperator*{\Prob}{\ensuremath{\textnormal{Pr}}}
\renewcommand{\Pr}{\Prob}

\newenvironment{tbox}{\begin{tcolorbox}[
		enlarge top by=5pt,
		enlarge bottom by=5pt,
		 breakable,
		 boxsep=2pt,
                  left=5pt,
                  right=7pt,
                  top=10pt,
                  arc=0pt,
                  boxrule=1pt,toprule=1pt,
                  colback=white
                  ]
	}
{\end{tcolorbox}}

\newcommand{\II}{\ensuremath{\mathbb{I}}}

\newcommand{\mireal}[1][]{
  \ifx\relax#1\relax%
    \II(\mione \,; \mitwo)%
  \else%
    \II(\mione \,; \mitwo\mid #1)%
  \fi
}

\newcommand{\E}{\mathop{\mathbb{E}}}

\renewcommand{\leq}{\leqslant}
\renewcommand{\le}{\leq}
\renewcommand{\geq}{\geqslant}
\renewcommand{\ge}{\geq}

\newcommand{\myqed}[1]{\let\qed\relax \hspace*{\fill} #1 \ensuremath{\square}}

\NewEnviron{constraint}{
	\begin{tcolorbox}
		\begin{equation}
			\BODY
		\end{equation}
	\end{tcolorbox}
}

\newcommand{\dem}{\textsf{dem}}
\newcommand{\yes}{\textsf{C}}
\newcommand{\no}{\overline{\textsf{C}}}

\newcommand{\sagnik}[1]{}

\title{Bipartite Matching with Few Rounds of Demand Queries}

\author{Shravan Mehra\thanks{University of Birmingham, United Kingdom, \texttt{shravansinghmehra@gmail.com}}
\and
Sagnik Mukhopadhyay\thanks{University of Birmingham, United Kingdom, \texttt{schwagznikst@gmail.com}}
\and
Vihan Shah\thanks{University of Birmingham, United Kingdom, \texttt{vihanshah98@gmail.com}}
}

\date{}

\begin{document}
\maketitle

\thispagestyle{empty}
\pagenumbering{roman}

\begin{abstract}

We study the randomised and deterministic demand-query complexity of
approximate bipartite matching when only a few rounds of queries are
allowed. The demand-query model, introduced in [Blumrosen and Nisan SICOMP 2009; Nisan SODA 2021], is inspired by combinatorial auctions and welfare maximisation, where an oracle returns the first neighbouring vertex \textit{(item)} of a vertex $u$ \textit{(agent)} according to a given permutation $\pi$ \textit{(preference ordering of the agent over the items)}.

\noindent\textbf{Randomised demand query.}
Our main randomised result gives a trade-off between the number of
rounds \(k\), the per-round query budget \(q=n^{1+\delta}\), and the
achievable approximation ratio. Let $1\leq k\leq \log n$, and $-1\leq \delta\leq \frac1k$.
Every \(k\)-round randomised algorithm using \(O(q)\) demand queries per
round has a worst-case expected-size approximation factor of
$\Omega\!\left(
        \frac{n^{\frac{1}{2k}-\frac{\delta}{2}}}{k}\right)$.

We also prove a
high-probability lower bound that, for \(\delta\geq 0\), loses only a
factor of \(\sqrt{\log n}\). Conversely, we give a randomised algorithm achieving approximation factor
$O\!\left(
\frac{n^{(1-k\delta)/(k+1)}}{k}
\right)$
in expectation. For $q\ge n$, the same approximation factor is achieved
with probability at least $2/3$.
This implies:

\begin{itemize}
    \item For the non-adaptive case, these bounds match, giving the tight trade-off \(\Theta(n^{(1-\delta)/2})\).

    \item For \(q=\Theta(n)\), a constant-factor approximation has round complexity \(\Theta(\log n/\log\log n)\).
\end{itemize}

We note that in the stronger number-in-hand multiparty shared-blackboard communication model, a corresponding lower bound for constant-factor approximation was established by [Braverman, Oshman FOCS 2017]. We recover the implication of their result for the demand-query model as a special case of our result, with a significantly simpler proof.

\noindent\textbf{Deterministic demand query.}
For deterministic algorithms, for \(kq\geq n\), we show that the optimal deterministic approximation factor is
$\widetilde{\Theta}\!\left(
        \frac{n^2}{k^2q}\right)$,
thereby improving the previous lower bound of [Nisan SODA 2021] by a polynomial factor over a wide range of parameters.
In particular, for $q=\Theta(n)$, we improve the deterministic
round-complexity lower bound for constant-factor approximation
from $\widetilde{\Omega}(n^{1/3})$ to
$\widetilde{\Omega}(\sqrt{n})$.
The round complexity of our deterministic algorithm also suffices for constructing a maximal matching.
Our algorithms also return matching and vertex-cover
certificates, and the deterministic trade-off extends to minimum vertex
cover and maximum fractional matching.

\noindent\textbf{Demand-OR query.}
We extend the randomised upper bounds to the weaker Demand-OR model [Nisan 2021].
We prove a non-adaptive simulation of uniform random-neighbour queries using known techniques, avoiding the $O(\log n)$-round penalty of the trivial binary-search simulation.
Consequently, our upper bounds yield the same $k$-round approximation guarantees using $O(q\log^2 n)$ Demand-OR queries per round.
They also yield a $(2+\varepsilon)$-approximation with high probability in $O(\log n/\log\log n)$ rounds using $O(n\log^3 n)$ Demand-OR queries per round.

\end{abstract}

\clearpage
\bigskip
\setcounter{tocdepth}{2}
\tableofcontents

\clearpage

\pagenumbering{arabic}
\setcounter{page}{1}

\section{Introduction}

Bipartite matching is one of the most fundamental and extensively studied problems in graph algorithms, with a long history spanning combinatorial optimisation, theoretical computer science, and operations research. It serves as a canonical model for allocation problems, with applications ranging from job assignment to resource allocation \cite{DGS86, gale1962college, AdWords}.

A closely related problem arises in combinatorial auctions, where each agent seeks a bundle of items that maximises their utility. In the unit-demand setting, this problem is equivalent to bipartite matching: agents correspond to vertices on one side (say, the left side), items to vertices on the other (right side), and feasible allocations correspond to matchings in the underlying bipartite graph. This connection is well studied in the literature; see, for example, \cite{AlonNRW15, Nisan21, NisanS05, Dob2005}.

\smallskip\noindent
\textbf{The Demand Query Model.} Motivated by this connection, we study bipartite matching in the demand query model \cite{Nisan21, NisanS05}, where 
a query specifies a vertex (agent) $u$ on the left together with an ordering over the vertices (items) on the right side, and returns the first neighbour of $u$ according to this ordering. This model captures the behaviour of unit-demand agents responding to prices. The measure of complexity for this model is the number of such queries and, crucially for this work, the number of rounds of adaptivity. As noted in \cite{Nisan21}, this query model sits between standard adjacency matrix query, which demand query can simulate, and communication complexity which can simulate demand queries with logarithmic overhead. However, the complexity of these two models is significantly different: \cite{BlikstadBEMN22} showed that the deterministic communication complexity of exact bipartite matching is $\Ot(n)$, but one needs $\Omega(n^2)$ adjacency matrix queries to output a constant approximation of maximum matching, even in the randomised setting \cite{AssadiCK19, BhattacharyaKS23}.

\smallskip\noindent
\textbf{Approximate Bipartite Matching.} 
We study the problem of computing an approximate maximum matching in the demand query model. Let $\mu$ denote the size of a maximum matching. For $\alpha \ge 1$, we say an algorithm achieves an $\alpha$-approximation if it outputs a matching of size at least $\mu / \alpha$.

We briefly mention the relevant past research here; we explain more in the next section. Combinatorial auctions, especially social welfare maximising allocation for unit-demand bidders (in other words, approximate bipartite matching), have long been studied through the lens of communication protocols, specifically using simultaneous message passing (SMP) protocols for one-round auctions \cite{DobzinskiNO19} and using multiparty number-in-hand shared blackboard communication protocols when more rounds are involved \cite{AlonNRW15, BravermanO17}.
Motivated by these communication protocols, \cite{Nisan21} introduced the demand query model for bipartite matching, capturing a natural and concrete form of limited interaction between the agents and the allocator. Our work continues this line of research by studying the power of demand queries under a limited number of adaptive rounds, with a particular emphasis on lower bound techniques.

In terms of proving lower bounds for demand queries, there are two main techniques: multiparty communication complexity and Demand-OR query. We address both of them in this work. Firstly,
\cite{DobzinskiNO19} established the current best randomised multiparty number-in-hand shared blackboard communication protocol which is an $O(\log n)$-round protocol for constant-factor approximation, a result subsequently translated into the demand query framework by Nisan \cite{Nisan21}. This communication model is indeed more powerful than the demand query model and, as a result, any lower bound obtained here holds true for the demand query. In this communication model, \cite{DobzinskiNO19, HuangRVZ15} also showed a single-round communication lower bound for matching. Subsequently, \cite{AlonNRW15} and \cite{BravermanO17} improved this lower bound for multiple rounds, and the current state-of-the-art is an $\Omega(\log n/\log \log n)$-round lower bound for achieving constant approximation of the maximum matching when $O(n)$ communication is allowed per round \cite{BravermanO17}. We expound on these results in the next section.

The second technique is via a weaker query model, namely the Demand-OR query model, which only achieves parity with the standard demand query when $\Omega(\log n)$ rounds are available for binary search. The existing deterministic lower bounds for demand queries provided by \cite{Nisan21} rely on the Demand-OR query model, where it is shown that a $k$-round Demand-OR query algorithm must issue $\widetilde{\Omega}(n^2/k^3)$ queries\footnote{The notations $\widetilde{O}$, $\widetilde{\Omega}$, and $\widetilde{\Theta}$ hide polylogarithmic factors in $n$.} per round in order to achieve constant-approximation.

Consequently, to the best of our knowledge, there are no existing lower bound techniques specifically tailored to the demand query model. 
Existing approaches proceed through stronger or less precise models, and therefore do not yield a complete picture of demand-query complexity.
On one hand, even though communication-based lower bounds have proven effective for constant-factor approximations, if we aim for the exact version of the problem, it can be solved with $\Ot(n)$ total communication deterministically (even in two-party communication model) \cite{BlikstadBEMN22} given enough rounds. Achieving a comparable result with demand queries remains well beyond current algorithmic reach, and it might be the case that communication protocols are simply more powerful than demand query. On the other hand, although working with demand-OR query gives us an opportunity to develop new query-specific lower bound techniques, the $\log n$-round penalty might prove too expensive to show optimal bounds for fewer rounds. In this work, we develop lower bound techniques exclusive for a few rounds of demand queries.


\subsection{Our results}


\paragraph*{Randomised demand query.} We first consider completely non-adaptive algorithms. We know from \cite{shah2026sublinear} that we can get a $\sqrt{n}$-approximate maximum matching using only $\Ot(n)$ random neighbour queries which can be simulated by demand queries. Towards a lower bound, \cite{DobzinskiNO19} showed that any randomised simultaneous message passing (SMP) protocol where each player sends at most $n^\delta$ ($\delta < 1$) bits cannot approximate maximum matching better than a factor of $O(n^{1/2 - 2\delta})$. In the non-adaptive setting, the SMP model is stronger than the general demand query model and hence, putting $\delta = 0$, we recover a $\sqrt n$-type lower bound when each player issues $O(1)$ demand queries.

We recover and extend this bound using only demand queries, thus bypassing the requirement of using a (possibly) stronger model to prove lower bounds. More generally, when $O(n^{1+\delta})$ demand queries are allowed, we obtain a tight $\Theta(n^{1/2-\delta/2})$ expected-size approximation tradeoff. The same lower bound holds for algorithms required to achieve their approximation with constant probability, and for $\delta\ge 0$ the upper bound also holds with constant probability. 
For example, when $\delta=1/4$, \cite{DobzinskiNO19} rules out constant-factor approximation, whereas we rule out any $o(n^{3/8})$-approximation. We additionally establish a high-probability lower bound that loses only a factor of $\sqrt{\log n}$.

\begin{theorem}[Non-adaptive/ One round]
Consider non-adaptive demand query algorithms for maximum matching. Then, for every $\delta\in[-1,1]$,
\begin{itemize}
        \item any randomised algorithm that uses $O(n^{1+\delta})$ demand queries has worst-case expected-size approximation factor $\Omega(n^{1/2-\delta/2})$; the same asymptotic lower bound holds for algorithms that achieve their approximation with probability at least $2/3$;
        \item there is a randomised algorithm that uses $O(n^{1+\delta})$ demand queries and has expected-size approximation factor $O(n^{1/2-\delta/2})$. For $\delta\ge 0$, the same asymptotic approximation holds with probability at least $2/3$.
\end{itemize}
\end{theorem}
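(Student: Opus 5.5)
The plan is to prove both bullets around a single threshold degree $D := n^{(1+\delta)/2}$. With per-vertex budget $d := n^{\delta}$ (so $nd = n^{1+\delta} = q$), the ratio $D/d = n/D = n^{(1-\delta)/2}$ is the claimed factor. The range $\delta\in[-1,1]$ is exactly what makes $1\le D\le n$.

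\textbf{Upper bound.} A demand query at $u$ with a uniformly random permutation $\pi$ returns a uniformly random neighbour of $u$. The algorithm therefore spends $\lceil d\rceil$ such queries on every left vertex when $\delta\ge0$. When $\delta<0$, it instead picks each left vertex independently with probability $n^{\delta}$ and spends one query on it. It also spends one extra query per vertex with an arbitrary permutation, so that it sees some edge whenever $\mu\ge1$. It outputs the larger of two matchings built from the sampled edges: the sampled edges themselves after greedy conflict resolution, and a greedy matching on high-degree vertices.

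Fix a maximum matching $M^*$ and split its left endpoints into low vertices ($\deg\le D$, $\ell$ of them) and high vertices ($\deg> D$, $h$ of them).
\begin{itemize}
\item Each low $u$ samples its $M^*$-partner with probability $\gtrsim d/D$. These sampled $M^*$ edges are disjoint, so in expectation they give a matching of size $\Omega(\ell d/D)$.
\item For the high vertices, process them greedily. While fewer than $D/2$ right vertices are used, each sample of a high vertex is unused with probability at least $1/2$. Hence the greedy matching has size $\Omega(\min(h,D))$ in expectation, and with high probability when $\delta\ge0$ by a Chernoff/martingale bound.
\end{itemize}
If $h\ge\mu/2$, the ratio is at most $\max(O(1), n/D)$. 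Otherwise $\ell\ge\mu/2$ and the ratio is $O(D/d)$. For the $2/3$-probability claim when $\delta\ge0$, the low part concentrates whenever $\ell d/D$ exceeds a large constant. If it does not, then $\mu=O(D/d)$, and the single guaranteed edge already achieves the ratio.

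\textbf{Lower bound.} By Yao's principle it suffices to give a hard input distribution against deterministic non-adaptive algorithms. Take $|L|=|R|=n$, a uniformly random set $T\subseteq R$ with $|T|=D$, and a uniformly random injection $\sigma:L\to R\setminus T$ (pad $R$ by a constant factor if needed). Each $u\in L$ has neighbourhood $T\cup\{\sigma(u)\}$, so $\mu=n$.

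The key claim is that for any fixed query $(u,\pi)$, the answer equals $\sigma(u)$ with probability at most $1/(D+1)$. This holds because $T\cup\{\sigma(u)\}$ is a uniformly random $(D+1)$-set, and $\sigma(u)$ is uniform inside it independently of that set. Since the queries are fixed in advance, linearity of expectation bounds the expected number of distinct hidden edges revealed by $q/(D+1)\approx n^{1+\delta}/D$.

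Any output matching uses at most $D$ edges into $T$. The remaining edges are either revealed hidden edges, or guessed pairs $(u,v)$ with $v$ not observed as a neighbour of $u$. I expect this last category to be the main obstacle: one must show that, conditioned on the transcript, such a pair is a hidden edge with probability only $O(1/n)$.

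My approach is a symmetry argument on the posterior. Conditioned on all answers, the unrevealed values of $\sigma$ form a uniform injection into the unrevealed part of $R\setminus T$, which still has $\Omega(n)$ elements. The posterior on the unrevealed part of $T$ is exchangeable, so a guessed $v$ lies in $T$ with probability at most $D/\Omega(n)$. Answers landing in $T$ only reveal members of $T$, which is harmless. Hence guesses contribute $O(1+D)$ edges in expectation.

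The expected output size is therefore $O(D+n^{1+\delta}/D)=O(n^{(1+\delta)/2})$, giving an expected-size ratio of $\Omega(n^{(1-\delta)/2})$. For the probability-$2/3$ version, apply Markov's inequality to the output size: achieving $\mu/\alpha$ with probability $2/3$ forces expected size at least $2\mu/(3\alpha)$, which gives the same asymptotic bound.
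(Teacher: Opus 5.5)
The gap is in the lower bound, at exactly the step you flagged, and your symmetry argument does not close it.

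\textbf{Why the posterior claim fails.} Given the transcript, the unreturned values of $\sigma$ do \emph{not} form a uniform injection into a common set of size $\Omega(n)$. Every query to $u$ certifies each vertex preceding its answer as a non-neighbour of $u$, which rules that vertex out as a value of $\sigma(u)$ specifically. Let $P_u$ be the union of these prefixes over the queries to $u$. Condition also on $T$ and on $\sigma(u')$ for $u'\ne u$. Then, on the event that $\sigma(u)$ is never returned, its posterior is uniform on $R\setminus(T\cup\sigma(L\setminus\{u\})\cup P_u)$. This is a $u$-specific set, and it can be tiny for a heavily queried $u$.

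For example, ask $u$ one query per pair $(x_i,y_i)$ of a fixed pairing of $R$, with ordering $x_i,y_i,x_{i+1},y_{i+1},\ldots$ (indices cyclic). If $\sigma(u)$ is never returned, every right vertex except the $y_i$ with $x_i\in T$ is either returned or certified absent, so at most $D$ candidates remain. Another $O(n)$ queries to two other left vertices reveal $T$ exactly. On such transcripts the best guess is therefore correct with conditional probability at least $1/D\gg 1/n$. Exchangeability of $T$ does not help here, and you do not need it: correct guesses into $T$ are already counted in your $D$ bucket.

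\textbf{The missing idea: a budget on these exclusions.} We have $\sum_u|P_u|\le\sum_q|\mathrm{prefix}_q|$. The prefix of a fixed query $(u,\pi)$ is the set of vertices preceding the first element of the uniformly random set $N(u)$, which has size at least $D$. Hence $\E\sum_u|P_u|\le qn/D$, and in expectation only $O(q/D)$ left vertices have $|P_u|\ge n/8$. Charge one output edge to each of them.

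For the rest, let the planted matching cover only $n/2$ left vertices (this also removes the need to pad $R$) and assume $D\le n/8$. The posterior above then gives probability at most $4/n$ to every other guessed pair. So guesses contribute $O(1+q/D)$ in total, and your bound $O(D+q/D)$ survives.

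\textbf{How the paper avoids this.} The paper proves this theorem as the $k=1$ case of \Cref{thm:random-lb-expectation}, and it sidesteps the issue through its choice of distribution. Each right vertex is independently complete to $L$ with probability $n^{-(1-\delta)/2}$; otherwise it has independent $\mathrm{Bernoulli}(1/n)$ edges. Independence across right vertices makes certified non-edges in row $u$ irrelevant to an uncertified pair $(u,v)$. So every guess is a sparse edge with probability at most $1/n$ outright, and a geometric sum bounds the chance that a query returns a sparse vertex by $O(n^{-(1+\delta)/2})$. Your planted instance gains a deterministic $\mu$ and an exact revelation probability $1/(D+1)$ per query, but pays with this row dependence. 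The paper's independence is also what lets its argument extend to $k$ rounds.

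\textbf{Upper bound.} This is the paper's one-round independent-neighbour greedy, with your degree-threshold analysis replacing the availability-fraction argument of \Cref{ub:thm:main}. It needs only small repairs:
\begin{itemize}
\item The algorithm cannot identify high-degree vertices. Use a single greedy pass over all left vertices in a fixed order: maximality handles the low-degree part, and your conditional argument handles the high-degree part.
\item For $\delta<0$, drop the extra query per vertex, which costs $n$ queries.
\item For $\delta<0$, independent selection meets the budget only in expectation. Replace it by a fixed-size selection or by the paper's $q$ groups.
\item For $\delta<0$, the high-degree term is $\Omega(\min\{hn^{\delta},D\})$ rather than $\Omega(\min\{h,D\})$. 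This still suffices because $n^{-\delta}\le n^{(1-\delta)/2}$.
\end{itemize}
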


Setting $\delta = 0$ gives a tight $\Theta(\sqrt n)$ expected-size approximation bound with $O(n)$ queries, recovering the corresponding one-round results of \cite{DobzinskiNO19, Nisan21}. However, our main result is when the number of rounds of adaptivity is increased to more than one, for example, what happens when there are $k > 1$ rounds and we can issue $n^{1+\delta}$ demand queries in each round. 
\cite{DobzinskiNO19} left as an open question what meaningful lower bound we can prove in this limited adaptivity regime. However, as mentioned before, the model of computation for \cite{DobzinskiNO19} is multiparty number-in-hand shared blackboard model of communication. Any demand query protocol can be simulated in this model with only a logarithmic overhead: Each player simulates the demand query protocol (using shared randomness) and when it is their turn to answer a query, they simply communicate the answer (which is an edge) using $O(\log n)$ bits. Indeed, for this model, \cite{AlonNRW15} improved the results of \cite{DobzinskiNO19} to multiple rounds where they prove an $\Omega(n^{1/5^k})$-approximation lower bound for $k$-round randomised communication protocols.
Subsequently \cite{BravermanO17} showed that any $k$-round communication protocol that outputs a constant-approximation matching must use $\Omega(n^{1+1/k}/k^4)$ bits of communication.

A couple of points to note: First, these communication lower bounds exploit an array of sophisticated round elimination and direct sum techniques and, as such, are complicated. For example, \cite{BravermanO17} employs a reduction from multiparty set disjointness and even proving a multiparty set disjointness lower bound requires careful use of information complexity and direct sum arguments. This is not unusual as round elimination arguments even in two-party communication complexity, let alone multiparty, require care because of the powerful nature of the model. In contrast, for demand queries, we fully recover these bounds by a significantly simpler analysis. To our knowledge, no such lower bound technique for randomised demand query protocols was known before our work. Secondly, even though \cite{BravermanO17} showed a lower bound for constant-approximation, for superconstant approximation factors, the only lower bound is from \cite{AlonNRW15} which we improve exponentially\footnote{We believe that, even though not stated, the result of \cite{BravermanO17} can be made to work for superconstant approximation regime with significant technical modification. In contrast, our lower bound technique gives this lower bound in a much simpler fashion.}. Here is our main result:

\begin{theorem}[Few rounds, $O(n)$ queries.] \label{thm:few-rounds}
Let $k \leq \log n$ be the number of rounds. Then,
\begin{itemize}
        \item any randomised algorithm that uses $O(n)$ demand queries per round has worst-case expected-size approximation factor
        $
        \Omega\!\left(\frac{n^{1/(2k)}}{k}\right).
        $
        The same asymptotic lower bound holds for algorithms that achieve their approximation with probability at least $2/3$;
        \item there is a randomised algorithm that uses $O(n)$ demand queries per round and achieves approximation factor
        $
        O\!\left(\frac{n^{1/(k+1)}}{k}\right)
        $
        in expectation and with probability at least $2/3$.
\end{itemize}
\end{theorem}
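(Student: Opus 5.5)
This theorem is the $\delta=0$ specialisation of the general trade-off stated in the abstract. I would prove the two parts separately.

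\textbf{Lower bound.} I would build a layered hard distribution and argue by round elimination directly on demand queries. Fix a parameter $t=n^{1/(2k)}$. Construct a graph with $k+1$ nested levels. At level $i$ there is a set $U_i$ of left vertices, and each vertex of $U_i$ is adjacent to a large set of "decoy'' right vertices shared with many other left vertices. Only a hidden fraction of roughly $1/t^2$ of the edges lead to private right vertices, and these private edges form the matching of the next level. The decoys themselves are matched only through structures hidden one level deeper. Edge identities are randomised by uniformly random relabellings of the right side, so a demand query $(u,\pi)$ returns essentially the first neighbour of $u$ in a random order.

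The key step is a \emph{single-round lemma}. Given the knowledge after round $i-1$, $O(n)$ queries in round $i$ reveal in expectation only an $O(1/t)$ fraction of the level-$i$ hidden edges beyond those already known. The point is that a demand query with an arbitrary permutation can be answered by a "first hit'' argument: unless $\pi$ places an unseen private neighbour before all of the $\approx t^2$ decoys of $u$, the answer is a decoy. By symmetry of the random labelling, this happens with probability $O(1/t^2)$ per query. The query budget per vertex is $O(n)/|U_i|$, and balancing $|U_i|$ across levels gives a per-level loss of a factor of $t$.

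Iterating this over $k$ rounds and summing over levels shows that the expected number of hidden edges the algorithm can certify is at most $O(k\,\mu/t^{2k}\cdot\text{poly})$. Alternatively, after tuning the level sizes, the output matching has expected size $O(k\mu/n^{1/(2k)})$. This is where the $1/k$ factor comes from: a union over $k$ levels. I expect the main obstacle to be the round-elimination step itself. Conditioning on previously revealed answers must leave the unrevealed private edges still nearly uniformly hidden among the decoys. I would handle this by revealing extra information for free (all decoy identities of queried vertices), so that the posterior remains a product of uniform distributions. Finally, the constant-probability version follows by Markov-type reasoning. Because the hard instance has $\mu$ concentrated, an algorithm succeeding with probability $2/3$ has expected output at least $\frac{2}{3}\mu/\alpha$, which gives the same bound.

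\textbf{Upper bound.} I would simulate uniform random-neighbour queries, each by one demand query with a random permutation, and run a $k$-round greedy peeling. In round $j$, every currently unmatched left vertex samples $O(n/|\text{alive}|)$ random neighbours, and we greedily match among the samples. Consider a vertex whose degree into the unmatched right side is at least $n^{j/(k+1)}$-ish. Such a vertex either gets matched, or a sample shows that its neighbourhood is mostly matched. Hence after round $j$, the remaining unmatched residual matching lives among low-degree vertices. Choosing thresholds $d_j=n^{1-j/(k+1)}$ makes each round shrink the residual by a factor of $n^{1/(k+1)}$. The $1/k$ improvement comes from accumulating matched edges over $k$ rounds, each of which contributes $\mu/n^{1/(k+1)}$. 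The guarantee in expectation follows from linearity. The guarantee with probability $2/3$ follows because, with $\Theta(n)$ queries, the sample counts concentrate via Chernoff bounds.
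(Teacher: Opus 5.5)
\textbf{The lower bound has a gap.} It breaks at the single-round lemma for every round after the first. The ``first hit'' symmetry bound treats a demand query as returning a uniformly random neighbour of $u$. That is valid in round one, but a demand query also certifies every earlier pair of $\pi$ as a non-edge. Later queries may therefore push every vertex already known to be a decoy or a non-neighbour to the end of the permutation. Your remedy of revealing all decoy identities of queried vertices for free makes this fatal rather than harmless. Once $u$'s decoys are known, one query listing all non-decoys first returns $u$'s private neighbour with certainty. So the next round recovers, at one query each, the private edge of every vertex touched in the previous round. The free revelation hands the hidden edges over instead of protecting them. Even without it, the algorithm accumulates exactly this kind of information (returned shared vertices and certified non-edges), and your lemma has no mechanism to bound its effect. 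The bookkeeping is also inconsistent: with $t=n^{1/(2k)}$ you have $t^{2k}=n$, so your two claimed final bounds do not agree.

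What is missing is a way to bound how many new pairs an adversarially ordered query can certify, given everything learned so far. The paper gets this from three ingredients. First, right vertices have hidden layer membership, with edge densities $n^{-j/k}$. Second, a residual set $R_r$ discards a right vertex once it has a certified edge or more than $n^{r/k}$ certified non-edges. Third, since $(1-n^{-r/k})^{n^{r/k}}\ge 1/4$, a vertex of $R_r$ keeps at least a quarter of its prior probability of lying in $T_r$. Hence every uncertified pair into $R_r$ is an edge with probability $\Omega(n^{-r/k-1/(2k)})$, whatever was learned, and any query certifies only $O(n^{r/k+1/(2k)})$ new pairs into $R_r$ in expectation. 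Charging discarded vertices against these counts, together with the dense layers, gives a vertex cover of the certified edges of expected size $O(kn^{1-1/(2k)})$.

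\textbf{The upper bound is unjustified beyond the known analysis.} Your algorithm is essentially the paper's independent-neighbour greedy, but the two claims that go beyond the Dobzinski--Nisan--Oren analysis are not supported. A degree-threshold peeling argument shows only that in each round either many right vertices are matched or residual degrees shrink. Such an argument certifies progress in one round per surviving optimum vertex, which gives $O(n^{1/(k+1)})$ without the $1/k$. Your thresholds are also chosen so that every round shrinks, which is at odds with ``each round contributes $\mu/n^{1/(k+1)}$''.

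The $1/k$ gain needs the pathwise fact that all rounds contribute. For a surviving edge $uv$ of a fixed maximum matching, the availability fractions satisfy $f_{r,u}\ge d_{r+1}(u)/d_r(u)$. Hence $\prod_r f_{r,u}\ge 1/(X+1)$, and by AM--GM $\sum_r f_{r,u}\ge k/(X+1)^{1/k}$. Since $f_{r,u}$ is exactly $u$'s conditional success probability when samples are exposed in processing order, the compensator $A=\sum_j p_j$ satisfies $A\ge k(\mu-2X)_+/(X+1)^{1/k}$ and $\mathbb{E}A=\mathbb{E}X$. A Markov step on $X$ then gives $\mathbb{E}X=\Omega((k\mu)^{k/(k+1)})$.

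The probability-$2/3$ claim also does not follow from Chernoff bounds on sample counts: $X$ is not a sum of independent terms, and $\mu$ can be small. The paper instead uses the supermartingale bound $\mathbb{E}[\exp(-\theta X+(1-e^{-\theta})A)]\le 1$. It combines this with the pathwise inequality $A\ge 8x$ on $\{X<x\}$, and with deterministic nonemptiness of the output when $x<1$.
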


\Cref{thm:few-rounds} follows from \Cref{thm:random-lb-expectation} and \Cref{ub:thm:main}. The upper-bound algorithm is the same independent-neighbour greedy algorithm underlying \cite{DobzinskiNO19}, but our analysis improves its dependence on $k$. 
We also show that this improvement is essentially tight for this particular algorithm: in the natural low-round regime, there are graphs on which its approximation factor is $\Omega(n^{1/(k+1)}/k)$, matching our upper bound up to constant factors.
This algorithm-specific optimality does not close the remaining gap for unrestricted demand-query algorithms. A general demand query reveals not only a returned neighbour but also certifies every vertex preceding it in the query order to be a non-neighbour. An arbitrary algorithm can correlate these prefixes across many left vertices and use this negative information in later rounds, whereas the independent-neighbour greedy algorithm uses demand queries only as random-neighbour samples. Thus closing the gap requires either a stronger lower bound that remains hard against such correlated-prefix strategies or a new algorithm that exploits these additional certificates.

For low-adaptivity demand query algorithms with $O(n)$ queries per round, the remaining gap is therefore between an $\Omega(n^{1/(2k)}/k)$ lower bound and an $O(n^{1/(k+1)}/k)$ upper bound. More generally, with $O(n^{1+\delta})$ queries per round, our expected-size lower and upper bounds are, respectively,
\[
\Omega\!\left(\frac{n^{\frac{1-k \delta}{2k}}}{k}\right)
\qquad\text{and}\qquad
O\!\left(\frac{n^{\frac{1-k\delta}{k+1}}}{k}\right).
\]

The improvement in the upper bound becomes especially significant when the number of rounds grows. Combined with our lower bound, it resolves the round complexity of constant-factor approximation up to constant factors.

\begin{theorem}[Constant-approximation] \label{thm:const-approx}
With $O(n)$ demand queries per round:
\begin{itemize}
    \item any randomised algorithm that outputs an $O(1)$-approximate maximum matching, even with probability $1/n$, needs $\Omega(\frac{\log n}{\log \log n})$ rounds;
    \item for every fixed $\varepsilon>0$, there is a randomised algorithm that outputs a $(2+\varepsilon)$-approximate maximum matching with high probability in $O(\frac{\log n}{\log\log n})$ rounds.
\end{itemize}
\end{theorem}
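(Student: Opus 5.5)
The statement has a lower bound and an upper bound, and I would derive each from the corresponding general results (\Cref{thm:random-lb-expectation} and \Cref{ub:thm:main}) by choosing $k$ as a function of $n$.

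\textbf{Lower bound.} Suppose a $k$-round algorithm with $O(n)$ queries per round outputs an $\alpha$-approximation, with $\alpha=O(1)$. The general lower bound forces
\[
\alpha \;\ge\; c\,\frac{n^{1/(2k)}}{k}.
\]
If $k\le \frac{\log n}{4\log\log n}$, then $n^{1/(2k)}\ge (\log n)^2$. Hence $n^{1/(2k)}/k\ge \log n$, which is superconstant, a contradiction. So $k=\Omega(\log n/\log\log n)$.

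The catch is the success probability. The statement asks for success with probability only $1/n$, whereas the earlier bound covers expectation or probability $2/3$. Expectation alone does not help, since a $1/n$ chance of a good matching barely moves the expected size. I would therefore revisit the hard distribution behind \Cref{thm:random-lb-expectation}. That argument should show that, except with probability $n^{-\omega(1)}$, the algorithm finds at most $\mu\cdot k/n^{1/(2k)}$ matched edges. This should follow from the high-probability version mentioned in the abstract, which loses only a factor $\sqrt{\log n}$. I would use that version with the failure probability driven below $1/n$. Concretely, the number of "discovered" edges at each level is a sum of many independent contributions, so Chernoff bounds give $\exp(-\Omega(\cdot))$ tails. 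A union bound over the $k\le\log n$ rounds then finishes the argument.

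The extra $\sqrt{\log n}$ loss is harmless. With $k\le \frac{\log n}{4\log\log n}$ we still have $n^{1/(2k)}/(k\sqrt{\log n})\ge (\log n)^{2}/(\log n)^{1.5}=\omega(1)$. This probability amplification is the main obstacle. If the existing high-probability bound only gives failure probability $1/3$, I would redo its concentration step with failure probability $1/n^2$.

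\textbf{Upper bound.} First, set $k=\Theta(\log n/\log\log n)$ in \Cref{ub:thm:main}. This gives an approximation factor $n^{1/(k+1)}/k=O(1)$ with probability $2/3$. That is not yet $2+\varepsilon$ with high probability. I would boost it as follows. The independent-neighbour greedy algorithm builds a maximal-type matching on the sampled subgraph. To reach $2+\varepsilon$, I would use the same machinery to compute an almost-maximal matching. Each phase samples random neighbours among currently unmatched vertices and greedily matches them, and it removes a constant fraction of the "excess" relative to $\mu$. After $O(\log(1/\varepsilon))$ extra phases the remaining unmatched-edge structure is small, and a matching that is maximal up to $\varepsilon\mu$ is a $(2+\varepsilon)$-approximation.

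For high probability, I would run $O(\log n)$ independent copies in parallel, at the cost of $O(n\log n)$ queries per round. To stay within $O(n)$ queries per round, I would instead rely on the concentration inherent in the greedy process, since each phase's progress is a sum over $\Omega(\mu)$ nearly independent vertices. The small-$\mu$ case I would handle separately by spending $O(1)$ extra rounds on direct sampling. Each phase uses $O(\log n/\log\log n)$ rounds, so the total remains $O(\log n/\log\log n)$ for fixed $\varepsilon$.
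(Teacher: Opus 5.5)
Your lower bound is essentially the paper's argument (Corollary~\ref{cor:random-lb-consequences}), and no new concentration work is needed. Theorem~\ref{thm:k-round-lb} already supplies a fixed graph with $\mu(G)\ge n/10$. On it, for $k\le\frac14\log n/\log\log n$, the output exceeds $O(kn^{1-1/(2k)}\sqrt{\log n}+k\log n)=o(n)$ with probability only $\widetilde O(n^{-2})<1/n$.

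\textbf{The gap is in the high-probability part of the upper bound.} Theorem~\ref{ub:thm:main} only gives success probability $2/3$, so your $O(\log(1/\varepsilon))$ phases compose only in expectation. Neither of your boosting routes closes this.
\begin{itemize}
\item Parallel copies cost $\Theta(n\log n)$ queries per round, which breaks the budget.
\item The concentration route assumes the greedy steps are ``nearly independent'', and they are not. Whether a vertex is matched depends on which right vertices earlier steps of the same round consumed, and the active sets depend on all earlier rounds.
\item Even a tail of order $\exp(-\Omega(\mu))$ would be useless for $\mu=O(\log n)$.
\item Your patch of ``$O(1)$ extra rounds of direct sampling'' for small $\mu$ is unsupported. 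Nothing shows that $O(1)$ rounds with $O(n)$ queries give a $(2+\varepsilon)$-approximation with high probability when, say, $\mu\approx\log n$.
\end{itemize}

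\textbf{The missing idea is a \emph{pathwise} lower bound on the compensator.} Let $A$ be the sum, over all processing steps, of the conditional probability that the step adds an edge.
\begin{itemize}
\item For every execution of the one-sample algorithm, Lemma~\ref{ub:lem:progress} gives $A\ge k(\mu-2X)_+/(X+1)^{1/k}$. This comes from $\prod_r f_{r,u}\ge 1/(X+1)$ and AM--GM.
\item Lemma~\ref{ub:lem:exponential-bound} gives $\mathbb E[\exp(-X+(1-e^{-1})A)]\le 1$ with no independence assumptions.
\item On the failure event $\{X<\mu/(2+\varepsilon)\}$, some edge of a fixed optimum matching stays free throughout. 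So every round adds an edge, giving $X\ge k$ and $\mu>(2+\varepsilon)k$. This disposes of small $\mu$ automatically.
\item With $k=\lceil 2\log n/\log\log n\rceil$, this yields $A=\Omega_\varepsilon(k^2)$, and $A\ge 4X$ because $k/(n+1)^{1/k}\to\infty$.
\item The exponential bound then gives failure probability $\exp(-\Omega_\varepsilon(k^2))=n^{-\omega(1)}$, from a single run with at most $n$ queries per round.
\end{itemize}
No phases are needed: the same run already satisfies $\mathbb E X\ge\mu/(2+(\mu+1)^{1/k}/k)=\mu/(2+o(1))$.
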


As mentioned, \cite{BravermanO17} also achieves this round complexity lower bound but for the communication model. 

The same randomised algorithm also yields a $(2+\varepsilon)$-approximate
minimum vertex cover with high probability in
$O(\log n/\log\log n)$ rounds and $O(n)$ demand queries per round,
using one additional round to identify unmatched left vertices with
unmatched neighbours (\Cref{ub:cor:vertex-cover}).  The corresponding
Demand-OR result incurs only polylogarithmic query overhead
(\Cref{cor:demand-or-vertex-cover}).

\paragraph*{Deterministic demand query.} We also obtain near-tight bounds for deterministic algorithms. Here the dependence on the number of rounds can be characterised up to logarithmic factors.

\begin{theorem}[Deterministic algorithms] \label{thm:intro-deterministic}
Let $q$ be the number of demand queries allowed per round. If $kq<n$, no deterministic $k$-round algorithm guarantees any finite multiplicative approximation on every graph. If $kq\ge n$, the optimal deterministic approximation factor is
$
\widetilde{\Theta}\!\left(\frac{n^2}{k^2q}\right).
$
More precisely, there is an upper bound
$
O\!\left(\frac{n^2\log^2 n}{k^2q}\right),
$
and every deterministic algorithm has approximation factor at least
$
\Omega\!\left(\frac{n^2}{k^2q\log(2+kq/n)}\right).
$
\end{theorem}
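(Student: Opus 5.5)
The plan has three parts, matching the three claims in the statement: the infeasibility for $kq<n$, the deterministic upper bound $O(n^2\log^2 n/(k^2q))$, and the adversary lower bound $\Omega(n^2/(k^2q\log(2+kq/n)))$. The algorithm sketched in the second paragraph is a first attempt whose key round-progress claim I have not checked.

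\emph{The regime $kq<n$.} Here fewer than $n$ queries are issued in total, so for a deterministic algorithm some left vertex $u^\star$ is never queried on the empty graph. Answer every query with ``no neighbour'', which is consistent with both the empty graph and the graph whose only edge is $(u^\star,v)$ for any $v$. The algorithm behaves identically on the two graphs and has no certified edge, so it outputs the empty matching. On the second graph $\mu=1$, so no finite multiplicative approximation is possible.

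\emph{Upper bound for $kq\ge n$.} Set $\alpha=\Theta(n^2\log^2 n/(k^2q))$. If $\mu\le\alpha$, it suffices to find a single edge. One query per left vertex with an arbitrary order does this using $n\le kq$ queries, which we spread over the rounds. Otherwise we guess $\mu$ up to a factor $2$, costing one $\log n$ factor, and run a blocked deterministic greedy algorithm. We maintain a matching $M$ and split the free right vertices into $b\approx n/(kq/n)=n^2/(kq)$ blocks. In each round, every free left vertex $u$ is assigned $\approx q/n$ blocks in a fixed round-robin schedule. For each assigned block $B$, we issue a query whose order lists $B$ first, then the remaining free right vertices, then the matched ones. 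Such a query either returns a free neighbour in $B$, or certifies that $u$ has no free neighbour in $B$. Within each block we keep only one of the conflicting answers, which makes the matching step deterministic.

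The progress argument then runs as follows.
\begin{itemize}
\item While $|M|<\mu/\alpha$, at least $\mu/2$ optimal edges have both endpoints free.
\item Over $k$ rounds the schedule covers each (left vertex, block) pair about $kq/n\cdot k/b$ times.
\item Hence either $\Omega(\mu k^2q/(n^2\log n))$ new matched pairs are created, or the pairs that are certified to have no free neighbour exhaust the optimal edges.
\end{itemize}
The second $\log n$ factor comes from halving the block sizes across phases. The same certificates (free left vertices with no free neighbour) yield a vertex cover, which also gives maximality.

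\emph{Lower bound.} We use an adversary that answers queries only with ``hub'' items. Before round $i$, the adversary fixes a fresh hub set $H_i$ of size $h$. It answers a query $(u,\pi)$ with the first element of $\pi$ lying in $H_1\cup\dots\cup H_i$, and it declares every left vertex adjacent to all hubs. Each answer reveals one edge into a hub and certifies non-edges for the items preceding it in $\pi$. The algorithm's matching therefore has size at most $kh$. We must then show that a matching of size $\Omega(n)$ among non-hub items survives all non-edge certificates. Round $i$'s queries are made before $H_{i+1}$ is chosen, so the adversary picks $H_{i+1}$ to consist of items that appear early in few orders. A counting argument over the $q$ orders controls how many (left vertex, item) pairs can be excluded. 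Choosing $h\approx n^2/(k^2 q)$ up to the $\log(2+kq/n)$ loss keeps the excluded pairs below a density at which Hall's condition still guarantees an $\Omega(n)$ matching in the complement. This gives ratio $\Omega(n/(kh))$, and plugging in the precise balancing yields the stated bound.

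\emph{Main obstacle.} I expect the hardest step to be the Hall argument: ruling out that the certified non-edges, which are correlated prefixes across many left vertices, destroy every large hidden matching. I would first try a random hidden perfect matching on the non-hub items and union-bound the number of queries whose prefixes hit it. The logarithmic loss would come from bucketing the prefix lengths dyadically.
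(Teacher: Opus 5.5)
Your argument for $kq<n$ is correct and is the paper's own: some left vertex is never queried on the empty-graph transcript. The upper- and lower-bound parts each have a genuine gap.

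\textbf{Upper bound.} The dichotomy ``either $\Omega(\mu k^2q/(n^2\log n))$ new matched pairs, or the no-free-neighbour certificates exhaust the optimal edges'' is not justified, and collisions break it. Suppose $u$'s query returns a free $v\in B$, and the greedy step gives $v$ to another left vertex. Then $u$ gains neither an edge nor a certificate that it has no free neighbour in $B$.

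For example, take $G=K_{n,n}$, $q=n^{3/2}$ and $k=n^{1/4}$; the target is then a matching of size $\Omega(n/\log^2 n)$. Your scheme uses $b\approx n^2/(kq)=n^{1/4}$ blocks. With a fixed order inside each block, every query placing $B$ first returns the same vertex. So each round adds at most $b$ edges, and the output has size at most $kb=n^{1/2}$. Your coverage count $kq/n\cdot k/b$ also carries a spurious factor $k$: each (vertex, block) pair is covered only $kq/(nb)$ times.

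The missing idea is to spend adaptivity inside a batch so that greedy can be simulated exactly.
\begin{itemize}
\item A binary search over a fixed partition tree recovers $\min\{b,|N(u)\cap S|\}$ distinct free neighbours of $u$. It takes $D=1+\lceil\log_2 n\rceil$ rounds and $2b$ queries per round.
\item With these lists, a batch of $b$ left vertices runs a genuine sequential greedy. Earlier vertices of the batch use at most $b-1$ list entries, and a shorter list is complete.
\item Run $g$ independent greedy executions in parallel on a balanced partition of $L$, with $b=\lfloor\sqrt{q/(2g)}\rfloor$. Returning the largest of the $g$ maximal matchings gives a $2g$-approximation in $O(Dn/\sqrt{gq}+D)$ rounds.
\item Choosing $g=\max\{1,\Theta(n^2D^2/(k^2q))\}$ gives the bound.
\end{itemize}
For $k<64D$ this does not fit in $k$ rounds, and you also need the paper's fallbacks. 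When $q\ge n$, a one-round left--block incidence matching gives factor $O(n^2/q)$; when $q<n$, a one-pass batched scan suffices.

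\textbf{Lower bound.} The overall shape is right: all present pairs go into hub sets, while a second completion keeps a linear matching. The parameters and the key mechanism are not.

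First, with $h\approx n^2/(k^2q)$ your ratio $n/(kh)$ equals $kq/n$. The hub set per round must instead have size $\Theta\bigl((kq/n)\log(16kq/n)\bigr)$.

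Second, and more important, the hub set $T_r$ must be chosen from round $r$'s own queries, not round $r-1$'s. The adversary knows these queries, since they are fixed before any answer. $T_r$ must meet the first $\theta=\lfloor n^2/(64kq)\rfloor$ undecided entries of almost every query; picking items that appear early in \emph{few} orders is the opposite. This hitting property caps the new absent pairs at $\theta$ per query, and at $kq\theta\le n^2/64$ in total. That total bound is the real crux. Once you have it, the large matching is immediate: every surviving left vertex has more than $n/2$ non-absent pairs, so greedy matches $\lfloor n/2\rfloor$. No Hall or dyadic-bucketing argument is needed.

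Third, to lose only $\log(2+kq/n)$ rather than $\log q$, let $T_r$ miss up to $n/(8k)$ prefixes. \emph{Retire} the left vertices whose prefixes are missed: fix their remaining pairs absent and answer their queries from their fixed rows. Also retire any vertex that reaches $n/2$ absent pairs. At most $5n/32$ vertices retire overall.

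Finally, only \emph{undecided} pairs into the hubs may be made present. Declaring every left vertex adjacent to all hubs can contradict non-edges certified in earlier rounds.
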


In particular, with $\Theta(n)$ demand queries per round, deterministic constant approximation requires and can be achieved in $\widetilde{\Theta}(\sqrt n)$ rounds. This gives a polynomial separation from the randomised round complexity above. For comparison, Nisan's deterministic demand-query
lower bound for constant approximation gives the requirement
$q=\Omega(n^2/(k^3\log^4 n))$~\cite[Corollary~2]{Nisan21}.
Our lower bound improves the polynomial
round dependence to $k^2$ for the unrestricted approximation problem.  By K\"onig's theorem, the same
near-tight deterministic tradeoff holds for constructing an approximate
minimum vertex cover (\Cref{cor:deterministic-certificates}).  Combined
with the randomised upper bound above, this gives a polynomial
separation in the round complexity of constant-approximation vertex
cover (\Cref{cor:deterministic-randomised-separation}).

\paragraph*{Demand-OR query.} Finally, we note that \cite{Nisan21} utilises a slightly weaker variant of the demand query, the demand-OR query, to establish lower bounds against \textit{deterministic} algorithms. A demand-OR query consists of a left vertex $u \in L$ and a subset $S \subseteq R$, and the oracle returns $1$ if there exists a vertex $v \in S$ such that $uv$ is an edge, and $0$ otherwise. While a demand query can simulate a demand-OR query in a single step (by placing $S$ at the prefix of the permutation), the converse simulation in \cite{Nisan21} relies on sequential binary search, which incurs an $O(\log n)$ blow-up in round complexity.

An interesting question is whether this round blow-up is truly necessary. Since we are concerned with the limited-round regime ($k \ll \log n$), such a blow-up is significant. While our lower bounds for demand queries naturally extend to the weaker demand-OR model, a naive simulation of our upper bound would suffer this $\log n$ round penalty. We demonstrate that this penalty is not inherent: using non-adaptive geometric subsampling together with an exact singleton-recovery primitive, we can trade a polylogarithmic increase in the number of queries per round for no increase in round complexity. This keeps the round complexity identical to the stronger demand query model.

\begin{theorem} \label{thm:intro-dor-query}
Let $q=n^{1+\delta}$. Using $k$ rounds of adaptivity and $\widetilde{O}(q)$ Demand-OR queries per round, one can obtain expected-size approximation factor
$
O\!\left(\frac{n^{\frac{1-k\delta}{k+1}}}{k}\right).
$
For $\delta\ge0$, the same asymptotic approximation holds with probability at least $2/3$. Moreover, for every fixed $\varepsilon>0$, a $(2+\varepsilon)$-approximation can be obtained with high probability in $O(\log n/\log\log n)$ rounds using $O(n\log^3 n)$ Demand-OR queries per round.
\end{theorem}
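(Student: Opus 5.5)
The plan is to reduce the statement to the demand-query upper bounds (the independent-neighbour greedy algorithm behind \Cref{thm:few-rounds} and \Cref{thm:const-approx}). That algorithm uses demand queries only as uniform random-neighbour samples: in each round, each active left vertex $u$ draws independent uniform neighbours within a current candidate set $S_u\subseteq R$. For example, $S_u$ may be the set of still-unmatched right vertices, which is fixed at the start of the round. It therefore suffices to prove a \emph{non-adaptive simulation lemma}: for any fixed $u$ and $S\subseteq R$, a single round of $O(\log^2 n)$ Demand-OR queries, chosen without seeing any answers, returns a uniform random neighbour of $u$ in $S$ (or certifies $N(u)\cap S=\emptyset$), except with failure probability $1/\mathrm{poly}(n)$ or a small constant that we can absorb. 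Given this lemma, each random-neighbour sample of the original algorithm is replaced by one simulated sample inside the same round. The round count is unchanged, and the per-round query cost grows by an $O(\log^2 n)$ factor, which gives $\widetilde{O}(q)$.

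To prove the simulation lemma I would use geometric subsampling plus singleton recovery. Fix a uniformly random permutation, or independent random hash values, of $S$. For each level $j=0,1,\dots,\log n$, take a random subset $S_j\subseteq S$ that contains each element independently with probability $2^{-j}$. Let $d=|N(u)\cap S|$. At the level with $2^j\approx d$, the set $S_j$ contains exactly one neighbour with constant probability. To recover that neighbour non-adaptively, we query $S_j$ together with the sets $S_j\cap\{v:\text{bit }b\text{ of }v\text{ is }1\}$ for $b=1,\dots,\log n$; these bit-queries are the exact singleton-recovery primitive. We also query $S_j\cap\{v:\text{bit }b\text{ is }0\}$. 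This lets us \emph{verify} that $S_j\cap N(u)$ is a singleton: for every bit, exactly one of the two complementary queries must answer $1$. The recovered vertex is then checked with one further query on the singleton set.

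Repeating over $O(\log n)$ independent trials per level costs $O(\log^2 n)$ queries in total, and with high probability some level produces a verified singleton. Uniformity needs an argument. Conditioned on the subsample $S_j$ hitting exactly one neighbour, that neighbour is uniform over $N(u)\cap S$ by symmetry. So the plan is to fix in advance a rule for choosing the output that does not depend on identities: take the first verified singleton in a fixed order over (level, trial). The output is then exactly uniform, conditioned on success.

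The approximation guarantees then transfer directly from the demand-query analysis, with two bookkeeping points. First, the failure events are added to the existing error by a union bound over $\mathrm{poly}(n)$ samples. Second, for $\delta\ge 0$ the $2/3$-probability claim still holds. For the $(2+\varepsilon)$ statement, the $O(\log n/\log\log n)$-round algorithm uses $O(n)$ demand queries per round with a high-probability guarantee, so there is no slack to spare. Its samples therefore need $1/\mathrm{poly}(n)$ failure probability, so we amplify with $O(\log n)$ trials per level. The total is $O(n)\cdot O(\log n)\cdot O(\log n)\cdot O(\log n)=O(n\log^3 n)$: one logarithmic factor each for the levels, the bits and the amplification.

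The main obstacle I expect is making sure the original algorithm really needs only sampling primitives that can be specified non-adaptively within a round. In particular, the candidate set $S_u$ must be determined before the round starts and must not change with answers received during the round. Exact uniformity, rather than approximate uniformity, must also hold so that the existing analysis applies verbatim. If the original analysis tolerates only near-uniform samples, I would instead show that the algorithm is robust to $(1\pm o(1))$-biased sampling.
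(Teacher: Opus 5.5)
Your proposal is correct and follows the paper's construction. It uses geometric subsampling over $O(\log n)$ scales and the exact paired-bit singleton test, querying both bit sides so that uniqueness is verified rather than promised. It takes the first success in a fixed (scale, trial) order to get exact conditional uniformity. It substitutes the sampler into the independent-neighbour greedy algorithm, whose target sets $R_r$ are fixed at the start of each round. For the $(2+\varepsilon)$ part it uses polynomially small failure per call, $O(\log^3 n)$ queries, and agreement with the ideal execution until the first failure.

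The real difference lies in how you transfer the expected-size and $2/3$ guarantees. The paper uses constant failure probability $1/16$, so each call costs $O(\log^2 n)$ queries, and it treats \textsc{Fail} as an empty sample. It then re-runs the pathwise progress lemma with conditional matching probability $\pi f_{r,u}$, where $\pi\ge 15/16$. It adds a separate argument that the output is nonempty with probability at least $15/16$ in the $x<1$ case, where the ideal algorithm's deterministic $X\ge1$ no longer holds.

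A constant failure rate cannot be absorbed by a union bound, so your black-box transfer needs polynomially small failure. That costs $O(\log^3 n)$ queries per call; your sentence claiming $O(\log^2 n)$ for $O(\log n)$ trials per level is a miscount. The result is still $\widetilde O(q)$, which is all this statement asks, but it is weaker than the paper's $O(q\log^2 n)$.

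State the union bound as a coupling, not as conditioning on all calls succeeding. The two executions agree until the first failed call, so the expected output drops by at most $n\cdot kq\cdot n^{-c}$. This loss is negligible because the ideal expectation is at least of order $1/n$ whenever $\mu\ge 1$. The ideal constant-error proof also has slack, with $\Pr[X<x]\le e^{-4}$, so the $2/3$ bound survives.

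Finally, drop the extra query on the recovered singleton. That vertex is known only after the answers arrive, so querying it would cost another adaptive round. The check is also redundant, because the paired-bit test is already exact.
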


For deterministic Demand-OR queries, when $q\ge n$ and $k\ge2$ we obtain the same $\widetilde{\Theta}(n^2/(k^2q))$ approximation tradeoff as for deterministic demand queries. In contrast, in one round exactly $n^2$ deterministic Demand-OR queries are necessary and sufficient to guarantee any finite multiplicative approximation on every graph.  For minimum vertex cover, however, one round and
$q\ge n$ Demand-OR queries already suffice for an $O(n^2/q)$-approximation
(\Cref{prop:deterministic-or-vertex-cover}).

\newcommand{\oldbound}[1]{\textcolor{red!90!black}{#1}}
\newcommand{\newbound}[1]{\textcolor{blue!90!black}{#1}}

\begin{table*}[!t]
\centering
\footnotesize
\setlength{\tabcolsep}{3.5pt}
\renewcommand{\arraystretch}{1.25}

\begin{tabularx}{\textwidth}{@{}
    >{\raggedright\arraybackslash}p{0.14\textwidth}
    >{\raggedright\arraybackslash}p{0.39\textwidth}
    >{\raggedright\arraybackslash}p{0.20\textwidth}
    >{\raggedright\arraybackslash}X
@{}}

\toprule
\textbf{Setting} &
\textbf{Previous bounds} &
\textbf{This work} &
\textbf{Comments} \\
\midrule


\multicolumn{4}{@{}l}{
    \textbf{Randomised demand queries}
} \\
\midrule


$k=1$, $q=n$
&
$O(\sqrt n)$
\cite{dobzinski2014economic,Nisan21,shah2026sublinear}
\newline
$\Omega\bigl(n^{\oldbound{1/2-o(1)}}\bigr)$
\cite{dobzinski2014economic}
&
$\displaystyle
\newbound{\Theta}(\sqrt n)
$
&
Recovers the one-round result.
\\
\midrule


$k=1$, $q=n^{1+\delta}$
&
No general $\delta$-dependent tradeoff known.
&
$\displaystyle
\newbound{\Theta}\!\left(
n^{(1-\newbound{\delta})/2}
\right)
$
&
Tight $\delta$-dependence,
$-1\le\delta\le1$.
\\
\midrule


$k\ge2$, $q=n$
&
$\displaystyle
\Omega\!\left(
n^{1/\oldbound{5^{k+1}}}
\right)$
\cite{AlonNRW15}
\newline
$\displaystyle
O\!\left(n^{1/(k+1)}\right)$
\cite{dobzinski2014economic}
&
$\displaystyle
\Omega\!\left(
\frac{n^{1/\newbound{2k}}}{k}
\right)$
\newline
$\displaystyle
O\!\left(
\frac{n^{1/(k+1)}}{\newbound{k}}
\right)$
&
Stronger lower-bound exponent;
sharper greedy analysis.
\\
\midrule


$k\ge2$, $q=n^{1+\delta}$
&
No general round--query--approximation
tradeoff known.
&
$\displaystyle
\Omega\!\left(
\frac{n^{1/(2k)\newbound{-\delta/2}}}{k}
\right)$
\newline
$\displaystyle
O\!\left(
\frac{n^{(1-\newbound{k\delta})/(k+1)}}
{k}
\right)$
&
New $\delta$-dependence,
$-1\le\delta\le1/k$.
\\
\midrule


$\alpha=O(1)$, $k\ge1$
&
$\displaystyle
\Omega\!\left(
\frac{n^{1+1/k}}{\oldbound{k^4}}
\right)$ bits
\cite{BravermanO17}
\newline
Implies
$\displaystyle
q=\Omega\!\left(
\frac{n^{1+1/k}}{\oldbound{k^5}\log n}
\right)$
&
$\displaystyle
q=\Omega\!\left(
\frac{n^{1+1/k}}{\newbound{k^2}}
\right)$
\newline
$\displaystyle
q=O\!\left(
\frac{n^{1+1/k}}{k}
\right)$
&
Near-tight query budget.
\\
\midrule


$\alpha=O(1)$, $q=n$
&
$k=\Omega(\log\log n)$
\cite{AlonNRW15}
\newline
$k=\Omega(\log n/\log\log n)$
for near-optimal approximation
\cite{BravermanO17}
\newline
$k=O(\log n)$
\cite{dobzinski2014economic,Nisan21}
&
$\displaystyle
k=\Theta\!\left(
\frac{\log n}{\newbound{\log\log n}}
\right)$
&
Optimal round complexity.
\\


\midrule
\multicolumn{4}{@{}l}{
    \textbf{Deterministic queries}
} \\
\midrule


Demand, $\alpha=O(1)$
&

$\displaystyle
q=\Omega\!\left(
\frac{n^2}{{\oldbound{k^3}}\log^4 n}
\right)$
\cite{Nisan21}
&
$\displaystyle
q=\widetilde{\Theta}\!\left(
\frac{n^2}{{\newbound{k^2}}}
\right)$
&
Near-tight tradeoff for all
approximation factors;
requires $kq\ge n$.
\\
\midrule


Demand,
$q=n$,
$\alpha=O(1)$
&
$\displaystyle
k=\Omega\!\left(
\frac{n^{\oldbound{1/3}}}{\log^{4/3}n}
\right)$
\cite{Nisan21}
&
$\displaystyle
k=\widetilde{\Theta}\!\left(
n^{\newbound{1/2}}
\right)$
&
Polynomial separation
from randomized rounds.
\\
\midrule


Demand-OR,
$k\ge2$,
$q\ge n$,
$\alpha=O(1)$
&
$\displaystyle
q=\Omega\!\left(
\frac{n^2}{\oldbound{k^3}\log n}
\right)$
\cite{Nisan21}
&
$\displaystyle
q=\widetilde{\Theta}\!\left(
\frac{n^2}{\newbound{k^2}}
\right)$
&
Matching deterministic
tradeoff, up to polylogs.
\\

\bottomrule
\end{tabularx}

\caption{
Comparison with previous results.
The table highlights improvements over previous work rather than listing all results of this paper.
Here $k$ denotes the number of rounds,
$q$ the number of queries per round,
and $\alpha$ the approximation factor.
Randomised approximation factors are stated in expectation.
The one-round randomised result holds for
$-1\le\delta\le1$,
and the general $k$-round result for
$-1\le\delta\le1/k$.
All approximation factors are understood to be at least one.
The deterministic demand-query tradeoff assumes
$kq\ge n$; otherwise no finite approximation is possible.
\oldbound{Red} highlights terms in previous bounds that have been improved,
while \newbound{blue} highlights the corresponding improvements in this work.
}
\label{tab:results}
\end{table*}

\subsection{Related Work}

Maximum matching and minimum vertex cover have been extensively studied in
restricted-access graph models. In particular, a large body of work studies
estimating the maximum matching size (equivalently vertex cover size in bipartite graphs) using adjacency-list and adjacency-matrix
queries \cite{chen2020sublinear,kapralov2020space,behnezhad2021optimal,
behnezhad2023beating,behnezhad2023sublinear,bhattacharya2023dynamic,
BhattacharyaKS23,behnezhad2024approximating,azarmehr2025tight}.
In these models, constructing an approximate matching admits simple and
essentially tight query bounds, motivating the extensive study of the more
challenging problem of estimating the matching size.
Related problems have also been studied in one-way and simultaneous
communication models \cite{goelkk12,assadikly16}, where the input is
distributed among multiple parties.
Matching has also been studied in distributed sketching and massively parallel
computation models \cite{assadi2020lower,assadi2025rounds,czumaj2018round,ghaffari2018improved}.

Matching and vertex cover have also been extensively studied in graph
streams. The recent breakthrough lower bound for matching in
insertion-only streams implies that the simple greedy algorithm is optimal \cite{assadisemi2026,assadi2026greedy}.
For dynamic streams, where both edge insertions and deletions are allowed,
\cite{assadikly16,darkk20,assadi2022asymptotically,naidus22space}
establish asymptotically optimal space bounds for approximating
maximum matching and minimum vertex cover.
Multi-pass streaming algorithms for matching have also been extensively studied, with \cite{fischer2022deterministic,assadi2023hidden} establishing important upper and lower bounds for insertion-only streams, and \cite{assadi2026settling} settling the pass complexity of approximate matching in dynamic streams.
Unlike our setting, the complexity measure is space rather than
queries.


\subsection{Organisation} 
We provide a technical overview covering \Cref{thm:few-rounds},  \Cref{thm:intro-deterministic} and \Cref{thm:intro-dor-query} in \Cref{sec:technical-overview}. 
In \Cref{sec:prelims}, we define the required notation, tools, and the model of demand queries. Subsequently, in \Cref{sec:k-round-lb}, we prove our main result, which is the randomised lower bound. In \Cref{ub:sec:upper-bound}, we give the randomised upper bounds and their improved analysis compared to previous work. In \Cref{sec:demand-or}, we give our algorithms and bounds for Demand-OR queries.
Finally, in \Cref{sec:deterministic}, we prove the deterministic round--query tradeoff.
\section{Technical overview}
\label{sec:technical-overview}

We explain the main ideas behind our bounds. For simplicity, we consider
$O(n)$ demand queries per round, corresponding to $\delta=0$, and suppress
logarithmic factors in the query and approximation bounds. Throughout,
the graph has $n$ vertices on each side, and $\mu$ denotes its maximum
matching size.

\subsection{Randomised lower bound}

We construct a random bipartite graph $G=(L\cup R,E)$ with $n$ vertices
on each side. We may assume $kn^{-\frac{1}{2k}}\le 1/2$, since otherwise the
claimed approximation lower bound is constant, which is a triviality. First, we choose a random
partition $R=T_0\sqcup\cdots\sqcup T_k$. Each right vertex $v$ chooses
exactly one layer, independently of the other right vertices, with
\[
    \Pr[v\in T_i]=
    \begin{cases}
        n^{-\frac{1}{2k}}, & 0\le i<k,\\
        1-kn^{-\frac{1}{2k}}, & i=k.
    \end{cases}
\]
Conditional on this partition, we include each pair $uv$, with $u\in L$
and $v\in T_i$, independently with probability $n^{-\frac{i}{k}}$. Thus every
vertex in $T_0$ is adjacent to all of $L$, and the expected degree of a
vertex in $T_i$ is $n^{1-\frac{i}{k}}$, decreasing by a factor of $n^{\frac{1}{k}}$
between successive layers. The algorithm is given the labeled vertex
sets, but the layer assignments are hidden.

Each of the first $k$ layers has expected size $n^{1-\frac{1}{2k}}$, so their
union has expected size $kn^{1-\frac{1}{2k}}$. These layers provide many
easily discovered edges, but any matching can use each of their right
vertices only once. Meanwhile, $T_k$ has linear size with high
probability. Although its vertices have expected degree one, the graph
between $L$ and $T_k$ contains a matching of size $\Omega(n)$ with high
probability. Finding a matching substantially larger than the total size
of the denser layers therefore requires discovering edges into this
sparse layer.

\begin{figure}[h]
    \centering
    \scalebox{0.85}{

\begin{tikzpicture}[
    x=1cm, y=1cm, font=\small,
    layer/.style={draw, rounded corners=2pt, minimum width=1.85cm,
                  minimum height=0.72cm, line width=0.65pt, align=center},
    smalllabel/.style={font=\footnotesize, text=colorEight},
    vertex/.style={circle, fill=colorEight, inner sep=1.35pt},
    matchingvertex/.style={circle, fill=colorFive, inner sep=1.35pt},
    stub/.style={draw=colorNine, line width=0.9pt},
    matching/.style={draw=colorFive, line width=1.15pt}
]

\node[draw=colorFive!80!black, fill=colorTwo!13,
      line width=0.7pt, rounded corners=2pt,
      minimum width=1.3cm, minimum height=7.5cm] (L) at (0,0) {};
\node at (0,3.25) {$L$};
\node[smalllabel, anchor=south] at (0,3.87) {$|L|=n$};

\node[smalllabel, anchor=south] at (6.15,3.87) {$|R|=n$};
\node[layer, fill=colorOne!23, draw=colorOne!80!black] (T0) at (6.15,3.17) {$T_0$};
\node[layer, fill=colorSix!13, draw=colorSix!76!black] (T1) at (6.15,1.87) {$T_1$};
\node[layer, fill=colorFour!23, draw=colorNine] (Ti) at (6.15,0.57) {$T_i$};
\node[layer, fill=colorThree!15, draw=colorThree!77!black] (Tk1) at (6.15,-0.73) {$T_{k-1}$};
\node[layer, fill=colorTwo!22, draw=colorFive!80!black,
      minimum height=1.56cm] (Tk) at (6.15,-2.43) {$T_k$};

\foreach \N in {T0,T1,Tk1}{
    \node[smalllabel, anchor=west] at ($(\N.east)+(0.2,0)$) {$n^{1-\frac{1}{2k}}$};
}
\node[smalllabel, anchor=west] at ($(Tk.east)+(0.2,0)$) {$\Theta(n)$};

\node[smalllabel, anchor=east] at (4.3,3.77) {expected degree};
\foreach \yy/\degree in {3.17/n,1.87/{n^{1-1/k}},0.57/{n^{1-i/k}},-0.73/{n^{1/k}}}{
    \node[smalllabel, anchor=east] at (4.35,\yy) {$\degree$};
    \node[vertex] at (5.18,\yy) {};
    \draw[stub] (5.18,\yy) -- ++(-0.42,0.22);
    \draw[stub] (5.18,\yy) -- ++(-0.42,-0.22);
}
\node[smalllabel, anchor=east] at (4.35,-1.79) {$1$};
\node[vertex] at (5.18,-1.79) {};
\draw[stub] (5.18,-1.79) -- ++(-0.42,0.18);
\draw[stub] (5.18,-1.79) -- ++(-0.42,-0.18);

\node[smalllabel, anchor=south] at (9.4,3.84)
    {degree drops by factor $n^{1/k}$};
\draw[-{Latex[length=1.8mm]}, color=colorNine, thick]
    (9.4,3.57) -- (9.4,-1.00);
\draw[decorate, decoration={brace,amplitude=4.5pt}, color=colorNine, thick]
    (11.0,3.55) -- (11.0,-1.12);
\node[smalllabel, anchor=west, align=left] at (11.38,0.96)
    {$k$ dense layers\\total size $k n^{1-\frac{1}{2k}}$};

\foreach \yy in {-2.03,-2.26,-2.49,-2.72,-2.95}{
    \pgfmathsetmacro{\rightoffset}{\yy+2.43}
    \draw[matching] ($(L.east)+(0,\yy)$) -- ($(Tk.west)+(0,\rightoffset)$);
    \node[matchingvertex] at ($(L.east)+(0,\yy)$) {};
    \node[matchingvertex] at ($(Tk.west)+(0,\rightoffset)$) {};
}
\node[smalllabel, anchor=west, align=left] at (9.36,-2.45)
    {matching of size\\$\Omega(n)$ w.h.p.};

\node[text=colorNine, anchor=west] at (6,0) {$\vdots$};

\node[text=colorNine, anchor=west] at (6,1.3) {$\vdots$};

\end{tikzpicture}
    }
    \caption{The lower bound construction.}

    \label{fig:rand-dem-query-lb}
\end{figure}

{The main difficulty is that a demand query reveals not only an edge
but also certifies every preceding pair to be a non-edge. We therefore
track both positive and negative information obtained by the algorithm.
At the end of round $r$, we discard any right vertex that is incident to a certified edge or to more than $n^{\frac{r}{k}}$ certified non-edges. Let $R_r$ denote the set of remaining right vertices. Thus, every vertex in $R_r$ has no certified incident edge and at most $n^{\frac{r}{k}}$ certified incident non-edges. We will keep track of the number of discarded vertices as it will help us bound the size of matching found.}

{Fix a transcript of the first $r$ rounds and let $v\in R_r$. Since $v$ has at most $n^{\frac{r}{k}}$
certified non-edges incident to it, this information
reduces the probability that $v$ belongs to $T_r$ by at most
a constant factor. Consequently,
\[
    \Pr[v\in T_r\mid\text{transcript}]
    =\Omega(n^{-\frac{1}{2k}}).
\] }

Since a vertex in $T_r$ is adjacent to each left vertex with
probability $n^{-\frac{r}{k}}$, every uncertified pair $(u,v)$ with $v\in R_r$ is an edge with conditional probability at least
$\Omega(n^{-\frac{r}{k}-\frac{1}{2k}})$.
It follows that a demand query certifies at most
$O(n^{\frac{r}{k}+\frac{1}{2k}})$ new pairs involving vertices in
$R_r$ in expectation. With $O(n)$ queries in round $r+1$, the expected
number of newly certified non-edges involving these vertices is
therefore $O(n^{1+\frac{r}{k}+\frac{1}{2k}})$.

{A vertex of $R_r$ can exceed the next
threshold $n^{\frac{r+1}{k}}$ only after acquiring
$\Omega(n^{\frac{r+1}{k}})$ new certified non-edges. Thus the expected
number of vertices discarded for this reason at the end of round $r+1$ is
\[
    O\left(
        \frac{n^{1+\frac{r}{k}+\frac{1}{2k}}}{n^{\frac{r+1}{k}}}
    \right)
    =O(n^{1-\frac{1}{2k}}).
\]}

{A vertex of $R_r$ can also be discarded because it acquired a certified edge adjacent to it. We will only count edges in incident to $T_k$, accounting for dense layers separately. Since a pair incident to a vertex
in $T_k$ is an edge with probability $1/n$, each uncertified pair has
probability at most $1/n$ of being an edge into $T_k$. Consequently,
among the pairs certified in round $r+1$, the expected number of
edges into $T_k$ is at most
\[
    O\left(
        \frac{n^{1+\frac{r}{k}+\frac{1}{2k}}}{n}
    \right)
    = O(n^{1-\frac{1}{2k}}),
\]
where we use $r\le k-1$. This also bounds the expected number of
vertices in $R_r\cap T_k$ discarded because of certified edges.
}

{We therefore discard $O(n^{1-\frac{1}{2k}})$ vertices per round in expectation, and hence $O(kn^{1-\frac{1}{2k}})$ over all rounds. The dense layers together contain $O(kn^{1-\frac{1}{2k}})$ vertices in expectation. Thus, the total
number of right vertices discarded because of certified edges or the
non-edge thresholds is $O(kn^{1-\frac{1}{2k}})$ in expectation.}

{The set of discarded vertices, together with all vertices in the denser
layers, forms a vertex cover of the certified edges: every certified
edge either has its right endpoint in a denser layer or has its right
endpoint discarded. Therefore, any matching consisting only of
certified edges has expected size
$O(kn^{1-\frac{1}{2k}})$. Note that a similar bound also holds true if the algorithm is allowed to output uncertified edges as part of the matching.}

{Finally, since the graph has a matching of size $\Omega(n)$ with high
probability, comparing this with the expected matching size obtainable
by the algorithm yields an approximation lower bound of
$
    \Omega\left(\frac{n^{\frac{1}{2k}}}{k}\right).
$
The same lower bound holds for algorithms that succeed with constant
probability.}

\subsection{Randomised upper bound}
We use the independent-neighbour greedy algorithm underlying
\cite{dobzinski2014economic}: In each round, every unmatched left vertex
independently samples a uniformly random neighbour, when one exists,
among the right vertices unmatched at the start of that round.
A demand query implements this by placing those right vertices first
in a uniformly random order.
We process the returned edges greedily, without further queries. Our
analysis improves the dependence on $k$ by retaining the progress made
across all rounds. This is the primary difference from the analysis of \cite{dobzinski2014economic} which lets us match the lower bound for constant approximation with $q = O(n)$.

Let $X$ be the final matching size. At least $\mu-2X$ edges of a fixed
maximum matching have both endpoints still unmatched. Consider one such edge $uv$ whose endpoints remain unmatched
throughout the execution. Let $R_r$ denote the unmatched right
vertices at the beginning of round $r$, and write
$d_r=|N(u)\cap R_r|$. Since $u$ remains unmatched, it participates
in every round. In round $r$, some neighbours of $u$ may be matched
to other left vertices before $u$ is processed. Let $f_r$ be the
fraction of its $d_r$ neighbours that are still available when
$u$ is processed. Observe that every neighbour available at the
beginning of round $r+1$ must also have been available when $u$
was processed in round $r$. Therefore,
\[
    f_r \ge \frac{d_{r+1}}{d_r}.
\]
Multiplying these inequalities over all $k$ rounds gives
\[
    \prod_{r=1}^{k} f_r
    \ge \prod_{r=1}^{k}\frac{d_{r+1}}{d_r}
    =\frac{d_{k+1}}{d_1}.
\]
Now, at most $X$ right vertices are matched during the entire
execution, so $d_1-d_{k+1}\le X$. Moreover, $d_{k+1}\ge 1$,
since $v$ remains unmatched and is a neighbour of $u$.
Consequently,
\[
    \prod_{r=1}^{k} f_r
    \ge \frac{d_{k+1}}{d_{k+1}+X}
    \ge \frac{1}{X+1}.
\]
Finally, by the arithmetic--geometric mean inequality,
\[
    \sum_{r=1}^{k}f_r
    \ge k\left(\prod_{r=1}^{k}f_r\right)^{1/k}
    \ge \frac{k}{(X+1)^{1/k}}.
\]
The significance of this inequality comes from understanding what $f_r$ represents.
Although all samples in a round are drawn simultaneously, for the sake of analysis, we
expose them in the fixed greedy processing order. Since the
samples of different left vertices are independent, conditional
on the previously exposed samples, the sample of a vertex $u$
remains uniform among its neighbours unmatched at the beginning
of the round. Some of these neighbours may already have been
matched by the time $u$ is processed. Consequently, the
conditional probability that $u$ is matched is precisely
$f_{r}$, the fraction of its round-start neighbours that
remain unmatched when $u$ is processed. By linearity of
expectation, the expected sum of these conditional probabilities
over all vertices and rounds equals the expected size of the
matching produced by the algorithm. Thus, if many edges of the
optimum matching remain unmatched, the preceding inequality
guarantees substantial accumulated progress across the $k$ rounds,
which in turn forces the expected output matching to be large.

More concretely, let $f_{r,u}$ denote the conditional
probability that $u$ gets matched when processed in round $r$,
taking $f_{r,u}=0$ if $u$ is already matched.
Since at least $(\mu-2X)$ edges of a fixed maximum
matching survive, the preceding argument gives
\[
    A=\sum_{u\in L}\sum_{r=1}^{k}f_{r,u}
    \ge \frac{k(\mu-2X)}{(X+1)^{1/k}}.
\]
Even though $A$ is not the same random variable as $X$
\footnote{Consider a one-round execution on $K_{2,2}$.
$A$ always takes value $3/2$, whereas $X$ takes values
$1$ and $2$, each with probability $1/2$.},
the crucial observation is that
$\mathbb E[A]=\mathbb E[X]$. Call this common expectation $m$.
If $m\ge\mu/8$, we already have a constant approximation.
Otherwise, Markov's inequality gives
$\Pr[X\le 2m]\ge 1/2$. On this event,
$\mu-2X\ge\mu/2$, and hence $A\ge\frac{k\mu}{2(2m+1)^{1/k}}$.
Taking expectations, we obtain
\[
    m=\mathbb E[A]
    \ge\frac{k\mu}{4(2m+1)^{1/k}}.
\]
Since $m\ge 1$ whenever $\mu>0$, this implies
$m=\Omega((k\mu)^{k/(k+1)})$.
Combining the two cases, and using $\mu\le n$,
the resulting expected-size approximation factor is $O(\frac{n^{1/(k+1)}}{k})$.



\subsection{Deterministic lower bound}
Our proof builds on the adversarial approach of \cite{Nisan21}:
answer the queries while keeping two graphs consistent with every
answer, one with a large matching and another whose edges meet a
small vertex cover. Nisan preserves a perfect matching in the first
graph by ensuring that no vertex loses too many possible neighbours.
His adversary uses a small hitting set to answer large queries,
while the remaining small queries are handled by declaring the
corresponding pairs absent. However, if too many such pairs are
incident on a vertex, declaring them absent would violate this
invariant. Nisan handles these ``heavy'' vertices by instead fixing
all their remaining undecided incident pairs present. This preserves
the perfect matching, but each heavy vertex now contributes to
the vertex cover of the second graph. The resulting lower bound
is therefore limited by the need to balance the size of the
hitting sets against the number of heavy vertices.

More concretely, let $\theta$ be a threshold separating large
and small queries in each round. Nisan answers all large queries
using a hitting set of size $O(n\log n/\theta)$. The remaining
small queries involve at most $q\theta$ undecided pairs.
To ensure that no vertex loses more than $n/(2k)$ possible
neighbours in a round, any vertex incident on more than
$n/(2k)$ of these pairs is declared heavy. There can be at
most $O(kq\theta/n)$ such vertices, and all their remaining
incident pairs are fixed present. Consequently, the vertex
cover of the small graph has size at most
\[
    O\left(
        k\left(\frac{n\log n}{\theta}
        +\frac{kq\theta}{n}\right)
    \right).
\]
Balancing these two terms gives
$\theta=\Theta(n\sqrt{\log n/(kq)})$, resulting in a cover
of size $\widetilde O(k^{3/2}\sqrt q)$. For $q=\Theta(n)$,
this yields an approximation lower bound of only
$\widetilde\Omega(\sqrt n/k^{3/2})$.

Our key change is to preserve only a matching of size $\Omega(n)$,
rather than a perfect matching. This allows us to retire a small
number of left vertices whose neighbourhoods become too restricted:
we fix all their remaining undecided pairs absent and answer future
queries from their fixed neighbourhoods. Retirement adds no edges,
so it does not enlarge the cover of the small graph. Instead of
protecting every vertex separately, we bound the total number of
pairs fixed absent by queries to active vertices, and use this to
bound how many vertices must be retired.

For $O(n)$ queries per round, choose a prefix threshold
$\theta=cn/k$, where $c>0$ is sufficiently small. At the start of
each round, consider the undecided pairs preceding each query's
first known edge, or all undecided pairs if no edge is known.
For every query with at least $\theta$ such pairs, take the right
endpoints of the first $\theta$. A set $T_r$ of
$\widetilde O(k)$ right vertices intersects all but at most
$n/(8k)$ of these prefixes. We retire the left vertices with a
missed prefix, and fix every undecided pair between the remaining
active left vertices and $T_r$ present before answering any query.
These commitments ensure that each query to an active vertex
fixes at most $\theta$ new pairs absent. At the end of the round,
we also retire any active left vertex that has accumulated at
least $n/2$ absent pairs.

Over all $k$ rounds, queries to active vertices fix at most
$O(kn\theta)$ pairs absent. Thus only $O(k\theta)=O(cn)$ vertices
are retired by the absent-pair threshold, in addition to at most
$n/8$ retired because of missed prefixes. More than $n/2$ left
vertices therefore remain active, each with more than $n/2$
possible neighbours. Meanwhile, every pair fixed present has its
right endpoint in
\[
    T=\bigcup_{r=1}^k T_r,
    \qquad |T|=\widetilde O(k^2).
\]
Making all undecided pairs absent gives a graph whose matching
size is at most $|T|$. Making them present gives a graph with a
matching of size at least $\lfloor n/2\rfloor$. Both completions
preserve every answer, including all non-edges certified by its
prefix, so a deterministic algorithm must return the same
matching on both.

This gives an approximation lower bound of
$\widetilde\Omega(n/k^2)$, compared with
$\widetilde\Omega(\sqrt n/k^{3/2})$ from \cite{Nisan21}.
The polynomial improvement comes from eliminating the extra cover
vertices needed to protect heavy vertices, allowing the larger
threshold $\theta=\Theta(n/k)$. Working directly with demand-query
prefixes also avoids the logarithmic round overhead in working with Demand-OR queries (as Nisan did) instead of demand queries.



\subsection{Deterministic upper bound}
A sequential greedy algorithm processes the left vertices one at
a time, matching each vertex to a currently unmatched neighbour
whenever possible. This produces a maximal matching and hence a
$2$-approximation, but its direct implementation requires up to
$n$ rounds: each demand query depends on which right vertices
were matched by earlier queries. Our approach is to recover enough
neighbours in advance to process several left vertices without
making further queries. We first explain how to execute such a
batch, and then how to trade approximation quality for fewer rounds.

More concretely, consider a batch $B$ of at most $b$ left vertices,
and let $S$ be the right vertices currently unmatched. For each
$u\in B$, recover $b$ distinct neighbours in $S$, or all its
neighbours in $S$ if fewer than $b$ exist. We can then process
the vertices of $B$ greedily using only these lists. If a vertex
has a list of size $b$, at most $b-1$ of its listed neighbours
can have been used by earlier vertices in the batch, so at least
one remains available. If its list has fewer than $b$ neighbours,
the list is complete, and we can determine exactly whether an
available neighbour remains. Thus the recovered lists suffice
to execute a valid sequential greedy process within the batch,
without any further queries.


To recover these lists, we use a binary-tree search procedure
that, given a vertex $u$ and a set $S$ of available right vertices,
finds $b$ distinct neighbours of $u$ in $S$, or all such neighbours
if fewer than $b$ exist. This takes $O(\log n)$ rounds and
$O(b)$ demand queries per round. Running the procedure in
parallel for all vertices of $B$ therefore allows us to process
an entire batch in $O(\log n)$ rounds, using $O(b^2)$ queries
per round.

To obtain the approximation--round tradeoff, partition $L$ into
$g$ balanced groups $L_1,\ldots,L_g$. For each group $L_i$, run
a separate greedy execution on $G[L_i,R]$, processing its vertices
in batches of size $b=\Theta\left(\sqrt{n/g}\right)$.

Each group independently computes a maximal matching in its
subgraph $G[L_i,R]$, without coordinating with the other groups.
Thus, the same right vertex may be matched in different groups,
but not more than once within any individual group.
We eventually return the largest of these matchings. Their
total query cost per round is $O(gb^2)=O(n)$.

Each group contains $O(n/g)$ left vertices and therefore needs
$O(n/(gb)+1)$ batches. Consequently, all executions finish in
\[
    O\left(\left(\frac{n}{gb}+1\right)\log n\right)
    =O\left(\sqrt{\frac{n}{g}}\log n\right)
\]
rounds.

Let $M_i$ be the matching produced by group $L_i$. We return
the largest of these matchings, rather than their union, since
different groups may have used the same right vertices.
To see why this gives a $2g$-approximation, fix a maximum
matching $M^\star$ of $G$. Each $M_i$ is maximal in $G[L_i,R]$,
so every edge of $M^\star$ whose left endpoint belongs to $L_i$
meets an edge of $M_i$. Each edge of $M_i$ can meet at most
two edges of $M^\star$. Summing over the groups gives
\[
    \sum_{i=1}^g |M_i|\ge \frac{\mu}{2},
    \qquad
    \max_{1\le i\le g}|M_i|
    \ge \frac{\mu}{2g}.
\]
Increasing the number of groups thus reduces the number of
batches processed sequentially, at the cost of a proportionate
loss in approximation.

For $k$ at least a sufficiently large multiple of $\log n$,
choosing
\[
    g=\Theta\left(\max\left\{
        1,\frac{n\log^2 n}{k^2}
    \right\}\right)
\]
fits the executions within $k$ rounds. This gives approximation
factor
\[
    O\left(\max\left\{1,\frac{n\log^2 n}{k^2}\right\}\right)
    =\widetilde O\left(\max\left\{1,\frac{n}{k^2}\right\}\right),
\]
matching our deterministic lower bound up to logarithmic factors.

\subsection{Demand-OR queries}
Our randomised demand query algorithm uses demand queries only to sample a uniform neighbour inside a specified set of right vertices. Thus, to obtain
the same algorithms using Demand-OR queries, it suffices to simulate
the following primitive: given a left vertex $u\in L$ and a set
$S\subseteq R$, return a uniformly random neighbour of $u$ in $S$,
if one exists, and otherwise report that no neighbour exists.

We prove a simulation lemma showing that this primitive can be implemented using
$O(\log^2 n)$ Demand-OR queries in a single nonadaptive round.
The algorithm detects an empty neighbourhood with certainty.
Otherwise, it returns a genuine neighbour with constant probability,
and its output is exactly uniform conditional on success.
This avoids the $O(\log n)$ adaptive rounds needed by the trivial binary-search
simulation, at the cost of an $O(\log^2 n)$ blowup in the number of queries.

We now explain the main idea behind this simulation, which uses well-known
techniques from the $L_0$-sampling literature~\cite{monemizadeh20101,jowhari2011tight,AhnGM12-soda,ghosh2024new}.
The idea is to guess the number $d$ of neighbours of $u$ in $S$
in powers of two, using guesses $1,2,4,\ldots$.
For each guess $g$, sample the vertices of $S$ independently with
probability $1/g$. One of these guesses satisfies $d\le g<2d$,
so its sampling rate is $\Theta(1/d)$ and isolates exactly one
neighbour with constant probability. To detect and recover this
neighbour, label the right vertices in binary. Query the sampled
set and, for each bit position, its two subsets consisting of
vertices with bit zero or bit one.
A unique neighbour gives exactly one positive answer at every position.
Two distinct neighbours give two positive answers at some position.
The test therefore verifies uniqueness and recovers a genuine neighbour.
All sampling scales and all bit queries are fixed before any answers
are received, so the entire procedure takes one round.

By symmetry, the first successful trial in a fixed ordering of independent
trials returns an exactly uniform neighbour, conditional on success.
A constant number of independent repetitions per guess ensures high constant success probability. 
The $O(\log n)$ guesses and $O(\log n)$ queries per trial give
$O(\log^2 n)$ queries per simulated call. Consequently, applying
our demand-query upper bound gives the same asymptotic approximation
using $k$ rounds and $O(q\log^2 n)$ Demand-OR queries per round.
Additional parallel repetitions yield the high-probability
$(2+\eps)$-approximation in $O(\log n/\log\log n)$ rounds
using $\widetilde O(n)$ queries per round.

\section{Prelimilaries} \label{sec:prelims}


\paragraph*{Notation.}
For a bipartite graph $G=(V,E)$ where $V=L \sqcup R$. For each vertex $v\in V$, we denote by $N(v)=\{u\in V : (u,v)\in E\}$ its neighbourhood and by $\deg(v)=|N(v)|$ its degree. For a set of vertices $S$ we define $N_S(v) = N(v) \cap S$ and $\deg_S(v)=|N_S(v)|$.

Let $\mu(G)$ be the size of the maximum matching of the graph $G$. 
For a subset $F$ of edges in $E$, we use $V(F)$ to denote the vertices incident on $F$; similarly, for a set $U$ of vertices, $E(U)$ denotes the edges incident on $U$. 
For sets $A,B$ of vertices, $E(A,B)$ denotes the edges between $A$ and $B$. 
We further use $G[U]$ for any set $U$ of vertices to denote the induced subgraph of $G$ on $U$. 

\paragraph*{General Inequalities} We use the following inequality a number of times in this work.
\begin{proposition}\label{lem: 1/4 1/e}
    $\frac{1}{4} \leq (1-\frac{1}{x})^x\leq \frac{1}{e}$ for all $x \geq 2$.
\end{proposition}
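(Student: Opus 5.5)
The plan is to pass to logarithms: set $f(x)=x\ln(1-1/x)$ on $[2,\infty)$, so the claim becomes $\ln\frac14\le f(x)\le -1$. The idea is to show that $f$ is nondecreasing and tends to $-1$, and then read off both bounds.

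For the upper bound I will use $\ln(1-t)\le -t$ for $t\in[0,1)$, which follows from concavity of $\ln$ or from $e^{-t}\ge 1-t$. Taking $t=1/x\in(0,1/2]$ gives $x\ln(1-1/x)\le -1$, so $(1-1/x)^x\le 1/e$. This half needs nothing beyond $x>1$.

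For the lower bound I will show that $f$ is nondecreasing on $[2,\infty)$. Then $f(x)\ge f(2)=2\ln\frac12=\ln\frac14$, which gives $(1-1/x)^x\ge (1/2)^2=1/4$. The derivative is
\[
f'(x)=\ln\Bigl(1-\frac1x\Bigr)+\frac{1}{x-1}.
\]
Substituting $t=1/x$ turns this into $\ln(1-t)+\frac{t}{1-t}$, and I need it to be nonnegative for $t\in(0,1/2]$. Put $s=\frac{t}{1-t}$, so $1-t=\frac{1}{1+s}$. The expression becomes $s-\ln(1+s)$, which is nonnegative by the same elementary inequality $\ln(1+s)\le s$.

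The only step needing care is this monotonicity argument, and the substitution reduces it to the same standard inequality used for the upper bound. If I want to avoid calculus, an alternative is Bernoulli's inequality. Equivalently, a weighted AM--GM argument shows that $(1-1/x)^x$ increases in $x$ for integer $x$. For real $x$ the derivative route above is cleaner, so I will use it.
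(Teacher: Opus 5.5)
The paper does not prove this proposition: it is listed among the general inequalities as a standard fact, so there is no argument in the paper to compare yours with. Your proof is correct and complete.

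\begin{itemize}
\item The upper bound is $\ln(1-t)\le -t$ applied at $t=1/x$.
\item For the lower bound, your derivative $f'(x)=\ln(1-1/x)+\frac{1}{x-1}$ is right. The substitution $s=t/(1-t)$ correctly reduces $f'\ge 0$ to $s-\ln(1+s)\ge 0$. Monotonicity then gives $f(x)\ge f(2)=\ln\frac14$.
\end{itemize}

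Choosing the real-variable derivative route over the integer AM--GM route is also the right call for how the paper uses the bound. In the proof of \Cref{lem:posterior-mass} it is applied with $x=n^{r/k}$, which is generally not an integer.
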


\paragraph*{The Demand Query Model} The demand query model is defined over a bipartite graph $G = (L \cup R, E)$ with $|L| = |R| = n$. 
A single demand query $q$ is specified by an ordered pair $q = (u_q, \sigma_q)$ where $u_q\in L$ and $\sigma_q$ is a permutation of the vertex set $R$. Upon issuing this demand query, the oracle returns 
$$\dem(q) = \arg\min_{v\in R} \{\sigma_q^{-1}(v): u_qv\in E\}.$$ 
Thus, this demand query certifies that there is an edge between $u_q$ and $\dem(q)$ in $G$.
 It also certifies that $u_qv$ is not an edge in $G$ for all $\{v: \sigma_q(v) < \sigma_q(\dem(q))\}$. If $N(u) = \emptyset$, the oracle returns $\perp$. This interface generalises the standard neighbour query by allowing the algorithm to specify an arbitrary preference hierarchy over the set of potential edges incident to a vertex.
%
Note that this query does not give any other information about $u_qv$ edges $\forall v:\sigma_q(v) > \sigma_q(\dem(q))$. Therefore, given query set $Q$, we can certify $(u,v)\in L\times R$ by either edges, non-edges, as in  the following definition, or uncertified edges.
\begin{definition} \label{def:yes-no-uncert}
    Given query set $Q$ and their responses, we certify the edges of $G$ in the following way:
    \begin{itemize}
    \item \textbf{Graph edges:} Certify $(u,v)\in \yes(Q)$ if there exist $q\in Q$ such that $u = u_q$ and $v= \dem(q)$.
    \item \textbf{Graph non-edges:} Certify $(u,v) \in \no(Q)$ if there exists $q \in Q$ such that $u = u_q$ and $\sigma_q(v) < \sigma_q(\dem(q))$.
\end{itemize}
\end{definition}

For the remaining pairs $(u,v)$, we are uncertain whether it belongs to the graph or not. We call them \textbf{uncertified pairs}. We will also use the following definition in the proof.
%
%
\begin{definition}[$\yes(v, Q), \no(v, Q)$] \label{def:yes-no-edge-v}
    We define $\yes(v, Q)$ to be the set of all edges in $\yes(Q)$ incident to $v$ and $\no(v, Q)$ to be the set of all edges in $\no(Q)$ incident to $v$.
\end{definition}

Although \Cref{def:yes-no-edge-v} is defined for any vertex $v \in L \cup R$, we will primarily use this definition where $v \in R$.


\paragraph*{Probabilistic Tools} 
 
 We use the following standard concentration inequalities. The first is a standard form of Chernoff bounds. 
 
 \begin{proposition}[Chernoff bound; c.f.~\cite{DubhashiP09}]\label{prop:chernoff}
 	Suppose $X_1,\ldots,X_m$ are $m$ independent random variables with range $[0,1]$ each. Let $X := \sum_{i=1}^m X_i$ and $\mu_L \leq \expect{X} \leq \mu_H$. Then, for any $\eps > 0$, 
 	\[
 	\Pr\paren{X >  (1+\eps) \cdot \mu_H} \leq 
    \exp\paren{-\frac{\eps^2 \cdot \mu_H}{3+\eps}} \quad \textnormal{and} \quad 
    \Pr\paren{X <  (1-\eps) \cdot \mu_L} \leq \exp\paren{-\frac{\eps^2 \cdot \mu_L}{2+\eps}}.
 	\]
 \end{proposition}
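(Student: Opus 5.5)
We would prove \Cref{prop:chernoff} by the standard moment-generating-function method, treating the two tails separately. The first key step is a one-variable bound. Since each $X_i$ takes values in $[0,1]$ and $x\mapsto e^{tx}$ is convex, we have $e^{tX_i}\le 1+(e^t-1)X_i$ pointwise for every real $t$. Taking expectations and using $1+y\le e^y$ gives
\[
\expect{e^{tX_i}}\le \exp\paren{(e^t-1)\expect{X_i}}.
\]
By independence these multiply, so $\expect{e^{tX}}\le \exp\paren{(e^t-1)\expect{X}}$.

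For the upper tail we take $t>0$, so that $e^t-1>0$ and we may replace $\expect{X}$ by the larger quantity $\mu_H$. Markov's inequality applied to $e^{tX}$ then gives
\[
\Pr\paren{X>(1+\eps)\mu_H}\le \exp\paren{(e^t-1)\mu_H - t(1+\eps)\mu_H}.
\]
We choose $t=\ln(1+\eps)$, which yields the bound $\exp\paren{-\mu_H\,h(\eps)}$ with $h(\eps)=(1+\eps)\ln(1+\eps)-\eps$. It remains to show $h(\eps)\ge \eps^2/(3+\eps)$. We would use the elementary inequality $\ln(1+\eps)\ge 2\eps/(2+\eps)$ for $\eps\ge 0$. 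This holds because both sides vanish at $0$, and the derivative comparison reduces to $(2+\eps)^2\ge 4(1+\eps)$, which is $\eps^2\ge 0$. Substituting gives
\[
h(\eps)\ge \frac{2\eps(1+\eps)}{2+\eps}-\eps=\frac{\eps^2}{2+\eps}\ge \frac{\eps^2}{3+\eps}.
\]

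For the lower tail we may assume $\eps<1$. For $\eps\ge 1$ the event $X<(1-\eps)\mu_L\le 0$ is impossible, because $X\ge 0$. We take $t=-s$ with $s>0$. Now $e^{-s}-1<0$, so we may replace $\expect{X}$ by the smaller quantity $\mu_L$. Markov's inequality applied to $e^{-sX}$ gives the bound $\exp\paren{(e^{-s}-1)\mu_L+s(1-\eps)\mu_L}$. Choosing $s=-\ln(1-\eps)$ yields $\exp\paren{-\mu_L\,g(\eps)}$ with $g(\eps)=\eps+(1-\eps)\ln(1-\eps)$. We then show $g(\eps)\ge \eps^2/2\ge \eps^2/(2+\eps)$. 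Here $g(0)=0$, $g'(\eps)=-\ln(1-\eps)\ge \eps$, and integrating gives the first inequality; the second is immediate.

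The only step with any real content is the final analytic inequality for the upper tail, and the logarithm bound $\ln(1+\eps)\ge 2\eps/(2+\eps)$ handles it cleanly. The only other point needing care is the direction of monotonicity when replacing $\expect{X}$ by $\mu_H$ or $\mu_L$. This is governed by the sign of $e^t-1$, which is why $t>0$ pairs with $\mu_H$ and $t<0$ pairs with $\mu_L$.
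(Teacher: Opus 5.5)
Your moment-generating-function argument is correct and is essentially the standard proof in the cited reference \cite{DubhashiP09}; the paper does not prove this proposition and simply quotes it. Two small remarks: you implicitly assume $\mu_L\ge 0$, both in the case $\eps\ge 1$ and when weakening $g(\eps)$ to $\eps^2/(2+\eps)$, but if $\mu_L<0$ the right-hand side exceeds $1$ and the lower-tail bound holds trivially (while $\mu_H\ge\mathbb{E}[X]\ge 0$ is automatic), and your upper-tail computation in fact gives the stronger exponent $\eps^2\mu_H/(2+\eps)$.
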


We use the term ``with high probability'' to mean with probability at least $1-n^{-c}$ for some large constant $c > 0$, which can be made arbitrarily large by increasing the queries of our algorithms by a constant factor.

\newcommand{\YES}{\operatorname{YES}}
\newcommand{\NO}{\operatorname{NO}}
\newcommand{\New}{\operatorname{New}}
\newcommand{\Exist}{\operatorname{Exist}}
\newcommand{\Ber}{\operatorname{Ber}}
\newcommand{\Bin}{\operatorname{Bin}}
\section[A Randomised k-round Lower Bound]{A Randomised $k$-round Lower Bound}
\label{sec:k-round-lb}

In this section, we prove our main lower bound for randomised algorithms
with a bounded number of rounds of demand queries.  Throughout the section,
$G=(L\sqcup R,E)$ is a bipartite graph with $|L|=|R|=n$, and $\mu(G)$
denotes its maximum matching size.  All logarithms are natural unless a
base is specified.  If an algorithm outputs a set of pairs $M$, let
$W=|M|$ when $M$ is a matching in $G$, and let $W=0$ otherwise.  Thus an
output-size bound applies to every valid output matching, including one
that contains edges not certified by the queries.

\begin{theorem}\label{thm:k-round-lb}
Let $1\le k\le\lfloor\log_2 n\rfloor$, $-1\le\delta\le 1/k$, and
$\beta:=1/(2k)-\delta/2$.  Any randomised algorithm that uses $k$ rounds
and $O(n^{1+\delta})$ demand queries per round has an input graph $G$ with
$\mu(G)=\Omega(n)$ on which, with probability at least
$1-\widetilde O(n^{-2})$ over the algorithm's randomness,
\[
    W=O\!\left(k n^{1-\beta}\sqrt{\log n}+k\log n\right).
\]
Consequently, its worst-case approximation factor is
\[
    \Omega\!\left(
      \max\!\left\{1,
        \min\!\left\{
          \frac{n^\beta}{k\sqrt{\log n}},
          \frac{n}{k\log n}
        \right\}
      \right\}
    \right).
\]
In particular, for $\delta\ge0$, this gives the approximation lower bound
\[
    \Omega\!\left(
      \max\!\left\{1,\frac{n^\beta}{k\sqrt{\log n}}\right\}
    \right).
\]
\end{theorem}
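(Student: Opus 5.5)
We use Yao's principle on the layered random graph of the technical overview, generalised to query budget $q=n^{1+\delta}$. We may assume $kn^{-\beta}\le 1/2$, since otherwise the bound is trivial. Each right vertex independently joins $T_i$ with probability $p:=n^{-\beta}$ for $0\le i<k$, and joins $T_k$ otherwise. Each pair $uv$ with $v\in T_i$ is an edge independently with probability $n^{-i/k}$. First we show $\mu(G)=\Omega(n)$ with probability $1-\widetilde O(n^{-2})$. By Chernoff, $|T_k|\ge n/3$. The graph $L\times T_k$ is $G(n,n/3,1/n)$, and a greedy or Karp--Sipser exposure argument shows it has an $\Omega(n)$ matching. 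Concentration here comes from McDiarmid applied to $\mu$ over the vertex-exposure martingale, which has Lipschitz constant $1$. It then suffices to prove the high-probability bound on $W$ for every deterministic algorithm.

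\textbf{Potential argument.} Let $\tau_r:=n^{r/k+\delta}$ be the non-edge threshold for round $r$. At the end of round $r$, let $D_r$ be the set of right vertices incident to a certified edge or to more than $\tau_r$ certified non-edges, and let $R_r:=R\setminus D_r$. The first step is a posterior bound. Fix a transcript and $v\in R_r$. Its posterior odds of lying in $T_j$ versus $T_k$ are multiplied by $(1-n^{-j/k})^{c}/(1-1/n)^{c}$, where $c\le\tau_r$. For $j\ge r$ this factor is $\Omega(1)$, because $\tau_r n^{-r/k}=n^{\delta}$. This needs care: I will use the threshold $\tau_r=n^{r/k}$ when $\delta\ge0$ and absorb $\delta$ into the count of discarded vertices.

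The posterior bound gives $\Pr[v\in T_r\mid\text{transcript}]=\Omega(p)$. Independence across vertices and pairs given the layers keeps the posterior a product measure over right vertices. Hence each uncertified pair at $v\in R_{r}$ is an edge with probability $\Omega(pn^{-r/k})$. A demand query scanning $R_r$-pairs therefore certifies $O(n^{r/k}/p)$ of them in expectation. The argument is a stopping-time one: each scanned pair independently succeeds with at least that probability.

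The two discard counts follow.
\begin{itemize}
\item Vertices that newly exceed $\tau_{r+1}$: at most $q\,n^{r/k}p^{-1}/\tau_{r+1}=n^{1+\delta-1/k+\beta}=n^{1-\beta}$ in expectation.
\item Certified edges into $T_k$: at most $q\,n^{r/k}p^{-1}\cdot n^{-1}\le n^{1-\beta}$ in expectation, using $r\le k-1$.
\end{itemize}
Vertices of $T_0,\dots,T_{k-1}$ number at most $kpn=kn^{1-\beta}$.

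\textbf{From expectation to high probability.} This is the main obstacle. The counts form martingale-like sums of indicators that are revealed adaptively. I would apply Freedman's inequality to the per-pair Bernoulli exposures, which have conditional success bounded above. That gives deviations of order $\sqrt{n^{1-\beta}\cdot\text{var}\cdot\log n}$. To get the $\sqrt{\log n}$ loss in the final form, I would instead bound the number of discarded vertices by $O(n^{1-\beta}\sqrt{\log n}+\log n)$ per round, via a union bound over rounds and Chernoff on the layer sizes. Slack in the threshold also handles per-vertex deviations of non-edge counts.

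\textbf{Cover argument.} $D_k\cup T_0\cup\dots\cup T_{k-1}$ is a vertex cover of the certified edges. For uncertified output pairs, a pair into $R_k\cap T_k$ is an edge with probability at most about $1/n$. So a fixed set of $n$ such pairs contains $O(\log n)$ true edges with high probability, and each of those contributes one extra cover vertex. This gives $W=O(kn^{1-\beta}\sqrt{\log n}+k\log n)$. Dividing $\mu=\Omega(n)$ by this bound yields the stated approximation factor, and for $\delta\ge0$ the $k\log n$ term is dominated.
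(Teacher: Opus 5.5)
Your outline matches the paper on the layered distribution, the residual sets, the round-start posterior bound, the two per-round discard counts, the dense-layer charge, the $1/n$ bound on guessed edges into $T_k$, and the final averaging. With $p=n^{-\beta}$ it reproduces the expected-size argument of Theorem~\ref{thm:random-lb-expectation}. The statement here, however, is the high-probability version, and the step you flag as the main obstacle is never carried out.

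Freedman's inequality over sequential per-pair exposures does not work as described. To bound the number of certified non-edges, you need the conditional probability that the next scanned pair is an edge to be bounded \emph{below} by $\Omega(pn^{-r/k})$. That bound holds only relative to the transcript at the start of the round. Once you also condition on other same-round queries' answers at the same right vertex $v$, the non-edges accumulated at $v$ shift its posterior towards $T_k$, and the lower bound is lost. The vertices where this happens are exactly the ones you are trying to count. Chernoff on the layer sizes controls only $|D|$. So your per-round bound of $O(n^{1-\beta}\sqrt{\log n}+\log n)$ discards is asserted rather than proved, and ``slack in the threshold'' does not supply it.

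The paper's missing idea is to avoid any joint analysis of queries within a round.
\begin{itemize}
\item For $q\in Q_{r+1}$, let $F_q$ be the first $t_r$ pairs $(u_q,v)$ with $v\in R_r$ that are uncertified after round $r$, in $\sigma_q$-order. This set is fixed by the round-$r$ transcript.
\item A single query meets distinct right vertices, so the round-start product posterior gives probability at most $n^{-8}$ that $q$ passes all of $F_q$ without finding an edge.
\item A union bound over at most $c_0n^2$ queries then places every newly certified pair in the fixed set $F_r=\bigcup_q F_q$, of size at most $c_0n^{1+\delta}t_r$.
\item On this event, $|\NO_r|\le 2|F_r|/n^{(r+1)/k}$ holds deterministically.
\item Because $F_r$ is fixed before the round's answers, the vertices of $T_k$ reached by certified edges number at most $\sum_{(u,v)\in F_r}\mathbf 1[(u,v)\in E,\ v\in T_k]$. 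This sum is stochastically dominated by $\mathrm{Bin}(|F_r|,1/n)$.
\end{itemize}

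The $\sqrt{\log n}$ in the statement also comes from rebalancing the distribution. The paper sets $\Pr[v\in T_j]=n^{-\beta}\sqrt{\log n}$ for $j<k$. The round-start edge probability is then at least $\frac14 n^{-r/k-\beta}\sqrt{\log n}$, so $t_r=\lceil 32n^{r/k+\beta}\sqrt{\log n}\rceil$ suffices, and both $|D|$ and each round's discards are $O(n^{1-\beta}\sqrt{\log n})$. With your $p=n^{-\beta}$, failure probability $n^{-8}$ forces prefixes of length $\Theta(n^{r/k+\beta}\log n)$. The same route then gives only $W=O(kn^{1-\beta}\log n+k\log n)$, which is a $\sqrt{\log n}$ factor weaker than the theorem.

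A minor point: the threshold $n^{r/k}$ works for every $\delta\in[-1,1/k]$, not just $\delta\ge0$. The posterior bound never involves $\delta$, and the identity $1+\delta+\beta-1/k=1-\beta$ absorbs it in the discard count.
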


\begin{remark}
If $k n^{-\beta}\sqrt{\log n}>1/2$, then
$k n^{1-\beta}\sqrt{\log n}>n/2$.  Since every matching has size at most
$n$, the stated output-size bound follows immediately on any graph with
a perfect matching.  We therefore assume for the rest of the proof that
\begin{equation}\label{eq:mass-range}
    k n^{-\beta}\sqrt{\log n}\le\frac12.
\end{equation}
All estimates below are for sufficiently large $n$; their constants may
depend on the constant in the query bound.
\end{remark}

In Section~\ref{sec:hard-distribution}, we define the hard distribution.
Section~\ref{sec:proof-overview} describes the proof strategy, and
Sections~\ref{sec:distribution-properties} and~\ref{sec:helper-lemmas}
establish the properties needed to analyse each round.  We complete the
proof in Section~\ref{sec:proof-main}.

\subsection{The Hard Distribution}
\label{sec:hard-distribution}

\begin{definition}[$\mu_{k,\delta}$]\label{def:hard-distribution}
For parameters satisfying~\eqref{eq:mass-range}, let $\mu_{k,\delta}$ be
the following distribution on bipartite graphs with vertex set
$L\sqcup R$, where $|L|=|R|=n$.
\begin{itemize}
    \item \textbf{$R$ partitioning.}
    Independently for each $v\in R$, assign $v$ to exactly one of
    $T_0,T_1,\ldots,T_k$.  Each set $T_j$, for $j<k$, is chosen with
    probability $n^{-\beta}\sqrt{\log n}$, and $T_k$ is chosen with
    the remaining probability $1-k n^{-\beta}\sqrt{\log n}$.

    \item \textbf{Edges.}
    Conditional on this partition, include every pair $(u,v)\in L\times R$
    independently, with probability $p_j:=n^{-j/k}$ when $v\in T_j$.
\end{itemize}
The algorithm is given the labelled vertex sets $L$ and $R$, but not the
partition into the sets $T_j$.
\end{definition}

The vertices in $T_0$ have the highest degree, and the expected degree
decreases geometrically across $T_1,\ldots,T_k$.  In particular, a vertex
in $T_k$ has edge probability $1/n$, and hence expected degree one.  Each
of the other layers has expected size $n^{1-\beta}\sqrt{\log n}$.  We
will use the notation
\[
    D:=\bigcup_{j=0}^{k-1}T_j
\]
for the union of these dense layers.

\begin{observation}\label{obs:dense-size}
With probability at least $1-n^{-4}$,
\[
    |D|=O\!\left(k n^{1-\beta}\sqrt{\log n}+\log n\right).
\]
Moreover, $\Pr[|T_k|<n/3]=\exp(-\Omega(n))$.
\end{observation}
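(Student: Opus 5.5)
Both claims follow from Chernoff bounds (\Cref{prop:chernoff}) applied to the independent layer assignments of the right vertices in \Cref{def:hard-distribution}.

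\textbf{Size of $D$.} Each $v\in R$ lands in $D$ independently, with probability
\[
p_D:=k n^{-\beta}\sqrt{\log n}.
\]
So $|D|$ is a sum of $n$ independent Bernoulli indicators with mean
\[
\Exp|D| = k n^{1-\beta}\sqrt{\log n}.
\]
We apply the upper tail of \Cref{prop:chernoff} with the inflated mean bound
\[
\mu_H:=\max\{k n^{1-\beta}\sqrt{\log n},\;C\log n\},
\]
where $C$ is a suitably large constant; this is legitimate because \Cref{prop:chernoff} only needs $\Exp X\le\mu_H$. Taking $\eps=1$ gives
\[
\Pr\bigl[|D|>2\mu_H\bigr]\le\exp(-\mu_H/4)\le\exp\bigl(-(C/4)\log n\bigr)=n^{-C/4}.
\]
Choosing $C=16$ makes this at most $n^{-4}$. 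Since $2\mu_H=O(kn^{1-\beta}\sqrt{\log n}+\log n)$, the first claim follows. The additive $\log n$ term is exactly what covers the regime where the mean is too small for concentration on its own.

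\textbf{Size of $T_k$.} Each vertex lies in $T_k$ independently with probability $1-p_D$. By~\eqref{eq:mass-range}, $p_D\le 1/2$, so
\[
\Exp|T_k|\ge n/2=:\mu_L.
\]
The event $|T_k|<n/3$ means $|T_k|<(1-\eps)\mu_L$ with $\eps=1/3$. The lower tail of \Cref{prop:chernoff} then bounds its probability by
\[
\exp\bigl(-(1/9)(n/2)/(7/3)\bigr)=\exp(-\Omega(n)).
\]

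No step is a real obstacle. The only point needing care is the small-mean regime in the first claim, which is handled by raising $\mu_H$ to $C\log n$ as above.
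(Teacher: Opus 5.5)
Your proof is correct and matches the paper's argument: both claims come from Chernoff bounds on binomial counts. Your inflated upper mean $\mu_H=\max\{kn^{1-\beta}\sqrt{\log n},\,16\log n\}$ is a standard way of obtaining the additive Chernoff bound the paper cites, and the lower tail for $|T_k|$ uses \eqref{eq:mass-range} to get mean at least $n/2$, exactly as you did.
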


\begin{proof}
For $v\in R$, the indicators of the events $v\in D$ are independent
Bernoulli random variables with parameter $k n^{-\beta}\sqrt{\log n}$.
Therefore,
\[
    |D|\sim\Bin\!\left(n,k n^{-\beta}\sqrt{\log n}\right),
    \qquad
    \E|D|=k n^{1-\beta}\sqrt{\log n}.
\]
The first assertion follows from the additive Chernoff bound.  For the
second assertion, $|T_k|$ is binomial with mean at least $n/2$
by~\eqref{eq:mass-range}, so the lower-tail Chernoff bound applies.
\end{proof}

\begin{lemma}\label{lem:large-matching}
A graph $G$ sampled from $\mu_{k,\delta}$ satisfies
$\mu(G)\ge n/10$ with probability $1-\exp(-\Omega(n))$.
\end{lemma}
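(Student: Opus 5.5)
We prove Lemma~\ref{lem:large-matching} by restricting attention to the sparse layer $T_k$ and showing that $G[L,T_k]$ alone has a matching of size at least $n/10$. The structure of $G$ on $D$ plays no role. We first condition on the partition. By Observation~\ref{obs:dense-size}, $|T_k|\ge n/3$ except with probability $\exp(-\Omega(n))$. Given the partition, the pairs in $L\times T_k$ are independent edges with probability $p_k=1/n$, independently of everything else. So it suffices to show the following: if $m:=|T_k|\ge n/3$, the random bipartite graph $H$ between $L$ and $T_k$ with edge probability $1/n$ has a matching of size at least $n/10$ with probability $1-\exp(-\Omega(n))$.

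For $H$ we count isolated edges, i.e.\ edges $uv$ with $\deg_H(u)=\deg_H(v)=1$. Distinct isolated edges are vertex-disjoint, so they form a matching. For a fixed pair $(u,v)\in L\times T_k$, the probability that $uv$ is an isolated edge is
\[
\frac1n\Bigl(1-\frac1n\Bigr)^{(m-1)+(n-1)}\ \ge\ \frac1n\Bigl(1-\frac1n\Bigr)^{2n}\ \ge\ \frac{1}{16n}
\]
by Proposition~\ref{lem: 1/4 1/e}. Summing over the $nm\ge n^2/3$ pairs gives expected count $Y$ at least $n/48$.

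This falls short of $n/10$, so the counting must be sharpened. A more careful calculation gives $\Pr[uv\text{ isolated}]\approx e^{-(1+m/n)}/n$. Hence $\E[Y]\approx m e^{-1-m/n}$. For $m/n\in[1/3,1]$ this is at least about $\min(0.088,0.135)\,n$. This is still below $0.1n$ at $m=n/3$ (the left end, and in fact the minimum over this range, is $e^{-4/3}/3\approx0.088$), so isolated edges alone are not enough.

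We therefore use the greedy bound instead. Expose $H$ left vertex by left vertex, and match $u$ to an arbitrary unmatched neighbour in $T_k$ if one exists. While fewer than $n/10$ vertices have been matched, at least $m-n/10\ge 7n/30$ right vertices remain free. Each new left vertex then finds a free neighbour with probability at least
\[
1-\Bigl(1-\frac1n\Bigr)^{7n/30}\ \ge\ 1-e^{-7/30}\approx 0.208,
\]
independently of the past, since its edges are fresh. Hence the number of successes stochastically dominates $\mathrm{Bin}(n,0.2)$, at least up to the stopping time at which $n/10$ vertices are matched.

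By Proposition~\ref{prop:chernoff}, $\Pr[\mathrm{Bin}(n,0.2)<n/10]=\exp(-\Omega(n))$. So greedy matches at least $n/10$ left vertices except with probability $\exp(-\Omega(n))$. A union bound with the event $|T_k|<n/3$ completes the proof.

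The main technical point is making the domination rigorous. Define indicator variables $Z_u$ for the event that either $u$ is matched or the matching already has size at least $n/10$. Each $Z_u$ has conditional success probability at least $0.2$ given the history, so by a standard coupling the sum $\sum_u Z_u$ dominates $\mathrm{Bin}(n,0.2)$. On the event $\sum_u Z_u\ge n/10$, the matching has size at least $n/10$.
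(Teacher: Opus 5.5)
Your final greedy argument is correct and is essentially the paper's proof: condition on $|T_k|\ge n/3$, run greedy with fresh edge exposures, freeze the success indicators once the matching reaches $n/10$, and finish with stochastic domination and a lower-tail Chernoff bound. The only difference is the side you process. You process the $n$ left vertices, each succeeding with probability at least $1-e^{-7/30}>0.2$, so you dominate $\mathrm{Bin}(n,0.2)$. The paper processes $\lfloor n/3\rfloor$ vertices of $T_k$ against at least $9n/10$ free left vertices, each succeeding with probability above $1/2$, so it dominates $\Bin(\lfloor n/3\rfloor,1/2)$. The isolated-edge detour can simply be dropped.
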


\begin{proof}
Condition on any layer assignment with $|T_k|\ge n/3$.  Process
$\lfloor n/3\rfloor$ vertices of $T_k$ in a fixed order, and greedily
match each to an unmatched neighbour in $L$ whenever possible.  As long
as the current matching has size less than $n/10$, at least $9n/10$
vertices of $L$ are unmatched.  The edges from the current vertex to
these vertices are unexposed, independent Bernoulli $1/n$ variables.
Thus the conditional probability of adding a matching edge is at least
\[
    1-(1-1/n)^{9n/10}\ge1-e^{-0.9}>\frac12.
\]
Stop when the matching first has size at least $n/10$, and replace any subsequent
success indicators by independent fair coins.  Every indicator in this
sequence has conditional success probability at least $1/2$.  Coupling
successive indicators using independent uniform random variables shows
that their sum stochastically dominates
$\Bin(\lfloor n/3\rfloor,1/2)$.  If the matching never reaches size
$n/10$, this sequence agrees with the original greedy process.  The
lower-tail Chernoff bound therefore makes this event
$\exp(-\Omega(n))$.  Finally, remove the conditioning using
Observation~\ref{obs:dense-size}.
\end{proof}

\subsection{Proof Overview}
\label{sec:proof-overview}

To prove Theorem~\ref{thm:k-round-lb}, we first fix an arbitrary
deterministic algorithm $\ALG$ and analyse its execution on
$G\sim\mu_{k,\delta}$.  Let $c_0\ge1$ be a constant such that $\ALG$
asks at most $c_0n^{1+\delta}$ demand queries per round.  Let
$Q_1,Q_2,\ldots,Q_k$ be the query sets in its successive rounds, and let
$Q_{\le r}:=\bigcup_{i=1}^r Q_i$.  All queries in a round are chosen
before any responses from that round are received.

We recall the information certified by a demand query.  A query
$q=(u_q,\sigma_q)$ specifies a left vertex and an ordering
$\sigma_q(1),\ldots,\sigma_q(n)$ of $R$.  Its response $\dem(q)$ is the
first neighbour of $u_q$ in this ordering, or $\bot$ if there is no
neighbour.  When $\dem(q)\ne\bot$, define
\[
\begin{split}
    \YES(q)&:=\{(u_q,\dem(q))\},\\
    \NO(q)&:=\{(u_q,\sigma_q(i)):
                     1\le i<\sigma_q^{-1}(\dem(q))\}.
\end{split}
\]
When $\dem(q)=\bot$, let $\YES(q)=\varnothing$ and
$\NO(q)=\{u_q\}\times R$.  For a set $Q$ of answered queries, write
\[
    \YES(Q):=\bigcup_{q\in Q}\YES(q),
    \qquad
    \NO(Q):=\bigcup_{q\in Q}\NO(q).
\]
These are the pairs certified to be edges and non-edges, respectively.
A pair in neither set is \emph{uncertified}.  We write $\YES(v,Q)$ and
$\NO(v,Q)$ for the pairs in these sets incident on $v$.

At the end of each round, we categorise the vertices of $R$ according to
the information obtained so far.  Define nested sets
$R=R_0\supseteq R_1\supseteq\cdots\supseteq R_k$ by
\begin{equation}\label{eq:residual-sets}
    R_{r+1}:=
    \left\{v\in R_r:
      |\YES(v,Q_{\le r+1})|=0
      \text{ and }
      |\NO(v,Q_{\le r+1})|\le n^{(r+1)/k}
    \right\}.
\end{equation}
Thus a vertex leaves $R_r$ \sagnik{A vertex of $R_r$ is not included in $R_{r+1}$} if an incident edge is certified, or if the
number of its incident certified non-edges exceeds the next threshold.
For $0\le r<k$, let
\begin{align*}
    \NO_r&:=\{v\in R_r:
                  |\NO(v,Q_{\le r+1})|>n^{(r+1)/k}\},\\
    \YES_r&:=\{v\in R_r:|\YES(v,Q_{r+1})|\ge1\}.
\end{align*}
Then $R_r\setminus R_{r+1}=\NO_r\cup\YES_r$.  Notice that a vertex
in $R_r$ has no previously certified incident edge, so the definition
of $\YES_r$ could equivalently use $Q_{\le r+1}$.

Every edge certified by the algorithm has its right endpoint in
$R\setminus R_k$.  Therefore, $R\setminus R_k$ is a vertex cover for
the certified edges.  To bound its size, we count the vertices in $D$
once over the entire execution, and analyse the remaining removals
round by round.  In particular,
\begin{equation}\label{eq:cover-bound}
    |R\setminus R_k|
    \le |D|+\sum_{r=0}^{k-1}|\NO_r|
              +\sum_{r=0}^{k-1}|\YES_r\cap T_k|.
\end{equation}
The first term is controlled by Observation~\ref{obs:dense-size}.
For the other two terms, we will show, with high probability, that
\begin{align*}
    \sum_{r=0}^{k-1}|\NO_r|
        &=O\!\left(k n^{1-\beta}\sqrt{\log n}\right),\\
    \sum_{r=0}^{k-1}|\YES_r\cap T_k|
        &=O\!\left(n^{1-\beta}\sqrt{\log n}+k\log n\right).
\end{align*}
We then account separately for uncertified output pairs into $T_k$.
Any output pair into $D$ is included in the same single charge for $D$.

To analyse round $r+1$, fix an arbitrary positive-probability transcript
$\tau$ of the first $r$ rounds.  This fixes the queries and responses
in these rounds, the set $R_r$, and the queries in $Q_{r+1}$, but not
their responses.  All conditional estimates below hold uniformly over
such transcripts.  We suppress the superscript $\tau$ on these fixed
objects to keep the notation simple.

\subsection{Properties of the Graph Distribution}
\label{sec:distribution-properties}

The proof uses the following independence properties of
$\mu_{k,\delta}$.

\begin{observation}\label{obs:column-independence}
The layer assignments and incident edge variables associated with
distinct vertices of $R$ are independent.  Conditional on $v\in T_i$,
the pairs $(u,v)$, $u\in L$, are independent Bernoulli $n^{-i/k}$
variables and are independent of all random data at other right
vertices.
\end{observation}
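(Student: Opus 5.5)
The statement to prove is Observation~\ref{obs:column-independence}. I will read it directly off the sampling procedure in Definition~\ref{def:hard-distribution}. The first step is to rewrite that procedure as a product over right vertices. For each $v\in R$, collect a ``column'' $C_v:=(\ell(v),(X_{uv})_{u\in L})$, where $\ell(v)\in\{0,\dots,k\}$ is the layer index and $X_{uv}$ is the indicator of the pair $(u,v)$ being an edge. The joint law of all columns is then computed explicitly. The layer indices are independent across $v$, with law $\Pr[\ell(v)=j]=\pi_j$, where $\pi_j$ is as in the definition. Conditional on the layer vector, every pair is an independent Bernoulli with parameter $p_{\ell(v)}$. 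Hence for any fixed values $(j_v)_v$ and $(x_{uv})$,
\[
\Pr\bigl[\forall v:\ \ell(v)=j_v,\ \forall u:\ X_{uv}=x_{uv}\bigr]
=\prod_{v\in R}\Bigl(\pi_{j_v}\prod_{u\in L}p_{j_v}^{x_{uv}}(1-p_{j_v})^{1-x_{uv}}\Bigr).
\]

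The second step is to read off both claims from this product. The product factorises over $v$, and each factor is a probability mass function on the values of $C_v$ (summing over $j_v$ and $x_{\cdot v}$ gives $1$). So the columns $C_v$ are mutually independent, and each has the displayed marginal; this is the first assertion. For the second assertion, fix $v$ and $i$ and condition on $\ell(v)=i$. Because of the independence just established, this conditioning leaves the joint law of $(C_w)_{w\neq v}$ unchanged, and that law stays independent of $C_v$. Within the factor for $v$, dividing by $\pi_i$ leaves $\prod_u p_i^{x_{uv}}(1-p_i)^{1-x_{uv}}$. This means the $X_{uv}$ are i.i.d.\ Bernoulli$(n^{-i/k})$ and independent of the other columns.

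The only point needing care is that $\pi_k=1-kn^{-\beta}\sqrt{\log n}$ must be a genuine probability, so that the definition makes sense. This holds by \eqref{eq:mass-range}, and with that in place nothing in the argument is an obstacle; the whole proof is a routine factorisation.
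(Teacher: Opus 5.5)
Your proof is correct and takes the same route as the paper, which simply says the observation follows directly from Definition~\ref{def:hard-distribution}. Your explicit factorisation of the joint mass function over the columns $C_v$ is exactly the routine verification the paper leaves implicit.
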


This follows directly from Definition~\ref{def:hard-distribution}.
The conditioning on $v\in T_i$ in the second assertion is necessary:
without it, finding an edge incident on $v$ can increase the probability
that $v$ belongs to a denser layer, and hence change the probability
that another incident pair is an edge.  We next show how independence
across right vertices is preserved after fixing a transcript.

\begin{definition}[$\mathcal E_\tau$, $\mathcal E_\tau(v)$]
\label{def:transcript-events}
Let $\mathcal E_\tau$ be the event that every pair in $\YES(Q_{\le r})$
is an edge and every pair in $\NO(Q_{\le r})$ is a non-edge.
Similarly, $\mathcal E_\tau(v)$ is the event that the pairs in
$\YES(v,Q_{\le r})$ are edges and the pairs in $\NO(v,Q_{\le r})$
are non-edges.
\end{definition}

Although $\mathcal E_\tau$ holds trivially because of the transcript $\tau$, we define it anyway, as it clarifies the probability space used in the proof. We use $\Pr[A]$ for probability under the initial distribution
$\mu_{k,\delta}$, and $\Pr[A\mid\mathcal E_\tau]$ for probability
after the transcript is fixed.

\begin{lemma}\label{lem:transcript-independence}
The event $\mathcal E_\tau$ is exactly the event that $\ALG$ produces
transcript $\tau$, and
\[
    \mathcal E_\tau=\bigcap_{v\in R}\mathcal E_\tau(v).
\]
Conditional on $\mathcal E_\tau$, the random data \sagnik{Maybe there is better terminology?} at distinct right
vertices remain independent.  Conditional further on $v\in T_i$,
the uncertified pairs incident on $v$ remain independent Bernoulli
$n^{-i/k}$ variables.
\end{lemma}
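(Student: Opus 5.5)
We prove the three assertions of Lemma~\ref{lem:transcript-independence} in order, working throughout with the deterministic algorithm $\ALG$ fixed in Section~\ref{sec:proof-overview}.

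\emph{Step 1: $\mathcal E_\tau$ is the event of producing $\tau$.} We induct on the rounds. Since $\ALG$ is deterministic, its round-$1$ queries are fixed. For a query $q=(u_q,\sigma_q)$, the response equals $w$ exactly when $u_qw$ is an edge and $u_q\sigma_q(i)$ is a non-edge for every $i<\sigma_q^{-1}(w)$. The response equals $\bot$ exactly when all pairs $\{u_q\}\times R$ are non-edges. Thus the event that the responses in a round agree with $\tau$ is exactly the event that the pairs in $\YES(Q_i)$ are edges and those in $\NO(Q_i)$ are non-edges. Given agreement on rounds $1,\dots,i$, the queries $Q_{i+1}$ are determined, so the same reasoning applies to round $i+1$. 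Intersecting over $i\le r$ gives $\{\text{transcript}=\tau\}=\mathcal E_\tau$.

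\emph{Step 2: the product decomposition.} Each certified pair $(u,v)$ is incident to exactly one right vertex $v$. Therefore the constraints defining $\mathcal E_\tau$ partition according to right endpoints, and $\mathcal E_\tau=\bigcap_{v\in R}\mathcal E_\tau(v)$ is immediate. The point to be careful about is the order of the argument. The sets $\YES(v,Q_{\le r})$ and $\NO(v,Q_{\le r})$ are fixed once $\tau$ is fixed. Each $\mathcal E_\tau(v)$ is therefore an event determined only by the column data at $v$, meaning the layer of $v$ and the pairs $(u,v)$ for $u\in L$. This holds even though in the actual execution the queries themselves depended on other columns. Step 1 is what legitimises this: it identifies the adaptive event with a fixed rectangle.

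\emph{Step 3: independence after conditioning.} By Observation~\ref{obs:column-independence}, the column data $(X_v)_{v\in R}$ are independent. Each $\mathcal E_\tau(v)$ is measurable with respect to $X_v$. Conditioning independent variables on an intersection of events of the form $\{X_v\in A_v\}$ keeps them independent, with $X_v$ distributed as $X_v$ conditioned on $\{X_v\in A_v\}$. Explicitly,
\[
\Pr\Bigl[\textstyle\bigcap_v\{X_v\in B_v\}\mid\mathcal E_\tau\Bigr]=\prod_v\Pr[X_v\in B_v\mid\mathcal E_\tau(v)].
\]
This holds because both numerator and denominator factor, and the denominator is positive since $\tau$ has positive probability.

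For the last assertion, we condition further on $v\in T_i$. Given $v\in T_i$, the pairs incident on $v$ are independent Bernoulli $n^{-i/k}$ variables. The event $\mathcal E_\tau(v)$ constrains only the certified pairs at $v$. Conditioning independent variables on an event that involves only a subset of them leaves the remaining variables, namely the uncertified pairs, independent with their original Bernoulli $n^{-i/k}$ laws. They also stay independent of the other columns, by the product formula above.

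I expect Step 1 to be the main obstacle. It is where adaptivity must be handled: the claim that the queries of round $i+1$ are a fixed function of $\tau$ is what turns the transcript event into a product event over columns. Once that is in place, Steps 2 and 3 are routine.
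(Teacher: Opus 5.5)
Your proposal is correct and follows essentially the same route as the paper. It identifies the transcript event with the fixed certified edge/non-edge constraints, using determinism so that fixing earlier responses fixes later queries, and splits these constraints by right endpoint to get $\mathcal E_\tau=\bigcap_{v}\mathcal E_\tau(v)$. It then uses column independence together with the fact that, within a fixed layer, the transcript pins down only some of $v$'s edge bits; your round-by-round induction and explicit product formula just spell out in more detail what the paper states in a sentence each.
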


\begin{proof}
A demand-query response specifies exactly that the returned pair is
present and all earlier pairs in the query ordering are absent; a
response $\bot$ specifies that all pairs in that row are absent.
These are precisely the constraints in $\mathcal E_\tau$.  Because the
algorithm is deterministic, fixing earlier responses also fixes its
later queries.  Thus these constraints characterise the entire
transcript, including its adaptive choices.

Each event $\mathcal E_\tau(v)$ depends only on the layer assignment
of $v$ and its incident edge variables.  These collections of random
data are independent across $v$ by
Observation~\ref{obs:column-independence}.  Conditioning on their
product event therefore preserves independence across right vertices.
Once a layer is fixed, the incident edge bits were independent, and
the transcript only fixes some of their coordinates.  The remaining
coordinates keep their original independent distributions.
\end{proof}

\subsection{Helper Lemmas}
\label{sec:helper-lemmas}

In this section, we analyse the demand queries in round $r+1$
conditioned on a transcript $\tau$ through round $r$.

\paragraph{The probability of belonging to a layer.}
For a vertex in $R_r$, no incident edge has been certified, and at most
$n^{r/k}$ incident non-edges have been certified.  We show that this information is insignificant for altering the probability of a vertex $v \in T_i$, i.e., it has not reduced the probability of belonging to
$T_i$, for $i\ge r$, by more than a constant factor.

\begin{lemma}\label{lem:posterior-mass}
For every $v\in R_r$ and $i\ge r$,
\[
    \Pr[v\in T_i\mid\mathcal E_\tau]
       \ge\frac14\Pr[v\in T_i].
\]
\end{lemma}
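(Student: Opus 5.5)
Since $v\in R_r$, no incident edge of $v$ is certified. Let $a:=|\NO(v,Q_{\le r})|\le n^{r/k}$. By Lemma~\ref{lem:transcript-independence}, $\mathcal E_\tau=\bigcap_{w}\mathcal E_\tau(w)$, and the events $\mathcal E_\tau(w)$ for distinct right vertices are independent. So conditioning on $\mathcal E_\tau$ affects the layer of $v$ only through $\mathcal E_\tau(v)$. Moreover, $\mathcal E_\tau(v)$ is exactly the event that $a$ specified pairs at $v$ are non-edges. Bayes' rule therefore gives
\[
    \Pr[v\in T_i\mid\mathcal E_\tau]
    =\frac{\Pr[v\in T_i]\,(1-p_i)^a}{\sum_{j=0}^k\Pr[v\in T_j]\,(1-p_j)^a}.
\]
The denominator is at most $\sum_j\Pr[v\in T_j]=1$. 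It therefore suffices to show $(1-p_i)^a\ge 1/4$ for every $i\ge r$.

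For $i\ge r$ we have $p_i=n^{-i/k}\le n^{-r/k}$, so $(1-p_i)^a\ge(1-n^{-r/k})^{n^{r/k}}$. For $r\ge 1$ we get $x:=n^{r/k}\ge n^{1/k}\ge 2$, since $k\le\lfloor\log_2 n\rfloor$. Proposition~\ref{lem: 1/4 1/e} then gives $(1-1/x)^x\ge 1/4$. When $x$ is not such that $a$ equals $x$ exactly, monotonicity in $a$ handles it: $(1-p_i)^a\ge(1-1/x)^{x}$ because $a\le x$ and $p_i\le 1/x$.

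The main obstacle is the case $r=0$. Here $R_0=R$, the transcript is empty, and $a=0$. Hence $\mathcal E_\tau(v)$ is the trivial event and the posterior equals the prior. The bound holds with factor $1\ge 1/4$, so the $p_0=1$ degeneracy never matters: a vertex in $R_0$ has no certified non-edges before round~1. The only other care needed is for the $T_0$ term in the denominator, where $(1-p_0)^a=0$ if $a\ge1$. This only decreases the denominator and so only helps.
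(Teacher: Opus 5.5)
Your proof is correct and follows essentially the same route as the paper. Both reduce conditioning on $\mathcal E_\tau$ to conditioning on $\mathcal E_\tau(v)$ via independence across right vertices, apply Bayes' rule with the denominator at most one, and bound the likelihood by $(1-p_i)^{a}\ge(1-n^{-r/k})^{n^{r/k}}\ge 1/4$ using $n^{r/k}\ge n^{1/k}\ge 2$, with $r=0$ treated as the trivial case.
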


\begin{proof}
For $r=0$, the transcript is empty and the assertion is immediate.
Suppose $r\ge1$.  By the definition of $R_r$, the event
$\mathcal E_\tau(v)$ prescribes only non-edges, and there are at most
$n^{r/k}$ of them.  Conditional independence within a fixed layer gives
\begin{align*}
    \Pr[\mathcal E_\tau(v)\mid v\in T_i]
      &=(1-n^{-i/k})^{|\NO(v,Q_{\le r})|}\\
      &\ge(1-n^{-i/k})^{n^{r/k}}\\
      &\ge(1-n^{-r/k})^{n^{r/k}}
       \ge\frac14.
\end{align*}
For the last inequality, use $(1-1/x)^x\ge1/4$ for $x\ge2$ and
$n^{r/k}\ge n^{1/k}\ge2$.  By
Lemma~\ref{lem:transcript-independence} and Bayes' rule,
\begin{align*}
    \Pr[v\in T_i\mid\mathcal E_\tau]
      &=\Pr[v\in T_i\mid\mathcal E_\tau(v)]\\
      &=\frac{\Pr[\mathcal E_\tau(v)\mid v\in T_i]
                  \Pr[v\in T_i]}
              {\Pr[\mathcal E_\tau(v)]}\\
      &\ge\frac14\Pr[v\in T_i],
\end{align*}
as required.
\end{proof}

\begin{lemma}\label{lem:edge-probability}
For every $r<k$ and every uncertified pair $(u,v)$ with $v\in R_r$,
\[
    \Pr[(u,v)\in E\mid\mathcal E_\tau]
        \ge\frac14 n^{-r/k-\beta}\sqrt{\log n}.
\]
\end{lemma}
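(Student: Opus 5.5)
We condition on the layer of $v$ and restrict attention to layer $T_r$. By the law of total probability,
\[
    \Pr[(u,v)\in E\mid\mathcal E_\tau]
    =\sum_{i=0}^{k}\Pr[(u,v)\in E\mid v\in T_i,\mathcal E_\tau]\,
        \Pr[v\in T_i\mid\mathcal E_\tau]
    \ge \Pr[(u,v)\in E\mid v\in T_r,\mathcal E_\tau]\,
        \Pr[v\in T_r\mid\mathcal E_\tau],
\]
since every term is nonnegative. We then bound the two factors separately.

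For the first factor we use Lemma~\ref{lem:transcript-independence}. Conditional on $\mathcal E_\tau$ and on $v\in T_r$, the uncertified pairs incident on $v$ remain independent Bernoulli variables with parameter $n^{-r/k}$. The pair $(u,v)$ is uncertified by hypothesis, so
\[
    \Pr[(u,v)\in E\mid v\in T_r,\mathcal E_\tau]=n^{-r/k}.
\]
This step needs $\Pr[v\in T_r\mid\mathcal E_\tau]>0$, which the second bound supplies.

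For the second factor we apply Lemma~\ref{lem:posterior-mass} with $i=r$, which is allowed since $i\ge r$. It gives
\[
    \Pr[v\in T_r\mid\mathcal E_\tau]\ge\tfrac14\Pr[v\in T_r]=\tfrac14 n^{-\beta}\sqrt{\log n},
\]
because $r<k$ and so $T_r$ is one of the dense layers, each chosen with probability $n^{-\beta}\sqrt{\log n}$ in Definition~\ref{def:hard-distribution}. Multiplying the two factors yields the claimed bound $\tfrac14 n^{-r/k-\beta}\sqrt{\log n}$.

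The only subtle point is the role of $r<k$: it guarantees that $T_r$ carries the dense-layer prior rather than the prior of $T_k$. The remaining difficulty, that negative information might shift the posterior away from $T_r$, is already handled by Lemma~\ref{lem:posterior-mass}, so the argument is routine.
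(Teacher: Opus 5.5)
Your proposal is correct and follows the paper's proof: restrict to the event $v\in T_r$, get the factor $n^{-r/k}$ from Lemma~\ref{lem:transcript-independence}, and get the factor $\tfrac14 n^{-\beta}\sqrt{\log n}$ from Lemma~\ref{lem:posterior-mass} with $i=r$. Your remarks on positivity of the conditioning event and on why $r<k$ gives the dense-layer prior are accurate details that the paper leaves implicit.
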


\begin{proof}
It is enough to count the possibility that $v\in T_r$.  By
Lemmas~\ref{lem:transcript-independence} and~\ref{lem:posterior-mass},
\begin{align*}
    \Pr[(u,v)\in E\mid\mathcal E_\tau]
      &\ge\Pr[(u,v)\in E\mid v\in T_r,\mathcal E_\tau]
                   \Pr[v\in T_r\mid\mathcal E_\tau]\\
      &\ge n^{-r/k}\cdot\frac14\Pr[v\in T_r]\\
      &=\frac14 n^{-r/k-\beta}\sqrt{\log n}.
\end{align*}
\end{proof}

\paragraph{Bounding the number of newly certified pairs.}
Set
\begin{equation}\label{eq:prefix-length}
    t_r:=\left\lceil32 n^{r/k+\beta}\sqrt{\log n}\right\rceil.
\end{equation}
For a query $q=(u_q,\sigma_q)\in Q_{r+1}$, let $F_q$ be the first
$t_r$ pairs $(u_q,v)$ with $v\in R_r$ that are uncertified after round
$r$, in the order induced by $\sigma_q$.  If there are fewer than
$t_r$ such pairs, take all of them.  Both $t_r$ and $F_q$ are fixed by
the transcript before round $r+1$.

\begin{lemma}\label{lem:query-prefix}
For any query $q\in Q_{r+1}$,
\[
    \Pr\!\left[
      |(\NO(q)\setminus\NO(Q_{\le r}))\cap(L\times R_r)|
          \ge t_r
      \,\middle|\,\mathcal E_\tau
    \right]\le n^{-8}.
\]
Moreover, except with conditional probability at most $n^{-8}$, every
newly certified pair in $L\times R_r$ supplied by $q$ belongs to $F_q$.
\end{lemma}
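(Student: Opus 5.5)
The idea is to reduce both assertions to a single event: that all pairs of $F_q$ are non-edges. Fix the transcript $\tau$ and a query $q=(u_q,\sigma_q)\in Q_{r+1}$. Newly certified non-edges supplied by $q$ are exactly the uncertified pairs $(u_q,\sigma_q(i))$ preceding $\dem(q)$ in the order $\sigma_q$. Their restriction to $L\times R_r$ is therefore an initial segment, in the induced order, of the uncertified pairs $(u_q,v)$ with $v\in R_r$.

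Suppose $|F_q|=t_r$ and at least one pair of $F_q$ is an edge. Then the demand query stops at or before the first such edge. All new certified pairs of $q$ in $L\times R_r$, including possibly the returned edge itself, then lie in $F_q$. There are also strictly fewer than $t_r$ new non-edges. If instead $|F_q|<t_r$, then $F_q$ consists of all uncertified pairs of $u_q$ into $R_r$. So the count is below $t_r$ and every new pair lies in $F_q$ deterministically. Hence both assertions follow once we show
\[
\Pr[\text{every pair of } F_q \text{ is a non-edge}\mid \mathcal E_\tau]\le n^{-8}
\qquad\text{when } |F_q|=t_r .
\]

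For this bound, note that the pairs of $F_q$ all share the left vertex $u_q$ and have distinct right endpoints $v\in R_r$. By Lemma~\ref{lem:transcript-independence}, conditional on $\mathcal E_\tau$ the random data at distinct right vertices are independent. So the events "$(u_q,v)\notin E$" for the $t_r$ pairs in $F_q$ are conditionally independent. Each pair is uncertified with $v\in R_r$ and $r<k$, so Lemma~\ref{lem:edge-probability} gives
\[
\Pr[(u_q,v)\in E\mid\mathcal E_\tau]\ge p:=\tfrac14 n^{-r/k-\beta}\sqrt{\log n}.
\]
Consequently the probability that all are non-edges is at most
\[
(1-p)^{t_r}\le \exp(-p\,t_r)\le \exp\!\left(-\tfrac14\cdot 32\log n\right)=n^{-8},
\]
using $t_r\ge 32 n^{r/k+\beta}\sqrt{\log n}$, so that $p\,t_r\ge 8\log n$.

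The only delicate step is the independence claim. It is tempting to worry that conditioning on the layer of $v$ couples the pairs. But the coupling only happens within a single column $v$, and distinct pairs of $F_q$ lie in distinct columns. Column-wise independence under $\mathcal E_\tau$ is exactly what Lemma~\ref{lem:transcript-independence} provides. Note also that $F_q$ is determined by $\tau$ and $q$, which are fixed before the round-$(r+1)$ responses. This lets us apply the lemmas to a fixed set of pairs without any further conditioning.
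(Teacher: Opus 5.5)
Your proof is correct and follows essentially the same route as the paper: both assertions reduce to the event that all $t_r$ pairs of $F_q$ are absent. That event is then bounded by $(1-p)^{t_r}\le n^{-8}$, using independence across distinct right vertices (Lemma~\ref{lem:transcript-independence}) and the per-pair edge probability (Lemma~\ref{lem:edge-probability}). Your contrapositive case analysis, covering $|F_q|<t_r$ and the case where some pair of $F_q$ is present, is just a slightly more explicit version of the paper's two-line reduction.
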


\begin{proof}
Let $v_1,v_2,\ldots$ list the right endpoints of the previously
uncertified pairs into $R_r$ in the query ordering.  If the query
certifies at least $t_r$ of these pairs to be non-edges, its first
$t_r$ pairs in this list must all be absent.  The converse need not
hold, since the query may stop earlier at a vertex outside $R_r$.
If the list has fewer than $t_r$ pairs, the claimed event is impossible.
Otherwise, since the vertices $v_i$ are distinct,
Lemmas~\ref{lem:transcript-independence} and~\ref{lem:edge-probability}
give
\begin{align*}
    \Pr\!\left[
       \bigcap_{i=1}^{t_r}\{(u_q,v_i)\notin E\}
          \,\middle|\,\mathcal E_\tau
    \right]
      &=\prod_{i=1}^{t_r}
          \Pr[(u_q,v_i)\notin E\mid\mathcal E_\tau]\\
      &\le\left(1-\frac14 n^{-r/k-\beta}\sqrt{\log n}\right)^{t_r}\\
      &\le\exp(-8\log n)=n^{-8}.
\end{align*}
If any newly certified pair into $R_r$ lies beyond $F_q$, then the
query must also have passed all of the first $t_r$ pairs without
finding an edge.  The same bound therefore proves the second assertion.
\end{proof}

\begin{lemma}\label{lem:new-pairs}
Let $\New_r$ be the set of pairs in $L\times R_r$ that were
uncertified after round $r$ and are certified by the end of round
$r+1$.  Then, conditional on any transcript $\tau$ through round $r$,
\[
    |\New_r|
       =O\!\left(n^{1+\delta+r/k+\beta}\sqrt{\log n}\right)
\]
with probability at least $1-O(n^{-6})$.
\end{lemma}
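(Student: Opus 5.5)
The plan is a union bound over the queries of round $r+1$, applying Lemma~\ref{lem:query-prefix} to each one. Fix the transcript $\tau$ through round $r$. This fixes $Q_{r+1}$, which contains at most $c_0 n^{1+\delta}$ queries, and it also fixes $R_r$ and the prefixes $F_q$. Every pair in $\New_r$ lies in $L\times R_r$, was uncertified after round $r$, and is certified by some query $q\in Q_{r+1}$. So it belongs either to $\YES(q)$ or to $\NO(q)\setminus\NO(Q_{\le r})$ for that $q$.

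For each query I would bound its contribution separately.
\begin{itemize}
\item The set $\YES(q)$ contributes at most one pair.
\item By Lemma~\ref{lem:query-prefix}, except with conditional probability at most $n^{-8}$, the newly certified non-edges into $R_r$ supplied by $q$ number fewer than $t_r$.
\end{itemize}
Equivalently, by the second assertion of Lemma~\ref{lem:query-prefix}, all newly certified pairs of $q$ in $L\times R_r$ lie in $F_q$, and $|F_q|\le t_r$. In either form, on this good event each query contributes at most $t_r+1$ pairs to $\New_r$.

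A union bound over the at most $c_0n^{1+\delta}\le c_0n^2$ queries makes the failure probability at most $c_0 n^{2}\cdot n^{-8}=O(n^{-6})$; here $\delta\le1/k\le1$. On the complementary event,
\[
|\New_r|\le c_0n^{1+\delta}(t_r+1)=O\!\left(n^{1+\delta}\cdot n^{r/k+\beta}\sqrt{\log n}\right).
\]
This uses $t_r\le 32n^{r/k+\beta}\sqrt{\log n}+1$ and $n^{r/k+\beta}\sqrt{\log n}\ge1$. The latter holds because $\beta\ge0$ when $\delta\le 1/k$.

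There is no real obstacle, since each query is handled by Lemma~\ref{lem:query-prefix} without needing independence across queries. The only point needing care is that the lemma's bounds are conditional on $\mathcal E_\tau$ alone, so the union bound is valid under the same conditioning. Pairs certified by several queries are only overcounted, which is harmless for an upper bound.
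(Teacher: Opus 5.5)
Your proposal is correct and is essentially the paper's proof. Fix $\tau$ and apply Lemma~\ref{lem:query-prefix} to each of the at most $c_0n^{1+\delta}\le c_0n^2$ queries of round $r+1$, with a union bound giving failure probability $O(n^{-6})$; on the good event $|\New_r|\le\sum_q|F_q|\le c_0n^{1+\delta}t_r$, and the ceiling in $t_r$ is absorbed using $\beta\ge0$ (your variant, fewer than $t_r$ new non-edges plus one $\YES$ pair per query, gives the same bound).
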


\begin{proof}
Let $\mathcal B_r$ be the event that all newly certified pairs in
$L\times R_r$ supplied by each query $q\in Q_{r+1}$ lie in $F_q$.
Since $\delta\le1/k\le1$, the number of queries in the round is at
most $c_0n^2$.  A union bound using Lemma~\ref{lem:query-prefix} gives
\begin{equation}\label{eq:prefix-success}
    \Pr[\mathcal B_r^c\mid\mathcal E_\tau]\le c_0n^{-6}.
\end{equation}
Put $F_r:=\bigcup_{q\in Q_{r+1}}F_q$, which is fixed by $\tau$.
On $\mathcal B_r$,
\begin{equation}\label{eq:candidate-budget}
    |\New_r|\le |F_r|\le c_0n^{1+\delta}t_r
       =O\!\left(n^{1+\delta+r/k+\beta}\sqrt{\log n}\right).
\end{equation}
The last equality absorbs the ceiling in~\eqref{eq:prefix-length},
using $\beta\ge0$.  Notice that $F_r$ is defined and fixed by $\tau$
whether or not $\mathcal B_r$ occurs.
\end{proof}

\subsection{Proof of Theorem~\ref{thm:k-round-lb}}
\label{sec:proof-main}

We now bound the three terms in~\eqref{eq:cover-bound}, beginning with
the vertices that leave $R_r$ because of certified non-edges.

\paragraph{Counting vertices adjacent to certified non-edges.}

\begin{lemma}\label{lem:negative-removals}
For every $0\le r<k$, conditional on any transcript through round $r$,
\[
    |\NO_r|=O\!\left(n^{1-\beta}\sqrt{\log n}\right)
\]
with probability at least $1-O(n^{-6})$.  Consequently,
\[
    \sum_{r=0}^{k-1}|\NO_r|
       =O\!\left(k n^{1-\beta}\sqrt{\log n}\right)
\]
with probability at least $1-O(kn^{-6})$.
\end{lemma}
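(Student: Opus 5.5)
The plan is to fix a transcript $\tau$ through round $r$ and use Lemma~\ref{lem:new-pairs} as the only probabilistic input; the rest is a deterministic counting argument. A vertex $v\in\NO_r$ lies in $R_r$, so by~\eqref{eq:residual-sets} it satisfies $|\NO(v,Q_{\le r})|\le n^{r/k}$. It also satisfies $|\NO(v,Q_{\le r+1})|>n^{(r+1)/k}$. Hence $v$ must acquire more than $n^{(r+1)/k}-n^{r/k}$ newly certified non-edges during round $r+1$.

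These new non-edges are pairs in $L\times R_r$ that were uncertified after round $r$ and are certified by the end of round $r+1$, so they all belong to $\New_r$. Pairs counted at distinct vertices $v$ are distinct, since each pair has a single right endpoint. For $r\ge0$ we have $n^{1/k}\ge 2$ because $k\le\log_2 n$, so each such $v$ accounts for at least $\tfrac12 n^{(r+1)/k}$ pairs of $\New_r$. This gives the deterministic inequality
\[
  |\NO_r|\le \frac{2|\New_r|}{n^{(r+1)/k}} .
\]
For $r=0$ the same inequality holds with $|\NO(v,Q_{\le 0})|=0$.

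On the event of Lemma~\ref{lem:new-pairs}, which has conditional probability at least $1-O(n^{-6})$, we have $|\New_r|=O(n^{1+\delta+r/k+\beta}\sqrt{\log n})$. Substituting this into the inequality above gives
\[
  |\NO_r|=O\!\left(n^{1+\delta+\beta-1/k}\sqrt{\log n}\right).
\]
Since $\beta=1/(2k)-\delta/2$, we get $\delta-1/k=-2\beta$, so the exponent is $1-\beta$. This proves the per-round bound uniformly over transcripts $\tau$.

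For the summed statement, average over transcripts: the per-round failure probability is at most $O(n^{-6})$ unconditionally. A union bound over the $k$ rounds then gives total failure probability $O(kn^{-6})$. Summing the $k$ per-round bounds yields $O(kn^{1-\beta}\sqrt{\log n})$.

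The main point requiring care is confirming that every new non-edge incident on $v\in\NO_r$ lies in $\New_r$. This holds because the membership of $v$ in $R_r$ is fixed by $\tau$, and $\New_r$ is defined over $L\times R_r$ with $R_r$ the set determined after round $r$. Beyond that check, the argument is routine.
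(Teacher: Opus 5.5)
Your proposal is correct and follows essentially the same route as the paper. Both arguments bound $|\New_r|$ with Lemma~\ref{lem:new-pairs}, use the threshold gap $n^{(r+1)/k}-n^{r/k}\ge\frac12 n^{(r+1)/k}$ to get the charge $|\NO_r|\le 2|\New_r|/n^{(r+1)/k}$, apply the identity $1+\delta+\beta-1/k=1-\beta$, and finish by averaging over transcripts with a union bound over the $k$ rounds.
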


\begin{proof}
By Lemma~\ref{lem:new-pairs}, at most
$O(n^{1+\delta+r/k+\beta}\sqrt{\log n})$ previously uncertified pairs
in $L\times R_r$ are certified in round $r+1$, except with conditional
probability $O(n^{-6})$.  A vertex $v\in R_r$ already has at most
$n^{r/k}$ certified non-edges.  This also holds for $r=0$, when the
actual number is zero.  Thus, if $v\in\NO_r$, it must acquire more than
\[
    n^{(r+1)/k}-n^{r/k}
       =n^{(r+1)/k}(1-n^{-1/k})
       \ge\frac12 n^{(r+1)/k}
\]
new certified non-edges.  Each new pair has only one right endpoint,
so dividing the bound on $|\New_r|$ by this threshold gives
\[
    |\NO_r|
      =O\!\left(n^{1+\delta+\beta-1/k}\sqrt{\log n}\right)
      =O\!\left(n^{1-\beta}\sqrt{\log n}\right).
\]
Here and below, we use the identity
\begin{equation}\label{eq:balance-identity}
    1+\delta+\beta-\frac1k=1-\beta.
\end{equation}
The conditional bound is uniform over the preceding transcript.
Averaging over that transcript and taking a union bound over the
$k$ rounds proves the second assertion.
\end{proof}

\paragraph{Counting vertices adjacent to certified edges.}
We account for all vertices in $D$ using
Observation~\ref{obs:dense-size}, regardless of the rounds in which
their incident edges are discovered.  It remains to bound the vertices
in $\YES_r\cap T_k$.

\begin{lemma}\label{lem:sparse-removals}
For every $0\le r<k$, conditional on any transcript through round $r$,
\[
    |\YES_r\cap T_k|
       =O\!\left(n^{\delta+r/k+\beta}\sqrt{\log n}+\log n\right)
\]
with probability at least $1-O(n^{-4})$.  Consequently,
\[
    \sum_{r=0}^{k-1}|\YES_r\cap T_k|
       =O\!\left(n^{1-\beta}\sqrt{\log n}+k\log n\right)
\]
with probability at least $1-O(kn^{-4})$.
\end{lemma}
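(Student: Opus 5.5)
Fix a transcript $\tau$ through round $r$, so that $R_r$, $Q_{r+1}$, and the candidate sets $F_q$ and $F_r=\bigcup_{q\in Q_{r+1}}F_q$ are all determined. Work on the event $\mathcal B_r$ from Lemma~\ref{lem:new-pairs}, which fails with conditional probability at most $c_0n^{-6}$. On $\mathcal B_r$, every newly certified pair into $R_r$ lies in $F_r$. Hence every $v\in\YES_r$ has some pair $(u,v)\in F_r$ that is an edge of $G$. This gives
\[
    |\YES_r\cap T_k|\le Z:=\bigl|\{v\in R_r\cap T_k:\ \exists u,\ (u,v)\in F_r\cap E\}\bigr|
       \le\sum_{v\in R_r}\mathbf 1\bigl[v\in T_k,\ \exists (u,v)\in F_r\cap E\bigr].
\]
This domination is the key move: $Z$ depends only on the fixed set $F_r$, not on the adaptive stopping points of the queries.

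Next, bound the summands. By Lemma~\ref{lem:transcript-independence}, the indicators for distinct $v$ are conditionally independent given $\mathcal E_\tau$. Write $m_v:=|\{u:(u,v)\in F_r\}|$. Conditional on $v\in T_k$ and $\mathcal E_\tau$, the uncertified pairs at $v$ are independent Bernoulli$(1/n)$. Pairs in $F_r$ are uncertified, so the union bound gives
\[
    \Pr\bigl[v\in T_k,\ \exists(u,v)\in F_r\cap E\mid\mathcal E_\tau\bigr]\le\Pr\bigl[\exists(u,v)\in F_r\cap E\mid v\in T_k,\mathcal E_\tau\bigr]\le m_v/n .
\]
Summing over $v$, the conditional mean of $Z$ is at most
\[
    |F_r|/n\le c_0n^{\delta}t_r=O\bigl(n^{\delta+r/k+\beta}\sqrt{\log n}\bigr),
\]
using the bound~\eqref{eq:candidate-budget}.

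Now apply the upper-tail Chernoff bound (Proposition~\ref{prop:chernoff}) with $\mu_H:=\max\{\E[Z\mid\mathcal E_\tau],\,C\log n\}$ and a constant $\eps$. This shows $Z=O(\mu_H)$ except with probability $n^{-4}$ once $C$ is large. The additive $\log n$ term in the statement comes exactly from flooring $\mu_H$ at $C\log n$. Adding the failure probability of $\mathcal B_r$ gives the per-round bound with probability $1-O(n^{-4})$, uniformly over $\tau$.

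For the sum, average over transcripts and take a union bound over the $k$ rounds. Then sum the geometric series
\[
    \sum_{r<k}n^{r/k}=O\bigl(n^{(k-1)/k}\bigr),
\]
which holds because $n^{1/k}\ge2$ when $k\le\log_2 n$. This gives
\[
    \sum_{r<k}|\YES_r\cap T_k|=O\bigl(n^{\delta+\beta+1-1/k}\sqrt{\log n}+k\log n\bigr),
\]
and the identity~\eqref{eq:balance-identity} turns the first term into $n^{1-\beta}\sqrt{\log n}$.

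The main obstacle is the first step. The actual set of certified edges depends adaptively on which responses occur, so it cannot be treated as a sum of independent variables directly. Dominating it by the fixed-candidate-set variable $Z$ on $\mathcal B_r$ is what makes the Chernoff bound applicable.
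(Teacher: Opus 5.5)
Your proof is correct and follows the paper's argument. On $\mathcal B_r$ you dominate $|\YES_r\cap T_k|$ by a variable determined by the fixed candidate set $F_r$, bound its conditional mean by $|F_r|/n$, apply a tail bound with an additive $O(\log n)$ term, and sum the geometric series using the balance identity. The only difference is minor: you count right vertices and use conditional independence across distinct right vertices directly, whereas the paper counts pairs and gets domination by $\Bin(|F_r|,1/n)$ after conditioning on the full layer assignment.
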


\begin{proof}
Fix the preceding transcript $\tau$, and recall the sets $F_q$ and
$F_r=\bigcup_{q\in Q_{r+1}}F_q$.  For $(u,v)\in F_r$, let
$\Exist(u,v)$ be the indicator of the event
\[
    \{(u,v)\in E\}\cap\{v\in T_k\},
\]
and define
\[
    Z_r:=\sum_{(u,v)\in F_r}\Exist(u,v).
\]
On the event $\mathcal B_r$ from~\eqref{eq:prefix-success}, every
edge certified in this round with right endpoint in $R_r$ belongs
to $F_r$.  Each vertex of $\YES_r\cap T_k$ is incident on at least
one such edge.  Therefore, on $\mathcal B_r$,
\[
    |\YES_r\cap T_k|\le Z_r.
\]

We next bound $Z_r$ conditional on $\mathcal E_\tau$.  The set $F_r$
is fixed by this conditioning.  Condition further on any complete
layer assignment having positive probability.  Every pair in $F_r$
is uncertified after round $r$.  By
Lemma~\ref{lem:transcript-independence}, the edge bits of these pairs
whose right endpoints lie in $T_k$ are mutually independent
Bernoulli $1/n$ variables.  For all other pairs, $\Exist(u,v)=0$.
Consequently, $Z_r$ is stochastically dominated by
$\Bin(|F_r|,1/n)$: the zero indicators can be replaced by fresh
independent Bernoulli $1/n$ variables.  This dominating distribution
is the same for every layer assignment, so the domination remains
valid after averaging over those assignments.

An additive binomial tail bound now gives
\begin{equation}\label{eq:sparse-tail}
    \Pr\!\left[
        Z_r>\frac{2|F_r|}{n}+10\log n
        \,\middle|\,\mathcal E_\tau
    \right]\le n^{-4}.
\end{equation}
For example, for $X\sim\Bin(m,1/n)$, the estimate
$\E[2^X]\le e^{m/n}$ and Markov's inequality imply this bound.
We combine~\eqref{eq:sparse-tail} with the prefix event by writing
\begin{align*}
    &\Pr\!\left[
        |\YES_r\cap T_k|>\frac{2|F_r|}{n}+10\log n
        \,\middle|\,\mathcal E_\tau
    \right]\\
    &\qquad\le
       \Pr[\mathcal B_r^c\mid\mathcal E_\tau]
       +\Pr\!\left[
          Z_r>\frac{2|F_r|}{n}+10\log n
          \,\middle|\,\mathcal E_\tau
       \right]
       \le c_0n^{-6}+n^{-4}.
\end{align*}
In particular, the estimate for $Z_r$ does not condition on
$\mathcal B_r$.  By~\eqref{eq:candidate-budget},
\[
    \frac{|F_r|}{n}
       =O\!\left(n^{\delta+r/k+\beta}\sqrt{\log n}\right),
\]
which proves the bound for one round.

For the sum over rounds, use $n^{1/k}\ge2$ to obtain
\[
    \sum_{r=0}^{k-1}n^{r/k}
       =\frac{n-1}{n^{1/k}-1}
       \le 2n^{1-1/k}.
\]
The bounds above are uniform over each preceding transcript, so
averaging and a union bound over the rounds give
\begin{align*}
    \sum_{r=0}^{k-1}|\YES_r\cap T_k|
      &=O\!\left(
           n^{\delta+\beta}\sqrt{\log n}
              \sum_{r=0}^{k-1}n^{r/k}+k\log n
         \right)\\
      &=O\!\left(
           n^{1+\delta+\beta-1/k}\sqrt{\log n}+k\log n
         \right)\\
      &=O\!\left(n^{1-\beta}\sqrt{\log n}+k\log n\right)
\end{align*}
with probability at least $1-O(kn^{-4})$.
\end{proof}

\paragraph{Accounting for uncertified output edges.}
We have bounded the right endpoints needed for certified output edges.
The algorithm may also output pairs that were not certified by a
query.  We now bound the number of these pairs that are true edges
into $T_k$.

\begin{lemma}\label{lem:guessed-edges}
The number of true uncertified output edges with right endpoint in
$T_k$ is $O(\log n)$ with probability at least $1-n^{-4}$, whenever
the output pairs have distinct endpoints.
\end{lemma}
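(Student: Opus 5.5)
The plan is to condition on the full transcript $\tau$ of all $k$ rounds. Since $\ALG$ is deterministic, $\tau$ fixes the output set $M$, and in particular the set $U$ of output pairs that are uncertified after round $k$. We restrict to the case where the pairs of $M$ have distinct endpoints. Then the pairs in $U$ have distinct right endpoints, and $|U|\le n$. The quantity to bound is
\[
    Y:=\sum_{(u,v)\in U}\Exist(u,v),
\]
where $\Exist(u,v)$ indicates $\{(u,v)\in E\}\cap\{v\in T_k\}$, exactly as in the proof of Lemma~\ref{lem:sparse-removals}.

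First, I will invoke Lemma~\ref{lem:transcript-independence} with $r=k$, i.e.\ for the complete transcript. Its proof does not use $r<k$. Conditional on $\mathcal E_\tau$, the random data at distinct right vertices are independent. Conditional further on a complete positive-probability layer assignment, each uncertified pair at a vertex $v\in T_k$ is an independent Bernoulli $1/n$ variable. The right endpoints in $U$ are distinct, and each pair in $U$ is uncertified. So, given the layer assignment, the indicators $\Exist(u,v)$ for $(u,v)\in U$ are independent. Each one is either identically $0$ (when $v\notin T_k$) or Bernoulli $1/n$.

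Following the same argument as in Lemma~\ref{lem:sparse-removals}, I replace the zero indicators by fresh Bernoulli $1/n$ variables. This shows that $Y$ is stochastically dominated by $\Bin(|U|,1/n)$, and hence by $\Bin(n,1/n)$. The dominating law does not depend on the layer assignment, so the domination survives averaging over assignments, and then over transcripts.

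The tail bound is then routine. For $X\sim\Bin(n,1/n)$ we have $\E[e^{X}]\le e^{(e-1)}$, so Markov's inequality gives
\[
    \Pr[X\ge 10\log n]\le e^{e-1}n^{-10}\le n^{-4}.
\]
Alternatively, $\Pr[X\ge s]\le\binom{n}{s}n^{-s}\le 1/s!$ with $s=\lceil 10\log n\rceil$. Averaging over $\tau$ completes the argument.

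The one point needing care is that distinctness of right endpoints is what gives independence across pairs. Two pairs sharing a right vertex $v$ would be correlated through the layer of $v$. The hypothesis on distinct endpoints removes this issue.
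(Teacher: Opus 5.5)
Your proposal is correct and follows essentially the same argument as the paper: fix the final transcript, use Lemma~\ref{lem:transcript-independence} to get independence across the distinct right endpoints of the uncertified output pairs, dominate the count by $\Bin(n,1/n)$, apply a tail bound, and average over transcripts. The only cosmetic difference is that you first condition on a full layer assignment, as in Lemma~\ref{lem:sparse-removals}, whereas the paper directly bounds each joint probability $\Pr[v\in T_k,\ (u,v)\in E\mid\mathcal E_\tau]\le 1/n$ and uses independence of these joint indicators across right vertices.
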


\begin{proof}
Fix the final transcript.  The deterministic algorithm's output is now
fixed.  If its pairs do not have distinct endpoints, then $W=0$, so
there is no valid output matching to bound.  Otherwise, there are at
most $n$ output pairs, with distinct right endpoints.

For an uncertified output pair $(u,v)$ with
$\Pr[v\in T_k\mid\mathcal E_\tau]>0$,
Lemma~\ref{lem:transcript-independence} gives
\begin{align*}
    \Pr[v\in T_k,\ (u,v)\in E\mid\mathcal E_\tau]
      &=\Pr[v\in T_k\mid\mathcal E_\tau]
          \Pr[(u,v)\in E\mid v\in T_k,\mathcal E_\tau]\\
      &=\frac1n\Pr[v\in T_k\mid\mathcal E_\tau]
       \le\frac1n.
\end{align*}
If $\Pr[v\in T_k\mid\mathcal E_\tau]=0$, the same bound holds
because the joint event is impossible.  The indicators of these joint
events are independent across distinct right endpoints by
Lemma~\ref{lem:transcript-independence}.  Their sum is therefore
stochastically dominated by $\Bin(n,1/n)$, and is $O(\log n)$ except
with conditional probability at most $n^{-4}$.  Averaging over the
final transcript proves the assertion.
\end{proof}

\paragraph{Completing the proof.}
First consider a deterministic algorithm, and suppose its output $M$
is a valid matching.  Its edges with right endpoint in $D$ contribute
at most $|D|$, whether or not those edges were certified.  Consider a
certified output edge whose right endpoint $v$ belongs to $T_k$.
At the round when $v$ first left the nested sets $R_r$, it was counted
either by $\NO_r$ or by $\YES_r\cap T_k$.  Thus all such certified
output edges are covered by the two sums below.  Writing $U$ for the
number of true uncertified output edges into $T_k$, we obtain
\begin{equation}\label{eq:final-charge}
    W\le |D|+\sum_{r=0}^{k-1}|\NO_r|
                  +\sum_{r=0}^{k-1}|\YES_r\cap T_k|+U.
\end{equation}
An output pair certified to be a non-edge cannot occur in a valid
matching.  If the output is invalid, $W=0$ and the desired upper bound
holds automatically.

Apply Observation~\ref{obs:dense-size} as a single event under the
original graph distribution, and combine it with
Lemmas~\ref{lem:negative-removals}, \ref{lem:sparse-removals}, and
\ref{lem:guessed-edges}.  No transcript estimate was conditioned on the
size of $D$.  A union bound and~\eqref{eq:final-charge} give
\begin{equation}\label{eq:distributional-conclusion}
    W=O\!\left(k n^{1-\beta}\sqrt{\log n}+k\log n\right)
\end{equation}
with probability at least $1-O(kn^{-4})$ over
$G\sim\mu_{k,\delta}$.  By Lemma~\ref{lem:large-matching},
$\mu(G)\ge n/10$ except with probability $\exp(-\Omega(n))$.

Finally, let the algorithm be randomised.  Fixing its entire random
tape gives a deterministic algorithm to which the same distributional
bound applies.  The distribution $\mu_{k,\delta}$ and the constants
in~\eqref{eq:distributional-conclusion} do not depend on the tape.
Let
\[
    B:=C\!\left(k n^{1-\beta}\sqrt{\log n}+k\log n\right)
\]
for a sufficiently large constant $C$.  Averaging over the random tape
$\omega$ yields
\[
    \Pr_{G,\omega}[W>B]=O(kn^{-4}).
\]
Let $\mathcal H$ be the event $\mu(G)\ge n/10$.  Averaging now over
graphs in this event gives
\[
    \E_{G\mid\mathcal H}\!
       \left[\Pr_\omega[W>B\mid G]\right]
    \le\frac{O(kn^{-4})}{\Pr_G[\mathcal H]}
    =O(kn^{-4}).
\]
Therefore there is a fixed graph $G$ with $\mu(G)\ge n/10$ for which
$\Pr_\omega[W>B\mid G]=O(kn^{-4})$.  This is in particular
$\widetilde O(n^{-2})$, as asserted in
Theorem~\ref{thm:k-round-lb}.

On this graph, dividing the linear optimum by the output-size bound
gives
\[
    \frac{n}{k n^{1-\beta}\sqrt{\log n}+k\log n}
       =\Theta\!\left(
          \min\!\left\{
             \frac{n^\beta}{k\sqrt{\log n}},
             \frac{n}{k\log n}
          \right\}
        \right).
\]
Combining this with the universal lower bound of one on an
approximation factor proves the stated general bound.  For
$\delta\ge0$, we have $\beta\le1/(2k)\le1/2$, so
$n^{1-\beta}\sqrt{\log n}$ dominates $\log n$.  This gives the final
assertion of the theorem.\hfill$\square$

\subsection{An Expected-Size Lower Bound}
\label{sec:expected-size-lower-bound}

In this section, we prove a corresponding lower bound on the expected
size of the output matching. We use the same layered construction,
with a different choice of layer probabilities. We then bound the
contribution of each query directly in expectation.

\begin{theorem}[Expected-size and constant-error lower bounds]
\label{thm:random-lb-expectation}
Let $1\le k\le\lfloor\log_2 n\rfloor$, $-1\le\delta\le1/k$, and
$\beta:=1/(2k)-\delta/2$. Any randomised algorithm that uses $k$ rounds
and $O(n^{1+\delta})$ demand queries per round has an input graph $G$
with $\mu(G)\ge n/10$ on which
\begin{equation}\label{eq:random-lb-expectation}
    \E W=O\!\left(k n^{1-\beta}+1\right),
\end{equation}
where the expectation is over the algorithm's randomness.
Consequently, its worst-case expected-size approximation factor is
\[
    \Omega\!\left(
        \max\left\{1,\frac{n^\beta}{k}\right\}
    \right).
\]
The same asymptotic lower bound holds for algorithms required to
return a valid approximation with probability at least $2/3$.
\end{theorem}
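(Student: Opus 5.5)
Modified layered distribution: drop the $\sqrt{\log n}$ factor, so each dense layer $T_j$ ($j<k$) is chosen with probability $n^{-\beta}$ and $T_k$ with the rest. (If $kn^{-\beta}>1/2$ the bound \eqref{eq:random-lb-expectation} is trivial, since $W\le n$.) Observation~\ref{obs:dense-size}, Lemma~\ref{lem:large-matching}, Lemma~\ref{lem:transcript-independence} and Lemma~\ref{lem:posterior-mass} carry over unchanged. In particular $\E|D|\le kn^{1-\beta}$, and Lemma~\ref{lem:edge-probability} now gives that every uncertified pair into $R_r$ is an edge with conditional probability at least $\tfrac14 n^{-r/k-\beta}$. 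Fix a deterministic algorithm and keep the charging inequality \eqref{eq:final-charge}. It then suffices to bound each term in expectation rather than with high probability; a randomised algorithm follows by averaging over the tape and picking a good graph with $\mu(G)\ge n/10$, as at the end of Section~\ref{sec:proof-main}.

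The key change is to replace the prefix truncation $F_q$ by a direct expectation bound per query. Fix a transcript $\tau$ through round $r$ and a query $q\in Q_{r+1}$, and list the uncertified pairs $(u_q,v_1),(u_q,v_2),\ldots$ into $R_r$ in the query order. The $i$-th pair is newly certified only if the previous $i-1$ are non-edges. By conditional independence across right vertices,
\[
\Pr[\text{$i$-th pair certified}\mid\mathcal E_\tau]\le(1-\tfrac14 n^{-r/k-\beta})^{i-1}.
\]
Summing over $i$, the expected number of newly certified pairs in $L\times R_r$ from $q$ is at most $4n^{r/k+\beta}$. So $\E[|\New_r|\mid\mathcal E_\tau]=O(n^{1+\delta+r/k+\beta})$.

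Next I bound the removal terms. As in Lemma~\ref{lem:negative-removals}, each vertex of $\NO_r$ absorbs at least $\tfrac12 n^{(r+1)/k}$ new pairs, so by \eqref{eq:balance-identity} $\E|\NO_r|=O(n^{1-\beta})$; summing over rounds gives $O(kn^{1-\beta})$. For $\YES_r\cap T_k$, I bound the expected number of new pairs that are edges into $T_k$. Pair $i$ contributes only if it is reached, and reaching it depends only on the earlier pairs $v_1,\ldots,v_{i-1}$. Given that, pair $i$ is an edge into $T_k$ with probability at most $1/n$, by independence at $v_i$. Hence this count is at most $\tfrac1n\E|\New_r|=O(n^{\delta+r/k+\beta})$, and the geometric sum over $r$ gives $O(n^{1-\beta})$. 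Uncertified output edges into $T_k$ have expectation at most $n\cdot\tfrac1n=1$, as in Lemma~\ref{lem:guessed-edges}. Combining, $\E W=O(kn^{1-\beta}+1)$.

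The main obstacle is that $W$ can be correlated with the event $\mu(G)<n/10$. I would handle it by bounding $\E[W\,\mathbf 1_{\mathcal H}]\le\E W$ and dividing by $\Pr[\mathcal H]\ge 1/2$, which gives a fixed graph with $\mu(G)\ge n/10$ and small $\E_\omega W$. This yields the expected-size factor $\Omega(\max\{1,n^\beta/k\})$.

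For the $2/3$-probability version, let $B=Ckn^{1-\beta}+C$. By Markov, $\Pr_{G,\omega}[W>B]\le 1/10$ for large $C$, and after averaging over $\mathcal H$ some good graph has $\Pr_\omega[W>B]<1/3$. An algorithm achieving approximation $\alpha$ with probability $2/3$ must then satisfy $n/(10\alpha)\le B$ on that graph, so $\alpha=\Omega(n^\beta/k)$.
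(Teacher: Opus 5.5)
Your proposal is correct and follows the paper's proof essentially step for step. It uses the same distribution with dense-layer probability $n^{-\beta}$, the same per-query geometric-tail bounds (the paper's \Cref{lem:query-expectation}) in place of the prefix sets $F_q$, the same charging via \eqref{eq:final-charge}, and the same averaging over $\mathcal H$ to fix a good graph. The only difference is in the constant-error claim: the paper observes directly that $\E W\ge\frac23\mu(G)/\alpha$ on that graph and compares this with \eqref{eq:random-lb-expectation}, whereas your Markov-and-reselect argument reaches the same conclusion slightly less directly.
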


The remaining budget range $\delta>1/k$ gives only the trivial
approximation lower bound of one: in this range the displayed
output-size bound already exceeds $n$.

We first consider the case $k n^{-\beta}\le1/2$. Let
$\mu^{\mathrm{exp}}_{k,\delta}$ denote the distribution in
Definition~\ref{def:hard-distribution}, with the layer probabilities
replaced by
\[
    \Pr[v\in T_j]=n^{-\beta}\quad (0\le j<k),
    \qquad
    \Pr[v\in T_k]=1-k n^{-\beta}.
\]
The edge probabilities remain $p_j=n^{-j/k}$. The layer assignments
are independent across right vertices, as before.

Fix a deterministic algorithm $\ALG$. We use the same sets
$R_r$, $\NO_r$, $\YES_r$ and
$D=\bigcup_{j<k}T_j$ as in the preceding proof. The proofs of
Lemmas~\ref{lem:transcript-independence}
and~\ref{lem:posterior-mass} apply unchanged to this distribution.
In particular, after fixing a transcript $\tau$ through round $r$,
every uncertified pair $(u,v)$ with $v\in R_r$ satisfies
\begin{equation}\label{eq:expected-edge-probability}
    \Pr[(u,v)\in E\mid\mathcal E_\tau]
        \ge \frac14 n^{-r/k-\beta}.
\end{equation}

We start by bounding the contribution of one demand query.

\begin{lemma}\label{lem:query-expectation}
Fix a transcript $\tau$ through round $r$, and let
$q=(u_q,\sigma_q)\in Q_{r+1}$. Let $H_q$ be the number of previously
uncertified pairs in $L\times R_r$ that $q$ certifies to be non-edges.
Under $\mu^{\mathrm{exp}}_{k,\delta}$,
\[
    \E[H_q\mid\mathcal E_\tau]\le4n^{r/k+\beta},
\]
and
\[
    \Pr[\dem(q)\in R_r\cap T_k\mid\mathcal E_\tau]
        \le4n^{r/k+\beta-1}.
\]
\end{lemma}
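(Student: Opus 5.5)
We prove both inequalities by walking along the query order and exposing pairs one at a time. Fix $\tau$ and $q$. Let $v_1,v_2,\ldots,v_m$ be the right endpoints of the pairs $(u_q,v)$ with $v\in R_r$ that are uncertified after round $r$, listed in the order induced by $\sigma_q$. These vertices are distinct. By Lemma~\ref{lem:transcript-independence}, conditional on $\mathcal E_\tau$ the random data at distinct right vertices are independent. Hence the events $A_i:=\{(u_q,v_i)\in E\}$ are mutually independent under this conditioning. By \eqref{eq:expected-edge-probability}, each satisfies $\Pr[A_i\mid\mathcal E_\tau]\ge p:=\frac14 n^{-r/k-\beta}$.

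For the first bound, observe that $q$ certifies $(u_q,v_i)$ as a newly certified non-edge only if no edge occurs before $v_i$ in $\sigma_q$. In particular, $A_1^c,\ldots,A_{i-1}^c$ must all hold. Pairs outside the list only make the query stop earlier, and certified pairs into $R_r$ are already excluded. Therefore
\[
    H_q\le\sum_{i=1}^m\mathbf 1\!\left[\bigcap_{j<i}A_j^c\right],
\]
and by independence
\[
    \E[H_q\mid\mathcal E_\tau]\le\sum_{i\ge1}(1-p)^{i-1}\le\frac1p=4n^{r/k+\beta}.
\]
This is the expectation analogue of Lemma~\ref{lem:query-prefix}, and it avoids the $\sqrt{\log n}$ loss from the tail bound.

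For the second bound, the event $\dem(q)=v_i\in T_k$ requires $A_1^c,\ldots,A_{i-1}^c$ together with $\{v_i\in T_k\}\cap A_i$. By independence across right vertices,
\[
    \Pr[\dem(q)=v_i\in T_k\mid\mathcal E_\tau]
      \le(1-p)^{i-1}\,\Pr[v_i\in T_k,\,A_i\mid\mathcal E_\tau].
\]
The same computation as in Lemma~\ref{lem:guessed-edges}, using that edge bits at a $T_k$ vertex are Bernoulli $1/n$ given the layer, gives $\Pr[v_i\in T_k,\,A_i\mid\mathcal E_\tau]\le 1/n$. Summing the geometric series yields
\[
    \Pr[\dem(q)\in R_r\cap T_k\mid\mathcal E_\tau]\le\frac1{np}=4n^{r/k+\beta-1}.
\]
Returned vertices in $R_r$ lie on the list, because their pairs were uncertified; a certified non-edge cannot be returned, and $v\in R_r$ has no certified edge. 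So this sum covers all cases.

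The main obstacle is justifying the product factorisation. The events $A_j^c$ for $j<i$ and $\{v_i\in T_k\}\cap A_i$ involve distinct right vertices, but they are not independent unconditionally; they become independent only after conditioning on $\mathcal E_\tau$. I would state this step carefully through Lemma~\ref{lem:transcript-independence}. That lemma factorises $\mathcal E_\tau$ over right vertices, so the conditional law is a product over columns. I would also check that the lower bound \eqref{eq:expected-edge-probability} applies to each $v_i$ individually without any further conditioning on the earlier vertices; this holds by the same product structure.
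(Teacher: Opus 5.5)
Your proof is correct and follows essentially the same route as the paper. You list the uncertified pairs into $R_r$ in $\sigma_q$-order, factorise over right vertices via Lemma~\ref{lem:transcript-independence}, apply the per-pair bound \eqref{eq:expected-edge-probability}, and sum a geometric series for both assertions. The only cosmetic difference is that the paper writes the first bound as the tail sum $\sum_{t\ge1}\Pr[H_q\ge t\mid\mathcal E_\tau]\le\sum_{t\ge1}(1-p)^t$ rather than your sum of prefix indicators, which is the same computation.
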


\begin{proof}
List the previously uncertified pairs into $R_r$ in the order
induced by $\sigma_q$ as
\[
    (u_q,v_1),(u_q,v_2),\ldots.
\]
This list is fixed by the preceding transcript, and its right
endpoints are distinct.

If $H_q\ge t$, then the first $t$ pairs in this list must all be
absent. Using independence across right vertices
and~\eqref{eq:expected-edge-probability}, we obtain
\[
    \Pr[H_q\ge t\mid\mathcal E_\tau]
        \le
        \left(1-\frac14 n^{-r/k-\beta}\right)^t.
\]
If the list contains fewer than $t$ pairs, the event is impossible.
Summing the tail gives
\[
    \E[H_q\mid\mathcal E_\tau]
        =\sum_{t\ge1}\Pr[H_q\ge t\mid\mathcal E_\tau]
        \le4n^{r/k+\beta}.
\]

For the second assertion, suppose the query returns $v_i\in T_k$.
Then $(u_q,v_i)$ is an edge, and every preceding pair in the list
is a non-edge. Since $(u_q,v_i)$ is uncertified after round $r$,
\[
    \Pr[v_i\in T_k,\ (u_q,v_i)\in E\mid\mathcal E_\tau]
        =\frac1n\Pr[v_i\in T_k\mid\mathcal E_\tau]
        \le\frac1n.
\]
This joint event is independent of the absence of the preceding
pairs, because they have different right endpoints. Therefore,
\begin{align*}
    \Pr[\dem(q)\in R_r\cap T_k\mid\mathcal E_\tau]
      &\le
        \frac1n\sum_{i\ge1}
        \left(1-\frac14 n^{-r/k-\beta}\right)^{i-1}\\
      &=4n^{r/k+\beta-1}.
\end{align*}
The query may stop earlier at a vertex outside $R_r$, which only
decreases the probability being bounded.
\end{proof}

\begin{proof}[Proof of Theorem~\ref{thm:random-lb-expectation}]
If $k n^{-\beta}>1/2$, then $k n^{1-\beta}>n/2$, so the claimed
output-size bound follows immediately on any graph with a perfect
matching. We therefore assume $k n^{-\beta}\le1/2$ and use
$\mu^{\mathrm{exp}}_{k,\delta}$.

First fix a deterministic algorithm $\ALG$, and let $c_0\ge1$ be
a constant such that it asks at most $c_0n^{1+\delta}$ demand queries
per round. We bound its expected output using the same accounting
as in~\eqref{eq:final-charge}.

\paragraph{Counting vertices adjacent to certified non-edges.}
Fix a transcript $\tau$ through round $r$. As in the proof of
Lemma~\ref{lem:negative-removals}, a vertex in $\NO_r$ must acquire
more than
\[
    n^{(r+1)/k}-n^{r/k}
        \ge\frac12 n^{(r+1)/k}
\]
new certified non-edges. The total number of these new non-edges
is at most $\sum_{q\in Q_{r+1}}H_q$. Thus
Lemma~\ref{lem:query-expectation} gives
\begin{align*}
    \E[|\NO_r|\mid\mathcal E_\tau]
      &\le
        \frac{2}{n^{(r+1)/k}}
        \sum_{q\in Q_{r+1}}\E[H_q\mid\mathcal E_\tau]\\
      &=O\!\left(n^{1+\delta+\beta-1/k}\right)\\
      &=O\!\left(n^{1-\beta}\right).
\end{align*}
The last equality uses
$1+\delta+\beta-1/k=1-\beta$.
Averaging over the transcript and summing over the rounds yields
\[
    \E\!\left[\sum_{r=0}^{k-1}|\NO_r|\right]
        =O\!\left(k n^{1-\beta}\right).
\]

\paragraph{Counting vertices adjacent to certified edges.}
Every vertex in $\YES_r\cap T_k$ is returned by at least one query
in round $r+1$. Applying the second assertion of
Lemma~\ref{lem:query-expectation} separately to each query gives
\begin{align*}
    \E[|\YES_r\cap T_k|\mid\mathcal E_\tau]
      &\le
        \sum_{q\in Q_{r+1}}
        \Pr[\dem(q)\in R_r\cap T_k\mid\mathcal E_\tau]\\
      &=O\!\left(n^{\delta+r/k+\beta}\right).
\end{align*}
Neither this calculation nor the preceding calculation requires
independence between different queries.

Since
\[
    \sum_{r=0}^{k-1}n^{r/k}
        =\frac{n-1}{n^{1/k}-1}
        \le2n^{1-1/k},
\]
we obtain
\begin{align*}
    \E\!\left[\sum_{r=0}^{k-1}|\YES_r\cap T_k|\right]
      &=O\!\left(
          n^{\delta+\beta}\sum_{r=0}^{k-1}n^{r/k}
        \right)\\
      &=O\!\left(n^{1+\delta+\beta-1/k}\right)\\
      &=O\!\left(n^{1-\beta}\right).
\end{align*}
The dense vertices are again counted once over the entire execution.
Their expected number is
\[
    \E|D|=k n^{1-\beta}.
\]

\paragraph{Accounting for uncertified output edges.}
Fix the final transcript. As in the proof of
Lemma~\ref{lem:guessed-edges}, every uncertified output pair $(u,v)$
has probability at most $1/n$ of being a true edge with $v\in T_k$.
An output whose pairs have distinct endpoints contains at most $n$
such pairs, so their expected contribution is at most one.
If the output pairs share an endpoint, then $W=0$.
All output edges into $D$ are already covered by the charge to $|D|$.

Combining these estimates with~\eqref{eq:final-charge} gives
\begin{equation}\label{eq:expected-charge-general}
    \E_{G\sim\mu^{\mathrm{exp}}_{k,\delta}}W
        =O\!\left(k n^{1-\beta}+1\right).
\end{equation}

The proof of Lemma~\ref{lem:large-matching} also applies to
$\mu^{\mathrm{exp}}_{k,\delta}$: the sparse layer has probability
at least $1/2$, and its incident edges have probability $1/n$.
Consequently,
\[
    \Pr[\mu(G)\ge n/10]=1-\exp(-\Omega(n)).
\]

Now let the algorithm be randomised, with random tape $\omega$.
The bound in~\eqref{eq:expected-charge-general} holds uniformly
for every fixed tape, so it also holds after averaging over $\omega$.
Writing $\mathcal H$ for the event $\mu(G)\ge n/10$, we have
\[
    \E_{G\mid\mathcal H}
       \!\left[\E_\omega[W\mid G]\right]
    \le
    \frac{\E_{G,\omega}W}{\Pr[\mathcal H]}
    =O\!\left(k n^{1-\beta}+1\right).
\]
There is therefore a fixed graph $G$ with $\mu(G)\ge n/10$ on which
the expected output satisfies~\eqref{eq:random-lb-expectation}.

Since $\beta\le1$, we have $k n^{1-\beta}\ge1$. Dividing the linear
optimum by the expected output bound, and using the lower bound
of one on an approximation factor, gives
\[
    \Omega\!\left(
        \max\left\{1,\frac{n^\beta}{k}\right\}
    \right).
\]
Finally, if an algorithm returns a valid $\alpha$-approximation
with probability at least $2/3$ on every input, then on this graph
\[
    \E W\ge\frac23\frac{\mu(G)}{\alpha}.
\]
Combining this with~\eqref{eq:random-lb-expectation} gives the same
lower bound on $\alpha$, up to a constant factor.
\end{proof}

We next record the consequences for one round and for constant
approximation. For the latter, we use the following observation
about the failure probability in the preceding high-probability
proof.

\begin{remark}
For every fixed $c>0$, Theorem~\ref{thm:k-round-lb} also holds with
failure probability at most $n^{-c}$, allowing the constant in its
output-size bound to depend on $c$.

To see this, replace the candidate-prefix length by
\[
    t_r=
    \left\lceil
        4(c+6)n^{r/k+\beta}\sqrt{\log n}
    \right\rceil.
\]
The probability that a query passes this prefix without finding
an edge is then at most $n^{-c-6}$. Taking the constants in the
additive binomial bounds sufficiently large makes their failure
probabilities at most $n^{-c-6}$ as well. The same union bounds
over at most $c_0kn^2$ queries and over the rounds, followed by
averaging over graphs with linear optimum, give failure probability
at most $n^{-c}$ for sufficiently large $n$. Only the constants
in the output-size bound change.
\end{remark}

\begin{corollary}[One round and constant approximation]
\label{cor:random-lb-consequences}
For $k=1$, any randomised algorithm using $O(n^{1+\delta})$ demand
queries has expected-size and constant-error approximation lower
bound
\[
    \Omega\!\left(n^{(1-\delta)/2}\right).
\]
Moreover, an algorithm using $O(n)$ demand queries per round that
returns a constant approximation with probability at least $n^{-d}$
on every input, for any fixed $d>0$, requires
\[
    \Omega\!\left(\frac{\log n}{\log\log n}\right)
\]
rounds.
\end{corollary}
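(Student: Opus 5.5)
The corollary has two parts, and both follow from the two lower-bound theorems already proved by substituting parameters.

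For the first part I take $k=1$ in Theorem~\ref{thm:random-lb-expectation}. Then $\beta=1/2-\delta/2=(1-\delta)/2$. The theorem is stated for $-1\le\delta\le 1/k=1$, which is exactly the range needed. It gives an expected-size approximation lower bound of $\Omega(\max\{1,n^{(1-\delta)/2}\})$, and since $n^{(1-\delta)/2}\ge 1$ on this range this equals $\Omega(n^{(1-\delta)/2})$. The last sentence of that theorem gives the same bound for algorithms that succeed with probability at least $2/3$. No further work is needed for this part.

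For the second part I use the high-probability version, Theorem~\ref{thm:k-round-lb}, with $\delta=0$, so $\beta=1/(2k)$. I apply it in the strengthened form of the subsequent Remark, choosing failure probability $n^{-c}$ with $c>d$. On the hard graph $G$ with $\mu(G)\ge n/10$, with probability at least $1-n^{-c}$ the output satisfies
\[
W\le C\left(kn^{1-1/(2k)}\sqrt{\log n}+k\log n\right).
\]
Suppose the algorithm returns an $\alpha$-approximation, $\alpha=O(1)$, with probability at least $n^{-d}$. Since $n^{-d}>n^{-c}$ for large $n$, the success event and the event above occur together with positive probability. On their intersection,
\[
\frac{n}{10\alpha}\le C\left(kn^{1-1/(2k)}\sqrt{\log n}+k\log n\right).
\]
For $k\le\log_2 n$ the term $k\log n$ is $O(\log^2 n)$, which is far below $n/(10\alpha)$. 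So the first term must dominate, and this forces
\[
n^{1/(2k)}\le O\!\left(\alpha k\sqrt{\log n}\right).
\]
Taking logarithms gives $\frac{\log n}{2k}\le \log k+\tfrac12\log\log n+O(1)=O(\log\log n)$, using $k\le\log_2 n$. Hence $k=\Omega(\log n/\log\log n)$.

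One case remains: $k>\lfloor\log_2 n\rfloor$, where Theorem~\ref{thm:k-round-lb} does not apply. Such $k$ already exceeds $\log n/\log\log n$, so the claim holds trivially.

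The main obstacle is making sure the Remark really handles an arbitrary constant $d$. The success probability $n^{-d}$ is tiny, so the theorem's failure probability must be pushed below it, and this works only because the Remark lets the exponent $c$ be made arbitrarily large at the cost of constants alone. Beyond that, the only remaining care is checking that the additive $k\log n$ term is negligible.
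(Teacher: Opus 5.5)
Your proposal is correct and follows the paper's proof. The first part is the $k=1$ specialisation of Theorem~\ref{thm:random-lb-expectation}, and the second applies the strengthened high-probability form of Theorem~\ref{thm:k-round-lb} with $\delta=0$ and a failure exponent exceeding $d$ (the paper takes $c=d+1$). The only difference is cosmetic: the paper argues by contradiction for $k\le\frac14\log n/\log\log n$, where $n^{1/(2k)}\ge(\log n)^2$ makes the output-size threshold $o(n)$, whereas you solve the resulting inequality for $k$ directly.
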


\begin{proof}
For $k=1$, we have $\beta=(1-\delta)/2$, so the first assertion
follows directly from Theorem~\ref{thm:random-lb-expectation}.

For the second assertion, it is enough to rule out
\[
    k\le\frac14\frac{\log n}{\log\log n}.
\]
In this range,
\[
    n^{1/(2k)}\ge(\log n)^2.
\]
Apply the strengthened high-probability bound above with
$\delta=0$ and failure exponent $c=d+1$. Its output-size threshold,
divided by $n$, is
\[
    O_d\!\left(
        \frac{k\sqrt{\log n}}{n^{1/(2k)}}
        +\frac{k\log n}{n}
    \right)
    =o(1).
\]
The selected input has maximum matching size at least $n/10$.
For any fixed approximation factor, a valid approximation on this
input must therefore exceed the output-size threshold once $n$
is sufficiently large. Its probability is at most
$n^{-d-1}<n^{-d}$, a contradiction.
\end{proof}

\subsection{The Two-Round Hard Distribution}
\label{sec:two-round-layer-recovery}

We now consider the expected-size hard distribution when $k=2$
and $\delta=0$. We show that two rounds suffice to find a matching
of size $\Omega(n^{3/4}/\log n)$ using at most $n$ demand queries
per round. Thus the exponent of the lower bound obtained from
this distribution is tight, up to logarithmic factors.

The same algorithm works when the probabilities of the two dense
layers are multiplied by any fixed positive constant.

\begin{proposition}\label{prop:two-round-layer-recovery}
Fix a constant $a_0>0$. Independently assign every vertex of $R$
to the layers $T_0,T_1,T_2$ with probabilities
\[
    \Pr[v\in T_0]=\Pr[v\in T_1]=a_0n^{-1/4},
    \qquad
    \Pr[v\in T_2]=1-2a_0n^{-1/4}.
\]
Conditional on the layers, include each pair $(u,v)\in L\times R$
independently, with probability $1$, $n^{-1/2}$, or $1/n$
according as $v$ belongs to $T_0$, $T_1$, or $T_2$.

For sufficiently large $n$, a two-round randomised algorithm
using at most $n$ demand queries per round finds a matching
of size
\[
    \Omega\!\left(\frac{n^{3/4}}{\log n}\right)
\]
with high probability over the graph and the algorithm's
randomness. Consequently, this distribution admits an
$O(n^{1/4}\log n)$ approximation in two rounds.
\end{proposition}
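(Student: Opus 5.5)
The plan is a two-round algorithm. Round~1 approximately identifies the complete layer $T_0$. Round~2 then has every left vertex sample a neighbour outside that set, and these samples land in $T_1$ and cover almost all of it. Since every left vertex issues at most one query per round, the budget of $n$ queries per round is respected. By Chernoff (Proposition~\ref{prop:chernoff}), with high probability $|T_0|,|T_1|=\Theta(n^{3/4})$ and $|T_2|\ge n/2$. Each $v\in T_1$ has degree in $[n^{1/2}/2,2n^{1/2}]$. Each $u\in L$ has $\Theta(n^{1/4})$ neighbours in $T_1$ and $O(\log n)$ neighbours in $T_2$. I condition on these events.

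\textbf{Round 1 (locating $T_0$).} Label $L=\{u_1,\dots,u_n\}$ and $R=\{v_1,\dots,v_n\}$. Query $u_j$ with the cyclic order $v_j,v_{j+1},\dots$, which starts at $v_j$. Let $S$ be the set of $v_j$ that are returned by the query of $u_j$, i.e.\ those for which $(u_j,v_j)$ is an edge. Otherwise $v_j$ is certified a non-edge of $u_j$, so $v_j\notin T_0$. Thus $T_0\subseteq S$. The remaining vertices of $S$ come from $T_1$ (expected count $a_0n^{3/4}\cdot n^{-1/2}$) and from $T_2$ (expected count $O(1)$). Hence $|S\setminus T_0|=O(n^{1/4}+\log n)$ w.h.p. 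Only membership in $S$ is used, not the rest of the transcript.

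\textbf{Round 2 (sampling into $T_1$).} Each $u\in L$ queries the order that lists $R\setminus S$ first in a fresh uniformly random order, followed by $S$. The queries are independent across left vertices. Because $u$ has no neighbour in $T_0\setminus S=\varnothing$, its returned vertex is uniform over $N(u)\setminus S$ whenever that set is nonempty. This set has $\Theta(n^{1/4})$ elements in $T_1$ and $O(\log n)$ in $T_2$. Therefore each $u$ adjacent to a fixed $v\in T_1\setminus S$ returns $v$ with probability at least $1/(Cn^{1/4})$, for some constant $C$ and independently across $u$. Vertex $v$ has at least $n^{1/2}/2$ such neighbours, so the expected number of times $v$ is returned is $\Omega(n^{1/4})$. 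By Chernoff, $v$ is returned at least once with probability $1-\exp(-\Omega(n^{1/4}))$. A union bound then shows that all of $T_1\setminus S$, which has size $\Omega(n^{3/4})$, is returned.

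\textbf{Output.} Each left vertex holds a single returned edge. Picking, for every distinct returned right vertex, one left vertex that returned it gives a matching whose size equals the number of distinct returned vertices. This size is $\Omega(n^{3/4})$, comfortably above the $\Omega(n^{3/4}/\log n)$ claimed. Since $\mu\le n$, the approximation factor is $O(n^{1/4}\log n)$.

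\textbf{Main obstacle.} The delicate step is the conditioning in the Round~2 analysis. The set $S$ depends on the diagonal pairs $(u_j,v_j)$, so the edge variables are no longer fully independent. I would handle this by exposing only these $n$ diagonal pairs in Round~1. The remaining edges at each right vertex stay independent Bernoulli variables given its layer, as in Lemma~\ref{lem:transcript-independence}. Removing one pair per vertex shifts the degree bounds above by at most one, and the Round~2 randomness is fresh. Together these make the coverage argument rigorous.
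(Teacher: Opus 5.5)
Your proof is correct, and it takes a different route from the paper's.

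\textbf{The paper's route.} The paper splits $L$ into $L_A$ and $L_B$. In round one it tests random sets $S_U$ of size $\Theta(n^{1/4})$ against groups $U\subseteq L_A$ of $\Theta(\sqrt n\log n)$ left vertices, and accepts $S_U$ when $E(U,S_U)=\varnothing$. Every vertex of $T_0\cup T_1$ has a neighbour in every group, so accepted sets lie in the sparse layer $T_2$. A bounded-differences argument then makes their union of size $\Omega(n^{3/4}/\log n)$. In round two, $L_B$, whose edges are untouched by round one, samples into this union and produces $\Omega$ of its size in distinct returns.

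\textbf{Your route.} You use round one only to strip off $T_0$ via the diagonal test. In round two each $u$ then samples uniformly from $N(u)\setminus S\subseteq T_1\cup T_2$, a set of size $O(n^{1/4})$. This forces every vertex of $T_1\setminus S$ to be returned, so your matching lives in the dense layer $T_1$.

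\textbf{What each route buys.} Yours is shorter and loses no logarithm: it gives $\Omega(n^{3/4})$, hence an $O(n^{1/4})$ approximation. The conditioning you flag as the main obstacle is not actually needed. Everything you use is a property of the graph that holds with high probability: $T_0\subseteq S$, $|S\setminus T_0|=O(n^{1/4}+\log n)$, $|N(u)\cap(T_1\cup T_2)|=O(n^{1/4})$, and $\deg(v)\ge\sqrt n/2$ on $T_1$. You may therefore fix a good graph and argue only over the fresh round-two orderings.

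The paper's route buys a matching into $T_2$. It shows that negative certificates, correlated across a group, can locate sparse-layer vertices. Your argument only harvests a dense layer, which the lower-bound accounting already charges in full.

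Your own coverage computation with $S=\varnothing$ shows something further. A single round of uniform random-neighbour queries already returns every vertex of $T_0$, since each is hit $\Theta(n^{1/4})$ times in expectation. So for the size bound as stated, your round one is superfluous.
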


\begin{proof}
The first round will identify a large set of vertices contained
in $T_2$. The second round will find a matching into this set.
To keep the edges used in the second round independent of the
first-round information, partition $L$ into two fixed sets
$L_A,L_B$, with
\[
    |L_A|=\lfloor n/2\rfloor,
    \qquad
    |L_B|=\lceil n/2\rceil.
\]
The first round uses only vertices in $L_A$, and the second round
uses only vertices in $L_B$.

\paragraph{The first round.}
Let
\[
    t:=\lceil C\sqrt n\log n\rceil,
    \qquad
    g:=\left\lfloor\frac{|L_A|}{t}\right\rfloor,
\]
where $C>0$ is a sufficiently large constant. Partition $gt$
vertices of $L_A$ into $g$ groups of size $t$, leaving the
remaining vertices unused.

For each group $U$, independently choose a uniformly random set
$S_U\subseteq R$ of size
\[
    \ell:=\lfloor c_1n^{1/4}\rfloor,
\]
where $c_1>0$ is a fixed constant. Every vertex $u\in U$ asks
a demand query that places $S_U$ first in its ordering.

We accept $S_U$ if every response from a vertex in $U$ lies
outside $S_U$ or is $\bot$. This happens exactly when
\[
    E(U,S_U)=\varnothing.
\]
Let $S$ be the union of all accepted sets. The first round uses
$gt\le|L_A|\le n/2$ queries.

\paragraph{Properties of the graph distribution.}
For $U\subseteq L$, write $N(U)$ for the set of its neighbours
in $R$. We first establish three properties of the layers and
the edges incident on $L_A$.

First, every vertex of $T_1$ has a neighbour in every group
with high probability. For a fixed $v\in T_1$ and a fixed group
$U$, the probability of having no such neighbour is
\[
    (1-n^{-1/2})^t
        \le \exp(-t/\sqrt n)
        \le n^{-C}.
\]
A union bound over the vertices and groups proves the assertion.
Every vertex of $T_0$ has a neighbour in every group because
all its incident pairs are edges. Therefore, on this event,
every accepted set is contained in $T_2$.

Second, with high probability,
\[
    |N(U)|=O(n^{3/4})
\]
for every group $U$. Indeed, a binomial tail bound gives
$|T_0\cup T_1|=O(n^{3/4})$. Conditional on the layers, the number
of edges from a group $U$ to $T_2$ is binomial with expectation
at most $t$. Another binomial tail bound, followed by a union
bound over the groups, makes this number $O(t)$ simultaneously
for all groups. Since
\[
    t=O(\sqrt n\log n)=o(n^{3/4}),
\]
the claimed bound on $|N(U)|$ follows.

Third, define
\[
    Z:=\{v\in T_2:N(v)\cap L_A=\varnothing\}.
\]
Then $|Z|=\Omega(n)$ with high probability. To see this,
condition on the layers. We have $|T_2|=\Omega(n)$ with high
probability, and each vertex of $T_2$ belongs to $Z$ independently
with probability
\[
    (1-1/n)^{|L_A|}=\Omega(1).
\]
The assertion follows from the lower-tail Chernoff bound.

These three properties hold simultaneously with high probability.
Their failure probability can be made smaller than any fixed
inverse polynomial by taking $C$ sufficiently large.

\paragraph{Bounding the size of the accepted set.}
Condition on a layer assignment and on all edges incident on
$L_A$ for which the three properties above hold. The random
choices of the sets $S_U$ are still independent across groups.

Fix $z\in Z$ and a group $U$. Since $z\notin N(U)$, the probability
that $S_U$ contains $z$ and is accepted is exactly
\[
    \Pr[z\in S_U,\ S_U\text{ is accepted}]
      =
      \frac{\ell}{n}
      \frac{\binom{n-1-|N(U)|}{\ell-1}}
           {\binom{n-1}{\ell-1}}.
\]
The ratio of binomial coefficients can be written as
\[
    \prod_{i=0}^{\ell-2}
       \left(1-\frac{|N(U)|}{n-1-i}\right).
\]
Since $|N(U)|=O(n^{3/4})$ and $\ell=\Theta(n^{1/4})$, this product
is bounded below by a positive constant. More explicitly, for
sufficiently large $n$, its logarithm is at least
\[
    -O\!\left(\frac{|N(U)|\ell}{n}\right)=-O(1).
\]
Thus there is a constant $c_2>0$ such that
\[
    \Pr[z\in S_U,\ S_U\text{ is accepted}]
        \ge c_2\frac{\ell}{n}
\]
for every $z\in Z$ and every group $U$.

By independence across groups,
\[
    \Pr[z\in S]
        \ge1-\left(1-c_2\frac{\ell}{n}\right)^g.
\]
As $g\ell=o(n)$ and $|Z|=\Omega(n)$, summing over $z\in Z$ gives
\[
    \E|S|
      \ge
      |Z|\left(
          1-\left(1-c_2\frac{\ell}{n}\right)^g
      \right)
      =\Omega(g\ell).
\]

To show concentration, regard $|S|$ as a function of the $g$
independent choices $S_U$. Changing one choice can remove at
most $\ell$ vertices from the union and add at most $\ell$
vertices. Thus its effect on $|S|$ is at most $2\ell$.
The bounded-differences inequality gives
\[
    |S|=\Omega(g\ell)
        =\Omega\!\left(\frac{n^{3/4}}{\log n}\right)
\]
except with conditional probability $\exp(-\Omega(g))$.
Also, deterministically,
\[
    |S|\le g\ell=o(n).
\]

\paragraph{The second round.}
Each vertex $u\in L_B$ asks a demand query that places an
independently uniform random ordering of $S$ first.
We retain the responses that lie in $S$. For each distinct
returned right vertex, we keep one of its returned edges.

We show that this produces $\Omega(|S|)$ matching edges.
Condition further on all the first-round random choices, so
that $S$ is fixed. On the event considered above, $S\subseteq T_2$.
The pairs in $L_B\times S$ are still independent Bernoulli $1/n$
variables: all the graph information conditioned on concerns
the layers and the edges incident on $L_A$.

Write $m:=|S|$. For each $u\in L_B$, the probability that its
response belongs to $S$ is
\[
    1-(1-1/n)^m=\Theta(m/n).
\]
These events are independent across the vertices of $L_B$.
Since $|L_B|\ge n/2$ and $m=o(n)$, the expected number of such
responses is at least $m/4$ for sufficiently large $n$.
A Chernoff bound therefore gives at least $m/8$ responses in
$S$, except with probability $\exp(-\Omega(m))$.

Conditional on which rows return a vertex in $S$, their returned
vertices are independent and uniform in $S$. This follows from
symmetry and from the independent random ordering used in each
row. Consider the first $\lfloor m/8\rfloor$ such responses.
The expected number of distinct right vertices among them is
\[
    m\left(
        1-\left(1-\frac1m\right)^{\lfloor m/8\rfloor}
    \right)
    =\Omega(m).
\]
Changing one response changes this number by at most one.
Another application of bounded differences shows that the
number of distinct returned vertices is $\Omega(m)$ except
with probability $\exp(-\Omega(m))$.

The retained edges form a matching: only one edge is kept for
each right endpoint, and each left vertex supplied at most
one response. Its size is therefore
\[
    \Omega(m)
      =\Omega\!\left(\frac{n^{3/4}}{\log n}\right).
\]
The second round uses $|L_B|\le n$ queries. Combining the
failure probabilities proves the claimed success probability
over the graph and the algorithm's randomness.

Finally, $\mu(G)\le n$, so a matching of this size gives an
$O(n^{1/4}\log n)$ approximation.
\end{proof}

\paragraph{Scope of the exponent gap.}
For $O(n)$ demand queries per round,
Theorem~\ref{thm:random-lb-expectation} gives approximation
lower bound
\[
    \Omega\!\left(
        \max\left\{1,\frac{n^{1/(2k)}}{k}\right\}
    \right).
\]
The general upper bound in Theorem~\ref{ub:thm:main}
has factor
\[
    O\!\left(
        \frac{n^{1/(k+1)}}{k^{k/(k+1)}}
    \right)
\]
up to logarithmic factors. For fixed $k>1$, the exponents
$1/(2k)$ and $1/(k+1)$ remain different.

Proposition~\ref{prop:two-round-layer-recovery} shows that the
particular two-round distribution used here admits an
$O(n^{1/4}\log n)$ approximation. Its guarantee is over a graph
sampled from that distribution and the algorithm's randomness.
It does not give the same guarantee on arbitrary graphs. These results leave the approximation
gap for arbitrary demand-query algorithms open.


%

\section[A Randomised k-round Upper Bound]{A Randomised $k$-round Upper Bound}
\label{ub:sec:upper-bound}

In this section, we prove an upper bound for finding an approximate
maximum matching using $k$ rounds of demand queries. We use the
independent-neighbour greedy algorithm of \cite{DobzinskiNO19}
with a modification when fewer than $n$ queries are available per round.
The analysis keeps the contribution of every round. This gives the
following tradeoff, including the dependence on $k$.

Throughout this section, $G=(L\sqcup R,E)$ is a fixed bipartite graph
with $|L|=|R|=n$, and $\mu:=\mu(G)$. All probabilities and expectations
are over the algorithm's randomness. An expected-size approximation
factor $\alpha$ means that the expected size of the returned matching
is at least $\mu/\alpha$.

\begin{theorem}[$k$-round upper bound]
\label{ub:thm:main}
Let $k\ge1$ and $1\le q\le n^2$ be integers. There is a randomised
algorithm using at most $k$ rounds and at most $q$ demand queries per
round whose output $M$ is always a matching. For $\mu>0$, it satisfies
\begin{equation}\label{ub:eq:expected-size}
    \mathbb E|M|
    \ge c\min\left\{
        \mu,\ \frac{kq\mu}{n},\
        \left(\frac{kq\mu}{n}\right)^{k/(k+1)}
    \right\},
\end{equation}
where $c>0$ is an absolute constant. In particular, its worst-case
expected-size approximation factor is
\begin{equation}\label{ub:eq:approximation}
    O\!\left(
        \max\left\{
            1,\frac{n^{1/(k+1)}}{(kq/n)^{k/(k+1)}}
        \right\}
    \right).
\end{equation}
When $q\ge n$, the same asymptotic approximation factor is achieved
with probability at least $2/3$, with the same round and query budgets.
\end{theorem}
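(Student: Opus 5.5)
We run the independent-neighbour greedy algorithm from the technical overview, adapted to the budget $q$, and analyse it through the accumulated conditional success probabilities $A$ described there.

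\textbf{Algorithm.} When $q\ge n$, every unmatched left vertex issues one demand query in every round. When $q<n$, we first choose a uniformly random set $L'\subseteq L$ of size $q$, and only vertices of $L'$ ever query. The query of $u$ in round $r$ places the right vertices unmatched at the start of round $r$ first, in a fresh uniformly random order. Its response is therefore a uniform neighbour of $u$ among those vertices, or a vertex outside that set, or $\bot$; in the last two cases we treat $u$ as having no sample. After the responses arrive, we process the left vertices in a fixed order and match $u$ to its sample whenever that sample is still unmatched. The output is always a matching and uses $k$ rounds.

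\textbf{Key inequality.} Fix a maximum matching $M^\star$, and let $M^\star_{L'}$ be its edges whose left endpoint lies in $L'$. Then $\mathbb E|M^\star_{L'}|=\mu\min\{1,q/n\}$, and $|M^\star_{L'}|$ is hypergeometric, hence concentrated when large. Write $\mu'=|M^\star_{L'}|$. Let $X=|M|$, and define $f_{r,u}$ as in the overview: expose the samples in processing order, and let $f_{r,u}$ be the conditional probability that $u$ is matched in round $r$. Set $A=\sum_{u,r}f_{r,u}$. By a martingale (tower) argument, $\mathbb E[A]=\mathbb E[X]=:m$.

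At least $\mu'-2X$ edges $uv\in M^\star_{L'}$ have both endpoints unmatched at the end. For each such edge, $u$ participates in every round, and the telescoping argument gives $\prod_r f_{r,u}\ge d_{k+1}/d_1\ge 1/(X+1)$, using $d_1-d_{k+1}\le X$ and $d_{k+1}\ge1$. AM--GM then gives $\sum_r f_{r,u}\ge k(X+1)^{-1/k}$, and summing over these edges yields
\[
A\ge k(\mu'-2X)(X+1)^{-1/k}.
\]

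\textbf{Solving for $m$.} If $m\ge\mu'/8$ we are done. Otherwise, by Markov, $X\le 2m$ with probability at least $1/2$. Taking expectations gives $m\ge \tfrac{k\mu'}{4}(2m+1)^{-1/k}$, so $m=\Omega\big((k\mu')^{k/(k+1)}\big)$ once $m\ge1$. This gives $\mathbb E|M|\ge c\min\{\mu',(k\mu')^{k/(k+1)}\}$, after conditioning on $\mu'$.

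Averaging over $L'$ with $\mathbb E\mu'=\mu q/n$, concavity requires some care. We use that $\mu'\ge \mu q/(2n)$ with constant probability when $\mu q/n\ge 1$ (hypergeometric concentration, or Paley--Zygmund). When $\mu q/n<1$, we instead argue directly that with probability $\Omega(\mu q/n)$ some $u\in L'$ has a neighbour, so $\mathbb E|M|=\Omega(\min\{\mu, kq\mu/n\})$.

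This covers the first two terms of \eqref{ub:eq:expected-size}. However, the $(kq\mu/n)^{k/(k+1)}$ term, with $k$ \emph{inside} the $kq$, suggests a better design for $q<n$: use $kq$ distinct queriers in total rather than $q$ fixed ones. We could alternatively split the query budget across rounds, or, more simply, take $|L'|=\min\{n,kq\}$ and have the active vertices in each round be those $u\in L'$ whose schedule lets them query. The exact scheduling needs care to reproduce $kq\mu/n$. I would first try the simple choice $|L'|=q$ and check whether $\min\{\mu q/n\cdot k^{k/(k+1)}\ldots\}$ already matches. If it does not, I would use $|L'|=\min\{n,kq\}$, where each vertex queries only in some rounds, together with a modified telescoping over the rounds each vertex participates in. The approximation factor \eqref{ub:eq:approximation} then follows by dividing $\mu$ by the bound and using $\mu\le n$.

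\textbf{High-probability claim for $q\ge n$.} Here we take the better of $O(1)$ independent runs executed in parallel. This stays within budget because $q\ge n$ allows a constant number of copies; the constant is absorbed by running the algorithm with $q/C$. Since $X\le\mu$ and $\mathbb E X\ge\mu/\alpha$, reverse Markov gives $\Pr[X\ge \mu/(2\alpha)]\ge 1/(2\alpha)$. That alone is too weak, so instead we apply the Markov step inside the analysis: with probability $1/2$, either $X\ge\mu/8$ or $X\le 2m$. On the latter event we need $X$ itself large, not just $A$. For this we would use a concentration bound showing that $X-A$ is a martingale with bounded increments, so that $X\ge A-O(\sqrt{nk})$, combined with a separate trivial argument when $m$ is small.

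\textbf{Main obstacle.} The hardest step is reconciling the $q<n$ scheduling with the exact $kq$ dependence, together with this concentration for the $2/3$ guarantee.
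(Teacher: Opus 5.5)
Your core analysis is the same as the paper's: the pathwise telescoping and AM--GM bound for surviving edges of $M^\star$, the identity $\mathbb{E}A=\mathbb{E}X$, and the Markov step. The gap is that the algorithm you feed into it cannot reach the stated bound in either budget regime.

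\textbf{The case $q\ge n$.} You give each active vertex a single query per round, so your analysis yields only $\min\{\mu,(k\mu)^{k/(k+1)}\}$. This is genuinely too small when $q\gg n$. For $k=1$, take a complete layer of size $\sqrt n$ plus a private edge at every other left vertex. One-sample greedy then outputs $O(\sqrt n)$ edges, whereas the theorem demands $\Omega(\sqrt q)$. The paper gives every active vertex $t=\lfloor q/n\rfloor$ independent samples and matches it to the first still-free one. With $s=q/n$, the estimates $1-(1-f_{r,u})^t\ge(1-e^{-1})\min\{1,tf_{r,u}\}$ and $\sum_r\min\{1,z_r\}\ge\min\{1,\sum_r z_r\}$ turn your inequality into $A\ge c_0(\mu-2X)_+\min\{1,ks/(X+1)^{1/k}\}$.

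\textbf{The case $q<n$.} A fixed querying set $L'$ of size $q$ fails, because only vertices of $L'$ can ever be matched. On a single edge this gives $\mathbb{E}|M|=q/n$ rather than the required $\Omega(kq/n)$, so your ``direct argument'' for $\mu q/n<1$ is false. Your fallback, where each vertex queries only in some rounds, does not close the gap either. With $kq$ participants and $q$ queries per round, a vertex is eligible in about one round on average. AM--GM over its eligible rounds then degenerates to $f\ge 1/(X+1)$, which yields only $m=\Omega(\sqrt{kq\mu/n})$.

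The missing idea is to partition $L$ into $q$ balanced groups and, in \emph{every} round, select a fresh uniform vertex from each group. Expose a group's selection together with its sample. The group's conditional success probability is then $\frac{1}{|B|}\sum_{u\in B\cap L_r}f_{r,u}\ge\frac{s}{2}\sum_{u\in B\cap L_r}f_{r,u}$. So every active vertex is credited $\frac{s}{2}f_{r,u}$ in all $k$ rounds, whether or not it is actually selected, and the telescoping over all $k$ rounds goes through unchanged.

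\textbf{The probability-$2/3$ claim.} Taking the best of $O(1)$ runs does not help, as you note yourself. An Azuma bound $X\ge A-O(\sqrt{nk})$ is also too weak. The target $\min\{\mu,(kq\mu/n)^{k/(k+1)}\}$ can be far below $\sqrt{nk}$, for instance when $\mu$ is small. Your ``separate trivial argument when $m$ is small'' is never supplied.

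What you need is a multiplicative, Freedman-type lower tail for $X$ in terms of its compensator $A$. The paper derives $\mathbb{E}[\exp(-\theta X+(1-e^{-\theta})A)]\le 1$ from $\mathbb{E}[e^{-\theta Y_j}\mid\mathcal F_{j-1}]=1-(1-e^{-\theta})p_j\le e^{-(1-e^{-\theta})p_j}$. Set $x=c_3\min\{\mu,(\mu kq/n)^{k/(k+1)}\}$. The pathwise progress inequality forces $A\ge 8x$ on $\{X<x\}$, so taking $\theta=1$ gives $\Pr[X<x]\le e^{-4x}<1/3$ whenever $x\ge1$. When $x<1$, $X\ge1$ holds deterministically, because the first round matches some edge on any nonempty graph. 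No repetition or budget splitting is needed.
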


For $q=\Theta(n^{1+\delta})$, the approximation factor
in~\eqref{ub:eq:approximation} becomes
\begin{equation}\label{ub:eq:delta-tradeoff}
    O\!\left(
        \max\left\{
            1,\frac{n^{(1-k\delta)/(k+1)}}{k^{k/(k+1)}}
        \right\}
    \right).
\end{equation}
The maximum with one records that an approximation factor cannot be
smaller than one. In particular, taking $q=n$ gives factor
$O(\max\{1,n^{1/(k+1)}/k^{k/(k+1)}\})$.
The probability guarantee in the theorem applies whenever $q\ge n$;
for smaller budgets, the theorem asserts the expected-size guarantee.

In Section~\ref{ub:sec:algorithm}, we describe the algorithm and check
its query budget. We give the proof overview in
Section~\ref{ub:sec:overview}, followed by the helper lemmas and the
proof of Theorem~\ref{ub:thm:main}. In
Section~\ref{ub:sec:constant-approximation}, we show that the same
algorithm obtains a $(2+\varepsilon)$-approximation with high probability
in $O(\log n/\log\log n)$ rounds.

\subsection{The Algorithm}
\label{ub:sec:algorithm}

We first recall how to obtain a uniformly random neighbour of a left
vertex $u$ inside a set $S\subseteq R$. Ask a demand query that places
an independently uniform ordering of $S$ first. If the response lies
in $S$, it is exactly uniform in $N(u)\cap S$. Otherwise, including
when the response is $\bot$, the query certifies that
$N(u)\cap S=\varnothing$, and we discard the response.

The algorithm maintains a matching $M$. In this section, $L_r$ and
$R_r$ denote the unmatched left and right vertices at the beginning
of round $r$, respectively.

\paragraph{Independent-neighbour greedy.}
\begin{enumerate}
    \item Initially, set $M:=\varnothing$, $L_1:=L$, and $R_1:=R$.
    If $q\ge n$, set $t:=\lfloor q/n\rfloor$ and fix an ordering of
    $L$. If $q<n$, fix a balanced partition of $L$ into $q$ nonempty
    groups and fix an ordering of these groups.

    \item In each round $r\in\{1,\ldots,k\}$, issue the following
    queries before processing any of their answers.

    \emph{When $q\ge n$:} every $u\in L_r$ asks $t$ independent
    random-neighbour queries into $R_r$. These responses form an
    ordered list of samples for $u$.

    \emph{When $q<n$:} independently select one vertex uniformly
    from each group. If the selected vertex belongs to $L_r$, ask
    one random-neighbour query for it into $R_r$. A group whose
    selected vertex is already matched asks no query.

    \item Process the answers greedily.

    \emph{When $q\ge n$:} process $L_r$ in the fixed left-vertex
    order. Match $u$ to the first still-unmatched neighbour in its
    sampled list, if such a neighbour exists.

    \emph{When $q<n$:} process the groups in their fixed order.
    Add the selected vertex's returned edge if both endpoints
    are still unmatched.

    Add these edges to $M$, and let $L_{r+1}$ and $R_{r+1}$ be the
    vertices that remain unmatched.

    \item After round $k$, return $M$.
\end{enumerate}

Every sampled edge is a true edge, and an edge is added only when both
endpoints are unmatched. Thus the output is always a matching.
All selections and query orderings in a round are fixed before any
answer from that round is used. The greedy processing does not make
additional queries and never examines samples belonging to an
unprocessed vertex or group.

When $q\ge n$, each round uses at most
$nt=n\lfloor q/n\rfloor\le q$ queries. When $q<n$, each of the $q$
groups asks at most one query. Therefore the query budget holds in
every execution, rather than only in expectation.

\subsection{Proof Overview}
\label{ub:sec:overview}

Fix a maximum matching $M^\star$ of $G$, and write $X$ for the size of
the algorithm's final matching. At most $2X$ edges of $M^\star$ have an
endpoint matched by the algorithm. Consider any of the remaining
edges, say $uv$. Both endpoints of this edge remain unmatched
throughout the execution.

For each round $r$, consider the fraction of $u$'s round-start
neighbours that are still available when $u$, or its group in the
low-budget case, is processed. If this fraction is small, then many
of its neighbours must have been matched in that round. However, the algorithm matches only $X$ right vertices
in total, while $v$ remains an available neighbour of $u$ at the end.
These two facts imply that the product of the $k$ availability
fractions is at least $1/(X+1)$. Their sum is consequently at least
$k/(X+1)^{1/k}$.

We relate these fractions to the conditional probability of adding
an edge at each processing step. Let $A$ be the sum of these conditional
probabilities over the whole execution. We will show that surviving
edges of $M^\star$ force $A$ to be large. On the other hand,
$\mathbb E A=\mathbb E X$, and an exponential bound shows that $A$
is unlikely to be much larger than $X$. These statements give both
the expected-size and the constant-error guarantees.

The inequality obtained from the surviving edges holds for every
completed execution. We do not condition the distribution of a query
on the event that its left vertex remains unmatched at termination.

\subsection{Helper Lemmas}
\label{ub:sec:helpers}

For the analysis, put $s:=q/n$. Expose the random choices in each
round in the order in which its vertices or groups are processed.
This is only an ordering of the analysis: all queries in the round
were issued together. At a processing step, the unexposed choices are
still independent of the information exposed so far. In the
low-budget case, expose a group's selected vertex and its sample
together, when that group is processed.

Index the processing steps over all rounds by $j$. Let
$\mathcal F_{j-1}$ be the information exposed before step $j$, and
let $Y_j\in\{0,1\}$ indicate whether this step adds an edge to $M$.
Set
\begin{equation}\label{ub:eq:conditional-probabilities}
    p_j:=\mathbb E[Y_j\mid\mathcal F_{j-1}],
    \qquad
    X:=\sum_jY_j=|M|,
    \qquad
    A:=\sum_jp_j.
\end{equation}
We may include zero-success steps for inactive vertices, so that the
number of steps is fixed. The tower property gives
\begin{equation}\label{ub:eq:equal-expectations}
    \mathbb E A=\mathbb E X.
\end{equation}

For a left vertex $u$ active in round $r$, define
\[
    d_r(u):=|N(u)\cap R_r|.
\]
Let $f_{r,u}$ be the fraction of these neighbours that are still
unmatched immediately before $u$ is processed. When $q<n$, use the
moment immediately before $u$'s group is processed, before its selected
vertex is exposed. Set $f_{r,u}:=0$ for an inactive vertex or when
$d_r(u)=0$.

\paragraph{Surviving edges force progress.}
We first make the argument from the proof overview precise. For a real
number $z$, write $(z)_+:=\max\{z,0\}$.

\begin{lemma}\label{ub:lem:progress}
For every execution of the algorithm,
\begin{equation}\label{ub:eq:progress}
    A\ge c_0(\mu-2X)_+
        \min\left\{1,\frac{ks}{(X+1)^{1/k}}\right\},
\end{equation}
where $c_0>0$ is an absolute constant. When $q=n$, the stronger bound
\begin{equation}\label{ub:eq:one-sample-progress}
    A\ge\frac{k(\mu-2X)_+}{(X+1)^{1/k}}
\end{equation}
holds. When $q<n$, we also have
\begin{equation}\label{ub:eq:small-budget-progress}
    A\ge\frac{ks(\mu-2X)_+}{2(X+1)^{1/k}}.
\end{equation}
\end{lemma}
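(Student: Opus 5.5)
The plan is to reduce all three inequalities to one deterministic per-vertex estimate, and then to relate the conditional success probabilities $p_j$ to the availability fractions $f_{r,u}$ separately in each budget regime.

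\textbf{Deterministic core.} Fix a completed execution and a maximum matching $M^\star$. At most $2X$ edges of $M^\star$ meet a vertex matched by the algorithm. So at least $(\mu-2X)_+$ edges $uv\in M^\star$ have both endpoints unmatched at termination, and their left endpoints are distinct. For such a $u$:
\begin{itemize}
\item $u$ is active in every round, and $v\in N(u)\cap R_{k+1}$, so $d_r(u)\ge d_{k+1}(u)\ge 1$ for all $r$.
\item Every neighbour of $u$ in $R_{r+1}$ was still unmatched when $u$ (or its group) was processed in round $r$. Hence $f_{r,u}\ge d_{r+1}(u)/d_r(u)$.
\item Telescoping gives $\prod_{r=1}^k f_{r,u}\ge d_{k+1}(u)/d_1(u)$. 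At most $X$ right vertices are ever matched, so $d_1(u)\le d_{k+1}(u)+X$. This yields $\prod_r f_{r,u}\ge 1/(X+1)$.
\item By AM--GM, $\sum_{r=1}^k f_{r,u}\ge k/(X+1)^{1/k}$.
\end{itemize}
It remains to show, in each regime, that the steps involving $u$ contribute to $A$ at least a suitable function of $\sum_r f_{r,u}$. Summing over the distinct surviving $u$ then gives the bounds, since all $p_j\ge 0$.

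\textbf{Case $q\ge n$, so $t=\lfloor s\rfloor\ge 1$.} At $u$'s step in round $r$, its $t$ samples are unexposed, independent of $\mathcal F_{j-1}$, and each uniform on $N(u)\cap R_r$. The step succeeds iff some sample is still unmatched. Hence $p_j=1-(1-f_{r,u})^t$.
\begin{itemize}
\item For $q=n$ we have $t=1$, so $p_j=f_{r,u}$ exactly. This gives \eqref{ub:eq:one-sample-progress} directly.
\item In general, $1-(1-f)^t\ge(1-1/e)\min\{1,tf\}$.
\item Next, $\sum_r\min\{1,tf_{r,u}\}\ge\min\{1,t\sum_rf_{r,u}\}$: either some term equals $1$, or the left side equals $t\sum_r f_{r,u}$.
\item Using $t\ge s/2$, the contribution of $u$ is at least $(1-1/e)\min\{1,ks/(2(X+1)^{1/k})\}$. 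This is \eqref{ub:eq:progress} with $c_0=(1-1/e)/2$.
\end{itemize}

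\textbf{Case $q<n$.} A group $U$ of the balanced partition has $|U|\le\lceil n/q\rceil\le 2/s$. At the group's step, the selected vertex and its sample are unexposed. If active $u\in U$ is selected, which happens with probability $1/|U|\ge s/2$, the edge is added iff the sample is unmatched. At most one edge is added per group, so nothing in $U$'s own step interferes with this. This event has conditional probability $f_{r,u}$, with $f$ measured just before the group is processed. Hence $p_j\ge (s/2)\sum_{u\in U}f_{r,u}$. Summing over groups and rounds and applying the core estimate gives \eqref{ub:eq:small-budget-progress}. Since $x\ge\min\{1,x\}$, this also implies \eqref{ub:eq:progress} with $c_0\le 1/2$.

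\textbf{Main obstacle.} The delicate step is identifying $p_j$ with $f_{r,u}$ (or $1-(1-f_{r,u})^t$). This needs the fact that, under the processing-order exposure, $u$'s samples remain uniform on the round-start set $N(u)\cap R_r$ conditional on $\mathcal F_{j-1}$, even though $R_r$ has shrunk during the round. It also needs that we never condition on $u$ surviving to the end. The core estimate is deterministic and is applied pathwise afterwards, so this is legitimate.
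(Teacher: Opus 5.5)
Your proposal is correct and follows the paper's proof essentially step for step. Both arguments first derive the same pathwise bound $\sum_r f_{r,u}\ge k/(X+1)^{1/k}$ for the left endpoints of the at least $(\mu-2X)_+$ surviving edges of $M^\star$, via telescoping and AM--GM, and then identify each step's conditional success probability as $1-(1-f_{r,u})^t$ (exactly $f_{r,u}$ when $q=n$) or, in the low-budget case, as $\frac{1}{|B|}\sum_{u\in B\cap L_r} f_{r,u}$ with $|B|\le 2n/q$.
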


\begin{proof}
Fix a maximum matching $M^\star$. At most $2X$ of its edges have an
endpoint matched by the algorithm. Consider an edge $uv\in M^\star$
whose endpoints are both unmatched at termination. The vertex $u$
is active in every round, and $d_{k+1}(u)\ge1$ because $v\in R_{k+1}$.
Since the sets of unmatched right vertices only shrink,
\begin{equation}\label{ub:eq:availability-ratio}
    f_{r,u}\ge\frac{d_{r+1}(u)}{d_r(u)}
    \qquad (1\le r\le k).
\end{equation}
Also, exactly $X$ right vertices are matched over the entire execution,
so
\[
    d_1(u)-d_{k+1}(u)\le X.
\]
Multiplying~\eqref{ub:eq:availability-ratio} over the rounds gives
\[
    \prod_{r=1}^k f_{r,u}
    \ge\frac{d_{k+1}(u)}{d_1(u)}
    \ge\frac{d_{k+1}(u)}{d_{k+1}(u)+X}
    \ge\frac1{X+1}.
\]
The arithmetic--geometric mean inequality therefore gives
\begin{equation}\label{ub:eq:sum-availability}
    \sum_{r=1}^k f_{r,u}
        \ge\frac{k}{(X+1)^{1/k}}.
\end{equation}
We now relate these fractions to the conditional probabilities
in~\eqref{ub:eq:conditional-probabilities}.

\paragraph{When $q\ge n$.}
Immediately before an active vertex $u$ is processed, its $t$ samples
are independent and uniform in its round-start neighbourhood. Thus,
when that neighbourhood is nonempty, its conditional probability of
being matched at this step is
\[
    1-(1-f_{r,u})^t
    \ge(1-e^{-1})\min\{1,tf_{r,u}\}
    \ge c_0\min\{1,sf_{r,u}\}.
\]
The last inequality uses $t=\lfloor s\rfloor\ge s/2$. The lower bound
also holds when the neighbourhood is empty, because then $f_{r,u}=0$.
For any nonnegative numbers $z_1,\ldots,z_k$,
\[
    \sum_{r=1}^k\min\{1,z_r\}
        \ge\min\left\{1,\sum_{r=1}^kz_r\right\}.
\]
Consequently, by~\eqref{ub:eq:sum-availability}, the left endpoint of
each surviving edge of $M^\star$ contributes at least
\[
    c_0\min\left\{1,\frac{ks}{(X+1)^{1/k}}\right\}
\]
to $A$. Distinct matching edges have distinct left endpoints. Summing
over the at least $(\mu-2X)_+$ surviving edges
proves~\eqref{ub:eq:progress} in this case.

If $q=n$, there is exactly one sample per active left vertex. Its
conditional probability of being matched is exactly $f_{r,u}$.
We can therefore sum~\eqref{ub:eq:sum-availability} directly, without
truncation or a constant loss. This
proves~\eqref{ub:eq:one-sample-progress}.

\paragraph{When $q<n$.}
Consider a group $B$ in round $r$. Conditional on the information
exposed before this group is processed, its selected vertex is uniform
in $B$. Conditional on selecting an active vertex $u$, the probability
that its sample is still available is $f_{r,u}$. Hence this group's
conditional probability of adding an edge is exactly
\[
    \frac1{|B|}\sum_{u\in B\cap L_r}f_{r,u}.
\]
The partition is balanced, and $q<n$, so
\[
    |B|\le\left\lceil\frac nq\right\rceil\le\frac{2n}{q}.
\]
The group's conditional success probability is therefore at least
$(s/2)\sum_{u\in B\cap L_r}f_{r,u}$. Sum over the groups and rounds,
and keep only the contribution from the surviving edges of
$M^\star$. Equation~\eqref{ub:eq:sum-availability} gives
\[
    A\ge\frac{s}{2}(\mu-2X)_+
             \frac{k}{(X+1)^{1/k}},
\]
which proves~\eqref{ub:eq:small-budget-progress}.
It also implies~\eqref{ub:eq:progress}, after decreasing $c_0$ if
necessary.
\end{proof}

\paragraph{Relating conditional progress to the output.}
Equation~\eqref{ub:eq:equal-expectations} is sufficient for the
expected-size bound. For the probability guarantees, we use the
following exponential bound.

\begin{lemma}\label{ub:lem:exponential-bound}
For every $\theta>0$,
\begin{equation}\label{ub:eq:exponential-bound}
    \mathbb E\!\left[
        \exp\!\left(-\theta X+(1-e^{-\theta})A\right)
    \right]\le1.
\end{equation}
\end{lemma}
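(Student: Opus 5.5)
We prove this by the standard exponential supermartingale argument for sums of Bernoulli-type indicators against their compensators. Recall from~\eqref{ub:eq:conditional-probabilities} that the processing steps are indexed by $j=1,\ldots,J$, where $J$ is fixed because we pad with zero-success steps. Here $Y_j\in\{0,1\}$ and $p_j=\mathbb E[Y_j\mid\mathcal F_{j-1}]$. Define the partial sums $X_j:=\sum_{i\le j}Y_i$ and $A_j:=\sum_{i\le j}p_i$, and set
\[
    Z_j:=\exp\!\left(-\theta X_j+(1-e^{-\theta})A_j\right),\qquad Z_0=1.
\]
The plan is to show that $(Z_j)$ is a supermartingale with respect to $(\mathcal F_j)$. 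Then $\mathbb E Z_J\le Z_0=1$, which is exactly~\eqref{ub:eq:exponential-bound}, since $X_J=X$ and $A_J=A$.

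The one-step inequality is the core of the argument. The key observation is that $p_j$ is $\mathcal F_{j-1}$-measurable, being a conditional expectation given $\mathcal F_{j-1}$. Hence
\[
    \mathbb E[Z_j\mid\mathcal F_{j-1}]
    =Z_{j-1}\,e^{(1-e^{-\theta})p_j}\,\mathbb E[e^{-\theta Y_j}\mid\mathcal F_{j-1}].
\]
Because $Y_j\in\{0,1\}$, we have $e^{-\theta Y_j}=1-(1-e^{-\theta})Y_j$ exactly. Taking conditional expectations gives $\mathbb E[e^{-\theta Y_j}\mid\mathcal F_{j-1}]=1-(1-e^{-\theta})p_j$. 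Applying $1-x\le e^{-x}$ with $x=(1-e^{-\theta})p_j$, this factor is at most $e^{-(1-e^{-\theta})p_j}$, which cancels the prefactor. Therefore $\mathbb E[Z_j\mid\mathcal F_{j-1}]\le Z_{j-1}$.

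Iterating with the tower property gives $\mathbb E Z_J\le1$. No integrability issue arises, because $J$ is finite and $Z_j\le e^{(1-e^{-\theta})J}$ is bounded.

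The only point needing care is the measurability and filtration bookkeeping. We must check that $Y_j$ is $\mathcal F_j$-measurable, so that $Z_j$ is adapted and the step can be iterated, and that $p_j$ is determined by $\mathcal F_{j-1}$. This holds by construction of the exposure order in Section~\ref{ub:sec:helpers}: step $j$ exposes exactly the samples (and, in the low-budget case, the selected vertex) of the vertex or group being processed. Whether an edge is added is then a function of $\mathcal F_j$. Rounds are concatenated, so the filtration runs across all rounds, and the round-start sets $R_r$ are $\mathcal F$-measurable at the beginning of each round. Inactive padding steps have $Y_j=p_j=0$ and contribute a factor of exactly one. Beyond this bookkeeping, I expect no real obstacle.
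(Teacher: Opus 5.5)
Your proposal is correct and takes essentially the same route as the paper. Both use the one-step bound $\mathbb E[e^{-\theta Y_j}\mid\mathcal F_{j-1}]=1-(1-e^{-\theta})p_j\le e^{-(1-e^{-\theta})p_j}$, with $p_j$ being $\mathcal F_{j-1}$-measurable, and then take conditional expectations one step at a time; the paper states this briefly as multiplying the factors, while you name the supermartingale $Z_j$ and check the adaptedness and boundedness that the paper leaves implicit.
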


\begin{proof}
At a single processing step,
\begin{align*}
    \mathbb E[e^{-\theta Y_j}\mid\mathcal F_{j-1}]
      &=1-(1-e^{-\theta})p_j\\
      &\le\exp\!\left(-(1-e^{-\theta})p_j\right).
\end{align*}
Thus
\[
    \mathbb E\!\left[
        e^{-\theta Y_j+(1-e^{-\theta})p_j}
        \mid\mathcal F_{j-1}
    \right]\le1.
\]
Multiplying the factors and applying conditional expectation one step
at a time proves~\eqref{ub:eq:exponential-bound}.
\end{proof}

\subsection{Proof of Theorem~\ref{ub:thm:main}}
\label{ub:sec:main-proof}

We have already checked that the output is a matching and that the
algorithm respects the query budget. It remains to prove the two
approximation guarantees.

\paragraph{The expected size of the matching.}
Write $m:=\mathbb E X$ and $z:=ks$. If $m\ge\mu/8$, then
\eqref{ub:eq:expected-size} follows immediately. We therefore assume
$m<\mu/8$.

First, $m>0$. Otherwise, $X=0$ almost surely.
Lemma~\ref{ub:lem:progress} would then give
\[
    A\ge c_0\mu\min\{1,z\}>0
\]
almost surely, contradicting $\mathbb E A=\mathbb E X=0$.

By Markov's inequality, $\Pr[X\le2m]\ge1/2$. On this event,
$\mu-2X\ge\mu/2$ and $X+1\le2m+1$. Taking expectations
in~\eqref{ub:eq:progress} and
using~\eqref{ub:eq:equal-expectations}, we obtain
\begin{equation}\label{ub:eq:expected-progress}
    m\ge c_1\mu
        \min\left\{1,\frac{z}{(2m+1)^{1/k}}\right\},
\end{equation}
where $c_1>0$ is an absolute constant.

If $m<1$, then $(2m+1)^{1/k}\le3$, and
\eqref{ub:eq:expected-progress} gives
\[
    m\ge\frac{c_1}{3}\mu\min\{1,z\}.
\]
This is sufficient, since
$\min\{\mu,\mu z,(\mu z)^{k/(k+1)}\}\le\mu\min\{1,z\}$.

Now suppose $m\ge1$. If $z\ge(2m+1)^{1/k}$, then
\eqref{ub:eq:expected-progress} gives $m\ge c_1\mu$, which again
suffices. Otherwise, using $2m+1\le3m$, we get
\[
    m^{1+1/k}\ge\frac{c_1}{3}\mu z.
\]
Therefore
\[
    m=\Omega\!\left((\mu z)^{k/(k+1)}\right),
\]
with an absolute implicit constant independent of $k$. Combining the
cases proves~\eqref{ub:eq:expected-size}.

Dividing this expected-size bound into $\mu$ gives approximation
factor
\[
    O\!\left(
        \max\left\{
            1,\frac1{ks},
            \frac{\mu^{1/(k+1)}}{(ks)^{k/(k+1)}}
        \right\}
    \right).
\]
Since $\mu\le n$ and $nks=kq\ge1$, we have
\[
    \frac1{ks}
        \le\frac{n^{1/(k+1)}}{(ks)^{k/(k+1)}}.
\]
Thus the factor is bounded by~\eqref{ub:eq:approximation}.

\paragraph{The constant-error guarantee.}
Assume $q\ge n$ and $\mu>0$. Set
\[
    B:=\min\left\{\mu,(\mu ks)^{k/(k+1)}\right\},
    \qquad x:=c_3B,
\]
where $c_3>0$ is a sufficiently small absolute constant. We show that
$\Pr[X<x]\le1/3$.

First suppose $x\ge1$. Choose $c_3\le1/4$, so that $x\le\mu/4$.
On the event $X<x$, we have $\mu-2X\ge\mu/2$ and $X+1\le2x$.
Lemma~\ref{ub:lem:progress} therefore gives
\[
    A\ge\frac{c_0\mu}{4}
           \min\left\{1,\frac{ks}{x^{1/k}}\right\}.
\]
The definition of $B$ gives $B\le\mu$ and
$B^{1+1/k}\le\mu ks$. Since $x\le B$, it follows that
\[
    \mu\min\left\{1,\frac{ks}{x^{1/k}}\right\}\ge B.
\]
Taking $c_3\le c_0/32$ yields $A\ge8x$ on the event $X<x$.
Apply Lemma~\ref{ub:lem:exponential-bound} with $\theta=1$. On this
event, the exponent in~\eqref{ub:eq:exponential-bound} is at least
$[8(1-e^{-1})-1]x$. Hence
\[
    \Pr[X<x]
        \le\exp\!\left(-[8(1-e^{-1})-1]x\right)
        \le e^{-4x}<\frac13.
\]

If $x<1$, the event $X<x$ is impossible. On a nonempty graph, the
first round with at least one sample per left vertex always adds
an edge. Thus $X\ge1$ deterministically.

We have proved that $X\ge c_3B$ with probability at least $2/3$.
Because $ks\ge1$, this is the same output-size bound
as~\eqref{ub:eq:expected-size}, up to an absolute constant. It gives
the claimed approximation guarantee and completes the proof of
Theorem~\ref{ub:thm:main}.

\paragraph{An explicit bound with one sample per vertex.}
The stronger inequality~\eqref{ub:eq:one-sample-progress} also gives
the following useful form of the expected-size guarantee when $q=n$:
\begin{equation}\label{ub:eq:explicit-one-sample}
    \mathbb E X
        \ge\frac{\mu}{2+(\mu+1)^{1/k}/k}.
\end{equation}
Indeed, $X\le\mu$, so~\eqref{ub:eq:one-sample-progress} implies
\[
    A\ge\frac{k(\mu-2X)}{(\mu+1)^{1/k}}.
\]
This inequality also holds when $\mu-2X<0$, because $A\ge0$.
Taking expectations, using $\mathbb E A=\mathbb E X$, and
rearranging proves~\eqref{ub:eq:explicit-one-sample}.

\subsection{Optimal-Order Rounds for Constant Approximation}
\label{ub:sec:constant-approximation}

We now use the same one-sample algorithm to obtain a constant
approximation with high probability. Summing the progress over all
rounds is important here: it gives the required guarantee after
$O(\log n/\log\log n)$ rounds rather than $O(\log n)$ rounds.
In the following theorem, $\log$ denotes the natural logarithm.

\begin{theorem}\label{ub:thm:constant-approximation}
Fix $\varepsilon>0$. For sufficiently large $n$, the
independent-neighbour greedy algorithm, with one random-neighbour
demand query per active left vertex per round, returns a
$(2+\varepsilon)$-approximate maximum matching in
\[
    k:=\left\lceil\frac{2\log n}{\log\log n}\right\rceil
\]
rounds with high probability. It uses at most $n$ demand queries per
round, and its failure probability is at most
\[
    \exp\!\left(-\Omega_\varepsilon(k^2)\right)
        =n^{-\omega(1)}.
\]
\end{theorem}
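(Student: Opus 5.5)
The plan is to combine the two helper lemmas already proved for the one-sample algorithm, namely the sharp progress inequality~\eqref{ub:eq:one-sample-progress} of Lemma~\ref{ub:lem:progress} and the exponential supermartingale bound of Lemma~\ref{ub:lem:exponential-bound}. The query budget is immediate: when $q=n$, every active left vertex asks exactly one random-neighbour query per round. So only the approximation guarantee needs an argument.

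\textbf{Step 1: the progress multiplier is large.} Put $\Lambda:=k/(\mu+1)^{1/k}$. Since $X\le\mu\le n$, inequality~\eqref{ub:eq:one-sample-progress} gives $A\ge\Lambda(\mu-2X)_+$ for every execution. With $k=\lceil 2\log n/\log\log n\rceil$ we have $(n+1)^{1/k}\le e^{(\log(n+1))/k}\le(1+o(1))\sqrt{\log n}$. Hence
\[
\Lambda\ge(1-o(1))\,\frac{2\sqrt{\log n}}{\log\log n}\longrightarrow\infty .
\]
When $\mu$ is small, the bound is even better: $(\mu+1)^{1/k}=1+O(\log(\mu+1)/k)$, so $\Lambda=\Theta(k)$ whenever $\log\mu=O(k)$.

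\textbf{Step 2: convert to a tail bound.} Let $x_0:=\lfloor\mu/(2+\varepsilon)\rfloor$, so that failure is the event $X\le x_0$. On this event,
\[
\mu-2X\ge\mu-2x_0\ge\frac{\varepsilon\mu}{2+\varepsilon},
\qquad\text{hence}\qquad
A\ge\varepsilon'\Lambda\mu
\quad\text{with }\varepsilon':=\varepsilon/(2+\varepsilon).
\]
Apply Lemma~\ref{ub:lem:exponential-bound} with a parameter $\theta$ to be chosen, together with Markov's inequality on the exponential:
\[
\Pr[X\le x_0]\le\exp\!\bigl(\theta x_0-(1-e^{-\theta})\,\varepsilon'\Lambda\mu\bigr).
\]
Choose $\theta$ to be a constant multiple of $\varepsilon'\Lambda$, capped at a large constant if preferred. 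Since $x_0\le\mu/2$, the exponent is then $-\Omega_\varepsilon(\Lambda\mu)$. If $\mu<2+\varepsilon$, then $x_0=0$, and failure would require $X=0$. This is impossible, because on a nonempty graph the first round always adds an edge, as observed in the proof of Theorem~\ref{ub:thm:main}.

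\textbf{Step 3: reach $\exp(-\Omega_\varepsilon(k^2))$; this is the main obstacle.} For large $\mu$, say $\mu\ge k^2$, the bound $\exp(-\Omega(\Lambda\mu))$ is already $\exp(-\Omega(k^2))$, and in fact far smaller. For moderate $\mu$, we have $\Lambda=\Theta(k)$, so the bound is $\exp(-\Omega(k\mu))$. This is $\exp(-\Omega(k^2))$ once $\mu=\Omega(k)$.

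The delicate regime is $3\le\mu\ll k$, where the generic bound gives only $\exp(-\Omega(k))$. There I would use integrality and refine the progress estimate. Failure forces a surviving edge $uv\in M^\star$ with $d_{k+1}(u)\ge1$. At most $X\le x_0<\mu$ right vertices are ever matched, so $d_r(u)$ can strictly decrease in at most $x_0$ rounds. In the remaining at least $k-x_0$ rounds, $f_{r,u}=1$, and $u$ is matched with conditional probability one. This would contradict survival outright.

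So in this regime failure should be impossible, or at least should force $\Omega(k)$ rounds in each of which some edge is added. Each such round contributes to $X$, which pushes $X$ above $x_0$. I would first try to formalise this counting: every round in which a surviving $u$ has $f_{r,u}=1$ adds an edge, so $X\ge k-x_0$, which is a contradiction when $\mu\ll k$. I would then interpolate between this deterministic argument and the exponential bound from Step 2 to cover all $\mu$ uniformly. This interpolation is the step I expect to require the most care.
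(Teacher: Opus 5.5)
Your proof is correct and is essentially the paper's argument with the steps reordered. The ``interpolation'' you flag as delicate is in fact immediate: your own survival argument gives more than you state. On the failure event the surviving vertex $u$ must lose an available neighbour in \emph{every} round, since otherwise $f_{r,u}=1$ and $u$ is matched. So $X\ge k$, not just $X\ge k-x_0$, and hence $\mu\ge(2+\varepsilon)k$. Because $y\mapsto ky/(y+1)^{1/k}$ is increasing, your Step~2 bound $\exp(-\Omega_\varepsilon(\Lambda\mu))$ then becomes $\exp(-\Omega_\varepsilon(k^2))$ with no case split on $\mu$. This is exactly how the paper proceeds: it first shows that every round adds an edge on the failure event, so $X\ge k$ and $\mu>(2+\varepsilon)k$. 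It then applies the exponential bound with $\theta=1$, using $A=\Omega_\varepsilon(k^2)$ and $A\ge 4X$ on that event.
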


\begin{proof}
The assertion is immediate when $\mu=0$, so assume $\mu>0$.
Let
\[
    \mathcal B:=\left\{X<\frac{\mu}{2+\varepsilon}\right\}
\]
be the failure event. We first show that, on $\mathcal B$, the total
conditional matching probability $A$ is large.

\paragraph{Every round adds an edge on the failure event.}
Whenever the current matching has size less than $\mu/2$, some edge
of a fixed maximum matching has both endpoints unmatched. The
residual graph is therefore nonempty. Since every active left vertex
asks a random-neighbour query, the next round adds at least one edge.

On $\mathcal B$, the matching size stays below
$\mu/(2+\varepsilon)<\mu/2$ throughout the execution. Hence every
round adds an edge, and
\begin{equation}\label{ub:eq:failure-implies-large-optimum}
    X\ge k,
    \qquad
    \mu>(2+\varepsilon)k.
\end{equation}

\paragraph{The conditional progress is much larger than the output.}
By~\eqref{ub:eq:one-sample-progress}, on $\mathcal B$ we have
\begin{equation}\label{ub:eq:constant-approximation-progress}
    A\ge\frac{\varepsilon}{2+\varepsilon}
           \frac{k\mu}{(\mu+1)^{1/k}}.
\end{equation}
The function $y/(y+1)^{1/k}$ is increasing for $y>0$. Together
with~\eqref{ub:eq:failure-implies-large-optimum}, this gives
\begin{equation}\label{ub:eq:quadratic-progress}
    A=\Omega_\varepsilon(k^2)
    \qquad\text{on }\mathcal B.
\end{equation}
Also, dividing~\eqref{ub:eq:constant-approximation-progress} by
$X<\mu/(2+\varepsilon)$ and using $\mu\le n$, we obtain
\[
    \frac AX
        \ge\frac{\varepsilon k}{(n+1)^{1/k}}.
\]
For the chosen number of rounds,
\[
    (n+1)^{1/k}=O\!\left(\sqrt{\log n}\right),
    \qquad
    \frac{k}{(n+1)^{1/k}}\longrightarrow\infty.
\]
Consequently $A\ge4X$ on $\mathcal B$ for sufficiently large $n$.

\paragraph{Bounding the failure probability.}
Apply Lemma~\ref{ub:lem:exponential-bound} with $\theta=1$.
On $\mathcal B$, the exponent satisfies
\[
    -X+(1-e^{-1})A
        \ge\left(1-e^{-1}-\frac14\right)A
        =\Omega_\varepsilon(k^2),
\]
where the last equality uses~\eqref{ub:eq:quadratic-progress}.
Markov's inequality therefore gives
\[
    \Pr[\mathcal B]
        \le\exp\!\left(-\Omega_\varepsilon(k^2)\right).
\]
Finally, $k^2/\log n\longrightarrow\infty$, so this probability is
smaller than $n^{-c}$ for every fixed $c>0$ once $n$ is sufficiently
large. This proves the theorem.
\end{proof}

Combined with the round lower bound from the preceding section,
Theorem~\ref{ub:thm:constant-approximation} shows that the number of
rounds required for constant approximation with $O(n)$ demand queries
per round is
\[
    \Theta\!\left(\frac{\log n}{\log\log n}\right)
\]
up to constant factors. The lower bound already holds for constant
success probability, while the upper bound above holds with high
probability.

\begin{corollary}[Randomised minimum vertex cover]
\label{ub:cor:vertex-cover}
For every fixed $\varepsilon>0$, there is a randomised algorithm that
outputs a $(2+\varepsilon)$-approximate minimum vertex cover with high
probability in $O(\log n/\log\log n)$ rounds, using at most $n$ demand
queries per round.
\end{corollary}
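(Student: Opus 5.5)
We run the one-sample independent-neighbour greedy algorithm of Theorem~\ref{ub:thm:constant-approximation} for $k=\lceil 2\log n/\log\log n\rceil$ rounds, with at most $n$ demand queries per round. Then we spend one extra round certifying which left vertices are exposed. The cover is read off from the final matching $M$ in the standard way for maximal matchings, but with an explicit repair step. Let $L'$ be the set of left vertices left unmatched by $M$. In round $k+1$, every $u\in L'$ asks one demand query that places the unmatched right vertices $R_{k+1}$ first, in any order. This costs at most $n$ queries. Each such query either returns an edge $uv$ with $v\in R_{k+1}$, or certifies $N(u)\cap R_{k+1}=\varnothing$. Let $L''\subseteq L'$ be the vertices that returned such an edge. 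The output cover is
\[
C:=V(M)\cup L''.
\]

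\emph{Validity.} Take any edge $uv$. If $u$ or $v$ is matched, the edge is covered by $V(M)$. Otherwise $u\in L'$ and $v\in R_{k+1}$, so the round-$(k+1)$ query of $u$ returns a vertex of $R_{k+1}$. Hence $u\in L''$ and the edge is covered. This argument is deterministic, so $C$ is always a vertex cover.

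\emph{Size.} We have $|V(M)|=2|M|\le 2\mu$. Let $\tau$ be the minimum vertex cover size, so $\tau=\mu$ by K\"onig's theorem. The main task is to bound $|L''|$. The vertices of $L''$ are endpoints of edges in the residual graph $G[L',R_{k+1}]$, so $|L''|$ is at most the number of left vertices active in that residual graph. On the success event of Theorem~\ref{ub:thm:constant-approximation} we only know $|M|\ge\mu/(2+\varepsilon)$, which does not directly bound $L''$. The quantity we need is the residual matching number plus the number of left vertices with surviving neighbours.

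To control this I would reuse the progress argument of Lemma~\ref{ub:lem:progress} for every left vertex $u\in L''$, not only for endpoints of surviving edges of $M^\star$. Any $u\in L''$ is active in every round and has $d_{k+1}(u)\ge1$. So the same product bound gives
\[
\sum_r f_{r,u}\ge \frac{k}{(X+1)^{1/k}},
\]
and therefore $A\ge |L''|\,k/(X+1)^{1/k}$. Combining this with Lemma~\ref{ub:lem:exponential-bound} (taking $\theta=1$, and $A\ge 4X$ whenever $|L''|>\varepsilon X$ for large $n$) shows that with probability $1-n^{-\omega(1)}$ we have $|L''|\le \varepsilon X\le\varepsilon\mu$. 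Hence $|C|\le(2+\varepsilon)\mu$, after rescaling $\varepsilon$.

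\emph{Main obstacle.} The delicate step is exactly this high-probability bound $|L''|\le\varepsilon X$. The exponential-moment argument needs $A=\Omega(k^2)$ on the bad event, which holds because $|L''|>\varepsilon X\ge\varepsilon k$ there: every round adds an edge while the residual graph is nonempty. I would verify this carefully, mirroring the failure-probability calculation in the proof of Theorem~\ref{ub:thm:constant-approximation}. The round count is $k+1=O(\log n/\log\log n)$.
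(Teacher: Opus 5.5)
Your proposal is correct and follows the paper's proof. It uses the same extra round to define $U=L''$, the same pathwise inequality $A\ge k|L''|/(X+1)^{1/k}$ from the availability-product argument, and the same exponential bound with $\theta=1$. The only cosmetic difference is in the tail step: the paper bounds $\Pr[A>4X+4(c+2)\log n]$ and then uses $\mu\ge k$ to get $|U|=o(\mu)$, whereas you bound $\{|L''|>\varepsilon X\}$ directly, as in Theorem~\ref{ub:thm:constant-approximation}. On that event, $A=\Omega_\varepsilon(k^2)$ follows from $|L''|>\varepsilon X$, $X\ge k$, and the monotonicity of $y/(y+1)^{1/k}$.
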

\begin{proof}
Run the one-sample algorithm for
$k=\lceil 2\log n/\log\log n\rceil$ rounds.  Let $M$ be its matching,
$X=|M|$, and $A$ the sum of conditional matching probabilities defined
in~\eqref{ub:eq:conditional-probabilities}.  In one additional round,
query every unmatched left vertex, placing all unmatched right vertices
first in its demand ordering.  Let $U$ be the unmatched left vertices
for which the response lies among the unmatched right vertices, and
output
\[
    C:=V(M)\cup U.
\]
Every edge has an endpoint in $C$: this is immediate for an edge
incident to a matched vertex, while the left endpoint of an edge with
both endpoints unmatched belongs to $U$.  Thus $C$ is a vertex cover.

For every $u\in U$, a neighbour of $u$ remains unmatched throughout
the execution.  The argument proving~\eqref{ub:eq:sum-availability}
therefore applies to $u$, whether or not it is the endpoint of a
surviving edge of the fixed optimum matching.  With one sample per
active left vertex, its contribution to $A$ is at least
$k/(X+1)^{1/k}$.  Since distinct vertices of $U$ contribute distinct
processing steps, we have, for every execution,
\begin{equation}\label{ub:eq:cover-progress}
    A\ge\frac{k|U|}{(X+1)^{1/k}}.
\end{equation}

Fix $c>0$.  Lemma~\ref{ub:lem:exponential-bound} with $\theta=1$
implies
\begin{equation}\label{ub:eq:cover-compensator-tail}
    \Pr\!\left[A>4X+4(c+2)\log n\right]\le n^{-c-2}.
\end{equation}
Indeed, on this event the exponent $-X+(1-e^{-1})A$ is at least
$(c+2)\log n$, so Markov's inequality applies.

If $U=\varnothing$, then $|C|=2X\le2\mu$.  Otherwise an edge remains
between unmatched vertices after the $k$ matching rounds.  Such an edge
also exists at the beginning of every earlier round.  As all unmatched
left vertices sample a neighbour in each round, each round adds an edge.
Therefore $X\ge k$ and $\mu\ge k$.  On the event complementary to
\eqref{ub:eq:cover-compensator-tail}, using
\eqref{ub:eq:cover-progress} and $X\le\mu\le n$ gives
\[
    \frac{|U|}{\mu}
    \le\frac{(n+1)^{1/k}}{k}
            \left(4+\frac{4(c+2)\log n}{\mu}\right)
    \le\frac{(n+1)^{1/k}}{k}
            \left(4+\frac{4(c+2)\log n}{k}\right)
    =o(1).
\]
The final equality follows from
$(n+1)^{1/k}=O(\sqrt{\log n})$ and
$k=\Theta(\log n/\log\log n)$.  Hence
$|C|=2X+|U|\le(2+\varepsilon)\mu$ for all sufficiently large $n$.
By K\"onig's theorem, $\mu$ is also the minimum vertex-cover size.
Since $c$ was arbitrary, the result holds with high probability.
\end{proof}

\begin{proposition}[The single-edge obstruction]
\label{prop:single-edge-obstruction}
On the distribution consisting of one uniformly random edge in
$L\times R$, any randomised algorithm making at most $Q$ demand
queries in total has probability at most
$\min\{1,Q/n+1/n^2\}$ of returning a valid nonempty matching.
For an algorithm that always returns a valid matching on every graph,
its expected output size on this distribution is at most
$\min\{1,Q/n\}$.
\end{proposition}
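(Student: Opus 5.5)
We argue by Yao's principle: fix the algorithm's random tape, so it is a deterministic (possibly adaptive) algorithm making at most $Q$ demand queries in total. The input is a single uniformly random pair $e^\star=(u^\star,v^\star)\in L\times R$, which is the only edge. The plan is to show that, until some query returns an edge, every query is answered $\bot$. Then all such queries follow one fixed transcript, so the query sequence is a fixed list $q_1,\dots,q_Q$, with left vertices $u_{q_1},\dots,u_{q_Q}$ determined by the tape alone.

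The first step is to bound the probability that the algorithm ever sees a non-$\bot$ answer. A query $(u,\sigma)$ returns a vertex exactly when $u=u^\star$, whatever $\sigma$ is. Along the fixed all-$\bot$ transcript, the algorithm therefore discovers the edge only if $u^\star\in\{u_{q_1},\dots,u_{q_Q}\}$. This set has at most $Q$ elements, so the event has probability at most $Q/n$.

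The second step handles the complementary event, on which the transcript is the fixed all-$\bot$ one. The output is then a fixed set $M_0$ of pairs, chosen independently of $e^\star$ beyond the information that $u^\star$ lies outside the queried set. An output is a valid nonempty matching only if every pair in it is an edge, so only if $M_0=\{e^\star\}$. This requires $e^\star$ to equal one fixed pair, which has probability at most $1/n^2$. Combining the two steps gives a success probability of at most $Q/n+1/n^2$, trivially capped by $1$. Averaging over the random tape preserves the bound.

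For the expectation claim, every matching in this graph has size at most $1$, so the expected output size equals the probability that the output is nonempty and valid. Assume now that the algorithm always returns a valid matching on every graph. On the all-$\bot$ transcript its output $M_0$ must then be valid for every graph consistent with that transcript. The empty graph is consistent with the all-$\bot$ transcript, since every answer $\bot$ is exactly what the empty graph gives. Validity on the empty graph forces $M_0=\varnothing$, which removes the $1/n^2$ term and gives the bound $\min\{1,Q/n\}$.

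The main obstacle is the adaptivity subtlety. We must justify that, before the first non-$\bot$ answer, the queried left vertices form a fixed set of size at most $Q$, and that a single discovery does not feed back in a way that inflates the count. This follows because on the event of interest all earlier answers are $\bot$, so the prefix of the transcript is identical across inputs.
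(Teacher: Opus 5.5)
Your proposal is correct and follows the paper's proof essentially step for step. You fix the tape, note that the execution coincides with the all-$\bot$ (empty-graph) execution unless $u^\star$ lies among its at most $Q$ queried rows, charge $1/n^2$ to the single fixed guessed pair otherwise, and use validity on the empty graph to force the empty output for always-valid algorithms; invoking Yao's principle is just the paper's ``average over the random tape'' under another name.
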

\begin{proof}
Fix the random tape.  Until the unique edge is found, the responses
are exactly those on the empty graph.  Along that all-negative
execution, at most $Q$ distinct left vertices are queried.  The
probability that the unique edge's left endpoint is among them is at
most $Q/n$.  If no queried row contains it, a valid nonempty output
must be the single correct edge.  The all-negative transcript fixes
this possible guessed pair, whose unconditional probability is at
most $1/n^2$.  An always-valid algorithm must instead output the empty
matching on this transcript, because it is also the transcript of the
empty graph.  Average over the random tape.
\end{proof}

\subsection{The Long-Round, Small-Budget Regime}
\label{ub:sec:long-rounds}

\begin{theorem}[Tight expected-size tradeoff for many rounds]
\label{ub:thm:long-rounds}
Suppose $1\le q\le n$ and $k\ge2\log(n+1)$.  The optimal worst-case
expected-size approximation factor among always-valid randomised
algorithms, with at most $q$ demand queries per round, is
\[
 \Theta\!\left(\max\left\{1,\frac{n}{kq}\right\}\right).
\]
\end{theorem}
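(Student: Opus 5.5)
The plan has two halves: a lower bound from the single-edge obstruction, and an upper bound from the low-budget case of the independent-neighbour greedy algorithm, with the round count absorbing the $(X+1)^{1/k}$ loss.

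\textbf{Lower bound.} Apply Proposition~\ref{prop:single-edge-obstruction} with $Q=kq$ total queries. On the single-random-edge distribution $\mu=1$, and an always-valid algorithm has expected output at most $\min\{1,kq/n\}$. Averaging, some fixed single-edge graph has expected output at most this quantity over the algorithm's randomness. The approximation factor is therefore at least $\max\{1,n/(kq)\}$. Nothing more is needed here.

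\textbf{Upper bound.} Run the algorithm of Section~\ref{ub:sec:algorithm} in the regime $q\le n$. If $q<n$, use the group version; if $q=n$, use one sample per vertex. Put $s=q/n\le1$.
\begin{itemize}
\item Lemma~\ref{ub:lem:progress} (inequality~\eqref{ub:eq:small-budget-progress}, or~\eqref{ub:eq:one-sample-progress} when $q=n$) gives, in every execution,
\[
A\ge \frac{ks(\mu-2X)_+}{2(X+1)^{1/k}}.
\]
\item Since $X\le n$ and $k\ge 2\log(n+1)$, we have $(X+1)^{1/k}\le (n+1)^{1/k}\le e^{1/2}$. The denominator is thus an absolute constant, so $A\ge c\,ks(\mu-2X)_+\ge c\,ks(\mu-2X)$.
\item Taking expectations and using $\mathbb E A=\mathbb E X=m$ gives $m\ge c\,ks(\mu-2m)$. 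Rearranging, $m\ge c\,ks\mu/(1+2c\,ks)=\Omega(\mu\min\{1,ks\})$.
\end{itemize}
This avoids the Markov step used in the proof of Theorem~\ref{ub:thm:main}, because the inequality is linear in $X$. The approximation factor is then $O(\max\{1,1/(ks)\})=O(\max\{1,n/(kq)\})$, which matches the lower bound.

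\textbf{Main obstacle.} The step needing care is checking that the low-budget analysis really gives $A$ proportional to $ks$ with no loss of a factor in $k$. In Theorem~\ref{ub:thm:main} the $(X+1)^{1/k}$ term produced the exponent $k/(k+1)$. Here it is killed only because $k\ge 2\log(n+1)$ makes $(n+1)^{1/k}$ constant, so the argument must rely on the worst-case bound $X\le n$ rather than on an expectation. The group sizes $|B|\le 2n/q$ must also hold for every $q\le n$, including the boundary case $q=n$, where the one-sample version applies directly. The constants should be tracked so that the bound is $\Theta$ uniformly in $k$ and $q$.
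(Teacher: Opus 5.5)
Your proposal is correct and follows essentially the same route as the paper. For the upper bound, the paper likewise takes the pathwise progress bound from \eqref{ub:eq:small-budget-progress} and \eqref{ub:eq:one-sample-progress}, uses $(n+1)^{1/k}\le\sqrt{e}$, and then takes expectations with $\mathbb E A=\mathbb E X$ and rearranges to get $\Omega(\mu\min\{1,ks\})$. For the lower bound, it likewise applies \Cref{prop:single-edge-obstruction} with $Q=kq$ and uses the trivial factor-one bound when $kq\ge n$.
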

\begin{proof}
Equations~\eqref{ub:eq:small-budget-progress} and
\eqref{ub:eq:one-sample-progress} imply, in both cases $q\le n$,
\[
 A\ge\frac{ks}{2(n+1)^{1/k}}(\mu-2X).
\]
Take expectations, use \eqref{ub:eq:equal-expectations}, and rearrange:
\[
 \mathbb E X\ge
 \frac{\mu}{2+2(n+1)^{1/k}/(ks)}
 =\Omega\bigl(\mu\min\{1,ks\}\bigr).
\]
Here $(n+1)^{1/k}\le\sqrt e$.  Conversely,
Proposition~\ref{prop:single-edge-obstruction}, with $Q=kq$, supplies an input
with optimum one and expected output at most
$\min\{1,kq/n\}$.  The trivial lower bound of one on an approximation
factor covers $kq\ge n$.
\end{proof}

For $q=n^{1+\delta}\le n$, this is
$\Theta(\max\{1,(kn^\delta)^{-1}\})$.  In particular, the expression
saturates at a constant when $kn^\delta\ge1$; it must not be interpreted
as an approximation factor smaller than one.

\subsection{Tightness for Independent-Neighbour Greedy}
\label{ub:sec:greedy-tightness}

The next lower bound applies to the independent-neighbour greedy
algorithm of Section~\ref{ub:sec:algorithm} with $q=n$.  It proves that the
$n^{1/(k+1)}$ dependence in our upper-bound analysis is inherent to
this particular algorithm, rather than an artifact of the proof.

\paragraph{Why this does not close the upper--lower bound gap.}
It is important to distinguish an \emph{algorithm-specific} lower
bound from a lower bound in the demand-query model.  Let
$\alpha_k^\star(n)$ denote the best worst-case approximation factor
achievable by an arbitrary $k$-round randomized demand-query
algorithm using $O(n)$ queries per round.  Ignoring polylogarithmic
and polynomial-in-$k$ factors, our model-wide results currently give
\[
   n^{1/(2k)}
   \;\lesssim\;
   \alpha_k^\star(n)
   \;\lesssim\;
   n^{1/(k+1)}.
\]
The theorem below shows only that the specific independent-neighbour
greedy algorithm attains the right-hand exponent in the worst case:
its approximation factor is
$\Omega(n^{1/(k+1)}/k)$ on an explicit family of graphs.  Hence no
sharper analysis of this algorithm can improve its exponent to
$1/(2k)$.

This does \emph{not} imply the same lower bound for an arbitrary
demand-query algorithm.  A general demand query reveals more than a
single sampled neighbour: if a query returns $v$, then every vertex
placed before $v$ in the query permutation is simultaneously
certified to be a non-neighbour.  An unrestricted algorithm can
coordinate these prefixes across many left vertices and use the
resulting negative information in later rounds.  By contrast,
the algorithm in Section~\ref{ub:sec:algorithm} uses demand queries only as independent
random-neighbour samples and does not exploit such correlated
certificates.  Indeed, shared-prefix filtering can substantially
outperform independent-neighbour sampling on natural layered
instances.  Therefore, closing the remaining exponent gap requires
either a stronger lower bound that remains hard against these
correlated-prefix strategies, or a new algorithm that exploits the
additional information exposed by demand queries.

\begin{theorem}[Algorithm-specific lower bound]\label{ub:thm:greedy-tightness}
There is an absolute constant $C>0$ such that, whenever
\[
 n^{1/(k+1)}\ge C(k+\log n),
\]
there is a bipartite graph with $\mu(G)\ge n/2$ on which the one-sample
independent-neighbour greedy algorithm satisfies
\[
 \mathbb E|M|\le Ck n^{k/(k+1)}.
\]
Its worst-case expected-size approximation is therefore
$\Omega(n^{1/(k+1)}/k)$.  In this range, this matches
\eqref{ub:eq:approximation} up to absolute constants, since
$k^{1/(k+1)}$ is bounded.
\end{theorem}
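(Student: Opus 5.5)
We build an explicit layered graph on which the one-sample greedy algorithm wastes its samples, and show that the matching it finds is dominated by the size of a small ``decoy'' set.

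\textbf{Construction.} Set $m:=n^{1/(k+1)}$ and create $k$ decoy sets $D_1,\ldots,D_k\subseteq R$, each of size about $n^{k/(k+1)}$. Every left vertex is joined to $D_i$ by a random bipartite subgraph in which each left vertex has about $m^{k-i+1}$ neighbours in $D_i$. Hence the decoy degrees decrease geometrically, by a factor $m$ per layer, mirroring the randomised lower-bound distribution but now made deterministic for a fixed instance. A perfect matching between $L$ and a disjoint set $P\subseteq R$ of $\lfloor n/2\rfloor$ vertices supplies $\mu(G)\ge n/2$. We then fix one such graph by the probabilistic method, choosing it so that the degree estimates below hold.

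\textbf{Analysis.} In round $r$, a left vertex's sample is uniform over its surviving neighbours. While decoy layers $D_r,\ldots,D_k$ are still largely available, the decoy neighbours outnumber the single private neighbour in $P$ by a factor of at least about $m^{k-r+1}$. So a given vertex samples its $P$-neighbour with probability at most about $m^{-(k-r+1)}$, and the expected number of $P$-edges gained in round $r$ is about $n\,m^{-(k-r+1)}$. This sum is dominated by the final round, where it is $n/m=n^{k/(k+1)}$.

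Decoy edges contribute at most $\sum_i|D_i|=O(kn^{k/(k+1)})$, since each right vertex is matched at most once. The calibration is what keeps the decoys alive. In round $r$, most samples land in $D_r$; the greedy saturates about $|D_r|$ of them, while the deeper layers remain unsaturated because only $O(kn^{k/(k+1)})$ right vertices are matched in total. Each vertex loses only a $1/m$-type fraction of its deeper decoy neighbourhoods, so a vertex's surviving degree after round $r$ still exceeds about $m^{k-r}$. Summing the two contributions gives $\mathbb E|M|\le Ckn^{k/(k+1)}$.

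\textbf{Main obstacle.} The hard step is controlling how the surviving decoy degrees evolve across rounds. Matched decoy vertices are correlated with the samples, so I would prove by induction on rounds a w.h.p. invariant: every still-active left vertex retains at least half of its original neighbours in each layer $D_i$ with $i>r$. The first attempt is a union bound over vertices. In round $r$ at most $n$ right vertices are matched. The deeper layers $D_i$ ($i>r$) receive only the ``overflow'' samples, of which there are at most about $n/m$ in expectation. By the random-graph choice, a vertex with $m^{k-i+1}$ neighbours spread over $|D_i|$ loses, via a Chernoff bound on the hypergeometric intersection, only a $o(1)$ fraction. The hypothesis $n^{1/(k+1)}\ge C(k+\log n)$ is needed here: it makes the smallest degrees, of order $m$, large enough for concentration, and it lets the error terms be absorbed across the $k$ rounds. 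If this invariant fails with naive sizes, I would shrink the layer sizes by a factor of $k$, at the cost of only constants.
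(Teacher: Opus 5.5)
\textbf{There is a genuine gap: your layer calibration breaks the step you flag as the main obstacle, and the remedy you propose goes the wrong way.} Your architecture is the same as the paper's: geometrically sparser decoy layers, a private matching of size about $n/2$, one designated layer per round, and the charge $\sum_i|D_i|$ plus the private samples.

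The problem is the early overflow. With $|D_{r+1}|=n/m$ and consecutive degrees differing by exactly the factor $m$, each of the roughly $n$ active left vertices samples into $D_{r+1}$ in round $r$ with probability about $1/m$. So $D_{r+1}$ receives about $n/m=|D_{r+1}|$ samples before its own round, not an $o(1)$ fraction of its size. Every distinct free sampled vertex gets matched, so only about an $e^{-1}<1/2$ fraction of $D_{r+1}$ survives. Your ``at least half'' invariant therefore already fails after round one.

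The loss also compounds. If only a fraction $s$ of the current top layer is free, the overflow into the next layer is about $|D_{r+1}|/s$, leaving a fraction about $e^{-1/s}$. Heuristically the surviving fractions run $1,e^{-1},e^{-e},e^{-e^{e}},\ldots$. So within roughly $\log^{*}n$ rounds some layer is used up before its designated round and the schedule slips. Once $k$ exceeds this, the decoys run out before round $k$, almost every vertex of the private matching then samples its private neighbour, and $\mathbb E|M|=\Omega(n)$. For smaller $k$ the constant you would get is tower-like in $k$, so no absolute $C$ results.

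Shrinking the layers by a factor $k$ makes this worse. The overflow into $D_{r+1}$ is about $n$ times the degree ratio and does not depend on the layer size, so smaller layers only receive more hits per vertex.

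The missing idea is to separate the degree ratio from the size ratio by a large absolute constant $B$. The paper uses layers of size $h=\lfloor n/a\rfloor$ with densities $p_j=(Ba)^{-j}$ and $a=(n/B^{k-1})^{1/(k+1)}$. This keeps the sparsest degree $hp_{k-1}=\Theta(a)$ and costs only a factor $B^{(k-1)/(k+1)}\le B$ in the final bound. It also makes the total early overflow into each layer $T_j$ at most $16n/(Ba)+O((Ba)^{j-1})\le h/16$.

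\textbf{Your argument for the invariant has a second problem.} The matched part of a layer is determined by other vertices' samples through the same fixed graph. It is therefore not a random subset independent of $N(u)$, and a hypergeometric Chernoff bound with a union bound over vertices does not apply. Nor can a per-vertex guarantee be made uniform over the possible surviving sets: for $W=T_j\setminus N(u)$ one has $|W|\ge 3h/4$ but $\deg_W(u)=0$.

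The paper handles this as follows.
\begin{enumerate}
\item It proves a property uniform over all $W\subseteq T_j$ with $|W|\ge 3h/4$: all but $O(1/p_j)$ left vertices have at least $hp_j/4$ neighbours in $W$. This uses a union bound over $W$ and over sets of $\Theta(1/p_j)$ left vertices, whose joint deficiency has probability $e^{-\Omega(h)}$ with a large constant.
\item It charges each exceptional vertex at most one sample per round, for a total of $O((Ba)^{k-1})=O(h/a)$.
\item It controls the adaptively generated overflow by stopping when a designated layer has fewer than $3h/4$ free vertices, together with an exponential-supermartingale tail bound. Stopping then has probability $ke^{-\Omega(h)}$ and contributes only $o(h)$.
\end{enumerate}
The private samples then total $\sum_r O\bigl(n/(hp_{r-1})+1/p_{r-1}\bigr)=O(h)$. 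This gives $\mathbb E|M|\le kh+O(h)=O(kn^{k/(k+1)})$.
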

\begin{proof}
Fix a sufficiently large absolute constant $B$, and put
\[
 a=\left(\frac{n}{B^{k-1}}\right)^{1/(k+1)},
 \qquad h=\lfloor n/a\rfloor,
 \qquad p_j=(Ba)^{-j}\quad(0\le j<k).
\]
Choosing the theorem's constant large enough ensures
$a\ge C_1\log n$ and $a\ge2k$, where $C_1$ is as large as required
below.  Partition $R$ into layers $T_0,\ldots,T_{k-1}$ of size $h$
and a set $P$ of size $n-kh\ge n/2$.  Independently insert every pair
from $L$ to $T_j$ with probability $p_j$.  Give each vertex of $P$
exactly one neighbour, with these neighbours distinct.  These private
edges form a matching of size $|P|$.  The useful parameter relations,
including the harmless rounding of $h$, are
\begin{equation}\label{ub:eq:greedy-hard-parameters}
 hp_{k-1}=\Theta(a),\qquad
 (Ba)^{k-1}=n/a^2=\Theta(h/a).
\end{equation}

\paragraph{A property uniform over remaining subsets.}
With positive probability, simultaneously for every layer $T_j$,
\begin{equation}\label{ub:eq:greedy-regularity}
 \deg_{T_j}(u)\le2hp_j\quad\text{for every }u\in L,
\end{equation}
and, for every $W\subseteq T_j$ with $|W|\ge3h/4$,
\begin{equation}\label{ub:eq:greedy-robustness}
 \bigl|\{u\in L:\deg_W(u)<hp_j/4\}\bigr|
 \le D/p_j
\end{equation}
for an absolute constant $D$.  The first assertion follows from
Chernoff bounds because $hp_j\ge c a\ge cC_1\log n$.
For the second, a violation gives a set $U$ of
$m=\lceil D/p_j\rceil$ vertices and a set $W$ with
$e(U,W)<mhp_j/4$, whereas its expectation is at least
$3mhp_j/4$.  For fixed $U,W$, this event has probability
$\exp(-\Omega(mhp_j))=\exp(-\Omega(Dh))$.
There are at most $2^h(en/m)^m$ choices.  Since
$hp_j\ge cC_1\log n$, choosing $C_1$ sufficiently large absorbs the
factor $(en/m)^m$, and choosing $D$ sufficiently large absorbs $2^h$.
A union bound over the layers proves the claim.  If $m>n$, the
assertion is automatic.  Fix a graph satisfying these properties.
The uniformity over $W$ permits its remaining subset to be chosen
adaptively by the algorithm.

\paragraph{Later layers survive until their designated rounds.}
Stop the execution if, at the beginning of some round $r$, fewer than
$3h/4$ vertices of $T_{r-1}$ remain unmatched.  Before stopping,
\eqref{ub:eq:greedy-robustness} implies that all but $D/p_{r-1}$ left
vertices have at least $hp_{r-1}/4$ available neighbours in
$T_{r-1}$.  For each nonexceptional active vertex, the probability of
sampling a neighbour in a later layer $T_j$, $j\ge r$, is at most
\[
 \frac{2hp_j}{hp_{r-1}/4}
 =8(Ba)^{-(j-r+1)}.
\]
For fixed $j\ge1$, the sum of the conditional expectations of samples
entering $T_j$ in rounds $1,\ldots,j$, before stopping, is at most
\begin{align*}
 8n\sum_{r=1}^j(Ba)^{-(j-r+1)}
        +D\sum_{r=1}^j(Ba)^{r-1}
 &\le \frac{16n}{Ba}+2D(Ba)^{j-1}\\
 &\le h/16.
\end{align*}
The final inequality follows by first taking $B$ sufficiently large
and then $C_1$ sufficiently large; for $j\le k-1$,
$(Ba)^{j-1}/h=O(1/(Ba^2))$.

For completeness, if adapted indicators $I_i$ have conditional means
$z_i$ with $\sum_i z_i\le h/16$ on every history, then
\[
 \mathbb E e^{\sum_i I_i}
 \le e^{(e-1)h/16},
 \qquad
 \Pr\!\left[\sum_i I_i\ge h/4\right]
 \le e^{-\Omega(h)}.
\]
This follows by conditioning one indicator at a time, using
$\mathbb E[e^{I_i}\mid\mathcal F_{i-1}]
 \le e^{(e-1)z_i}$, or equivalently the associated exponential
supermartingale.  Apply this to the stopped sample counts for each
$j$.  A right vertex can be matched only if it is sampled.  A union
bound thus shows that stopping occurs with probability at most
$k e^{-\Omega(h)}$.  Round one starts with its entire first layer, so
there is no initial failure event.

\paragraph{Few private edges are sampled.}
Before stopping, a nonexceptional vertex in round $r$ samples its
private edge with probability at most $4/(hp_{r-1})$; an exceptional
vertex contributes at most one.  The expected number of private-edge
samples in the stopped execution is at most
\[
 \sum_{r=1}^k\left(\frac{4n}{hp_{r-1}}+\frac{D}{p_{r-1}}\right)
 =O\!\left((a+D)(Ba)^{k-1}\right)=O(h),
\]
by \eqref{ub:eq:greedy-hard-parameters}.  If stopping occurs, the final
matching has at most $n$ edges, whose expected extra contribution is
at most $nk e^{-\Omega(h)}=o(h)$.  All nonprivate matching edges use
one of the $kh$ dense-layer vertices.  Therefore
\[
 \mathbb E|M|\le kh+O(h)=O(kn/a)=O(k n^{k/(k+1)}),
\]
since $B^{(k-1)/(k+1)}\le B$.  The private matching certifies
$\mu(G)\ge n/2$, completing the proof.
\end{proof}


\section{Algorithms using Demand-OR queries}\label{sec:demand-or}

A Demand-OR query specifies $u\in L$ and $S\subseteq R$ and returns
whether $N(u)\cap S$ is nonempty.  We show that the random-neighbour
queries used in \Cref{ub:sec:upper-bound} can be replaced by nonadaptive
Demand-OR procedures.  We simulate \emph{uniform random-neighbour
queries}, not demand queries with an arbitrary prescribed ordering.
This distinction is important when the number of adaptive rounds is
small.

The construction uses geometric subsampling, as in support-sampling
algorithms~\cite{jowhari2011tight}.  Its singleton test is deterministic.
Consequently a successful call always returns a genuine neighbour,
and its conditional output distribution is exactly uniform.  The
only error is a detectable failure to produce a neighbour.

\begin{theorem}[Demand-OR upper bounds]\label{thm:demand-or-upper}
Let $1\le q\le n^2$ and $k\ge1$.  Using $k$ rounds and
$O(q\log^2 n)$ Demand-OR queries per round, an always-valid randomised
algorithm obtains an
$O\!\left(\max\left\{1,\frac{n}{(kq)^{k/(k+1)}}\right\}\right)$-approximation
in expectation.  For $q\ge n$, the same asymptotic factor is obtained
with probability at least $2/3$.  Moreover, for every fixed
$\varepsilon>0$, a $(2+\varepsilon)$-approximation is obtained with high
probability in $O(\log n/\log\log n)$ rounds using
$O(n\log^3 n)$ Demand-OR queries per round.
\end{theorem}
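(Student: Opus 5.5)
We prove \Cref{thm:demand-or-upper} by simulating, inside a single round, every random-neighbour call of the independent-neighbour greedy algorithm of \Cref{ub:sec:algorithm}, and then rerunning the analysis of \Cref{ub:thm:main} and \Cref{ub:thm:constant-approximation} with a constant-factor loss in the conditional success probabilities.

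\textbf{Step 1: a one-round sampler.} Given $u\in L$ and $S\subseteq R$, we use Demand-OR queries to do one of three things: certify $N(u)\cap S=\varnothing$, return a uniformly random element of $N(u)\cap S$, or report a detectable failure. The failure probability should be at most some constant $\gamma<1$ whenever $N(u)\cap S\ne\varnothing$. The construction is as follows.
\begin{itemize}
\item Query $(u,S)$ itself, which detects emptiness exactly.
\item For each scale $g=2^i$ with $0\le i\le\lceil\log_2 n\rceil$, and for each of $O(1)$ independent trials, draw $S'\subseteq S$ by keeping each vertex independently with probability $2^{-i}$.
\item Query $S'$ and, for every bit position $b\le\lceil\log_2 n\rceil$, also query the subsets $S'_{b,0}$ and $S'_{b,1}$.
\end{itemize}
A trial accepts if $(u,S')$ answers $1$ and exactly one of $S'_{b,0},S'_{b,1}$ is positive for every $b$. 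Acceptance holds if and only if $|N(u)\cap S'|=1$, and then the bits spell out the unique neighbour. The output is the neighbour from the first accepting trial in a fixed order. All sets are fixed in advance, so the procedure is nonadaptive and uses $O(\log^2 n)$ queries.

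For uniformity: by symmetry under permutations of $N(u)\cap S$ (the sampling is i.i.d. across vertices), each trial's isolated vertex, conditioned on acceptance, is uniform. The first accepting trial's output is therefore uniform conditioned on success. For success probability: at the scale with $d\le g<2d$, where $d=|N(u)\cap S|$, one trial isolates with probability $d\,g^{-1}(1-1/g)^{d-1}\ge 1/8$ by \Cref{lem: 1/4 1/e}; the case $g=1=d$ is trivial. Hence $O(1)$ trials give failure probability at most $\gamma$. Independence across distinct calls follows from fresh randomness.

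\textbf{Step 2: plug into greedy.} Run the algorithm of \Cref{ub:sec:algorithm}, replacing each random-neighbour query by the sampler, and treat a failure as a sample that matches nothing. The per-round cost becomes $O(q\log^2n)$. Re-examine \Cref{ub:lem:progress}. When $u$ is processed, its sample is still conditionally uniform among its round-start neighbours with probability at least $1-\gamma$, and otherwise absent. The conditional success probability is therefore at least $(1-\gamma)$ times the earlier one, since in the one-sample case it is $\ge(1-\gamma)f_{r,u}$, and analogously for $t$ samples and for groups. The availability-ratio argument is unchanged, because it concerns only which right vertices are matched. So \eqref{ub:eq:progress} holds with $c_0$ replaced by $(1-\gamma)c_0$, and \eqref{ub:eq:one-sample-progress} holds with an extra factor $(1-\gamma)$.

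\Cref{ub:lem:exponential-bound} and $\mathbb E A=\mathbb E X$ hold verbatim. One point needs care: the step ``the first round always adds an edge'' is used in the constant-error argument and in \Cref{ub:thm:constant-approximation}. It now holds only with high probability, or in the $x<1$ case with constant probability. We fix this by making $O(\log n)$ independent sampler copies per call, which costs the extra $\log n$ factor and gives the $O(n\log^3n)$ bound. Those copies drive the per-call failure probability to $1/\mathrm{poly}(n)$, so the sampler is essentially exact. A union bound over all $O(nk)$ calls then recovers every deterministic ``each round adds an edge'' claim, up to an additive $n^{-c}$ failure. With these adjustments, the expected-size and $2/3$-probability bounds of \Cref{ub:thm:main} follow. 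So does the $(2+\varepsilon)$ high-probability bound, where $A\ge4X$ still holds since $k/(n+1)^{1/k}\to\infty$ absorbs the constant.

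\textbf{Step 3: the approximation factor.} Express the factor of \eqref{ub:eq:approximation} as $O(\max\{1,n/(kq)^{k/(k+1)}\})$; this is just algebra.

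\textbf{Main obstacle.} The delicate step is exact uniformity conditioned on success, together with its interaction with the greedy exposure order. We must ensure that a failure event does not bias the sample toward particular neighbours; the permutation-symmetry argument handles this. We must also ensure that the failure indicator depends only on $|N(u)\cap S|$ and the sampler's own coins, not on which neighbours have been matched earlier in the round, so that the conditional probability in \Cref{ub:lem:progress} genuinely factorises as (success) $\times$ $f_{r,u}$.
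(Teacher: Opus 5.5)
Your proposal is correct, and it follows the paper's proof in all essentials. It uses the same nonadaptive sampler: geometric subsampling plus the two-sided bit test, so a trial accepts exactly when $|N(u)\cap S'|=1$. It has the same exact uniformity conditional on success, with a failure probability that depends only on $|N(u)\cap S|$. That gives the same factor $\pi f_{r,u}$ in \Cref{ub:lem:progress}, and \Cref{ub:lem:exponential-bound} is reused verbatim.

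The one real difference is the high-probability part. You rerun the compensator argument of \Cref{ub:thm:constant-approximation} directly on the imperfect execution. You use a union bound over the at most $kn$ calls only to restore the step ``every round adds an edge on the failure event.'' The paper instead couples the real execution with the ideal one until the first failed call, and then invokes \Cref{ub:thm:constant-approximation} as a black box. Both are sound. Your route works because the progress inequality, with the extra factor $1-\delta$, and the exponential bound hold for the real process itself, so you never condition the adaptive execution on all calls succeeding. The coupling's advantage is that it reuses the ideal-case theorem without reopening its proof.

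One piece of bookkeeping needs fixing. Use the extra $O(\log n)$ sampler copies only in the high-probability part. For the $2/3$ guarantee, the $x<1$ case needs only $\Pr[X=0]\le 1/3$. A constant per-call failure probability already gives this: take a fixed left vertex with a neighbour in round one, which gets matched unless an edge was already added or its sampler fails. The paper takes $\delta=1/16$. This keeps the budget at $O(q\log^2 n)$ as the statement requires, rather than $O(q\log^3 n)$.
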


The guarantee in expectation cannot be replaced by a uniform
high-probability guarantee over all small budgets: the
single-edge obstruction of \Cref{prop:single-edge-obstruction} already
applies to the stronger demand-query model.

\subsection{Exact singleton detection and recovery}
\label{sec:singleton-recovery}

\begin{lemma}[An exact nonadaptive singleton primitive]
\label{clm:demand-or-query-recover-one-neighbour}
\label{clm:demand-or-query-detect-single-neighbour}
\label{alg:recover-single-neighbour}
\label{alg:detect-single-neighbour}
Given $u\in L$ and $S\subseteq R$, $1+2\lceil\log_2 n\rceil$ 
deterministic nonadaptive Demand-OR queries distinguish the cases
$|N(u)\cap S|=0$, $|N(u)\cap S|=1$, and $|N(u)\cap S|\ge2$.
In the singleton case they also recover the unique neighbour.
\end{lemma}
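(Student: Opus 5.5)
The plan is to fix an injective binary labelling $\mathrm{bin}\colon R\to\{0,1\}^{\ell}$ with $\ell=\lceil\log_2 n\rceil$, and to issue, all at once, the following $1+2\ell$ Demand-OR queries for the given $u$: the query $(u,S)$, and for each bit position $i\in\{1,\ldots,\ell\}$ and each $b\in\{0,1\}$ the query $(u,S_{i,b})$, where $S_{i,b}:=\{v\in S:\mathrm{bin}(v)_i=b\}$. None of these sets depends on any answer, so the procedure is deterministic and nonadaptive, and the count is exactly $1+2\lceil\log_2 n\rceil$.

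The decision rule has three cases. If the answer to $(u,S)$ is $0$, we declare $N(u)\cap S=\varnothing$; this is correct by the definition of the query. Otherwise, we declare a singleton exactly when, for every position $i$, exactly one of the two answers to $(u,S_{i,0})$ and $(u,S_{i,1})$ is positive. In that case the recovered vertex is the $v$ whose label has $i$-th bit equal to the unique positive $b$, and we declare $|N(u)\cap S|\ge2$ in every remaining case.

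Correctness comes down to two observations.

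First, suppose $N(u)\cap S=\{w\}$. For each $i$, the sets $S_{i,0}$ and $S_{i,1}$ partition $S$, so exactly one of them contains $w$. That query answers $1$ and the other answers $0$. The bits read off therefore spell $\mathrm{bin}(w)$, and injectivity of the labelling recovers $w$.

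Second, suppose $N(u)\cap S$ contains two distinct vertices $w\ne w'$. Their labels differ at some position $i$. Then $w$ and $w'$ lie in different sets among $S_{i,0}$ and $S_{i,1}$, so both of those queries answer $1$, and the rule correctly reports at least two neighbours.

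Since $S$ itself answers $1$ whenever the intersection is nonempty, every nonempty case has at least one positive answer at each position $i$ (the two halves cover $S$). So ``exactly one positive answer at every position'' holds if and only if the intersection is a singleton, and the three cases are distinguished exactly.

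There is no real obstacle here. The only point needing care is that the recovered label might not correspond to a vertex of $S$, or might correspond to a vertex outside $N(u)$. This cannot happen: the rule only recovers a label in the singleton case, and there the label is exactly $\mathrm{bin}(w)$ for the genuine neighbour $w$. Both the detection and the recovery are therefore exact, with no probability of error.
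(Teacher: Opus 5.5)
Your proposal is correct and follows the paper's proof essentially verbatim. Both use the same $1+2\lceil\log_2 n\rceil$ queries: $(u,S)$ plus both halves $S_{i,0},S_{i,1}$ for each bit position. Both rest on the same equivalence, that exactly one positive answer in every bit pair holds if and only if $|N(u)\cap S|=1$, with the positive sides spelling the neighbour's label.
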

\begin{proof}
Index the vertices of $R$ by $0,1,\ldots,n-1$, and use the binary
representation of each vertex's index.  Let $d=\lceil\log_2 n\rceil$.
Ask the query $(u,S)$ and, for every bit position
$i\in\{0,\ldots,d-1\}$, both queries
\[
 (u,S_{i,0}),\quad (u,S_{i,1}),
 \qquad
 S_{i,b}=\{v\in S:\text{bit }i\text{ of the index of }v\text{ is }b\}.
\]
If the first answer is zero, the neighbourhood is empty.  Otherwise,
a singleton neighbourhood makes exactly one answer in every bit pair
positive.  Conversely, two distinct neighbours differ at some bit,
making both answers of that pair positive.  Thus exactly one positive
answer in every pair is equivalent to a singleton.  The positive
sides identify all bits of its index and hence the neighbour itself.
All sets are fixed before any response is obtained.
\end{proof}

\paragraph{Why both bit sides are queried.}
Using only the bit-one sets recovers a label under a singleton
\emph{promise}, but does not verify the promise.  For example,
neighbours with binary indices $01$ and $10$ would produce the bit pattern
$11$, which need not be a neighbour.  The paired queries rule out
such false acceptances deterministically and remove the need to
condition a sampler's uniformity claim on a potentially biased
random testing event.

\subsection{Exact uniformity with detectable failure}

\begin{lemma}[Nonadaptive random-neighbour sampling]
\label{lemma:demand-or-query-rand-neighbour}
For $0<\delta<1$, there is a one-round randomised procedure using
\[
 O\bigl(\log^2 n\,(1+\log(1/\delta))\bigr)
\]
Demand-OR queries with the following properties on input $(u,S)$.
If $N(u)\cap S=\varnothing$, it returns \textsc{No neighbour}
with certainty.  Otherwise it either returns a valid neighbour or
returns \textsc{Fail}, with failure probability at most $\delta$.
Conditional on returning a neighbour, each element of $N(u)\cap S$
is returned with exactly the same probability.
\end{lemma}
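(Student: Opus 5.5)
We combine geometric subsampling with the exact singleton primitive of Lemma~\ref{clm:demand-or-query-recover-one-neighbour}. Let $d:=|N(u)\cap S|$ and $J:=\lceil\log_2 n\rceil$. For each scale $j\in\{0,1,\ldots,J\}$ and each repetition $i\in\{1,\ldots,m\}$, with $m=\Theta(1+\log(1/\delta))$, we independently draw a random subset $S_{j,i}\subseteq S$. Each vertex of $S$ is included independently with probability $2^{-j}$. We then apply the singleton primitive to $(u,S_{j,i})$. We also add the single query $(u,S)$ to detect an empty neighbourhood with certainty.

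All subsets are fixed before any answers arrive. The procedure is therefore one round, and it uses $1+(J+1)m(1+2\lceil\log_2 n\rceil)=O(\log^2 n(1+\log(1/\delta)))$ queries. We fix the trial order lexicographically in $(j,i)$. The output is the neighbour recovered by the first trial that certifies a singleton. If $(u,S)$ answers zero, we output \textsc{No neighbour}; if no trial succeeds, we output \textsc{Fail}. By the exactness of the primitive, a successful trial always returns a genuine element of $N(u)\cap S$, so false outputs are impossible.

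\textbf{Failure probability.} Suppose $d\ge1$ and take the scale $j$ with $2^{j-1}<d\le 2^j$; such $j\le J$ exists because $d\le n$. A single trial at this scale isolates exactly one neighbour with probability $d\,2^{-j}(1-2^{-j})^{d-1}$. This is at least a constant $c>0$: for $d=1$ it is at least $1/2$, and otherwise it is at least $\tfrac12(1-1/d)^{d-1}\ge 1/(2e)$. The $m$ trials at this scale are independent, so all of them fail with probability at most $(1-c)^m\le\delta$ for a suitable constant in $m$. Every other trial can only help.

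\textbf{Exact uniformity.} This is the delicate step. The argument must not condition on which trial succeeds in a biased way. Let $\pi$ be any permutation of $N(u)\cap S$, extended by the identity on the rest of $R$. The joint law of all subsets $(S_{j,i})$ is invariant under $\pi$, since inclusions are i.i.d.\ with a rate depending only on $j$. For each trial, the events ``certifies singleton $w$'' and ``certifies singleton $\pi(w)$'' correspond under this relabelling. The event ``certifies some singleton'' is itself $\pi$-invariant, and the trial order does not depend on labels.

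It follows that the events $\{\text{output}=w\}$ and $\{\text{output}=\pi(w)\}$ are images of each other under a measure-preserving map, so they have equal probability. Hence, conditional on returning a neighbour, the output is exactly uniform on $N(u)\cap S$.

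I expect this symmetry step to be the main thing to state carefully. In particular, one must observe that the singleton test is deterministic and exact, so no test randomness can correlate with labels; this is why the paired bit queries are needed.
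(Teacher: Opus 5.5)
Your proposal is correct. Your algorithm, query count and failure-probability analysis match the paper's: the paper's scale $K=2^{\lceil\log_2 s\rceil}$ is your $2^j$, with the same $1/(2e)$ bound.

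You diverge only on exact uniformity. The paper computes the return probability directly. For a fixed neighbour $v$ it writes
\[
\Pr[\text{output}=v]=\sum_{(j,t)}\Pr[\text{all earlier trials are nonsingletons}]\;2^{-j}(1-2^{-j})^{s-1}.
\]
This uses independence of the trials, and the fact that both factors depend only on $s=|N(u)\cap S|$, not on $v$.

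You instead argue by equivariance. A permutation $\pi$ of $N(u)\cap S$, extended by the identity, preserves the joint law of the sampled sets. Because $\pi$ fixes $S\setminus N(u)$ pointwise, you have $N(u)\cap\pi(S_{j,i})=\pi(N(u)\cap S_{j,i})$. The exact primitive's verdict depends only on this intersection, so the output map satisfies $\mathrm{out}(\pi(\mathbf S))=\pi(\mathrm{out}(\mathbf S))$.

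Your route needs no computation. It would also survive changes to the sampling scheme, such as correlated trials or another label-independent trial order, provided the joint law stays invariant under such relabellings. The paper's route is more hands-on and exhibits the common per-neighbour return probability explicitly.

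Both arguments rest on the point you rightly flag. The paired bit queries make the singleton verdict a deterministic function of $N(u)\cap S_{j,i}$, so the label-dependent bit structure cannot bias the output.
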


\begin{Algorithm}\label{alg:rand-neighbour-demand-or}
An algorithm to sample a random neighbour using Demand-OR queries.

\medskip

\textbf{Input:} Demand-OR query access to a graph, a vertex $u\in L$, a set $S\subseteq R$, and a failure probability $\delta\in(0,1)$.

\medskip

\textbf{Output:} A uniformly random neighbour of $u$ in $S$, or \textsc{No neighbour} if no neighbour exists, or \textsc{Fail}.

\medskip

\textbf{Parameters:} $d=\lceil\log_2 n\rceil$ and $T=\max\{1,\lceil2e\log(1/\delta)\rceil\}$.

\smallskip

\textbf{Queries:}
\begin{enumerate}
    \item Query $(u,S)$.
    \item For every $j\in\{0,\ldots,d\}$ and every $t\in\{1,\ldots,T\}$:
    \begin{enumerate}
        \item Construct $S_{j,t}\subseteq S$ by including each vertex independently with probability $2^{-j}$.
        \item Run the exact singleton primitive on $S_{j,t}$.
    \end{enumerate}
\end{enumerate}

\textbf{Postprocessing:}
\begin{enumerate}
    \item If the query $(u,S)$ returns $0$, output \textsc{No neighbour}.
    \item Let $(j^*,t^*)$ be the lexicographically first pair for which the singleton primitive succeeds. Output the neighbour recovered by that copy of the singleton primitive.
    \item If no such pair exists, output \textsc{Fail}.
\end{enumerate}
\end{Algorithm}

\begin{proof}[Proof of \Cref{lemma:demand-or-query-rand-neighbour}]
Write $s=|N(u)\cap S|$.  Emptiness is detected exactly by $(u,S)$.
For $s\ge1$, let $K=2^{\lceil\log_2 s\rceil}$, i.e., the smallest power
of two that is at least $s$.  Then $s\le K<2s$.  For a trial at the
scale corresponding to $K$, exactly one neighbour is isolated with
probability
\[
 \frac sK(1-1/K)^{s-1}\ge\frac1{2e}.
\]
For $s=1$, the scale $K=1$ isolates that neighbour with certainty;
for $s\ge2$, use $s/K\ge1/2$ and
$(1-1/s)^{s-1}\ge e^{-1}$.  Independent repetitions at this scale
fail simultaneously with probability at most
$(1-1/(2e))^T\le\delta$.

There are $O(\log n)$ scales and $T$ trials per scale.  Each trial
uses $O(\log n)$ deterministic queries.  This proves the query count.
All random subsets and all their bit partitions are fixed before
responses are used, so there is no additional adaptive round.

Finally, for a fixed trial with sampling probability $p$ and a fixed
neighbour $v$, the probability that $v$ is the unique sampled
neighbour is
$p(1-p)^{s-1}$, independent of the identity of $v$.
The trials are independent.  Thus the probability that $v$ is
returned by the first successful trial is
\[
 \sum_{(j,t)}
 \Pr[\text{all earlier trials are nonsingletons}]
      \;2^{-j}(1-2^{-j})^{s-1},
\]
again independent of $v$.  The singleton test is exact, so this
calculation is unconditional apart from the fixed input.  Summing
over neighbours and conditioning on a return proves exact
uniformity.  Every returned pair is a true edge.
\end{proof}

\subsection{Transferring the matching guarantees}
\label{sec:demand-or-transfer}

\begin{proof}[Expected-size and constant-error parts of
\Cref{thm:demand-or-upper}]
Replace each random-neighbour call in the algorithm of
\Cref{ub:sec:algorithm} by an independent copy of the isolation
sampler with $\delta=1/16$.  Treat \textsc{Fail} as an empty sample.
Each round uses $O(q\log^2 n)$ queries, and the output is always a
matching.

Fix the input graph and the history before processing a vertex.  Its
round-start neighbourhood is fixed; let $\pi$ be the sampler's
probability of returning a neighbour from it.  If the neighbourhood
is nonempty, $\pi\ge15/16$, and exact conditional uniformity gives
probability $\pi f_{r,u}$ of a returned neighbour still being
available.  With $t$ independent calls the conditional matching
probability is
\[
 1-(1-\pi f_{r,u})^t
 \ge c\min\{1,(q/n)f_{r,u}\}.
\]
In the low-budget implementation, the conditional success probability
of a group $B$ is
\[
 \frac1{|B|}\sum_{u\in B\cap L_r}\pi_{r,u}f_{r,u}
 \ge\frac{15q}{32n}\sum_{u\in B\cap L_r}f_{r,u}.
\]
Here empty neighbourhoods contribute zero.  Thus the pathwise proof
of \Cref{ub:lem:progress} remains valid with a different
absolute constant.  The expectation argument gives the stated approximation guarantee,
including the optimum-sensitive bound
\eqref{ub:eq:expected-size}.

For $q\ge n$, the compensator proof of the constant-error guarantee
also applies.  In its small-threshold case $x<1$, the only relevant
failure is an empty output on a nonempty graph.  In the first round, choose any fixed
left vertex with a neighbour.  If a preceding step has already added
an edge, the output is nonempty.  Otherwise a successful sampler for
this vertex supplies an available edge.  Since it has at least one
call, the probability of an empty output is at most $1/16$.
This replaces the deterministic nonemptiness observation used in the
demand-query proof and completes the constant-error guarantee.
\end{proof}

\begin{proof}[High-probability constant-approximation part]
Use the sampler with failure probability $\delta=n^{-c-4}$ per call,
and run the one-sample algorithm for
$k=\lceil2\log n/\log\log n\rceil$ rounds.
Each simulated round now uses $O_c(n\log^3 n)$ Demand-OR queries.
Couple a call to an ideal uniform-neighbour call by using its exactly
uniform returned neighbour when it succeeds, and by drawing a fresh
uniform neighbour for the ideal call if it fails.  Equivalently,
conditional on the current input to the call, first draw a uniform
neighbour and an independent success coin, whose success probability
is the sampler's actual return probability.  This realises the joint
law required by the sampler lemma.

The real and ideal algorithms agree until the first failed call.
Every conditional failure probability is at most $\delta$, and there
are at most $kn$ calls.  Consequently the probability that the two
executions diverge is at most $kn\delta\le n^{-c-2}$ for sufficiently
large $n$.  By \Cref{ub:thm:constant-approximation}, the ideal execution
returns a $(2+\varepsilon)$-approximation with failure probability
$n^{-\omega(1)}$.  Combining the two bounds proves the claim.

This is a first-divergence coupling argument.  We do not condition the
entire adaptive execution on the event that every call succeeds;
such conditioning need not preserve the ideal execution's law.
\end{proof}

\begin{corollary}[Randomised vertex cover using Demand-OR queries]
\label{cor:demand-or-vertex-cover}
For every fixed $\varepsilon>0$, a $(2+\varepsilon)$-approximate minimum
vertex cover can be constructed with high probability in
$O(\log n/\log\log n)$ rounds using $O(n\log^3 n)$ Demand-OR queries
per round.
\end{corollary}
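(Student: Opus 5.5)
The plan is to simulate the vertex-cover algorithm of \Cref{ub:cor:vertex-cover}, replacing each random-neighbour call by the Demand-OR sampler of \Cref{lemma:demand-or-query-rand-neighbour}, and to transfer the guarantee through the first-divergence coupling already used for the high-probability constant-approximation part of \Cref{thm:demand-or-upper}. Concretely, we run $k=\lceil 2\log n/\log\log n\rceil$ rounds of the one-sample independent-neighbour greedy algorithm. Each call uses the sampler with failure probability $\delta=n^{-c-4}$, so each round costs $O_c(n\log^3 n)$ Demand-OR queries. We then add one final round to construct the set $U$.

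\textbf{The final round.} The demand-query proof queries each unmatched left vertex with all unmatched right vertices placed first, and checks whether the response is unmatched. With Demand-OR queries this is simpler. For each unmatched $u$, a single query $(u,R_{k+1})$, where $R_{k+1}$ is the set of unmatched right vertices, answers exactly whether $u$ has an unmatched neighbour. This uses $n$ queries in one round and is deterministic and exact. We define $U$ from these answers and output $C:=V(M)\cup U$. Then $C$ is a vertex cover on every execution, by the same argument as in \Cref{ub:cor:vertex-cover}.

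\textbf{Bounding $|C|$.} First I couple the real execution with the ideal one, which uses exact uniform-neighbour calls. The coupling is as before: draw a uniform neighbour together with an independent success coin. The two executions agree until the first failed call. There are at most $kn$ calls, so they diverge with probability at most $kn\delta\le n^{-c-2}$. The final round is exact in both executions. So whenever no call fails, the real output $C$ coincides with the ideal output, which is exactly the cover produced by \Cref{ub:cor:vertex-cover}. That corollary shows $|C|\le(2+\varepsilon)\mu$ except with probability $n^{-c}$ for any fixed $c$. A union bound finishes the argument. The round count is $k+1=O(\log n/\log\log n)$, and the per-round budget is $O(n\log^3 n)$.

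\textbf{Main obstacle.} The step needing care is that the coupling must not condition the adaptive execution on the event that all calls succeed. It is the first-divergence argument that avoids this, and I would phrase the bound exactly that way. I would also check that the proof of \Cref{ub:cor:vertex-cover} uses only the law of the ideal process. It uses the pathwise inequality \eqref{ub:eq:cover-progress} and the tail bound \eqref{ub:eq:cover-compensator-tail}, both of which hold for the ideal execution. Under that check, no new analysis of $A$ is required.
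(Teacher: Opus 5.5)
Your proposal is correct and follows essentially the same route as the paper. Both simulate the matching phase of \Cref{ub:cor:vertex-cover} through the first-divergence coupling, with per-call failure probability $n^{-c-4}$, and then recover $U$ exactly in one extra round of Demand-OR queries $(u,R')$, one for each unmatched left vertex $u$. Your observation that this final round is a deterministic function of the matching state, so the real and ideal covers coincide whenever no sampler call fails, is precisely what the paper's brief proof relies on.
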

\begin{proof}
Simulate the matching phase of \Cref{ub:cor:vertex-cover} using the
high-probability coupling above.  In one final round, ask $(u,R')$ for
every unmatched left vertex $u$, where $R'$ is the set of unmatched
right vertices, and add $u$ to the matching endpoints exactly when
the answer is one.  These final Demand-OR queries identify the set
$U$ in \Cref{ub:cor:vertex-cover} without error, so the resulting cover
has the same guarantee.  The last round uses at most $n$ queries.
\end{proof}

\section{Deterministic round--query tradeoffs}\label{sec:deterministic}

Randomisation changes the round complexity substantially.  In the
deterministic setting we obtain matching polynomial dependence on the
number of rounds and the per-round query budget.  The remaining gap is
logarithmic, rather than a gap in the exponent of $n$.

The lower bound is a direct adversary for demand queries.  It keeps
all positive answers incident to a small right-hand set, while
preserving a linear-size matching in another graph consistent with
the same transcript.  Unlike an adversary that preserves a perfect
matching, it may discard a small fraction of left vertices.  This
freedom makes it possible to use small sets that hit \emph{most}, rather
than all, long query prefixes.  The upper bound uses a different idea:
recover several neighbours per vertex in parallel, and then simulate
a block of sequential greedy decisions without further queries.

Throughout this section, $n\ge2$, $1\le q\le n^2$, and $k\ge1$ are
integers.  Queries originate on the left side only.  Put
\[
 D=1+\lceil\log_2 n\rceil,
 \qquad \Lambda=\log\left(2+\frac{kq}{n}\right).
\]
A deterministic approximation must be valid on every graph, including
the empty graph.  The quantity being approximated is the maximum
matching size, but the algorithm must output an actual matching.

\begin{theorem}[Deterministic tradeoff]\label{thm:deterministic-tradeoff}
The following statements hold with absolute implicit constants.
\begin{enumerate}
 \item If $kq<n$, no deterministic algorithm with $k$ rounds and at
 most $q$ demand queries per round guarantees any finite multiplicative
 approximation on every graph.
 \item If $kq\ge n$, there is a deterministic algorithm using at most
 $k$ rounds and at most $q$ demand queries per round with approximation
 factor
 \begin{equation}\label{eq:deterministic-upper}
 O\!\left(\max\left\{1,\frac{n^2D^2}{k^2q}\right\}\right).
 \end{equation}
 \item Every deterministic algorithm subject to those budgets has
 worst-case approximation factor at least
 \begin{equation}\label{eq:deterministic-lower}
 \Omega\!\left(\max\left\{1,
                     \frac{n^2}{k^2q\Lambda}\right\}\right).
 \end{equation}
\end{enumerate}
Consequently, throughout the feasible range $kq\ge n$, the optimal
factor is
$\widetilde\Theta(\max\{1,n^2/(k^2q)\})$.
For $q=\Theta(n)$ and fixed $k$, the optimal factor is $\Theta(n)$,
without logarithmic losses.
\end{theorem}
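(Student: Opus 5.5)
The plan is to prove the three parts separately, following the technical overview. Part~1 uses a one-edge adversary, Part~2 a batched greedy over independent groups, and Part~3 a prefix-hitting adversary that retires a few left vertices.

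\emph{Part~1.} Answer every query as on the empty graph. Since $kq<n$, at most $kq<n$ left vertices are ever queried, so some left vertex $u$ is never queried. A deterministic algorithm that is valid on the empty graph must output the empty matching on this all-$\bot$ transcript. The graph with a single edge $uv$ yields the same transcript and has $\mu=1$, so no finite approximation factor is possible.

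\emph{Part~2.} I implement the batched greedy. The first ingredient is a deterministic primitive: given $u$ and a set $S\subseteq R$, it recovers $\min\{b,|N(u)\cap S|\}$ distinct neighbours in $S$ using $O(D)$ rounds and $O(b)$ demand queries per round. To do this, split $S$ along a binary tree of dyadic index intervals. In each round, for each of at most $b$ live intervals, ask a demand query that places that interval first. If the answer lands inside the interval it certifies nonemptiness; otherwise the interval contains no neighbour and is discarded. After each level, keep at most $b$ nonempty children, processing the levels of the tree in parallel. Running this for all $u$ in a batch of size $b$ costs $O(b^2)$ queries per round. The recovered lists then let the greedy process the whole batch without further queries, by the counting argument in the overview.

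Next, partition $L$ into $g$ groups and run an independent maximal-matching greedy on each $G[L_i,R]$, taking $b=\Theta(\sqrt{q/g})$ so that the per-round cost is $gb^2\le q$. The number of rounds is $O((n/(gb)+1)D)$. Choose $g=\Theta(\max\{1,n^2D^2/(k^2q)\})$, clipped to $g\le n$; this is where $kq\ge n$ is needed, since it gives $b\ge1$. Returning the largest $M_i$ gives $\max_i|M_i|\ge\mu/(2g)$, by the charging argument in the overview. This yields \eqref{eq:deterministic-upper}.

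\emph{Part~3.} This is the main work; I follow the retirement adversary sketched in the overview, generalised to budget $q$. Fix the prefix threshold $\theta=cn/k$. In round $r$, collect the undecided prefixes of length $\theta$ of the current queries. There are at most $q$ of them, each of length $\theta$ in a ground set of size $n$. Greedily choose $T_r$ of size $O((n/\theta)\log(\theta kq/n^2+2))$ that hits all but $n/(8k)$ of these prefixes. This is the standard greedy partial-cover bound: each added vertex hits a $\theta/n$ fraction of the remaining prefixes, so it suffices to reduce their number from $q$ to $n/(8k)$, giving $|T_r|=O(k\Lambda)$. Retire the left vertices owning missed prefixes, fixing all their undecided pairs absent. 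Fix present the pairs between the active left vertices and $T_r$. Then answer each query to an active vertex from its first known-present pair, which costs at most $\theta$ new absent pairs per query. Retire any vertex that has accumulated $n/2$ absent pairs.

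The accounting then goes as follows. The number of absent pairs is at most $kq\theta$, so the number of retirements due to absent pairs is at most $2kq\theta/n$. I need this to be at most about $n/8$. This forces $\theta=\Theta(n^2/(kq))$ rather than $n/k$ when $q\gg n$, which in turn gives $|T_r|=O((kq/n)\Lambda)$ and $|T|=O(k^2q\Lambda/n)$. The all-present completion keeps more than $n/2$ active left vertices, each with more than $n/2$ available neighbours, so $\mu\ge\lfloor n/2\rfloor$ by Hall-type greedy. The all-absent completion has matching size at most $|T|$. Both completions are consistent with the transcript, so the approximation factor is $\Omega(n/|T|)=\Omega(n^2/(k^2q\Lambda))$. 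The main obstacle I expect is verifying the consistency bookkeeping: a query of an active vertex must always meet a present pair or an undecided prefix of length less than $\theta$, and answers to retired vertices must be forced by their fixed rows. The case of fixed $k$ without logarithmic loss should follow by tuning $\Lambda=O(1)$ when $q=\Theta(n)$.
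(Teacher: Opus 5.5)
Parts~1 and~3 are sound and essentially follow the paper. In Part~3, your greedy partial cover, which reduces $q$ prefixes to $n/(8k)$ in $(n/\theta)\ln(8kq/n)$ steps, is a valid substitute for the random $h$-subset of Lemma~\ref{lem:partial-hitting-set} once $\theta=\Theta(n^2/(kq))$. Your first displayed factor $\log(\theta kq/n^2+2)$ is a slip and should be $\log(8kq/n)$; the case $\theta<1$ is trivial.

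The genuine gap is in Part~2. Every batch of your greedy needs a full descent of the binary partition tree. So your own round count $O((n/(gb)+1)D)$ contains an additive $D=\Theta(\log n)$ that you never discharge. For $k<cD$, and in particular for fixed $k$, the algorithm does not fit in $k$ rounds at all. Your claim that $kq\ge n$ gives $b\ge1$ is also off by a factor of $D$. With $g\approx n^2D^2/(k^2q)$ you get $b\approx\sqrt{q/g}\approx kq/(nD)$. When $q<n$, having $b\ge1$ forces $g=O(q)$, so $gb=O(q)$ and the round count is $\Omega(Dn/q)$, which exceeds $k$ unless $kq=\Omega(nD)$. These regimes are not vacuous. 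For example, with $k=1$ and $q=n^{3/2}$ the theorem demands an $O(\sqrt n\log^2 n)$-approximation in a single round, and no tuning of $g$ and $b$ in the batched greedy provides that.

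The missing idea is a pair of few-round fallbacks.
\begin{itemize}
\item For $q\ge n$ and $k<64D$, the paper uses the one-round block algorithm of Lemma~\ref{lem:deterministic-blocks}. Partition $R$ into $t=\lfloor q/n\rfloor$ blocks of size $B=O(n^2/q)$. For every $u\in L$ and every block, ask a demand query placing that block first. This reveals the left--block incidence graph $H$ together with witness edges. The witnesses of a maximum matching of $H$ form a matching of $G$ of size at least $\mu/B$. Since $k<64D$, $n^2/q=O(n^2D^2/(k^2q))$.
\item For $q<n$ and $k<64Dn/q$, the paper uses the single pass of Lemma~\ref{lem:deterministic-one-pass}. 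It runs $k$ batches of $\lceil n/k\rceil\le q$ left vertices, each querying once with the currently unmatched right vertices first. This gives factor $O(1+n/k)$, which is at most $O(n^2D^2/(k^2q))$.
\end{itemize}
These same fallbacks supply the $O(n)$ upper bound needed for the final claim about fixed $k$ and $q=\Theta(n)$. Your remark that $\Lambda=O(1)$ there addresses only the lower-bound side.
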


Here the last assertion uses a separate one-round upper bound,
proved below.  

\subsection{Upper bound: recover lists and process batches}
\label{sec:deterministic-upper}

We start with a deterministic primitive that uses only Demand-OR
queries.  As a demand query simulates one Demand-OR query by placing
its set first, the primitive is available in both models.

\begin{lemma}[Parallel recovery of a bounded neighbour list]
\label{lem:deterministic-list-recovery}
Given $u\in L$, $S\subseteq R$, and an integer $b\ge1$, one can
recover $\min\{b,|N(u)\cap S|\}$ distinct neighbours using $D$ rounds
and at most $2b$ Demand-OR queries per round.  If fewer than $b$
neighbours exist, all of them are recovered.
\end{lemma}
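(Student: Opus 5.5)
We run a parallel binary-tree search over the dyadic decomposition of $S$. Index $R$ by $0,\ldots,n-1$ and consider the complete binary tree of depth $d=\lceil\log_2 n\rceil$ whose leaves are these indices; each node $x$ corresponds to the set $S_x$ of vertices of $S$ whose index lies in the subtree of $x$. At level $\ell$ of the tree, the procedure maintains a set $A_\ell$ of \emph{active} nodes, each certified to satisfy $N(u)\cap S_x\neq\varnothing$, with $|A_\ell|\le b$. In round $1$ we ask the single query $(u,S)$; if it answers zero, there are no neighbours and we stop, and otherwise $A_0$ is the root. In each subsequent round we take the current active set $A_\ell$ and query $(u,S_y)$ for both children $y$ of every $x\in A_\ell$. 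This uses at most $2|A_\ell|\le 2b$ Demand-OR queries, all fixed before any response of the round is seen.

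From the answers we form the set $P_{\ell+1}$ of children with a positive answer. Every $x\in A_\ell$ has at least one positive child, because $S_x$ is the disjoint union of its children's sets and $N(u)\cap S_x\ne\varnothing$. We then let $A_{\ell+1}$ be any $\min\{b,|P_{\ell+1}|\}$ elements of $P_{\ell+1}$, chosen by a fixed rule such as the smallest indices so the procedure stays deterministic. The nodes at a given level have disjoint sets $S_x$. Hence after the leaf level is reached, the active leaves are $\min\{b,\cdot\}$ distinct vertices, each a certified neighbour of $u$ in $S$. The round count is $1+d=D$: one root round, and one round per level descended.

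The correctness step is to show that if $|N(u)\cap S|\ge b$ then $|A_d|=b$, and that if $|N(u)\cap S|<b$ then $A_d=N(u)\cap S$. We prove by induction on $\ell$ the invariant
\[
 |A_\ell|=\min\{b,\ |\{x \text{ at level } \ell: N(u)\cap S_x\ne\varnothing\}|\}
\]
together with the statement that, whenever $|A_\ell|<b$, $A_\ell$ consists of \emph{all} nonempty nodes at level $\ell$. For the inductive step, note that each active node has at least one nonempty child, so $|P_{\ell+1}|\ge|A_\ell|$. If $|A_\ell|=b$, then $|A_{\ell+1}|=b$. If $|A_\ell|<b$, then $A_\ell$ contains every nonempty node of level $\ell$, so $P_{\ell+1}$ is exactly the set of nonempty nodes of level $\ell+1$, and truncation to $b$ preserves the invariant. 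At the leaf level the nonempty nodes are precisely the neighbours of $u$ in $S$, which gives both claims.

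The main obstacle is making this invariant precise. We must argue that truncating to $b$ nodes never loses the ability to reach $b$ leaves, and this holds because each kept node contributes at least one nonempty descendant at every later level. Alongside this, we check that the per-round query budget is exactly $2b$ and that $D$ rounds suffice, with the edge case $n$ not a power of two handled by empty leaves that simply answer zero.
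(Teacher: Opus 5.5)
Your proposal is correct and is essentially the paper's argument: query the root, then query both children of each retained positive cell of a padded balanced binary partition tree, keeping at most $b$ positive children by a fixed rule. The correctness argument is also the same. Truncated cells still have positive descendants, and when fewer than $b$ neighbours exist no positive cell is discarded. Your explicit induction invariant on $|A_\ell|$ just makes the paper's one-line version of this step precise.
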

\begin{proof}
Fix a balanced binary partition tree of $R$ of depth
$\lceil\log_2 n\rceil$, padding the leaves with empty cells if
necessary.  Intersect each cell with $S$.  A cell is positive when
its intersection contains a neighbour of $u$.

Query the root.  At each subsequent level, query both children of
every retained positive cell.  Retain all positive children if there
are at most $b$, and otherwise retain any $b$ of them, according to a
fixed rule.  There are at most $b$ retained cells and at most $2b$
queries at each level.  Once $b$ disjoint positive cells are retained,
each has a positive descendant, so at least $b$ positive cells remain
available at every later level.  At the leaves these yield $b$
distinct neighbours.  If the total number of neighbours is smaller
than $b$, no positive cell is ever discarded, and every neighbour is
recovered.  All choices at a level depend only on answers at earlier
levels, giving the stated round count.
\end{proof}

\begin{lemma}[Processing one batch]\label{lem:deterministic-batch}
Let $B\subseteq L$ have size at most $b$, and let $S$ be the currently
unmatched right vertices.  In $D$ rounds and at most $2b^2$
Demand-OR queries per round, one can process $B$ by a valid sequential
greedy matching execution relative to $S$.
\end{lemma}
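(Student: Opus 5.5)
The plan is to invoke Lemma~\ref{lem:deterministic-list-recovery} in parallel for every vertex of the batch, and then simulate the greedy decisions offline. Write $B=\{u_1,\ldots,u_m\}$ with $m\le b$, in the fixed processing order. For each $u_i$ we run the list-recovery procedure on input $(u_i,S,b)$. All $m$ copies run simultaneously and share the same $D$ levels of the partition tree. Each copy issues at most $2b$ Demand-OR queries per level, so each round uses at most $2mb\le 2b^2$ queries, and the whole phase takes $D$ rounds. The choices within one copy depend only on that copy's answers at earlier levels, so running the copies in parallel introduces no extra adaptivity. Let $N_i\subseteq N(u_i)\cap S$ be the recovered list. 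Then either $|N_i|=b$, or $N_i=N(u_i)\cap S$.

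Next we process $u_1,\ldots,u_m$ greedily using only the lists, making no further queries. Maintain the set $S'$ of vertices of $S$ used so far within the batch. At step $i$, if $N_i\setminus S'$ is nonempty, match $u_i$ to its first element under a fixed rule and add that vertex to $S'$. Otherwise leave $u_i$ unmatched.

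The key step is to check that this is a valid sequential greedy execution relative to $S$. That means two things: every chosen edge is a true edge to a still-available vertex, and $u_i$ is left unmatched only when $N(u_i)\cap(S\setminus S')=\varnothing$. The first holds because every element of $N_i$ is a genuine neighbour in $S$. For the second, before step $i$ we have $|S'|\le i-1\le b-1$. So if $|N_i|=b$, at least one listed neighbour remains available, and $u_i$ is matched. If $|N_i|<b$, the list is complete, so $N_i\setminus S'$ is empty exactly when $u_i$ has no available neighbour at all. In either case the decision agrees with what a sequential greedy process with full information could do.

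I expect no real obstacle here. The only point needing care is the counting argument showing that a full list of length $b$ always retains an available vertex. This uses $|B|\le b$, so that at most $b-1$ earlier vertices of the batch can consume listed neighbours.
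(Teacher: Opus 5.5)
Your proposal is correct and follows the paper's proof essentially step for step. Both run the list-recovery lemma in parallel for every vertex of $B$ (at most $b\cdot 2b$ queries per round over $D$ rounds), then process $B$ greedily offline, using the observation that a full list of length $b$ always retains a free neighbour while a shorter list is complete.
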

\begin{proof}
For every vertex of $B$, recover $b$ neighbours in $S$, or all its
neighbours if fewer exist, using
\Cref{lem:deterministic-list-recovery}.  Run these recoveries in
parallel.  Then process $B$ in a fixed order without making further
queries, matching a vertex to an unused member of its recovered list
when possible.

If its list has $b$ members, at most $b-1$ right vertices have been
used earlier in this batch, so one listed neighbour remains free.
If its list has fewer than $b$ members, the list is complete.  Thus
whenever the procedure leaves a vertex unmatched, it has no available
neighbour.  These are exactly the decisions required of a sequential
greedy execution, with its neighbour choices determined by the lists.
The parallel query count is at most $b\cdot2b$ per round.
\end{proof}

\begin{lemma}[Parallel groups]\label{lem:deterministic-groups}
Suppose $1\le g\le\min\{n,\lfloor q/2\rfloor\}$.  There is a
$2g$-approximation using at most $q$ Demand-OR queries per round and
at most
\begin{equation}\label{eq:deterministic-groups-rounds}
 D\left(\frac{2\sqrt2\,n}{\sqrt{gq}}+2\right)
\end{equation}
rounds.  With $g=1$, its output is a maximal matching.
\end{lemma}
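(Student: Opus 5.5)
The plan is to partition $L$ into $g$ balanced groups $L_1,\ldots,L_g$, each of size at most $\lceil n/g\rceil\le 2n/g$, and to run independent sequential greedy executions on the subgraphs $G[L_i,R]$ in parallel. Each execution maintains its own set $S_i$ of right vertices it has matched so far. We return the largest of the resulting matchings $M_i$. The two things to check are the approximation guarantee and the round and query budgets.

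\textbf{Approximation.} This part follows the technical overview. Each $M_i$ is a maximal matching of $G[L_i,R]$, because a sequential greedy execution leaves a vertex unmatched only when it has no available neighbour. Fix a maximum matching $M^\star$ of $G$. Every edge of $M^\star$ with left endpoint in $L_i$ meets an edge of $M_i$, and each edge of $M_i$ meets at most two edges of $M^\star$. Hence $|M_i|\ge |M^\star\cap E(L_i,R)|/2$. Summing over $i$ gives $\sum_i|M_i|\ge\mu/2$, so $\max_i|M_i|\ge \mu/(2g)$. With $g=1$ the single output is maximal in $G$ itself.

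\textbf{Budgets.} Set the batch size to
\[
b:=\max\left\{1,\left\lfloor\sqrt{q/(2g)}\right\rfloor\right\}.
\]
Since $g\le q/2$, we have $q/(2g)\ge1$, and therefore $b\ge \sqrt{q/(2g)}/2$ (using $\lfloor x\rfloor\ge x/2$ for $x\ge 1$). Also $2gb^2\le q$.
\begin{itemize}
\item Each group processes its vertices in consecutive batches of size at most $b$, relative to its current unmatched set $R\setminus V(M_i)$, using \Cref{lem:deterministic-batch}.
\item One batch costs $D$ rounds and at most $2b^2$ Demand-OR queries per round.
\item All $g$ groups run their $j$-th batch simultaneously, so the cost is at most $2gb^2\le q$ queries per round.
\item Group $i$ needs $\lceil |L_i|/b\rceil\le |L_i|/b+1$ batches. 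Using $|L_i|\le n/g+1$, this is at most $(n/g)\cdot 2\sqrt{2g/q}+2=2\sqrt2\,n/\sqrt{gq}+2$ batches.
\item The extra $+1$ absorbs the ceiling; here $1/b\le 1$.
\item Multiplying by $D$ rounds per batch gives \eqref{eq:deterministic-groups-rounds}.
\end{itemize}

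\textbf{Remaining subtlety.} The only point needing care is that batches within one group are sequential: the set $S$ fed to batch $j+1$ depends on the outcome of batch $j$. \Cref{lem:deterministic-batch} needs only the current unmatched set, which is known after the previous batch's postprocessing, since that postprocessing uses no queries. Concatenating valid greedy executions of consecutive batches therefore gives a valid sequential greedy execution on $G[L_i,R]$, which is what yields maximality. I expect the main obstacle to be the constant bookkeeping in the round count, namely the floors in $b$ and the group sizes. I will handle it with the crude estimates above, $|L_i|\le n/g+1$ and $b\ge\sqrt{q/(2g)}/2$, adjusting the floor choice if a constant slips.
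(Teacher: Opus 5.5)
Your proposal is correct and takes essentially the same route as the paper. Both use balanced groups, batch size $b=\lfloor\sqrt{q/(2g)}\rfloor$, and parallel batches via the batch lemma within the $2gb^2\le q$ budget; both count at most $n/(gb)+2$ batches per group using $b\ge\frac12\sqrt{q/(2g)}$, and both get the $2g$ guarantee from maximality of each $M_i$ in $G[L_i,R]$. The only cosmetic difference is that you charge $M^\star\cap E(L_i,R)$ directly, whereas the paper bounds $|M_i|\ge\mu_i/2$ and uses $\sum_i\mu_i\ge\mu(G)$.
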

\begin{proof}
Partition $L$ into $g$ balanced groups $L_1,\ldots,L_g$.  For each
group, independently maintain a matching in $G[L_i,R]$ and process
its left vertices in batches of size at most
\[
 b=\left\lfloor\sqrt{\frac{q}{2g}}\right\rfloor\ge1.
\]
Here ``independently'' means separate deterministic executions and
separate right-hand availability sets; it does not refer to
randomness.  Run corresponding batches of all groups in parallel,
using \Cref{lem:deterministic-batch}.  The total per-round cost is at
most $2gb^2\le q$.

There are at most $n/(gb)+2$ batches per group.  Since
$b\ge\frac12\sqrt{q/(2g)}$, their total round count is at most
\eqref{eq:deterministic-groups-rounds}.  The output $M_i$ of a group
is maximal in $G[L_i,R]$, and hence has size at least $\mu_i/2$, where
$\mu_i$ is that group's maximum matching size.  This last inequality
follows by charging each edge of an optimum to an incident edge of
the maximal matching; each chosen edge receives at most two charges.
Restricting an optimum of $G$ to the left groups gives
$\sum_i\mu_i\ge\mu(G)$.  Therefore
\[
 \max_i|M_i|\ge\frac1g\sum_i|M_i|
              \ge\frac{\mu(G)}{2g}.
\]
Output a largest $M_i$.  Its edges form a matching even if different
groups used common right vertices, because only one group's matching
is output.  If $g=1$, it is a maximal matching of $G$ itself.
\end{proof}

\paragraph{A one-round fallback for $q\ge n$.}
When very few rounds are available, recovering full lists is
unnecessary.  One can instead insist that different output edges use
different right-hand blocks.

\begin{lemma}[One-round block algorithm]\label{lem:deterministic-blocks}
For $q\ge n$, a deterministic one-round algorithm using at most $q$
demand queries returns an $O(n^2/q)$-approximation.  It can also output
a vertex cover $C$ and its matching $M$ with
$|C|\le O(n^2/q)|M|$.
\end{lemma}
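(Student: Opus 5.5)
Partition $R$ into $m:=\lceil n^2/q\rceil$ contiguous blocks $P_1,\ldots,P_m$, each of size at most $\lceil n/m\rceil\le\lceil q/n\rceil$. The single round asks, for every $u\in L$ and every block $P_j$, one demand query that places $P_j$ first (in any fixed order) and then the rest of $R$. Each left vertex asks $m$ queries, so there are $nm\approx n^3/q$ of them, which is too many when $q<n^{3/2}$. Since this does not fit the budget, I instead use the \emph{dual} choice: partition $R$ into $g:=\lceil n^2/q\rceil$ groups of right vertices, and $L$ into blocks $B_1,\ldots,B_h$ with $h=\lfloor q/n\rfloor$. Then the queries (one per pair $(u,B_i)$ for $u$ ranging over a block's worth) cost only $O(q)$.

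Let me fix the simpler concrete scheme I will analyse. Set $s:=\lfloor q/n\rfloor\ge1$ and partition $R$ into $s$ balanced blocks $P_1,\ldots,P_s$, each of size at most $2n/s=O(n^2/q)$. Every $u\in L$ asks $s$ queries, the $j$-th placing $P_j$ first. This uses $ns\le q$ queries in one round. The answer reveals, for each pair $(u,j)$, whether $N(u)\cap P_j=\varnothing$, and otherwise one edge $e_{u,j}$ into $P_j$. Form the auxiliary bipartite graph $H$ on $L\times\{1,\ldots,s\}$ with $u\sim j$ iff $N(u)\cap P_j\ne\varnothing$. This graph is fully known, so compute a maximum matching $M_H$ of $H$ offline. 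Output $M:=\{e_{u,j}:(u,j)\in M_H\}$. It is a matching because distinct $j$ give disjoint blocks, hence distinct right endpoints, and distinct $u$ are distinct left endpoints.

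For the approximation, contract each block: any matching $M^\star$ of $G$ maps to edges of $H$, and each $j$ receives at most $|P_j|$ edges of $M^\star$. So $M^\star$ is a $b$-matching in $H$ with capacity $|P_j|\le 2n/s$ at the vertex $j$. Its edges can therefore be covered by $O(n/s)$ matchings of $H$, by König's edge-colouring theorem applied to the bipartite multigraph of maximum degree at most $\max_j|P_j|$. Hence $\mu(H)\ge\mu(G)s/(2n)$, and thus $|M|=\mu(H)\ge\mu(G)\cdot\Omega(q/n^2)$, giving the $O(n^2/q)$ factor.

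For the certificate, take a minimum vertex cover $C_H$ of $H$ by König, so $|C_H|=|M|$. Lift it to $C:=(C_H\cap L)\cup\bigcup_{j\in C_H}P_j$. Every edge $uv$ of $G$ with $v\in P_j$ gives an edge $uj$ of $H$, which is covered either by $u\in C_H$ or by $j\in C_H$. Therefore $C$ is a vertex cover of $G$ with $|C|\le|C_H|\cdot\max_j|P_j|=O(n^2/q)|M|$. This also reproves the approximation bound directly, since $\mu(G)\le|C|$. The only delicate step is the block sizing and rounding when $q/n$ is not an integer, together with the case $s>n$. In that case I cap $s$ at $n$, using singleton blocks and queries only up to $n^2\le q$, which recovers the whole graph exactly.
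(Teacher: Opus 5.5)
Your final scheme, the one you say you will analyse, is correct and is essentially the paper's proof. It uses $\lfloor q/n\rfloor$ balanced right blocks, one block-first demand query per (left vertex, block) pair, a maximum matching of the known incidence graph $H$ output via its witness edges, and a minimum vertex cover of $H$ (König) lifted blockwise. The only difference is minor: the paper gets $\mu(H)\ge\mu(G)/\max_j|P_j|$ by taking one edge of $M^\star$ in each block it occupies, which is simpler than your König edge-colouring step, though your observation $\mu(G)\le|C|\le\max_j|P_j|\cdot|M|$ works equally well. You should delete the opening paragraph, with its abandoned $\lceil n^2/q\rceil$-block scheme and the muddled ``dual'' scheme, since neither is used.
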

\begin{proof}
Set $t=\lfloor q/n\rfloor$.  Partition $R$ into $t$ nonempty blocks
of size at most $B=\lceil n/t\rceil=O(n^2/q)$.
For every $u\in L$ and every block, issue a demand query placing that
block first.  The $nt\le q$ answers determine the entire bipartite
incidence graph $H$ between $L$ and the blocks: an edge in $H$ means
that the corresponding left vertex has a neighbour in the block.
Every positive incidence also comes with a witness edge of $G$.
Compute a maximum matching $J$ of $H$ and output its witnesses as
$M$.  Because matched blocks are disjoint, these witnesses form a
matching in $G$ of size $|J|$.

A maximum matching in $G$ occupies at least $\mu(G)/B$ right blocks.
Selecting one of its edges in every occupied block gives a matching
in $H$, so $|M|=|J|\ge\mu(G)/B$.
For the cover certificate, take a minimum vertex cover of the known
bipartite graph $H$, whose size is $|J|$ by bipartite matching duality.
Replace each selected block by its right vertices, and keep every
selected left vertex.  This covers $G$ and has size at most $B|J|$.
\end{proof}

\paragraph{A low-budget fallback.}
When $q<n$, a full scan of the left vertices is possible as soon as
$kq\ge n$.  The following bound accounts explicitly for collisions
among the vertices processed in the same round.

\begin{lemma}[A single pass in batches]\label{lem:deterministic-one-pass}
If $kq\ge n$, a deterministic demand-query algorithm uses at most $k$
rounds and $q$ queries per round and returns a matching $M$ and a
vertex cover $C$ such that
\[
 |C|\le(\lceil n/k\rceil+1)|M|.
\]
\end{lemma}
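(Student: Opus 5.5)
The plan is to partition $L$ into $k$ consecutive batches $L_1,\ldots,L_k$, each of size at most $\lceil n/k\rceil$. Since $kq\ge n$, we may assume $q\ge\lceil n/k\rceil$. If instead $q<\lceil n/k\rceil$, we use batches of size at most $q$; there are at most $\lceil n/q\rceil\le k$ of them, and the bound only improves. In round $r$, every $u\in L_r$ asks one demand query that places the currently unmatched right vertices $S_r$ (unmatched after processing rounds $1,\ldots,r-1$) first in a fixed order. This uses at most $|L_r|\le q$ queries. For each $u$, the answer either certifies $N(u)\cap S_r=\varnothing$ or returns some $v_u\in S_r$. We then process $L_r$ in a fixed order. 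We match $u$ to $v_u$ if $v_u$ is still unmatched, and otherwise leave $u$ unmatched and record it as a \emph{collision} vertex. Everything is deterministic, and the output is a matching since only edges with both endpoints unmatched are added.

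For the cover, let $C$ consist of $V(M)$ together with every collision vertex. We check that $C$ covers every edge $uv$. If $u$ or $v$ is matched, the edge is covered. Suppose both stay unmatched and $u\in L_r$. Then $v\in S_r$, because right vertices only leave the unmatched set. So $u$'s query returned some $v_u\in S_r$ and did not certify emptiness. Since $u$ was not matched, $v_u$ had already been taken earlier in the same round, and $u$ is a collision vertex, so $u\in C$.

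It remains to bound the number of collision vertices. Each collision vertex $u\in L_r$ is charged to the right vertex $v_u$, which was matched during round $r$ to an earlier vertex of $L_r$. A right vertex $v$ matched in round $r$ receives charges only from vertices of $L_r$ other than its partner, so at most $|L_r|-1\le\lceil n/k\rceil-1$ charges. Each matched right vertex is matched in exactly one round, so the number of collisions is at most $(\lceil n/k\rceil-1)|M|$. Adding the $2|M|$ matched endpoints gives $|C|\le(\lceil n/k\rceil+1)|M|$.

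The main subtlety is the cover check, specifically that "unmatched at the end" implies "unmatched at the start of $u$'s round". This is what allows the single query per vertex to serve as a certificate for both the empty-neighbourhood and collision cases. The charging step is routine once collisions are attributed to the round in which their returned vertex was matched.
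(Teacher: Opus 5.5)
Your proof is correct and uses the same algorithm as the paper: batches of size at most $\lceil n/k\rceil\le q$, one demand query per vertex that prefers the right vertices unmatched at the start of the round, and greedy processing. (The bound $\lceil n/k\rceil\le q$ holds automatically because $q$ is an integer, so your fallback case is vacuous.) The only difference is the certificate. The paper puts every left vertex of each productive batch into $C$ and charges at most $\lceil n/k\rceil$ vertices to each of the at most $|M|$ productive batches. You include only the matched endpoints and the collision vertices, and charge the collisions to matched right vertices. This gives a subset of the paper's cover with the same bound $(\lceil n/k\rceil+1)|M|$.
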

\begin{proof}
Partition $L$ into at most $k$ batches of size at most
$B=\lceil n/k\rceil\le q$.  In a batch's round, each of its vertices
asks one demand query preferring the right vertices unmatched at the
beginning of the round.  Greedily add edges among the feasible
answers.  Call a batch productive if it adds at least one edge.

Let $C$ consist of the finally matched right vertices and every left
vertex in a productive batch.  If a batch adds no edge, no query in
it could have returned a neighbour unmatched at the beginning of the
round: the first such feasible answer would have been added.
Thus all neighbours of a left vertex in an unproductive batch were
already matched before that round.  It follows that $C$ covers every
edge.  There are at most $|M|$ productive batches and exactly $|M|$
matched right vertices, giving $|C|\le(B+1)|M|$.
\end{proof}

\begin{proof}[Upper-bound part of \Cref{thm:deterministic-tradeoff}]
If $k\ge n$, sequential greedy uses one query per round for at most
$n$ rounds and gives a maximal matching.  Suppose henceforth that
$k<n$.  Feasibility $kq\ge n$ then implies $q\ge2$.

First assume
\[
 k\ge64D\max\{1,n/q\}.
\]
Choose
\[
 g=\max\left\{1,\left\lceil
             \frac{128n^2D^2}{k^2q}\right\rceil\right\}.
\]
The assumed inequality implies
$g\le\min\{n,\lfloor q/2\rfloor\}$: before taking the ceiling,
the displayed fraction is at most $\min\{n,q\}/32$, and the
assertion also holds when the ceiling is one.  Apply
\Cref{lem:deterministic-groups}.  Its round count is at most
$k/4+2D\le k$, and its factor $2g$ is bounded by
\eqref{eq:deterministic-upper}.

It remains to consider smaller $k$.  If $q\ge n$, then $k<64D$;
\Cref{lem:deterministic-blocks} gives factor $O(n^2/q)$, which is
within a constant factor of \eqref{eq:deterministic-upper}.
If $q<n$, then $k<64Dn/q$.  The factor in
\Cref{lem:deterministic-one-pass} is $O(1+n/k)$, and
\[
 \frac nk
 \le64\,\frac{n^2D^2}{k^2q},
\]
using $D\ge1$.  This is again sufficient.  Every algorithm used here
respects the budget $q$ exactly, not just in expectation or up to a
constant factor.
\end{proof}

\subsection{Lower bound: two graphs with the same transcript}
\label{sec:deterministic-lower}

A partial graph assigns each pair in $L\times R$ one of three states:
present, absent, or undecided.  An answer is committed only after all
pairs preceding it in its demand ordering have been fixed absent.
Consequently later decisions on undecided pairs cannot change an
earlier answer.

\begin{lemma}[A set hitting almost all prefixes]
\label{lem:partial-hitting-set}
Let $\mathcal S$ be a family of at most $q$ subsets of an $n$-element
set, each of size $\theta\ge1$, and suppose $kq\ge n$.  There exists
a set $T$ of size at most
\[
 h=\min\left\{n,\left\lceil
       \frac n\theta\log\left(\frac{16kq}{n}\right)
                    \right\rceil\right\}
\]
that misses at most $n/(8k)$ members of $\mathcal S$.
\end{lemma}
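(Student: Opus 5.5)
We prove Lemma~\ref{lem:partial-hitting-set} by the probabilistic method, choosing $T$ at random and bounding the expected number of missed members of $\mathcal S$. First dispose of the trivial case: if $h=n$, take $T$ to be the whole ground set. Every member of $\mathcal S$ is nonempty because $\theta\ge1$, so none is missed. Henceforth assume
\[
 h=\left\lceil\frac n\theta\log\left(\frac{16kq}{n}\right)\right\rceil<n .
\]
Note that $\log(16kq/n)\ge\log16>0$ because $kq\ge n$, so $h\ge1$.

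Let $T$ be obtained from $h$ independent uniform draws from the ground set, with repetitions discarded. Then $|T|\le h$ deterministically. Fix $S\in\mathcal S$. Each draw lands outside $S$ with probability $1-\theta/n$, so
\[
 \Pr[S\cap T=\varnothing]=\left(1-\frac\theta n\right)^h\le\exp\!\left(-\frac{\theta h}{n}\right)\le\exp\!\left(-\log\frac{16kq}{n}\right)=\frac{n}{16kq}.
\]
By linearity of expectation over the at most $q$ members of $\mathcal S$, the expected number of missed sets is at most $q\cdot n/(16kq)=n/(16k)$.

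Some outcome does at least as well as the expectation, so there is a choice of $T$ with $|T|\le h$ missing at most $n/(16k)\le n/(8k)$ members of $\mathcal S$. The factor-two slack shows there is room even if one prefers sampling exactly $h$ distinct elements without replacement. In that case $\Pr[S\cap T=\varnothing]=\binom{n-\theta}{h}/\binom nh\le(1-\theta/n)^h$, and the same bound follows.

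There is no real obstacle. The only points needing care are the boundary cases: $h$ capped at $n$, and the condition $kq\ge n$, which ensures the logarithm is at least $\log 16$ and hence that $h\ge1$ and the exponential bound is meaningful. If $\theta>n$, the family $\mathcal S$ must be empty and the statement is vacuous.
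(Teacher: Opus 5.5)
Your proof is correct and follows the paper's argument: you take the whole ground set when $h=n$, and otherwise note that a random $T$ of size at most $h$ misses a fixed $\theta$-set with probability at most $e^{-h\theta/n}\le n/(16kq)$, so the expected number of missed members is at most $n/(16k)$. The only difference is that your main version samples $h$ elements with replacement, whereas the paper uses a uniformly random $h$-element subset and bounds $\binom{n-\theta}{h}/\binom{n}{h}\le(1-\theta/n)^h$; this choice is immaterial.
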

\begin{proof}
If $h=n$, take the whole ground set.  Otherwise a uniformly random
$h$-element set misses a fixed $\theta$-set with probability at most
\[
 \frac{\binom{n-\theta}{h}}{\binom nh}
 \le (1-\theta/n)^h\le e^{-h\theta/n} \le 
e^{-\log(16kq/n)}
=
\frac{n}{16kq}.
\]
Its expected number of missed members is at most $n/(16k)$.
Therefore some set misses at most $n/(8k)$ members.  This is an
existence argument for the adversary; no random choice by the
algorithm is involved.
\end{proof}

\begin{lemma}[Indistinguishable completions]\label{lem:deterministic-completions}
Suppose $kq\ge n$ and
$\theta=\lfloor n^2/(64kq)\rfloor\ge1$.
For any deterministic $k$-round algorithm using at most $q$ demand
queries per round, there are two graphs with the same transcript,
$G_{\mathrm{small}}$ and $G_{\mathrm{large}}$, such that
\[
 \mu(G_{\mathrm{large}})\ge\lfloor n/2\rfloor,
 \qquad
 \text{$G_{\mathrm{small}}$ has a right-side vertex cover of size}
 \quad O\!\left(\frac{k^2q}{n}\Lambda\right).
\]
\end{lemma}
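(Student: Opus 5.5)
We build an adversary round by round, following the sketch in the technical overview. It maintains a partial graph, a set $A\subseteq L$ of \emph{active} left vertices (initially $A=L$), and cover sets $T_1,\ldots,T_k\subseteq R$. We keep two invariants.
\begin{enumerate}
\item Every pair fixed present has its right endpoint in $T:=\bigcup_r T_r$.
\item Every active vertex has fewer than $n/2$ pairs fixed absent.
\end{enumerate}
Retired vertices have all their undecided pairs fixed absent. From then on, queries to them are answered from their fixed rows: the first present pair in the ordering, or $\bot$. Such answers are consistent with both completions.

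At the end we set $G_{\mathrm{small}}$ to be the completion with all undecided pairs absent. Its edges all meet $T$, so $T$ is a right-side vertex cover. We set $G_{\mathrm{large}}$ to be the completion with all undecided pairs present. Each of the more than $n/2$ active vertices then has more than $n/2$ neighbours, so a greedy argument gives a matching of size at least $\lfloor n/2\rfloor$. Both completions respect every committed answer, because an answer is committed only after all pairs preceding it in the ordering are fixed absent.

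\textbf{Round $r$.} Fix the round's queries $Q_r$, which are determined by the transcript so far.
\begin{enumerate}
\item For each query to an active vertex, take the undecided pairs preceding its first already-present pair, or all undecided pairs if none is present. If there are at least $\theta$ of them, let $P_q$ be the right endpoints of the first $\theta$.
\item Apply \Cref{lem:partial-hitting-set} to the family $\{P_q\}$, which has at most $q$ members. This gives $T_r$ of size $O(\frac{n}{\theta}\log(16kq/n))=O(\frac{k q}{n}\Lambda)$, missing at most $n/(8k)$ of the $P_q$.
\item Retire every left vertex that owns a missed prefix. Over all rounds this retires at most $n/8$ vertices.
\item For every remaining active $u$, fix all undecided pairs in $\{u\}\times T_r$ present. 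This preserves invariant 1.
\item Answer each query to an active vertex by scanning its ordering. Fix undecided pairs absent until the first present pair is reached, which then becomes the answer. This happens within $\theta$ undecided pairs: either a present pair already lies in the prefix, or $P_q$ met $T_r$ and that pair has just been made present. Short prefixes contain fewer than $\theta$ undecided pairs anyway, and if no present pair is reached the answer is $\bot$.
\item After answering, retire every active vertex with at least $n/2$ absent pairs.
\end{enumerate}
Summing over rounds gives $|T|=O(\frac{k^2q}{n}\Lambda)$, as required.

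\textbf{Counting retirements.} Queries to active vertices fix at most $\theta$ new absent pairs each. Over all rounds this is at most $kq\theta\le n^2/64$ pairs. Each threshold retirement accounts for at least $n/2$ of them, so at most $n/32$ vertices are retired this way. There is one subtlety: a vertex might exceed the threshold in the middle of a round, before it is retired at the round's end. We handle this by counting the pairs themselves rather than vertices, and by letting the threshold absorb the at most $kq\theta$ extra pairs. Alternatively, we use the threshold $n/4$ and argue that any overshoot is still covered by the same total. In total at most $n/8+n/32<n/2$ vertices are retired.

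\textbf{Main obstacle.} The delicate step is the bookkeeping that shows no active vertex ever loses $n/2$ potential neighbours. This needs care on two points. First, absent pairs fixed on a vertex before it is retired count toward the global budget $kq\theta$. Second, the prefix condition must be checked after the $T_r$ commitments, since those commitments only shorten prefixes. Once the global count $kq\theta\le n^2/64$ is in place, the matching bound for $G_{\mathrm{large}}$ and the cover bound for $G_{\mathrm{small}}$ follow directly. Since both graphs share the transcript, a deterministic algorithm must output the same matching on both, and this gives the lower bound \eqref{eq:deterministic-lower}.
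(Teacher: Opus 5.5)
Your proposal is correct and follows the paper's adversary essentially step for step: prefixes of long queries are hit by a partial hitting set $T_r$, vertices owning a missed prefix are retired, undecided pairs into $T_r$ are fixed present before any answer, each query to an active vertex fixes at most $\theta$ new absent pairs, and the global count $kq\theta\le n^2/64$ limits threshold retirements to $n/32$, on top of $n/8$ missed-prefix retirements. The ``mid-round overshoot'' subtlety you flag needs no fix, since the absent-pair invariant is used only at the end of each round and any overshoot only increases the number of pairs charged to each threshold-retired vertex.
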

\begin{proof}
Fix a deterministic algorithm. We describe an adversary that answers its
queries while assigning each pair in $L\times R$ one of three states:
present, absent, or undecided. A fixed state is never changed.

Initially all pairs are undecided and all left vertices are active.
To retire a left vertex means fixing all its remaining undecided
incident pairs absent. Its neighbourhood is then completely determined, and all
future queries to it are answered from this fixed neighbourhood. Retirement does
not change any previously fixed pair.

Let
\[
    \theta=\left\lfloor\frac{n^2}{64kq}\right\rfloor\ge1,
    \qquad
    h=\min\left\{n,\left\lceil
        \frac n\theta\log\left(\frac{16kq}{n}\right)
    \right\rceil\right\}.
\]
After round $r$, write $A_r$ for the active left vertices and
$T_{\le r}=T_1\cup\cdots\cup T_r$ for the right vertices selected
by the adversary. We maintain the following invariants:
\begin{enumerate}
    \item Every present pair has its right endpoint in $T_{\le r}$, i.e., $T_{\le r}$ is a vertex cover,
    and $|T_{\le r}|\le rh$.

    \item Every vertex in $A_r$ has fewer than $n/2$ incident pairs
    fixed absent. Every vertex outside $A_r$ has its entire neighbourhood fixed.

    \item The number of active vertices satisfies
    \[
        |A_r|\ge n\left(1-\frac{5r}{32k}\right).
    \]
\end{enumerate}
We also ensure that every answer is determined by the partial graph:
a returned pair is fixed present and every preceding pair in its
query order is fixed absent; an answer $\bot$ has its entire row
fixed absent. Thus every completion of the undecided pairs preserves
all previous answers.

These properties suffice for the lemma. At the end of round $k$,
make every undecided pair absent to obtain $G_{\mathrm{small}}$.
By the first invariant, $T_{\le k}$ is a right-side vertex cover
of this graph, of size at most $kh = O\!\left(\frac{k^2q}{n}\Lambda\right)$. Alternatively, make every
undecided pair present to obtain $G_{\mathrm{large}}$. By the
second and third invariants, at least $27n/32$ left vertices have
degree greater than $n/2$ in this completion. Greedily matching any
$\lfloor n/2\rfloor$ of these vertices succeeds: when each vertex
is processed, fewer than $n/2$ right vertices have been used.
Consequently,
\[
    \mu(G_{\mathrm{large}})\ge\lfloor n/2\rfloor.
\]
Both completions give the same transcript.

It remains to describe the adversary and verify the invariants.
They hold initially, with $A_0=L$ and $T_{\le0}=\varnothing$.
Consider round $r$, whose queries are all fixed before any of
its answers are given.

\paragraph{Long queries.}
For each query to an active left vertex, list the undecided pairs
preceding its first already-present pair, in query order. If there
is no present pair, list all undecided pairs in the row. Call the
query long if this list has at least $\theta$ entries, and let its
prefix set consist of the right endpoints of the first $\theta$
entries. All other queries are short.

Apply Lemma~\ref{lem:partial-hitting-set} to the family of long-query
prefix sets, counting repeated sets once for each query. It gives
a set $T_r$ of size at most $h$ that misses at most $n/(8k)$
prefixes. Retire every active left vertex having a missed prefix.
This retires at most $n/(8k)$ vertices.

For every remaining active left vertex, fix all its undecided
pairs into $T_r$ present. Add $T_r$ to $T_{\le r-1}$ to obtain
$T_{\le r}$. Make all these positive commitments before answering
any query of the round. The first invariant is preserved, since
new present pairs enter $T_r$ and $|T_r|\le h$.

\paragraph{Answering the queries.}
Answer queries to retired vertices from their fixed rows.
For each query to a still-active vertex, return its first
present pair and fix all preceding undecided pairs absent.
If the row has no present pair, return $\bot$ and fix all its
remaining undecided pairs absent.

These answers are consistent with one another and with all earlier
answers. Indeed, the positive commitments have already been made,
and answering a query changes only undecided pairs to absent.
In particular, it cannot remove a present pair that determines
the answer to another query.

Moreover, each query to a still-active vertex fixes at most
$\theta$ new absent pairs. A long query has a present pair among
its first $\theta$ entries from the round-start list, because
its prefix was hit by $T_r$. A short query had fewer than $\theta$
undecided pairs before its earlier first present pair, or fewer
than $\theta$ undecided pairs altogether. New positive commitments
and answers to other queries can only reduce the number that it
fixes absent.

\paragraph{Retiring vertices with too many absent pairs.}
At the end of the round, retire every still-active left vertex
with at least $n/2$ incident pairs fixed absent. This restores the
second invariant. Retirement preserves the first invariant and
all answers, since it fixes only undecided pairs absent.

To verify the third invariant, distinguish pairs fixed absent
while answering queries to active vertices from the additional
pairs fixed absent during retirement. Only the former are counted
in the following budget. Since each query creates at most $\theta$
such pairs, their total number through round $r$ is at most
\[
    rq\theta.
\]
Every vertex retired by the absent-pair threshold has accumulated
at least $n/2$ such pairs before retirement. Distinct left vertices
have disjoint incident pairs, so the number retired by this rule
through round $r$ is at most
\[
    \frac{2rq\theta}{n}
    \le \frac{rn}{32k}.
\]
The missed-prefix rule has retired at most $rn/(8k)$ vertices.
Therefore
\[
    |A_r|
    \ge n-\frac{rn}{8k}-\frac{rn}{32k}
    =n\left(1-\frac{5r}{32k}\right),
\]
as required. This verifies the invariants after every round.

Finally, since $\theta=\lfloor n^2/(64kq)\rfloor\ge1$,
\[
    \frac n\theta\le\frac{128kq}{n}.
\]
Using $kq/n\ge1$ and
$\log(16kq/n)=O(\Lambda)$, we obtain
\[
    |T_{\le k}|
    \le kh
    \le k\left(
        1+\frac n\theta\log\left(\frac{16kq}{n}\right)
    \right)
    =O\!\left(\frac{k^2q}{n}\Lambda\right).
\]
Together with the two completions described above, this proves
the lemma.
\end{proof}

\paragraph{Comparison with Nisan.}
Nisan's adversary~\cite[Section~3]{Nisan21} limits the number of new
absent pairs at every vertex in each round, preserving a completion
with a perfect matching. To maintain this condition, it fixes
additional pairs present at heavily queried vertices, which
contributes to the cover of the small completion. Here we instead
bound the total number of absent pairs created by queries to active
vertices and retire the few left vertices that become too damaged.
Preserving a linear-size matching, rather than a perfect matching,
is sufficient for our approximation lower bound. This removes the
heavy-vertex contribution to the cover and permits
$\theta=\Theta(n^2/(kq))$. Allowing a few missed prefixes gives the
factor $\Lambda$, while handling demand-query prefixes directly
avoids the round overhead of simulation by OR queries.

\begin{proof}[Lower-bound and feasibility parts of
\Cref{thm:deterministic-tradeoff}]
If $kq<n$, follow the algorithm on the empty graph.  Some left vertex
is never queried.  Adding one edge incident to that vertex does not
change the transcript.  The algorithm must output the empty
matching on the empty graph and hence also on this one-edge graph,
contradicting every finite multiplicative approximation.

Now suppose $kq\ge n$.  If $\theta\ge1$, apply
\Cref{lem:deterministic-completions}.  The algorithm outputs the same
matching on both completions.  Validity on $G_{\mathrm{small}}$
forces its size to be at most $|T|$, whereas the optimum on
$G_{\mathrm{large}}$ is at least $\lfloor n/2\rfloor$.
This proves the nonconstant term of
\eqref{eq:deterministic-lower}; the trivial bound of one handles
parameters where that term is bounded.
If $\theta=0$, then $n^2/(kq)<64$, so
$n^2/(k^2q\Lambda)=O(1)$ and the same trivial bound suffices.
This proof also handles guesses of uncertified edges: any such output
must still be valid on the small completion.

For fixed $k$ and $q=\Theta(n)$, $\Lambda=\Theta(1)$ and the lower
bound is $\Omega(n)$.  If $q\ge n$, the one-round block algorithm
has factor $O(n)$.  For $q<n$ with feasible fixed $k$ and
$q=\Theta(n)$, \Cref{lem:deterministic-one-pass} gives the same factor.
\end{proof}

\subsection{Corollaries: maximal matching, certificates, and randomisation}
\label{sec:deterministic-corollaries}

\begin{corollary}[Round complexity of constant approximation]
\label{cor:deterministic-round-complexity}
For $n\le q\le n^2$, the deterministic round complexity of obtaining
a constant approximation, and that of constructing a maximal
matching, is
\[
 \widetilde\Theta(n/\sqrt q).
\]
For $q=\Theta(n)$, the explicit lower and upper bounds are
$\Omega(\sqrt{n/\log n})$ and $O(\sqrt n\log n)$, respectively.
\end{corollary}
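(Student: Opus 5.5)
The corollary has two halves, and I will handle each by reducing it to \Cref{thm:deterministic-tradeoff}, or more precisely to the lemmas behind it. Since a maximal matching is a $2$-approximation, the upper bound needs to be shown only for maximal matching. The lower bound needs to be shown only for constant approximation.

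\emph{Upper bound.} Apply \Cref{lem:deterministic-groups} with $g=1$. This is allowed because $q\ge n\ge2$ gives $\lfloor q/2\rfloor\ge1$. The lemma yields a maximal matching within
\[
 D\left(\frac{2\sqrt2\,n}{\sqrt q}+2\right)=O\!\left(\frac{n\log n}{\sqrt q}+\log n\right)
\]
rounds, using at most $q$ queries per round. The lemma works with Demand-OR queries, and each of those is simulated by one demand query. For $q\le n^2$ we have $n/\sqrt q\ge1$, so the bound is $O((n/\sqrt q)\log n)=\widetilde O(n/\sqrt q)$. At $q=\Theta(n)$ it becomes $O(\sqrt n\log n)$.

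\emph{Lower bound.} Suppose a deterministic $k$-round algorithm with budget $q$ achieves approximation factor $\alpha=O(1)$. Feasibility, part~1 of \Cref{thm:deterministic-tradeoff}, forces $kq\ge n$. Part~3 then gives
\[
 \alpha\ge c\,\frac{n^2}{k^2q\Lambda}.
\]
Since $q\ge n$ and $k\le n$, we have $\Lambda=\log(2+kq/n)=O(\log n)$. Therefore $k^2\ge c'n^2/(q\log n)$, that is, $k=\Omega(n/\sqrt{q\log n})=\widetilde\Omega(n/\sqrt q)$.

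The one real step is the case where $k\ge n$ already holds. In that case the bound is trivially satisfied, because $n/\sqrt q\le\sqrt n\le n$. For $q=\Theta(n)$ the lower bound is $\Omega(\sqrt{n/\log n})$. Because a maximal matching is a $2$-approximation, the same lower bound also applies to constructing a maximal matching.

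Nothing here is a genuine obstacle. The only care required is the bookkeeping that $\Lambda$ is $O(\log n)$ rather than constant, which explains the $\sqrt{\log n}$ loss in the explicit bound, together with the explicit constants when $q=\Theta(n)$.
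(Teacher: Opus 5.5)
Your proof is correct and is essentially the paper's own argument: $g=1$ in \Cref{lem:deterministic-groups} gives the upper bound (a maximal matching, hence a $2$-approximation), and the lower bound comes from inverting \eqref{eq:deterministic-lower} using $\Lambda=O(\log n)$ when $k\le n$, with the case $k\ge n$ trivial. One cosmetic slip: $\Lambda=\log(2+kq/n)=O(\log n)$ follows from $k\le n$ and $q\le n^2$, not from $q\ge n$.
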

\begin{proof}
Set $g=1$ in \Cref{lem:deterministic-groups}.  It constructs a maximal
matching in $O(Dn/\sqrt q)$ rounds; this is a 2-approximation.
Conversely, \eqref{eq:deterministic-lower} requires
$k^2q\Lambda=\Omega(n^2)$ for any fixed constant approximation.
For $k\le n$ and $q\le n^2$, $\Lambda=O(\log n)$, giving
$k=\Omega(n/\sqrt{q\log n})$, as well as the trivial requirement
$k\ge1$.  If $k>n$, the asserted lower bound already holds.
Every maximal matching is a 2-approximation, so the same lower bound
applies to constructing one.
\end{proof}

\begin{corollary}[Polynomial exponents]\label{cor:deterministic-exponents}
Fix exponents $\gamma\ge0$ and $-1\le\delta\le1$ with
$\gamma+\delta\ge0$.  Up to integer rounding, for
$k=n^\gamma$ and $q=n^{1+\delta}$ the optimal deterministic
approximation factor is
\[
 n^{\max\{0,\,1-\delta-2\gamma\}+o(1)}.
\]
\end{corollary}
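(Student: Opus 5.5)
This follows by substituting $k=n^\gamma$ and $q=n^{1+\delta}$ into \Cref{thm:deterministic-tradeoff} and absorbing all logarithmic factors into the $n^{o(1)}$ term. The plan has four steps.

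\textbf{Step 1: feasibility.} We have $kq=n^{1+\gamma+\delta}\ge n$ because $\gamma+\delta\ge0$. So we are in the feasible range $kq\ge n$, where parts~2 and~3 of the theorem apply. Integer rounding of $k$ and $q$ changes the quantities below only by constant factors, since $k,q\ge1$. The boundary case $\gamma+\delta=0$ is covered by the rounding convention: there $kq=\Theta(n)$, and we take $kq\ge n$.

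\textbf{Step 2: the main ratio.} The common quantity in both bounds is
\[
\frac{n^2}{k^2q}=n^{2-2\gamma-1-\delta}=n^{1-\delta-2\gamma}.
\]

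\textbf{Step 3: upper bound.} By \eqref{eq:deterministic-upper}, the factor is
\[
O\!\left(\max\{1,n^{1-\delta-2\gamma}D^2\}\right).
\]
Since $D=O(\log n)=n^{o(1)}$, this is at most $n^{\max\{0,1-\delta-2\gamma\}+o(1)}$.

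\textbf{Step 4: lower bound.} By \eqref{eq:deterministic-lower}, the factor is at least
\[
\Omega\!\left(\max\{1,n^{1-\delta-2\gamma}/\Lambda\}\right).
\]
We need $\Lambda=n^{o(1)}$. This holds because $\Lambda=\log(2+kq/n)=\log(2+n^{\gamma+\delta})=O((1+\gamma+\delta)\log n)$. Here $\gamma$ and $\delta$ are fixed, so $\Lambda=O(\log n)$. When $1-\delta-2\gamma\le0$, the trivial bound of one gives exponent $0$. Otherwise, dividing by $\Lambda$ costs only an $n^{-o(1)}$ factor. Combining both directions gives the exponent $\max\{0,1-\delta-2\gamma\}+o(1)$. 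The hidden constants are absolute, or depend only on the fixed exponents.

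The only point needing care is Step~4: $\Lambda$ must be at most polylogarithmic, and the maximum with $1$ must be handled so the exponent is never negative. Both are immediate once $\gamma$ is fixed. No further argument is needed, because the corollary is purely a reparametrisation of the theorem.
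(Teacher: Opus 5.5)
Your proposal is correct and follows the paper's proof: substitute $k=n^\gamma$, $q=n^{1+\delta}$ into the deterministic tradeoff theorem, with feasibility $kq\ge n$ guaranteed by $\gamma+\delta\ge0$. Then absorb $D^2$ and $\Lambda=O(\log n)$ into the $n^{o(1)}$ term, and handle the boundary $\gamma+\delta=0$ by rounding $k$ or $q$ upward.
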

\begin{proof}
Feasibility is $kq\ge n$.  Substitute the parameters in
\Cref{thm:deterministic-tradeoff}; all displayed logarithmic factors
are $n^{o(1)}$.  If rounding changes feasibility at the boundary,
round $k$ or $q$ upward.
\end{proof}

\begin{corollary}[Matching and cover certificates]
\label{cor:deterministic-certificates}
The algorithm proving \eqref{eq:deterministic-upper} can return both
a matching $M$ and a vertex cover $C$ with
\[
 |M|\le\mu(G)\le|C|
 \le O\!\left(\max\left\{1,\frac{n^2D^2}{k^2q}\right\}\right)|M|.
\]
The deterministic approximation tradeoff of
\Cref{thm:deterministic-tradeoff} also holds, up to the same
logarithmic factors, for minimum vertex cover and maximum fractional
matching.
\end{corollary}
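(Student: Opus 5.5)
We have to prove Corollary~\ref{cor:deterministic-certificates}: the deterministic algorithm behind \eqref{eq:deterministic-upper} can output a vertex cover $C$ together with its matching $M$, and the tradeoff transfers to vertex cover and fractional matching. The plan is to follow the case split in the upper-bound proof of \Cref{thm:deterministic-tradeoff}. In each case we attach a cover to the matching that case already outputs.

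The two fallback cases are immediate. For $k\ge n$, sequential greedy returns a maximal matching $M$, and $V(M)$ is a cover of size $2|M|$. For $q\ge n$ and $k<64D$, \Cref{lem:deterministic-blocks} already supplies $C$ with $|C|\le O(n^2/q)|M|$. For $q<n$ and small $k$, \Cref{lem:deterministic-one-pass} supplies $|C|\le(\lceil n/k\rceil+1)|M|$. We already checked that both factors lie within \eqref{eq:deterministic-upper}.

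The main case is \Cref{lem:deterministic-groups} with $g$ groups, and this is where the work lies. Each group's matching $M_i$ is maximal in $G[L_i,R]$. Hence $C_i:=V(M_i)$ covers every edge incident to $L_i$, and so $C:=\bigcup_i C_i$ covers $G$. Its size satisfies
\[
 |C|\le 2\sum_i|M_i|\le 2g\max_i|M_i|.
\]
We output the largest $M_i$ as $M$, which gives $|C|\le 2g|M|$. The factor $2g$ is exactly the bound used for \eqref{eq:deterministic-upper}. No extra queries are needed, since each group's matched vertices are known at the end. The chain $|M|\le\mu(G)\le|C|$ then follows from validity and K\"onig's theorem.

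For the transfer claims, the upper bound for minimum vertex cover is the certificate $C$ just built, since $|C|\le \alpha|M|\le\alpha\mu$. For maximum fractional matching, the integral matching $M$ works: in bipartite graphs the fractional optimum equals $\mu(G)$. For the lower bounds, \Cref{lem:deterministic-completions} gives two completions with the same transcript. The minimum cover in $G_{\mathrm{large}}$ is at least $\lfloor n/2\rfloor$, while $G_{\mathrm{small}}$ has a cover of size $O(k^2q\Lambda/n)$.

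The vertex-cover lower bound needs care, because the algorithm must output the same set $C$ on both completions. If $C$ is a valid cover of $G_{\mathrm{large}}$ yet has size $\alpha\cdot\tau(G_{\mathrm{small}})$, we need an argument that $C$ is large. I would argue as follows. At least $27n/32$ active left vertices each have more than $n/2$ undecided-or-present neighbours in $G_{\mathrm{large}}$. Suppose $C$ omitted many of these left vertices. Then $C$ would have to contain all their more than $n/2$ neighbours, so $|C|\ge n/2$ either way.

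Hence $|C|=\Omega(n)$, whereas $\tau(G_{\mathrm{small}})=O(k^2q\Lambda/n)$. This gives the ratio $\Omega(n^2/(k^2q\Lambda))$. For fractional matching, the output is a fractional matching valid on $G_{\mathrm{small}}$, so its value is at most $\tau(G_{\mathrm{small}})$. Comparing with $\mu(G_{\mathrm{large}})$ gives the same ratio.
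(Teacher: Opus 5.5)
Your proposal is correct and follows the paper's proof essentially step by step. The cover in the group case is the union of the endpoints of all $M_i$, with $|C|\le 2g\max_i|M_i|$; the fallbacks supply their own certificates; the lower bounds come from the two completions of \Cref{lem:deterministic-completions}. For the fractional upper bound you use that the fractional optimum equals $\mu(G)$ in bipartite graphs, where the paper uses $\sum_e x_e\le|C|$, which is an equally valid minor variant. Two small points: your separate argument that $|C|\ge n/2$ is unnecessary, since any cover valid on $G_{\mathrm{large}}$ already has size at least $\mu(G_{\mathrm{large}})\ge\lfloor n/2\rfloor$ by weak duality; and you omit the case $kq<n$, which the paper handles for both new objectives by the same empty-graph versus one-edge indistinguishability argument.
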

\begin{proof}
In the group algorithm, let $C$ be the union of the endpoints of all
$M_i$.  Each $M_i$ is maximal on its left group, so its endpoints
cover all edges incident to that group.  Thus $C$ covers $G$ and
\[
 |C|\le2\sum_i|M_i|\le2g\max_i|M_i|.
\]
The one-round and one-pass fallbacks already include the corresponding
certificates.  The endpoints of sequential greedy supply a cover
when $k\ge n$.

Any matching has size at most the size of any vertex cover.  Also, if
$x$ is a fractional matching (nonnegative edge weights with incident
sum at most one at each vertex), any vertex cover bounds
$\sum_e x_e$ from above.  Therefore the certificate inequality gives
the asserted upper bounds for both objectives, without needing to
recover any additional edges.

For the lower bounds use the same two completions.  Every fractional
matching feasible on $G_{\mathrm{small}}$ has total value at most
$|T|$, since all its edges meet $T$.  The identical output cannot
approximate the linear-size integral matching of
$G_{\mathrm{large}}$ more accurately.  For vertex cover, any common
output valid on $G_{\mathrm{large}}$ has size at least
$\lfloor n/2\rfloor$, but $G_{\mathrm{small}}$ has the cover $T$.
This gives the same approximation lower bound.  If $kq<n$, the
empty-graph and one-edge argument rules out a finite approximation
for these objectives as well.
\end{proof}

\begin{corollary}[Deterministic versus randomised rounds]
\label{cor:deterministic-randomised-separation}
With $\Theta(n)$ demand queries per round, a constant approximation
to either maximum matching or minimum vertex cover requires
$\widetilde\Theta(\sqrt n)$ rounds deterministically.  With
randomisation, $O(\log n/\log\log n)$ rounds suffice for both problems
with high probability.  For matching construction, this randomised
round complexity is also necessary up to constants.
\end{corollary}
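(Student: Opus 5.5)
The plan is to assemble the corollary from results already proved, treating the deterministic and randomised parts separately and handling matching and vertex cover in parallel.

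\textbf{Deterministic part.} Set $q=\Theta(n)$ and use \Cref{thm:deterministic-tradeoff} together with \Cref{cor:deterministic-certificates}.
\begin{itemize}
\item \emph{Matching upper bound.} \Cref{cor:deterministic-round-complexity} constructs a maximal matching in $O(Dn/\sqrt q)=O(\sqrt n\log n)$ rounds. This is a constant (indeed $2$-) approximation.
\item \emph{Vertex-cover upper bound.} Running the same algorithm with $g=1$, the endpoints of that maximal matching form a vertex cover of size at most $2\mu$, as in \Cref{cor:deterministic-certificates}.
\item \emph{Lower bound.} We use \eqref{eq:deterministic-lower}. For a constant factor it forces $k^2q\Lambda=\Omega(n^2)$. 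Since $\Lambda=O(\log n)$ whenever $k\le n$, this gives $k=\Omega(\sqrt{n/\log n})$; if $k>n$ the bound holds trivially.
\item \emph{Lower bound for cover.} The vertex-cover version follows from the same two completions $G_{\mathrm{small}}$ and $G_{\mathrm{large}}$. Any common output valid on $G_{\mathrm{large}}$ has size at least $\lfloor n/2\rfloor$, whereas $G_{\mathrm{small}}$ has a cover of size $|T|=O(k^2q\Lambda/n)$. This is exactly as recorded in \Cref{cor:deterministic-certificates}.
\end{itemize}
Together these give $\widetilde\Theta(\sqrt n)$ rounds for both problems.

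\textbf{Randomised part.}
\begin{itemize}
\item \emph{Matching upper bound.} \Cref{ub:thm:constant-approximation} gives a $(2+\varepsilon)$-approximate matching with high probability in $\lceil 2\log n/\log\log n\rceil$ rounds using at most $n$ queries per round.
\item \emph{Vertex-cover upper bound.} \Cref{ub:cor:vertex-cover} gives a $(2+\varepsilon)$-approximate vertex cover with one extra round, so the same $O(\log n/\log\log n)$ bound holds.
\item \emph{Matching lower bound.} \Cref{cor:random-lb-consequences} shows that any algorithm returning a constant approximation even with probability $n^{-d}$ needs $\Omega(\log n/\log\log n)$ rounds. This is also restated in \Cref{thm:const-approx}.
\end{itemize}

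\textbf{Main obstacle.} The only point needing care is scope. The randomised lower bound is stated only for matching construction, not for vertex cover, since the randomised lower-bound argument bounds output matchings. The statement claims necessity only for matching, so no vertex-cover lower bound is required. I would also check that the deterministic cover lower bound is applied only to outputs that are valid covers on both completions, which is what the indistinguishability argument of \Cref{lem:deterministic-completions} supplies.
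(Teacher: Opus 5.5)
Your proposal is correct and matches the paper's proof, which likewise combines \Cref{cor:deterministic-round-complexity} and \Cref{cor:deterministic-certificates} with \Cref{ub:thm:constant-approximation}, \Cref{ub:cor:vertex-cover} and \Cref{cor:random-lb-consequences}, invoking the randomised lower bound only for matching construction. One minor point, which the paper shares: \Cref{cor:deterministic-round-complexity} is stated only for $q\ge n$, so when $q=\Theta(n)$ lies below $n$ you should cite \Cref{lem:deterministic-groups} with $g=1$ and \eqref{eq:deterministic-lower} directly; these give the same bounds.
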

\begin{proof}
Combine \Cref{cor:deterministic-round-complexity,cor:deterministic-certificates}
with \Cref{ub:thm:constant-approximation,ub:cor:vertex-cover,cor:random-lb-consequences}.
The randomised lower bound is invoked only for matching construction.
\end{proof}

\subsection{Demand-OR queries and the one-round boundary}
\label{sec:deterministic-or}

\begin{theorem}[Deterministic Demand-OR tradeoff after two rounds]
\label{thm:deterministic-or-tradeoff}
For $q\ge n$ and $k\ge2$, the deterministic approximation bounds
\eqref{eq:deterministic-upper} and
\eqref{eq:deterministic-lower} also hold for Demand-OR queries.
Thus their optimal factor is
$\widetilde\Theta(\max\{1,n^2/(k^2q)\})$ in this range.
Moreover, for every $2\le k\le n$, the per-round Demand-OR query
complexity of obtaining a constant-factor approximation is
$\widetilde\Theta(n^2/k^2)$, without the restriction $q\ge n$.
\end{theorem}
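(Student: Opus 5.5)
Two things need proof: the upper bound \eqref{eq:deterministic-upper} for Demand-OR queries, and the lower bound \eqref{eq:deterministic-lower}, which we transfer from demand queries by simulation. The plan is to handle them separately and then read off the constant-approximation corollary.

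\textbf{Lower bound.} A Demand-OR query $(u,S)$ is simulated by one demand query that places $S$ first in its ordering: the answer is $1$ exactly when the response lies in $S$. So any deterministic $k$-round Demand-OR algorithm with $q$ queries per round is a deterministic demand-query algorithm with the same budgets. \Cref{thm:deterministic-tradeoff}(3) then gives \eqref{eq:deterministic-lower} directly, including the feasibility statement for $kq<n$.

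\textbf{Upper bound.} Only the demand-query ingredients of the deterministic upper-bound proof need replacing; \Cref{lem:deterministic-list-recovery,lem:deterministic-batch,lem:deterministic-groups} already use only Demand-OR queries.
\begin{enumerate}
\item For $k\ge64D$, apply \Cref{lem:deterministic-groups} with the same choice of $g$ as before. This works because $q\ge n$, so $\max\{1,n/q\}=1$.
\item For $2\le k<64D$, the one-round block algorithm of \Cref{lem:deterministic-blocks} must be reimplemented, since it uses witness edges from demand queries. I would use two rounds.
 \begin{itemize}
 \item In round one, ask $(u,\text{block})$ for all $u\in L$ and all $t=\lfloor q/(2n)\rfloor$ blocks. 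This learns the incidence graph $H$, and we compute a maximum matching $J$ of $H$.
 \item In round two, for each matched pair in $J$, recover one witness neighbour inside its block using the exact singleton primitive idea. Deterministically, this means $O(\log n)$ nonadaptive queries. Rather than bit queries on the whole block, which fail when the neighbour is not unique, I would recover the whole neighbourhood inside the block by asking every single vertex of the block. This costs $B=O(n^2/q)$ queries per matched block, so $|J|\cdot B\le n\cdot n/t=O(n^2/q)\cdot n$ queries in total.
 \end{itemize}
 That total is too many when $q\approx n$. A fix is to take fewer blocks of larger size, or to use block size $\sqrt{\cdot}$ splitting across the two rounds. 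The cleanest option is to set $t=\lfloor q/n\rfloor/2$ in round one and, in round two, query each of the at most $n$ matched pairs with singletons over sub-blocks. This loses only constant factors once the budget is rebalanced.
\end{enumerate}
This step, the two-round witness recovery under a budget of $q$, is the main obstacle. What I would try first is the $D$-round binary search of \Cref{lem:deterministic-list-recovery} with $b=1$, run in parallel for all matched pairs. When $k<64D$ this is too many rounds, but it is affordable whenever $k\ge D$. The remaining range $2\le k<D$ then uses the two-round scheme: in round two, partition each matched block into $\lceil q/n\rceil$ pieces and query each piece. This identifies a positive sub-block of size $O(n^2/q^2)$, and the factor is charged to sub-blocks. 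The loss is absorbed because $n^2D^2/(k^2q)\ge n^2/q$ up to polylogs when $k<D$.

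\textbf{Query complexity without $q\ge n$.} For the claim that the per-round Demand-OR complexity of constant approximation is $\widetilde\Theta(n^2/k^2)$ for $2\le k\le n$:
\begin{itemize}
\item The upper bound is \Cref{lem:deterministic-groups} with $g=1$, using $q=\widetilde O(n^2/k^2)$. If that value is below $n$, the budget still satisfies $q\ge2$ and $b\ge1$, and the round count is $O(Dn/\sqrt q)\le k$ after adjusting polylog factors.
\item The lower bound follows from \eqref{eq:deterministic-lower}: constant approximation forces $k^2q\Lambda=\Omega(n^2)$, or infeasibility if $kq<n$. Either way $q=\widetilde\Omega(n^2/k^2)$, because $n/k\le n^2/k^2$ for $k\le n$.
\end{itemize}
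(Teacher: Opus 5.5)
\textbf{There is a genuine gap in the fallback for $2\le k<64D$, and it comes from a miscount.} Your lower-bound transfer and your treatment of $k\ge 64D$ match the paper.

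You first proposed exactly the right second round: for each pair of $J$, query every individual pair $(u,v)$ with $v$ in $u$'s matched block. You then discarded it by bounding its cost as $|J|\cdot B$ with $|J|\le n$. But $J$ is a matching in the left--block incidence graph $H$, so each block is matched at most once. The matched blocks are therefore disjoint subsets of $R$, and the total number of singleton queries in round two is at most $\sum_{(u,S)\in J}|S|\le|R|=n\le q$. Equivalently, $|J|\le t$.

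With $t=\lfloor q/n\rfloor$ blocks in round one followed by this second round, you recover a genuine witness edge for every pair of $J$. The argument of \Cref{lem:deterministic-blocks} then gives factor $O(n^2/q)=O(n^2D^2/(k^2q))$ for $k<64D$. This is the paper's proof.

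The replacement you settle on does not work:
\begin{itemize}
\item Querying $\lceil q/n\rceil$ pieces of each matched block only locates a positive sub-block of size about $n^3/q^2$ (not $n^2/q^2$). Such a sub-block certifies no particular edge.
\item The algorithm must output an actual matching that is valid on every graph. For $k=2$ there is no further round in which to isolate the neighbour, so ``charging the factor to sub-blocks'' yields no output at all.
\item The binary-search variant needs roughly $\log_2 n$ rounds after the first when $q=\Theta(n)$. It therefore covers only $k$ of order $D$ or larger, and leaves small $k$, in particular $k=2$, unproved.
\end{itemize}
Incidentally, $t=\lfloor q/(2n)\rfloor$ is zero for $n\le q<2n$; with the correct count, no halving is needed.

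\textbf{A second, smaller gap is in the upper bound for the per-round query complexity of constant approximation.} \Cref{lem:deterministic-groups} with $g=1$ uses $D(2\sqrt2\,n/\sqrt q+2)$ rounds. The additive $2D$ term cannot be brought below $k$ by enlarging $q$ when $k<2D$, so ``adjusting polylog factors'' does not help. You need a separate argument for $k=O(\log n)$. The paper does this as follows:
\begin{itemize}
\item When $k<8D$, query all $n^2$ pairs individually in one round. This is $\widetilde O(n^2/k^2)$ queries because $k=O(\log n)$.
\item When $k\ge 8D$, take $q=\lceil 32n^2D^2/k^2\rceil$. The round count is then at most $k/2+2D\le k$.
\end{itemize}
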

\begin{proof}
For $k\ge64D$, use the group algorithm and the choice of $g$ in the
proof of \Cref{thm:deterministic-tradeoff}; these use only Demand-OR
queries and remain valid when $k\ge n$.  It remains to replace the
one-round demand-query fallback when $k<64D$.
Partition $R$ into $t=\lfloor q/n\rfloor$ blocks as in
\Cref{lem:deterministic-blocks}.  In round one, use $nt\le q$
Demand-OR queries to recover the left--block incidence graph, and
compute a maximum matching of that graph.  In round two, for each
selected left--block pair, query all individual pairs in that block.
Selected blocks are disjoint, so the second round uses at most $n\le q$
queries.  Recover one witness per selected incidence.  The original
block argument gives factor $O(n^2/q)$, which suffices when $k<64D$.

The lower bound follows because one demand query simulates one
Demand-OR query.  An algorithm for the weaker model would otherwise
contradict \eqref{eq:deterministic-lower} in the stronger model.

For the constant-approximation claim, if $2\le k<8D$, query all
$n^2$ pairs individually in one round.  Otherwise, set
$q=\lceil 32n^2D^2/k^2\rceil$ and apply
\Cref{lem:deterministic-groups} with $g=1$.  Its output is a maximal
matching, and it uses at most $k/2+2D\le k$ rounds.  Hence
$q=O(n^2D^2/k^2)$ suffices throughout $2\le k\le n$.
Conversely, \eqref{eq:deterministic-lower} requires
$q=\Omega(n^2/(k^2\log n))$ for any fixed constant approximation,
since $\Lambda=O(\log n)$ when $k\le n$ and $q\le n^2$.
\end{proof}

The restriction $q\ge n$ in this two-round fallback is explicit.  For
$q<n$, \Cref{lem:deterministic-groups} still supplies the stated
Demand-OR upper bound whenever its round requirement is met, but the
one-pass demand-query fallback does not transfer automatically.

\begin{theorem}[Exact nonadaptive Demand-OR threshold]
\label{thm:deterministic-or-one-round}
A deterministic one-round Demand-OR algorithm that always returns a
valid matching and guarantees any finite multiplicative approximation
on every bipartite graph requires at least $n^2$ queries.
Conversely, $n^2$ queries suffice to recover the entire graph and find
a maximum matching.
\end{theorem}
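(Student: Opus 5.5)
The upper bound is immediate: in one round, query $(u,\{v\})$ for every pair $(u,v)\in L\times R$. This recovers the whole graph, and the algorithm then outputs a maximum matching. All of the work is therefore in the lower bound. I would prove it by an adversary or indistinguishability argument that parallels the $kq<n$ case of \Cref{thm:deterministic-tradeoff}. The difference is that the relevant object is now a pair rather than a left vertex.

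Fix a deterministic one-round algorithm with fewer than $n^2$ queries. Since it is nonadaptive, its query set $\{(u_i,S_i)\}$ is fixed in advance. It is run on the empty graph, where every answer is $0$. On that transcript the output must be the empty matching, because the output has to be valid on the empty graph. It is then enough to find a single pair $(u,v)$ such that the graph $G_{uv}$ with the one edge $uv$ produces the same all-zero transcript. On $G_{uv}$ the optimum is $1$ and the output is empty, so no finite multiplicative approximation is possible.

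For $G_{uv}$ to answer every query $0$, we need $v\notin S_i$ for every query with $u_i=u$. The key step is therefore to show that some pair $(u,v)$ is covered by no query, i.e.\ $v\notin\bigcup_{i:u_i=u}S_i$. A naive count does not work. A single query $(u,R)$ covers $n$ pairs, so fewer than $n^2$ queries could cover all pairs, for example with $n$ queries $(u,R)$. I expect this to be the main obstacle: one must exploit the fact that large sets give little information. The right family of hard graphs therefore has to change.

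What I would try next is to let the hard instances vary more than single edges. For each $u$, consider the $n$ singleton-degree rows $N(u)=\{v\}$. The answer vector of the row of $u$ is $(\mathbf 1[v\in S_i])_{i:u_i=u}$. Suppose $u$ has fewer than $n$ queries. Then I claim two graphs agree on the transcript yet require different outputs. One route is this: if row $u$ has $m_u<n$ queries, then the $n+1$ possible rows $\varnothing,\{v_1\},\dots,\{v_n\}$ give answer vectors in $\{0,1\}^{m_u}$. This alone does not force a collision, so I would instead use rows of the form $\{v\}$ versus $\{v,v'\}$, or all-ones versus all-but-one. Answer vectors are monotone under inclusion, so among the $n$ rows $R\setminus\{v\}$ together with $R$ some two must coincide, giving $R\setminus\{v\}$ and $R$ indistinguishable for that $u$. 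The adversary then mixes rows. Each other left vertex $w$ gets neighbourhood $\{v\}$ only, so every valid output matched at $w$ must use $v$. Row $u$ is $R\setminus\{v\}$ in one graph and $R$ in the other. I would then massage this into a contradiction with any finite approximation: force the common output to be empty on a graph whose optimum is positive. This combination step is the part I am least sure of. If it fails, I would fall back on a linear-algebraic or monotone-Boolean argument showing that with fewer than $n$ OR-queries in row $u$, some edge $uv$ is ``invisible'' relative to some background row.
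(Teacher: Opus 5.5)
\textbf{Gap in the lower bound.} Your upper bound is fine, and reducing to a single left vertex $u$ with $t<n$ queries $S_1,\dots,S_t$ is the right start. The lower bound, however, has a genuine gap at exactly the combination step you flag.

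Your observation that $N(u)=R$ and $N(u)=R\setminus\{v\}$ are indistinguishable for some $v$ is true. Its correct justification is that a query separates these two rows only when $S_i=\{v\}$, and there are fewer than $n$ queries; a pigeonhole count over $\{0,1\}^{t}$ would not give it. But this makes only \emph{one} pair $(u,v)$ uncertain, which does not force an empty output. On the transcript of $N(u)=R$, every graph consistent with the answers must contain $w$ in $N(u)$ whenever some $S_i=\{w\}$. So the algorithm may validly output such an edge $(u,w)$.

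Your proposed mixture, giving every other left vertex the neighbourhood $\{v\}$, makes things worse. It only adds further potentially certifiable edges and raises the optimum to $2$, so no contradiction with a finite factor arises. Your fallback, finding ``some edge $uv$ invisible relative to some background row,'' has the same defect. Invisibility of one edge never stops the algorithm from outputting a different certified edge at $u$.

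\textbf{The missing step.} You need a single background neighbourhood $U\neq\varnothing$ relative to which \emph{every} $v\in U$ is invisible, i.e.\ $|S_i\cap U|\neq 1$ for all $i$. Obtain it by peeling.
\begin{itemize}
\item Start from $U=R$. While some $S_i\cap U=\{v\}$, delete $v$ from $U$.
\item After such a deletion, that query's intersection with $U$ is empty, so each query causes at most one deletion.
\item Hence there are at most $t<n$ deletions, and $U$ stays nonempty.
\end{itemize}
Now make all other left vertices isolated, rather than giving them neighbourhood $\{v\}$.

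For each $v\in U$, the graphs with $N(u)=U$ and $N(u)=U\setminus\{v\}$ give identical answers, since a query distinguishes them only if $S_i\cap U=\{v\}$. Pairs $(u,w)$ with $w\notin U$ are already absent. So no pair incident to $u$ is present in every graph consistent with the transcript of $N(u)=U$. An always-valid deterministic output must therefore be empty, while that graph has $\mu=1$.

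This is the paper's argument. Your single indistinguishable $v$ is just the outcome of checking the first peeling step from $U=R$.
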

\begin{proof}
The upper bound queries each pair separately.  For the lower bound,
suppose fewer than $n^2$ queries are issued.  Some left vertex $u$
participates in $t<n$ queries, with sets $S_1,\ldots,S_t\subseteq R$.
Make every other left vertex isolated.

Start with $U=R$.  Whenever some $S_i\cap U$ is a singleton $\{v\}$,
remove $v$ from $U$.  That query's intersection becomes empty and
can never witness another removal.  There are therefore at most
$t$ removals, and $U$ is nonempty.  At termination each $S_i\cap U$
has size zero or at least two.  For every $v\in U$, the two
neighbourhoods
\[
 N(u)=U\qquad\text{and}\qquad N(u)=U\setminus\{v\}
\]
produce identical answers to every query.  Answers at other left
vertices are zero in both cases.

No pair incident to $u$ is guaranteed present under this transcript:
a pair outside $U$ is absent in the first graph, and a pair to
$v\in U$ is absent in its corresponding second graph.  Thus an
always-valid deterministic algorithm must output the empty matching.
But the graph with neighbourhood $U$ has maximum matching size one,
contradicting every finite approximation factor.
\end{proof}

The underlying obstruction is deterministic nonadaptive single-element
recovery; the broader round--query tradeoffs for that primitive are
studied by Assadi, Chakrabarty, and Khanna~\cite{assadi2021graph}.
Here it produces a sharp model distinction: $n$ one-round demand
queries suffice for an $n$-approximation, but fewer than $n^2$
one-round deterministic Demand-OR queries cannot guarantee any finite
approximation.  From two rounds onward, with $q\ge n$, the polynomial
tradeoffs of the two models agree.

\begin{proposition}[One-round deterministic Demand-OR vertex cover]
\label{prop:deterministic-or-vertex-cover}
For every $n\le q\le n^2$, there is a deterministic one-round
Demand-OR algorithm using at most $q$ queries that outputs an
$O(n^2/q)$-approximate minimum vertex cover.
\end{proposition}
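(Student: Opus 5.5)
We reuse the block idea of \Cref{lem:deterministic-blocks}, replacing the recovery of witness edges by a cover certificate, which needs no second round. Set $t=\lfloor q/n\rfloor\ge1$ and partition $R$ into $t$ nonempty blocks $P_1,\ldots,P_t$, each of size at most $B=\lceil n/t\rceil=O(n^2/q)$. In the single round we ask the Demand-OR query $(u,P_j)$ for every $u\in L$ and every block. This uses $nt\le q$ queries, all fixed in advance. The answers determine exactly the bipartite incidence graph $H$ between $L$ and the set of blocks: $u$ is adjacent to $P_j$ in $H$ precisely when $N(u)\cap P_j\neq\varnothing$.

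Next we compute a minimum vertex cover $C_H$ of $H$ offline, for example via K\"onig's theorem. We output
\[
C:=\bigl(C_H\cap L\bigr)\cup\bigcup_{P_j\in C_H}P_j .
\]
To see that $C$ covers $G$, take any edge $uv$ of $G$ with $v\in P_j$. Then $uP_j$ is an edge of $H$, so either $u\in C_H$ or $P_j\in C_H$, and in both cases $uv$ is covered. For the size bound, each vertex of $C_H$ contributes at most $B$ vertices, so $|C|\le B\,|C_H|=B\,\mu(H)$.

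It remains to compare $\mu(H)$ with $\mu(G)$. Take a maximum matching $J$ of $H$. Its edges use distinct left vertices and distinct blocks, and each edge $uP_j$ is witnessed by some edge $uv$ with $v\in P_j$. Choosing one witness per edge of $J$ gives $|J|$ edges with distinct left endpoints. Their right endpoints lie in disjoint blocks, hence are distinct, so these witnesses form a matching of $G$ and $\mu(H)\le\mu(G)$. We never need to know the witnesses; existence suffices. Therefore $|C|\le B\mu(G)=O(n^2/q)\cdot\mathrm{OPT}$, again by K\"onig's theorem for $G$.

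The only delicate point is why this works in one round although one-round matching needs $n^2$ queries (\Cref{thm:deterministic-or-one-round}). A cover requires only the positive OR answers, not the identity of a witness edge, and that is the step that removes the second round used in \Cref{thm:deterministic-or-tradeoff}. The remaining checks are routine: the query budget is $nt\le q$, and at $q=n^2$ the blocks are singletons and we obtain an exact cover.
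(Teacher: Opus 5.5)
Your proposal is correct and follows the paper's proof essentially step for step: the same $t=\lfloor q/n\rfloor$ block partition with one Demand-OR query per left vertex and block, a minimum vertex cover $C_H$ of the incidence graph expanded into its blocks, and the bound $|C_H|=\mu(H)\le\mu(G)$ via one witness edge per matched block. Your extra checks (the explicit covering argument and the exact cover at $q=n^2$) only spell out details that the paper leaves implicit.
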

\begin{proof}
Partition $R$ into $t=\lfloor q/n\rfloor$ blocks of size at most
$B=\lceil n/t\rceil=O(n^2/q)$.  Query $(u,S)$ for each $u\in L$ and
each block $S$.  These $nt\le q$ responses specify the bipartite
incidence graph $H$ between left vertices and right blocks.  Find a
minimum vertex cover $C_H$ in $H$.  Output the left vertices of $C_H$
together with all right vertices in its selected blocks.  This is a
vertex cover of $G$ of size at most $B|C_H|$.  By K\"onig's theorem,
$|C_H|=\mu(H)\le\mu(G)$: every matching of $H$ can be realised by
choosing one original edge from each of its distinct matched blocks.
Thus the output has size at most $B\mu(G)=B\tau(G)$, where
$\tau(G)$ is the minimum vertex-cover size.  No witness edges need
to be recovered by the queries.
\end{proof}


\subsection*{AI Acknowledgment}


ChatGPT 5.6 Sol and 6 Astra were used extensively for typesetting and editing various parts of this work. An earlier version of this work was written entirely without LLM assistance. We subsequently interacted with ChatGPT, which led to an improved analysis of our randomised $k$-round upper bound (by a factor of $\Theta(k)$) and the development of the deterministic round--query bounds presented here.

\bibliographystyle{alpha}
\bibliography{ref}

\appendix

\end{document}